\let\detensor\lightning
\let\tensor\circleddot
\algnewcommand\algorithmicforeach{\textbf{for each}}
\crefname{theorem}{Thm.}{Thms.}
\crefname{lemma}{Lem.}{Lemmas}
\crefname{corollary}{Cor.}{Cors.}
\crefname{figure}{Fig.}{Figs.}
\crefname{definition}{Defn.}{Defns.}
\crefname{table}{Tab.}{Tabs.}
\crefname{example}{Ex.}{Exs.}
\crefname{item}{item}{items}
\crefname{footnote}{footnote}{footnotes}
\crefname{observation}{Obs.}{Obs.}
\crefname{remark}{Remark}{Remarks}
\crefname{proposition}{Prop.}{Props.}
\crefname{equation}{Eqn.}{Eqns.}
\crefname{counterexample}{Counterexample}{Counterexamples}
\crefname{property}{Property}{Properties}
\crefname{algorithm}{Algorithm}{Algorithms}
\newenvironment{wrapped}[1]
 {\def\wrappedcurrent{#1}%
  \setlength{\columnwidth}{\parshapelength\numexpr\prevgraf+2\relax}%
  \csname #1\endcsname}
 {\csname end\wrappedcurrent\endcsname}
\def\cb@checkPdfxy#1#2#3#4#5{%
\cb@@findpdfpoint{#1}{#2}%
\ifdim#3sp=\cb@pdfx\relax      
\ifdim#4sp=\cb@pdfy\relax      
\ifdim#5=\cb@pdfz\relax
\else
\cb@error
\fi
\else
\cb@error
\fi
\else
\cb@error
\fi
}}
\renewcommand{\mathbb}[1]{\vvmathbb{#1}}
\newtheoremstyle{acmremark}%
  {.5\baselineskip\@plus.2\baselineskip\@minus.2\baselineskip}
  {.5\baselineskip\@plus.2\baselineskip\@minus.2\baselineskip}
  {\itshape}
  {\parindent}
  {\itshape}
  {.}
  {.5em}
  {\thmname{#1}\thmnumber{ #2}\thmnote{ {(#3)}}}
  \theoremstyle{acmplain}%
  \newtheorem*{theorem*}{Theorem}%
  \newtheorem*{lemma*}{Lemma}%
  \newtheorem*{proposition*}{Proposition}%
  \theoremstyle{acmdefinition}%
  \theoremstyle{acmremark}%
  \newtheorem{remark}[theorem]{Remark}%
  \theoremstyle{acmplain}%
\setlist{leftmargin=\parindent}
\renewcommand{\paragraph}{%
  \@startsection{paragraph}{4}%
  {\z@}{-.2\baselineskip \@plus -2\p@ \@minus -.2\p@}{-3.5\p@}%
  {\bfseries\@adddotafter}%
}
\newcommand{\m}[1]{\mathsf{#1}}
\newcommand{\mi}[1]{\mathit{#1}}
\newcommand{\gcho}[1]{\mathbin{\prescript{}{#1\!}{\Diamond}}}
\newcommand{\pcho}[1]{\mathbin{\prescript{}{#1\!}{\oplus}}}
\newcommand{\aord}{\sqsubseteq}
\newcommand{\azero}{\underline{0}}
\newcommand{\aone}{\underline{1}}
\DeclareMathOperator{\lfp}{\mathrm{lfp}}
\DeclareMathOperator{\dom}{\mathrm{dom}}
\newcommand{\infholder}{{\circlearrowleft}}
\newcommand{\leftleadsto}{\mathrel{\reflectbox{$\leadsto$}}}
\newcommand{\dashcupi}{\mathchoice
	{\sbox0=\hbox{\hbox to -0.01\wd0{\hbox to 0.01\wd0{$\displaystyle-$}$\displaystyle-$}$\displaystyle\cup$}}
	{\sbox0=\hbox{\hbox to -0.01\wd0{\hbox to 0.00\wd0{$\textstyle-$}$\textstyle-$}$\textstyle\cup$}}
	{\sbox0=\hbox{\hbox to 0.04\wd0{\hbox to 0.00\wd0{$\scriptstyle-$}$\scriptstyle-$}$\scriptstyle\cup$}}
	{\sbox0=\hbox{\hbox to 0.03\wd0{\hbox to 0.00\wd0{$\scriptscriptstyle-$}$\scriptscriptstyle-$}$\scriptscriptstyle\cup$}}
}
\newcommand{\dashcup}{\mathbin{\dashcupi}}
\newcommand{\Dashcupi}{\mathchoice
	{\sbox0=\hbox{\hbox to -0.1\wd0{\hbox to 0.3\wd0{\hbox to 0.9\wd0{$\displaystyle-$}$\displaystyle-$}$\displaystyle-$}$\displaystyle\bigcup$}}
	{\sbox0=\hbox{\hbox to -0.08\wd0{\hbox to 0.2\wd0{$\textstyle-$}$\textstyle-$}$\textstyle\bigcup$}}
	{\sbox0=\hbox{\hbox to -0.04\wd0{\hbox to 0.12\wd0{$\scriptstyle-$}$\scriptstyle-$}$\scriptstyle\bigcup$}}
	{\sbox0=\hbox{\hbox to -0.03\wd0{\hbox to 0.09\wd0{$\scriptscriptstyle-$}$\scriptscriptstyle-$}$\scriptscriptstyle\bigcup$}}
}
\newcommand{\dplus}{\mathbin{+\mkern-5mu+}}
\newcommand{\framework}{NPA-PMA}
\newif\iflong
\begin{document}

\title{Newtonian Program Analysis of Probabilistic Programs}
\iflong
\subtitle{Technical Report}
\fi         

\author{Di Wang}
\affiliation{
  \department{Key Lab of High Confidence Software Technologies, Ministry of Education Department of Computer Science and Technology, School of Computer Science}
  \institution{Peking University}
  \country{China}
}

\author{Thomas Reps}
\affiliation{
  \department{Computer Sciences Department}
  \institution{University of Wisconsin}
  \country{USA}
}

\renewcommand{\shortauthors}{Wang and Reps}

\begin{abstract}
Due to their \emph{quantitative} nature,
probabilistic programs pose non-trivial
challenges for designing compositional and efficient program analyses.
Many analyses for probabilistic programs rely on \emph{iterative} approximation.
This article presents an interprocedural dataflow-analysis framework, called \emph{\framework{}},
for designing and implementing (partially) \emph{non-iterative} program analyses of probabilistic
programs with unstructured control-flow, nondeterminism, and general recursion.
%
\framework{} is based on 
Newtonian Program Analysis
(NPA), a generalization of Newton's method to solve equation systems over semirings.
The key challenge for developing \framework{} is to handle multiple kinds of \emph{confluences}
in both the algebraic structures that specify analyses
and the equation systems that encode control flow:
semirings support a single confluence operation,
whereas \framework{} involves three confluence operations (conditional, probabilistic, and nondeterministic).

Our work
introduces \emph{$\omega$-continuous pre-Markov algebras} ($\omega$PMAs) to factor out common parts of different analyses;
adopts \emph{regular infinite-tree expressions} to
encode probabilistic programs with unstructured control-flow;
and presents a \emph{linearization} method that makes Newton's method applicable to the setting of regular-infinite-tree equations over $\omega$PMAs.
%
\framework{} allows analyses to supply a non-iterative strategy to solve linearized equations.
%
Our experimental evaluation demonstrates that
(i) \framework{} holds considerable promise for outperforming Kleene iteration,
and (ii) provides great generality for designing program analyses.
\end{abstract}

\begin{CCSXML}
<ccs2012>
   <concept>
       <concept_id>10003752.10010124.10010138.10010143</concept_id>
       <concept_desc>Theory of computation~Program analysis</concept_desc>
       <concept_significance>500</concept_significance>
       </concept>
   <concept>
       <concept_id>10003752.10003753.10003757</concept_id>
       <concept_desc>Theory of computation~Probabilistic computation</concept_desc>
       <concept_significance>500</concept_significance>
       </concept>
   <concept>
       <concept_id>10003752.10010124.10010131.10010132</concept_id>
       <concept_desc>Theory of computation~Algebraic semantics</concept_desc>
       <concept_significance>300</concept_significance>
       </concept>
   <concept>
       <concept_id>10003752.10003766.10003772</concept_id>
       <concept_desc>Theory of computation~Tree languages</concept_desc>
       <concept_significance>300</concept_significance>
       </concept>
 </ccs2012>
\end{CCSXML}

\ccsdesc[500]{Theory of computation~Program analysis}
\ccsdesc[500]{Theory of computation~Probabilistic computation}
\ccsdesc[300]{Theory of computation~Algebraic semantics}
\ccsdesc[300]{Theory of computation~Tree languages}

\keywords{Probabilistic Programs, Algebraic Program Analysis, Newton's Method, Interprocedural Program Analysis}

\maketitle

\section{Introduction}
\label{Se:Introduction}

%
%
%

Probabilistic programs, which augment deterministic programs with
\emph{data randomness} (e.g., sampling) and \emph{control-flow randomness} (e.g., probabilistic branching),
provide a rich framework for implementing and analyzing randomized algorithms and probabilistic systems~\cite{book:MM05}.
To establish properties of probabilistic programs, people have developed
many \emph{program analyses}, as well as frameworks for encoding existing analyses
and developing new analyses.
Two prominent instances are Probabilistic Abstract Interpretation (PAI)~\cite{ESOP:CM12}
and the Pre-Markov Algebra Framework (PMAF)~\cite{PLDI:WHR18}.
Both frameworks derive from \emph{abstract interpretation}~\cite{POPL:CC77}:
one proves the soundness of an analysis by establishing a \emph{soundness relation} between
concrete and abstract semantics,
and implements the analysis via \emph{iterative approximation} of fixed-points.

However, the \emph{quantitative} nature of probabilistic programs usually causes iteration-based program analysis to be inefficient.
%
For example, consider the probabilistic loop $P$ ~ \defeq ~ ``\kw{while} \kw{prob}($\frac{3}{4}$) \kw{do} $x {\coloneqq} x {+} 1$ \kw{od}'',
which increments program variable $x$ by one with probability
$\frac{3}{4}$ in each iteration, and otherwise exits the loop.
Suppose that a program analysis aims to reason about the expected difference between the
final and initial value of $x$.
An iteration-based analysis would try to obtain the expected difference, denoted by $r$, by
approximating the fixed-point of
$r = f(r) \defeq \frac{3}{4} \cdot (1+r) + \frac{1}{4} \cdot 0$,
where $f$
intuitively captures the fact that $P \equiv ``\kw{if}~\kw{prob}(\frac{3}{4})~\kw{then}~x {\coloneqq} x {+} 1; P~\kw{else}~\kw{skip}~\kw{fi}$''.
For example, \emph{Kleene iteration} yields an approximation sequence
$\{f^i(0)\}_{i \ge 0}$, which converges to the least fixed-point of $f$ but
will not converge in any finite number of steps.
On the other hand, $r$ is the solution of the linear equation
$r = \frac{3}{4} \cdot (1+r) + \frac{1}{4} \cdot 0$, which directly yields $r=3$.
The following equation demonstrates
the difference between the iterative and non-iterative computation:
\begin{equation}\label{Eq:IterationAndNonIteration}
\textbf{iterative} \enskip (\tfrac{3}{4})^1 + (\tfrac{3}{4})^2 + (\tfrac{3}{4})^3 + (\tfrac{3}{4})^4 + \cdots = 3  \qquad 3 = \tfrac{3}{4} \times \tfrac{1}{1 - {3}/{4}} \enskip \textbf{non-iterative} \tag{$\star$}
\end{equation}

This article presents a framework, which we call \emph{\framework{}} (for Newtonian Program Analysis with Pre-Markov Algebras),
for designing and implementing compositional
and efficient \emph{interprocedural} program analyses of probabilistic programs.
In particular, \framework{} enables
analyses with computations of the form shown on the right side of \eqref{Eq:IterationAndNonIteration},
and achieves three
major desiderata:
\begin{itemize}[nosep]
  \item \emph{Compositionality}. \framework{} analyzes a program by analyzing its constituents and then combining
  the analysis results.
  
  \item \emph{Efficiency}. \framework{} allows an instance program analysis to supply a more efficient strategy
  (compared to Kleene iteration) for abstracting repetitive behavior.
  \framework{} then provides a unified and sound approach for efficient \emph{intra}procedural
  analysis (of loops) and \emph{inter}procedural analysis (of recursion) via the analysis-supplied strategy.

  \item \emph{Generality}.
  \framework{} covers
  a multitude of
  analyses of probabilistic programs.
  For our experimental evaluation (\cref{Se:CaseStudies}),
  we instantiated \framework{} for finding
  (i) probability distributions of program states,
  (ii) upper bounds on moments of random variables,
  and (iii) linear and (iv) non-linear expectation invariants.
  We reformulated two existing abstract domains for (i) and (ii),
  and designed two new abstract domains for (iii) and (iv).
\end{itemize}
Our approach is motivated by \emph{Newtonian Program Analysis} (NPA)~\cite{ICALP:EKL08,JACM:EKL10}.
Many interprocedural dataflow analyses can be reduced to finding the
least fixed-point of an equation system $\vec{X} = \vec{f}(\vec{X})$ over a
semiring~\cite{POPL:BET03,SAS:RSJ03,FSTTCS:RLK07}.
\citeauthor{JACM:EKL10} considered interprocedural analysis of \emph{loop-free} programs and developed an iterative approximation 
method with the following scheme,
where $\textsc{LinearCorrection}(\vec{f},\vec{\nu}^{(i)})$ is a correction term, which is the solution
to a ``linearized'' equation system of $\vec{f}$ at $\vec{\nu}^{(i)}$
(we will discuss linearization in \cref{Se:PriorWorkNewtonianProgramAnalysis}):
\begin{align}
  \label{Eq:NPAIntuitive}
  \vec{\nu}^{(0)} & = \vec{f}(\vec{\bot}), & \vec{\nu}^{(i+1)} & = \vec{f}(\vec{\nu}^{(i)}) \sqcup \textsc{LinearCorrection}(\vec{f},\vec{\nu}^{(i)}).
\end{align}
NPA is \emph{partially iterative} and \emph{partially non-iterative}:
\cref{Eq:NPAIntuitive} defines the iteration that produces the successive values of the Newton sequence, whereas the solution to the linearized equation system of $\vec{f}$ at each value $\vec{\nu}^{(i)}$ can be computed non-iteratively.

\begin{figure}
\centering
\includegraphics[width=0.85\textwidth]{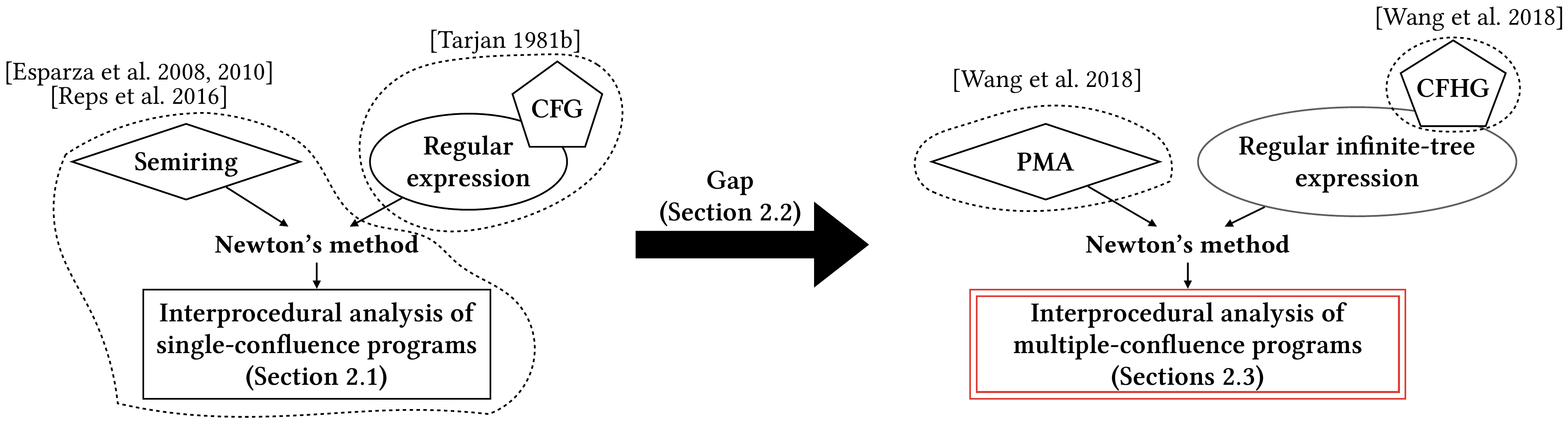}
\caption{A roadmap for the Overview section (\cref{Se:Overview}).
We use dashed curves for concepts from prior work and red double borders for contributions of this work.
For each variant of Newton's method for solving $\vec{X} = \vec{f}(\vec{X})$,
we use $\Diamond$ for the algebraic structure of the elements,
$\ocircle$ for the form of the right-hand sides $\vec{f}(\vec{X})$,
and $\Square$ for the supported kind of programs.
We put a $\pentagon$ (for a program representation) on top of a $\ocircle$ (for a form of right-hand sides)
to denote that there is a sound transformation from $\pentagon$ to $\ocircle$.
}
\label{Fi:Roadmap}
\end{figure}

NPA and related variants of Newton's method have been used for analyzing
probabilistic models, such as recursive Markov chains~\cite{STACS:EY05},
probabilistic pushdown automata~\cite{LICS:EKM04}, and probabilistic one-counter
automata~\cite{CAV:BKK11}.
However, those models are not an ideal representation of probabilistic programs:
they require that either the state space is finite, or the branching is purely probabilistic.
In particular, \emph{nondeterminism} arises naturally when the analyzed program has an infinite state space,
because an analysis usually relies on an \emph{abstraction} of the state space, and consequently can only treat some conditional choices as nondeterministic ones, due to loss of information.
\textbf{
Thus, the key challenge for developing \framework{} is to handle multiple kinds of
confluence operations: conditional, probabilistic, and nondeterministic.}

\cref{Fi:Roadmap} provides a roadmap for the presentation of our \framework{} framework for interprocedural analysis of probabilistic programs with multiple confluence operations.
The starting point is the NPA framework for interprocedural analysis of programs with a single confluence
operation.
Non-probabilistic programs are usually formalized as single-confluence programs:
in the control-flow graph (CFG) of a non-probabilistic program, a node can have multiple
successors, but the operation applied at the confluence nodes is usually the same operation (e.g., join).
In contrast, as observed by~\citet{PLDI:WHR18}, because of multiple kinds of
confluence operations, it is more suitable to use \emph{control-flow hyper-graphs} (CFHGs) to
represent probabilistic programs.
In a CFHG,
conditional-, probabilistic-,
and nondeterministic-branching are represented by different kinds of \emph{hyper-edges},
each of which consists of one source node and two destination nodes.
Therefore, the challenges we face are two-fold:
(i) to find a suitable notion of ``hyper-path expressions'' that encode the control-flow hyper-path through a CFHG, and
(ii) to devise an appropriate method---required by Newton's method---of linearizing such hyper-path expressions.

We implemented a prototype of \framework{}, with which our preliminary evaluation confirmed that
the quantitative nature of probabilistic programs unleashes the potential of Newton's method.
As mentioned earlier, we instantiated \framework{} with two existing and two new abstract domains.
For the two analyses based on existing domains, we evaluated them on synthetic multi-procedure benchmark programs.
In addition, for one of the analyses we implemented two different analysis-supplied strategies, showing the generality of \framework{}.
In general, \framework{} outperforms Kleene iteration.
For example, 
for one domain, \framework{} greatly reduces the number of iterations (3,070.42 $\to$ 9.15), contributing to an overall runtime speedup of 4.08x.
(As others have also reported, each Newton iteration takes much more time than a single Kleene iteration.)
For the two analyses
based on new domains,
we evaluated them on benchmark programs collected from the literature.
\framework{} can derive non-trivial expectation invariants compared to state-of-the-art techniques, which shows that \framework{} is flexible and holds considerable promise for supporting program analyses of probabilistic programs.

\Omit{
We now discuss the challenges in the two steps and our solutions to them.

\paragraph{Adding probability}
\citet{PLDI:WHR18} observed that nondeterministic probabilistic programs exhibit more
than one kinds of \emph{confluences} and proposed \emph{pre-Markov algebras} (PMAs) that
contain three confluence operations: conditional-choice, probabilistic-choice, and nondeterministic-choice.
\citeauthor{PLDI:WHR18} then developed an analysis framework of probabilistic programs based on PMAs, but they used an
iterative approximation method based on Kleene iteration, hindering more efficient strategies for analyzing repetitive behavior.
To apply the recipe of NPA, we need to generalize Newton's method to solve equation systems over PMAs,
and the challenges are to develop a proper form of equation right-hand sides
and devise a linearization method for PMAs.

\cref{Se:ThisWorkAConfluenceCentricAnalysisFramework} presents an overview of our solution.
First, we devise a new family of algebraic structures, which we call \emph{$\omega$PMAs}, as
a refinement of both semirings and PMAs.
Intuitively, the semiring-like structure allows us to carry out Newton's method
and the PMA-like structure allows us to encode different kinds of program confluences.
Then, we develop \emph{simple tree expressions} to represent the right-hand sides of an interprocedural equation system.
The internal nodes of a simple tree expression corresponds to program commands, e.g., sequencing, procedure-calls, and choices,
and every root-to-leaf path in the tree represents a program-execution path.
We also present an interpretation of simple tree expressions in an $\omega$PMA.
Finally, we devise a mechanism to compute the \emph{differential} of simple tree expressions
and use it to perform NPA linearization over $\omega$PMAs.
\framework{} then uses the analysis-supplied strategy to solve linearized equation systems.

\paragraph{Adding unstructured control flow}
Loops, or more generally, unstructured control flow, can be easily
represented by control-flow graphs (CFGs) for non-probabilistic programs.
\citet{JACM:Tarjan81:Alg} proposed a fast algorithm to compute a \emph{regular expression} encoding
all program paths through a CFG, which leads to the development of \emph{algebraic program analysis}~\cite{JACM:Tarjan81,arxiv:FK13,CAV:KRC21},
a method for solving intraprocedural dataflow-analysis problems by reinterpreting regular path expressions
in a \emph{regular algebra}, which contains an operation corresponding to Kleene-star $^\ast$.
\citet{POPL:RTP16} observed that algebraic program analysis fits nicely into the NPA framework:
one can run a path-expression algorithm on the CFGs of
procedures to construct an interprocedural equation system with regular-expression right-hand sides.
\citeauthor{POPL:RTP16} then developed the differential of Kleene-star for semirings that admits a regular algebra
and thus obtained a generalization of NPA that can analyze programs with unstructured control-flow.

However, as observed by~\citet{PLDI:WHR18,ENTCS:WHR19},
because of multiple kinds of program confluences,
it is more suitable to use \emph{control-flow hyper-graphs} (CFHG), rather than ordinary CFGs, to represent probabilistic programs.
PMAs are also tightly coupled with the hyper-graph-based representation,
where conditional-branching, probabilistic-branching, and nondeterministic-branching are represented
by \emph{hyper-edges}, each of which consists of one source node and two destination nodes.
In an ordinary CFG, a node can also have multiple successors, but the operation
applied at confluence nodes is usually the same operation (e.g., join).
Therefore, to apply the recipe of algebraic program analysis (and then fit it into the NPA framework),
the challenges are to devise a notion of ``path expressions'' and a ``path-expression'' algorithm for CFHGs,
as well as develop the interpretation and the differential of such ``path expressions'' in an $\omega$PMA.

\cref{Se:ThisWorkRegularHyperPathExpressions,Se:ThisWorkDifferentialsOfRegularHyperPathExpressions} present
an overview of our solution.
First, observing that program paths are compositions of edges in ordinary CFGs, we study \emph{hyper-paths},
which are (possibly infinite) compositions of hyper-edges in CFHGs,
and devise \emph{regular hyper-path expressions} that characterize control-flow hyper-paths,
in a sense similar to the way regular expressions characterize regular sets of control-flow paths.
%
%
%
Then, we propose a new algorithm, which is a variation of Gaussian elimination, to compute a regular hyper-path
expression for a CFHG.
We also develop an interpretation of regular hyper-path expressions in an $\omega$PMA, using
the analysis-supplied strategy that can solve linear equation systems.
%
%
Finally, we define the differential of regular hyper-path expressions with respect to an $\omega$PMA,
in a way that Newton's method can be used to solve equation systems with regular-hyper-path-expression
right-hand sides over $\omega$PMAs.
}



\begin{table}
\centering
\caption{Comparison in terms of supported program and analysis features. ``recur.'' stands for ``recursion;'' ``ucf.'' stands for ``unstructured control-flow;'' ``mco.'' stands for ``multiple confluence operations;'' and ``iter.'' and ``non-iter.'' stand for ``iterative'' and ``non-iterative,'' respectively.}
\label{Ta:FeatureComparison}
\begin{footnotesize}
\begin{tabular}{c|c c c|c c c}
  \hline
  & \multicolumn{3}{c|}{Program features} & \multicolumn{3}{c}{Analysis features} \\ \hhline{~------}
  & recur. & ucf. & mco. & loops & linear recur. & non-linear recur. \\ \hline
  PAI~\cite{ESOP:CM12} & & \checkmark & \checkmark & iter. & N/A & N/A \\
  PMAF~\cite{PLDI:WHR18} & \checkmark & \checkmark & \checkmark & iter. & iter. & iter. \\
  NPA~\cite{JACM:EKL10} & \checkmark & & & non-iter. & non-iter.\tablefootnote{NPA creates subproblems that are equivalent to analyzing linearly recursive programs, but does not provide a non-iterative method for solving linear recursion. \citet{POPL:RTP16} devised a non-iterative method for solving linear recursion (NPA-TP).} & iter. \\ \hline
  \framework{} (this work) & \checkmark & \checkmark & \checkmark & non-iter. & non-iter.\tablefootnote{We allow an instance analysis to supply a non-iterative linear-recursion-solving strategy. See \cref{Se:ThisWorkAConfluenceCentricAnalysisFramework,Se:SolvingLinearEquations}. } & iter. \\
  \hline
\end{tabular}
\end{footnotesize}
\end{table}

\paragraph{Contributions}
The article's contributions include the following:
\begin{itemize}[nosep]
  \item
    We develop \framework{}, an algebraic framework for interprocedural program analyses of probabilistic programs.
    \framework{} provides a novel approach to analyzing loops and recursion in probabilistic programs with nondeterminism and unstructured control-flow
    (\cref{Se:Overview}).
    \cref{Ta:FeatureComparison} gives a comparison between \framework{} and prior analysis frameworks for probabilistic programs in terms of what program features and analysis features are supported.

  \item
    We adopt \emph{regular infinite-tree expressions} to encode the hyper-path through a control-flow hyper-graph and devise a formulation of their differential (required by Newton's method) in a new family of algebras, \emph{$\omega$-continuous pre-Markov algebras} ($\omega$PMAs)
    (\cref{Se:TechnicalDetails}).
    We also develop a theory of regular hyper-paths that justifies the use of regular infinite-tree expressions in our framework.
  
  \item
    We created a prototype implementation of \framework{} and instantiated it with four different abstract domains for analyzing probabilistic programs, two of which are new (\cref{Se:CaseStudies}).
    Our preliminary evaluation
    shows
    that \framework{} outperforms Kleene iteration on the two existing abstract domains.
    We developed two new abstract domains for expectation-invariant analysis to show the
    generality of \framework{} for designing program analyses.
\end{itemize}
\cref{Se:RelatedWork} discusses related work.
\cref{Se:Conclusion} concludes.


\section{Overview}
\label{Se:Overview}

This section explains the main ideas of \framework{}.
\cref{Se:PriorWorkNewtonianProgramAnalysis} reviews Newtonian Program Analysis (NPA)
for interprocedural dataflow analysis.
\cref{Se:TheGap} discusses the gap between NPA and our work in terms of the support
of general state space and multiple kinds of branching.
\cref{Se:ThisWorkAConfluenceCentricAnalysisFramework}
describes \framework{}'s algebraic foundation and program representation
for bridging the gap, and how it applies Newton's method to solve
interprocedural analysis of probabilistic programs via analysis-supplied strategies.

\subsection{Review: Newtonian Program Analysis (NPA)}
\label{Se:PriorWorkNewtonianProgramAnalysis}

A \emph{semiring} is an algebraic structure $\calS = \tuple{D,{\oplus},{\otimes},{\azero},{\aone}}$,
where $\tuple{D,{\otimes},\aone}$ is a monoid
(i.e., $\otimes$ is an associative binary operation with $\aone$ as its identity element);
$\tuple{D,{\oplus},\azero}$ is a monoid in which $\oplus$ (\emph{combine}) is commutative;
and $\otimes$ (\emph{extend}) distributes over $\oplus$.
An \emph{$\omega$-continuous semiring} is a semiring in which
(i) the relation ${\aord} \defeq \{ (a,b) \in D \times D \mid \Exists{d} a \oplus d = b \}$ is a partial order,
(ii) every $\omega$-chain $a_0 \aord a_1 \aord \cdots \aord a_n \aord \cdots$ has a supremum with respect to $\aord$, denoted by $\bigsqcup^{\uparrow}_{i \in \bbN} a_i$, and
(iii) both $\oplus$ and $\otimes$ are $\omega$-continuous in both arguments.
A mapping $f : D \to D$ is \emph{$\omega$-continuous},
  if $f$ is monotone and for any $\omega$-chain $(a_i)_{i \in \bbN}$ it holds that $f(\bigsqcup^{\uparrow}_{i \in \bbN} a_i) = \bigsqcup^{\uparrow}_{i \in \bbN} f(a_i)$.

\begin{example}[The real semiring]\label{Exa:RealSemiring}
  A canonical example of $\omega$-continuous semirings is the \emph{real semiring} whose
  elements are nonnegative real numbers together with infinity: $\tuple{\bbR_{\ge 0} \cup \{\infty\}, {+}, {\cdot}, 0, 1}$.
\end{example}

NPA generalizes Newton's method to solve interprocedural equation systems over a semiring.
We demonstrate NPA with the finite-state probabilistic program shown in \cref{Fi:RunningExampleNPA}(a).
Its termination probability can be obtained as the least solution to an equation system
that encodes the program with respect to the real semiring.
We encode procedure $X$ as an equation $X = f(X)$ where
$f(X) \defeq \underline{\frac{1}{2}} {\oplus} (\underline{\frac{1}{2}} {\otimes} X {\otimes} X)$, where
underlining denotes semiring constants.
The equation can be viewed as a representation of
procedure $X$'s interprocedural control-flow graph (CFG);
see \cref{Fi:RunningExampleNPA}(b).

\begin{wrapfigure}{r}{0.22\textwidth}
\centering
\vspace{-0.2em}
\begin{subfigure}[b]{0.22\textwidth}
\centering
\begin{small}
\begin{pseudo*}
  \kw{proc} $X$() \kw{begin} \\+
    \kw{if} \kw{prob}($\frac{1}{2}$) \kw{then} \\+
      \kw{skip} \\-
    \kw{else} \\+
      $X$(); $X$() \\-
    \kw{fi} \\-
  \kw{end}
\end{pseudo*}
\end{small}
\vspace{-0.5em}
\caption{An example program}
\end{subfigure}
\begin{subfigure}[b]{0.22\textwidth}
\centering
\textsf{\small{proc $X$}}
\vspace{0.5em}

\begin{tikzpicture}[node/.style={circle,fill,minimum size=0pt,inner sep=2pt,font=\small},every edge quotes/.style={font=\small},node distance=0.4cm]
  \node[node] (v1) {};
  \node[node] (v2) [below right=of v1] {};
  \node[node] (v3) [below=of v2] {};
  \node[node] (v4) [below left=of v3] {};

  \draw (v1.south east) edge["$\,\,\sfrac{1}{2}\!\!$",above,inner sep=1.5pt,->] (v2.north west);
  \draw (v2.south) edge["$\!X$",right,->] (v3.north);
  \draw (v3.south west) edge["$X$",below right,inner sep=0.5pt,->] (v4.north east);
  \draw (v1.south) edge["$\sfrac{1}{2}\!$",left,->] (v4.north);
\end{tikzpicture}
\caption{$X=f(X)$}
\end{subfigure}
\begin{subfigure}[b]{0.22\textwidth}
\centering
\textsf{\small{proc $Y$}}
\vspace{0.5em}

\begin{tikzpicture}[node/.style={circle,fill,minimum size=0pt,inner sep=2pt,font=\small},every edge quotes/.style={font=\small},node distance=0.4cm]
  \node[node] (v1) {};
  \node[node] (v2) [below right=0.4cm and 0.2cm of v1] {};
  \node[node] (v22) [right=of v2] {};
  \node[node] (v3) [below=of v2] {};
  \node[node] (v32) [right=of v3] {};
  \node[node] (v4) [below left=0.4cm and 0.2cm of v3] {};

  \draw (v1.south east) edge["$\sfrac{1}{2}$",right,inner sep=0.5pt,->] (v2.north west);
  \draw (v2.south) edge["$\!Y$",right,->] (v3.north);
  \draw (v3.south west) edge["$\nu$",right,inner sep=1.5pt,->] (v4.north east);
  \draw (v1.west) edge["$\delta\!$",left,->,bend right=90,min distance=0.8cm] (v4.west);
  \draw (v1.east) edge["$\sfrac{1}{2}$",right,inner sep=3.5pt,->] (v22.north);
  \draw (v22.south) edge["$\nu$",right,->] (v32.north);
  \draw (v32.south) edge["$Y$",below right,inner sep=1pt,->] (v4.east);
\end{tikzpicture}
\caption{$Y=\delta \oplus \calD f|_{\nu}(Y) $}
\end{subfigure}
\caption{The running example of NPA.}
\label{Fi:RunningExampleNPA}
\end{wrapfigure}

One can apply \emph{Kleene iteration} to find the least fixed-point of $X=f(X)$ via
the sequence $\kappa^{(0)} = \azero$ and $\kappa^{(i+1)} = f(\kappa^{(i)})$ for $i \in \bbN$.
With NPA, we solve the following sequence of subproblems for $\nu$:\footnote{
\citeauthor{ICALP:EKL08} described a general technique for solving the multivariate case,
i.e., finding the least fixed-point of $\vec{X} = \vec{f}(\vec{X})$.
In \cref{Se:PriorWorkNewtonianProgramAnalysis}, for brevity, we only describe their technique for solving the univariate case.}
\begin{wrapped}{align}
  \nu^{(0)} & = f(\azero), & \nu^{(i+1)} = \nu^{(i)} \oplus \Delta^{(i)}, \label{Eq:NewtonIteration}
\end{wrapped}
where $\Delta^{(i)}$ is the least solution of
\begin{wrapped}{equation}\label{Eq:NewtonSubproblem}
Y = \delta^{(i)} \oplus \calD f|_{\nu^{(i)}}(Y),
\end{wrapped}
$\delta^{(i)}$ is any element that satisfies $f(\nu^{(i)}) = \nu^{(i)} \oplus \delta^{(i)}$,
and $\calD f|_{\nu^{(i)}}(Y)$ is the \emph{differential} of $f$ at $\nu^{(i)}$.\footnote{
  When \cref{Eq:NewtonIteration} is compared with the high-level intuition given as \cref{Eq:NPAIntuitive}, the application of $f$ to $\nu^{(i)}$ on the right-hand side of \cref{Eq:NewtonIteration} is a bit hidden.
  $\Delta^{(i)}$ is the least solution of \cref{Eq:NewtonSubproblem}, and therefore $\Delta^{(i)} \sqsupseteq \delta^{(i)} \oplus \calD f|_{\nu^{(i)}}(\azero)$.
  Because $f(\nu^{(i)}) = \nu^{(i)} \oplus \delta^{(i)}$, the right-hand side of \cref{Eq:NewtonIteration} can be seen as $\nu^{(i)} \oplus \Delta^{(i)} \sqsupseteq \nu^{(i)} \oplus \delta^{(i)} \oplus \calD f|_{\nu^{(i)}}(\azero) = f(\nu^{(i)}) \oplus \calD f|_{\nu^{(i)}}(\azero)$.
}

\begin{definition}[\cite{ICALP:EKL08,JACM:EKL10}]\label{De:UnivariateDifferential}
  Let $f(X)$ be a function expressed as an expression in an $\omega$-continuous semiring.
  The \emph{differential} of $f(X)$ at $\nu$, denoted by $\calD f|_{\nu}(Y)$, is defined as follows:
  {\small\begin{wrapped}{equation*}
  \begin{array}{|l|}
  \hline
  \calD f|_{\nu}(Y) \defeq \begin{dcases*}
    \azero & \textrm{if} $f = a \in D$ \\
    Y & \textrm{if} $f = X$ \\
    \calD g|_\nu(Y) \oplus \calD h|_\nu(Y) & \textrm{if} $f = g \oplus h$ \\
    (\calD g|_\nu(Y) \otimes h(\nu)) \oplus (g(\nu) \otimes \calD h|_\nu(Y)) & \textrm{if} $f = g \otimes h$
  \end{dcases*} \\ \hline
  \end{array}
  \end{wrapped}}
\end{definition}

As discussed in \cref{Se:Introduction}, \cref{Eq:NewtonIteration,Eq:NewtonSubproblem} resemble Kleene iteration,
except that in each iteration, $f(\nu^{(i)})$ is adjusted by a \emph{linear} correction term $\calD f|_{\nu^{(i)}}(Y)$.
A consequence of \cref{De:UnivariateDifferential} is that the right-hand side of
\cref{Eq:NewtonSubproblem} is of the form $m_1 \oplus \cdots \oplus m_\ell$,
where $\ell \ge 1$, and $m_1,\cdots,m_\ell$ are each \emph{linear} expressions of the form
$a \otimes Y \otimes b$ for $a,b\in D$.

\begin{remark}
  Note how the case for ``$g \otimes h$'' in \cref{De:UnivariateDifferential} resembles the
  (univariate)
  Leibniz
  product rule from differential calculus:
  $
  d(g \ast h) = d g \ast h + g \ast d h .
  $
\end{remark}

\begin{remark}
  The key property of the differentials is that for any elements $\nu$ and $\Delta$, it holds that
  \begin{equation}\label{Eq:NewtonLinearizationProperty}
  f(\nu) \oplus \calD f|_\nu(\Delta) \aord f(\nu \oplus \Delta),
  \end{equation}
  i.e., the linearization of $f(\nu \oplus \Delta)$ at $\nu$ is always an
  \textbf{under-approximation}, thus the iteration defined in \cref{Eq:NewtonIteration}
  will not ``overshoot'' the least fixed-point.
  We include a formal discussion in \cref{Se:AnalysisNewton}.
\end{remark}

Recall the interprocedural equation $X = f(X)$ of the example probabilistic program shown in \cref{Fi:RunningExampleNPA}(a)
where $f(X) \defeq \underline{\frac{1}{2}} \oplus (\underline{\frac{1}{2}} \otimes X \otimes X)$.
The differential of $f$ at $\nu$ is
$
  \calD (\underline{\tfrac{1}{2}} {\oplus} (\underline{\tfrac{1}{2}} {\otimes} X {\otimes} X)) |_\nu(Y)  =
  \azero {\oplus} ( (\azero {\otimes} \nu {\otimes} \nu) {\oplus} (\underline{\tfrac{1}{2}} {\otimes} Y {\otimes} \nu) {\oplus} (\underline{\tfrac{1}{2}} {\otimes} \nu {\otimes} Y) )
  = (\underline{\tfrac{1}{2}} {\otimes} Y {\otimes} \nu) {\oplus} (\underline{\tfrac{1}{2}} {\otimes} \nu {\otimes} Y) .
$
Let $\delta$ be an element that satisfies $f(\nu) = \nu \oplus \delta$.
From \cref{Eq:NewtonSubproblem}, we then obtain the following linearized equation, which is also illustrated
graphically in \cref{Fi:RunningExampleNPA}(c):
\begin{equation}\label{Eq:NewtonLinearizedEquation}
Y = \delta \oplus (\underline{\tfrac{1}{2}} \otimes Y \otimes \nu) \oplus (\underline{\tfrac{1}{2}} \otimes \nu \otimes Y).
\end{equation}
Recall that we interpret the algebraic equation with respect to the real semiring to compute the termination probability.
Thus, on the $(i+1)^\textit{st}$ Newton round, we can compute $\delta$ as $\delta \defeq f(\nu^{(i)}) - \nu^{(i)}$, where
$\nu^{(i)}$ is the value for $\nu$ obtained in the $i^\textit{th}$ round.
Consequently, on the $(i+1)^\textit{st}$ Newton round, \cref{Eq:NewtonLinearizedEquation}
becomes the linear numeric equation
$
Y = ( ( \tfrac{1}{2} + \tfrac{1}{2} \cdot \nu^{(i)} \cdot \nu^{(i)} ) - \nu^{(i)}) + \tfrac{1}{2} \cdot Y \cdot \nu^{(i)} + \tfrac{1}{2} \cdot \nu^{(i)} \cdot Y,
$
which means that $\Delta^{(i)} \defeq (1-\nu^{(i)})/2$, and thus the iterative step in \cref{Eq:NewtonIteration}
is $\nu^{(i+1)} \gets \nu^{(i)} \oplus \Delta^{(i)} = (1+\nu^{(i)})/2$.
Because $\nu^{(0)} \defeq f(\azero) = \frac{1}{2}$, we have $\nu^{(i)} = 1-2^{-(i+1)}$ for any $i \in \bbN$, i.e.,
the Newton sequence converges to $1$ (so the example program shown in \cref{Fi:RunningExampleNPA}(a) terminates with probability one),
and gains one bit of precision per Newton iteration.
In contrast, Kleene iteration would yield a sequence such that $\kappa^{(i)} \le 1 - \frac{1}{i+1}$ for $i \in \bbN$,
and thus one needs to perform about $2^i$ Kleene rounds to obtain $i$ bits of precision---i.e.,
Kleene iteration converges exponentially slower.

So far we have assumed that we can
encode a procedure $X$ as an equation $X=f(X)$
where $f(X)$ is \emph{polynomial} of $X$ in a semiring,
i.e., an expression formed by operations $\oplus$, $\otimes$, $X$, and semiring constants.
However, in the general case where the control-flow of $X$ contains loops or
jumps (e.g., \kw{goto}, \kw{break}, and \kw{continue}), it is unclear if we can
still encode $f(X)$ as some finite representation on which we
can define the differentiation operator.
It is a standard workaround that one can transform a program to a
loop-free one by introducing auxiliary procedures.
On the other hand, recently, \citet{POPL:RTP16} proposed a technique that extends NPA to solve equations
whose right-hand sides are expressions in a regular algebra,
which uses a Kleene-star operation $\circledast$ to encode
looping control-flow.
%
\citeauthor{POPL:RTP16}'s technique then
computes the differential of Kleene-star as $
\calD f|_\nu(Y) \defeq (g(\nu))^{\circledast} \otimes \calD g|_{\nu}(Y) \otimes (g(\nu))^{\circledast}$,
if $f = g^\circledast$.
This rule---like the other rules in \cref{De:UnivariateDifferential}---produces a linear expression.\footnote{
The technique was developed for \emph{idempotent} semirings;
the semirings one usually works with in a probabilistic setting, e.g., the real semiring,
are typically not idempotent.}
Thus, by integrating with a path-expression algorithm (e.g., \cite{JACM:Tarjan81:Alg})
that computes a regular expression that recognizes the set of paths through a procedure,
one can apply NPA to analyze programs with unstructured control-flow.

\subsection{Gap: General State Space and Multiple Confluences}
\label{Se:TheGap}

\begin{figure}
\centering
\begin{subfigure}[b]{0.32\textwidth}
\centering
\begin{small}
\begin{pseudo*}
  \kw{proc} $X_1$() \kw{begin} \\+
    \kw{if} \kw{prob}($\frac{2}{3}$) \kw{then} \\+
      \kw{skip} \\-
    \kw{else} \\+
      $t \coloneqq t + 1$; \\
      $X_1$(); \\
      $X_1$() \\-
    \kw{fi} \\-
  \kw{end}
\end{pseudo*}
\end{small}
\caption{General state space}
\end{subfigure}
\begin{subfigure}[b]{0.32\textwidth}
\centering
\begin{small}
\begin{pseudo*}
  \kw{proc} $X_2$() \kw{begin} \\+
    \kw{if} $b$ \kw{then} \\+
      \kw{skip} \\-
    \kw{else} \\+
      \kw{if} \kw{prob}($\frac{1}{3}$) \kw{then} $b \coloneqq \kw{true}$ \\
      \kw{else} $b \coloneqq \kw{false}$ \kw{fi}; \\
      $X_2$() \\-
    \kw{fi} \\-
  \kw{end}
\end{pseudo*}
\end{small}
\caption{Multiple kinds of branching}
\end{subfigure}
\begin{subfigure}[b]{0.32\textwidth}
\centering
\begin{small}
\begin{pseudo*}
  \kw{proc} $X_3$() \kw{begin} \\+
    \kw{if} $z \le 0$ \kw{then} \\+
      \kw{skip} \\-
    \kw{else} \\+
      \kw{if} \kw{prob}($\frac{1}{2}$) \kw{then} $z \coloneqq z + 1$ \\
      \kw{else} $z \coloneqq z - 2$ \kw{fi}; \\
      $X_3$() \\-
    \kw{fi} \\-
  \kw{end}
\end{pseudo*}
\end{small}
\caption{Combination of (a) and (b)}
\end{subfigure}
\caption{Examples to demonstrate the gap.}
\label{Fi:GapExamples}
\end{figure}

\cref{Se:PriorWorkNewtonianProgramAnalysis} showed that NPA can be instantiated with
the real semiring to analyze the termination probability of a finite-state probabilistic
program whose branches are all purely probabilistic.
A natural question then arises: is NPA capable of analyzing general-state probabilistic
programs with more than purely probabilistic branches?
In the four examples below, we argue that NPA can partly support those features, but
\emph{cannot} deal with a combination of them in general.

\begin{example}[General state space]\label{Exa:NPAGeneralStateSpace}
NPA uses $\omega$-continuous semirings as its algebraic foundation, which does
not impose any restriction on the finiteness of the state space.
Therefore, when all the branches are probabilistic (like $\kw{prob}(\frac{1}{2})$ in
\cref{Fi:RunningExampleNPA}(a)) and we know how to turn probabilities into semiring
constants, we can use NPA to carry out the analysis.
For example, \cref{Fi:GapExamples}(a) shows a program that manipulates a real-valued
program variable $t$; thus, the state space of this program is not finite.
We now construct a semiring $\calS_1 = \tuple{D_1,\oplus_1,\otimes_1,\azero_1,\aone_1}$
to analyze the expected difference between the final and initial value of $t$.
The domain includes pairs of nonnegative real numbers
$D_1 \defeq (\bbR_{\ge 0} \cup \{\infty\}) \times (\bbR_{\ge 0} \cup \{\infty\})$,
in the sense that a pair $(p,d)$ records the termination probability $p$ and
the expected difference $d$.\footnote{We formulate $(p,d)$ in a way that $d$ is scaled by $p$; that is, the actual expected difference is $\frac{d}{p}$.}
We then map probabilities into $\calS_1$ as $\underline{p} \defeq (p,0)$
and assignments to $t$ as $\interp{t \coloneqq t + d} \defeq (1,d)$.
The binary operators and constants of $\calS_1$ are defined as follows:
\begin{alignat*}{4}
  (p_1,d_1) {\oplus_1} (p_2,d_2) & \defeq (p_1{+}p_2,d_1{+}d_2), \enskip &
  (p_1,d_1) {\otimes_1} (p_2,d_2) & \defeq (p_1 p_2, p_1d_2 {+}p_2d_1), \enskip &
  \azero_1 & \defeq (0,0), \enskip & \aone_1 & \defeq (1,0) .
\end{alignat*}
Using this semiring, NPA is capable of deriving that for the procedure $X_1$
in \cref{Fi:GapExamples}(a), the termination probability is $1$ and the expected
difference of $t$ is also $1$.
\end{example}

\begin{example}[Multiple kinds of branching]\label{Exa:NPAMultipleBranching}
Probabilistic programs usually exhibit multiple kinds of branching behavior, e.g.,
probabilistic and conditional.
\cref{Fi:GapExamples}(b) gives a program with a single Boolean-valued program variable $b$,
where both probabilistic branching ($\kw{if}~\kw{prob}(\frac{1}{3})~\ldots$) and
conditional branching ($\kw{if}~b~\ldots$) exist.
At first glance, it seems that we cannot use NPA to analyze this program,
because a semiring contains only \emph{one} combine operation $\oplus$, whereas
this program requires \emph{two} different kinds of branching.
Nevertheless, because the program's state space is finite, we can transform procedure
$X_2$ into the following two procedures $X_{2,\kw{true}}$ and $X_{2,\kw{false}}$, which use
only purely probabilistic branches and correspond to the cases where the initial value
of $b$ is $\kw{true}$ and $\kw{false}$, respectively:
\begin{minipage}{\textwidth}
\centering
\begin{minipage}{0.3\textwidth}
\centering
\begin{small}
\begin{pseudo*}
  \kw{proc} $X_{2,\kw{true}}$() \kw{begin} \\+
    \kw{skip} \\-
  \kw{end}
\end{pseudo*}  
\end{small}
\end{minipage}
\begin{minipage}{0.6\textwidth}
\centering
\begin{small}
\begin{pseudo*}
  \kw{proc} $X_{2,\kw{false}}$() \kw{begin} \\+
    \kw{if} \kw{prob}($\frac{1}{3}$) \kw{then} $X_{2,\kw{true}}$() \kw{else} $X_{2,\kw{false}}$() \kw{fi} \\-
  \kw{end}
\end{pseudo*}  
\end{small}
\end{minipage}
\end{minipage}
In other words, we create extra procedures to simulate state changes explicitly.
This approach has been used to analyze finite-state probabilistic programs, by means
of modeling those programs as recursive Markov chains~\cite{STACS:EY05} and
probabilistic pushdown automata~\cite{LICS:EKM04}, as well as some special classes
of infinite-state programs that can be modeled by, e.g., probabilistic one-counter
automata~\cite{CAV:BKK11}.
\end{example}

\begin{example}[The general case]\label{Exa:NPAGeneralCase}
It is now natural to consider the combination of the previous two examples:
what if we want to analyze infinite-state probabilistic programs with multiple kinds
of branching?
For example, \cref{Fi:GapExamples}(c) implements a one-axis random walk, using
an integer-valued program variable $z$ to keep track of the current position.
The program uses a conditional choice ($z \le 0$) to determine the stopping criterion
and a probabilistic choice ($\kw{prob}(\frac{1}{2})$) to move randomly on the axis.
We can still follow the methodology of \cref{Exa:NPAGeneralStateSpace} to devise
a semiring to capture semantic properties; for example, we can use distribution
transformers $\bbN \to \underline{\bbD}(\bbN)$ as the domain, where $\underline{\bbD}(\bbN)$
denotes the set of sub-probability distributions on natural numbers,
and then define the combine
operation $\oplus$ to be pointwise distribution addition and the extend operation
$\otimes$ to be transformer composition~\cite{JCSS:Kozen81}.

However, such a domain is usually too complex to be automated as a program analysis.
Instead, people usually apply \emph{abstractions} when designing program analyses.
For example, several lines of work have applied predicate abstraction~\cite{CAV:GS97}
to analyze probabilistic programs and systems~\cite{CAV:HWZ08,VMCAI:KKN09,TACAS:HHW10,ATVA:FHH12,ICML:HBM18}.
A consequence of using abstraction is that some parts of the analyzed program need to
be abstracted by \emph{nondeterminism}.
To illustrate this issue, let us consider analyzing the program in \cref{Fi:GapExamples}(c)
with predicates $(z \le 0)$ and $(z = 1)$.
When both predicates are false---which means that $z > 1$---we have to model the 
assignment $z \coloneqq z - 2$ as a nondeterministic assignment to the two predicates,
because, e.g., $(z =1)$ should become \kw{true} if $z$ is $3$, otherwise it should
be \kw{false}.
In general, the analysis needs to support \emph{nondeterministic branching} because the abstraction
might lack information to model some branching conditions.
It is worth noting that nondeterminism also arises in some other scenarios of analyzing probabilistic
programs~\cite{SFM:FKN11,book:MM05}; for example, the analyzed program might interact with
components with unknown behavior (e.g., system calls).

We have already seen three kinds of branching: probabilistic, conditional, and nondeterministic.
A semantic algebra is then supposed to have three operations for them, respectively;
in this article, we refer to such operations as \emph{confluence operations}.
Although in some cases (such as \cref{Exa:NPAMultipleBranching}) we can unify some subsets
of those confluence operations, there is in general no easy way to unify probabilistic-
and nondeterministic-choices.
One intuitive reason is that a probabilistic confluence calculates a weighted sum,
whereas a nondeterministic one calculates a set union.
For example, if we were using a semiring to analyze the program below, where
the $\star$ symbol denotes nondeterministic-choice,
\begin{minipage}{\textwidth}
\centering
\begin{small}
\begin{pseudo*}
\kw{if} $\star$ \kw{then} \quad (\kw{if} \kw{prob}($\frac{1}{2}$) \kw{then} $z \coloneqq 0$ \kw{else} $z \coloneqq 1$ \kw{fi}) \quad \kw{else} \quad (\kw{if} \kw{prob}($\frac{2}{3}$) \kw{then} $z \coloneqq 0$ \kw{else} $z \coloneqq 1$ \kw{fi}) \quad \kw{fi}
\end{pseudo*}  
\end{small}
\end{minipage}
we would obtain the algebraic expression $(\underline{\frac{1}{2}} \otimes \interp{z \coloneqq 0}) \oplus (\underline{\frac{1}{2}} \otimes \interp{z \coloneqq 1}) \oplus (\underline{\frac{2}{3}} \otimes \interp{z\coloneqq 0}) \oplus (\underline{\frac{1}{3}} \otimes \interp{z \coloneqq 1})$,
which looks like we derive a distribution in which the probability mass sums to $2$!
Therefore, the algebraic foundation of NPA---semirings---is not sufficient in the general case
of analyzing general-state probabilistic programs with multiple confluence operations. 
\end{example}

\paragraph{Summary}
The article presents a solution to support general state space and multiple
confluences and thus develops a framework for Newtonian program analysis of
probabilistic programs. 
We sketch our solution in \cref{Se:ThisWorkAConfluenceCentricAnalysisFramework}
and present it in greater detail in \cref{Se:TechnicalDetails}.

\subsection{Algebras, Expressions, and Newton's Method for Multiple Confluence Operations}
\label{Se:ThisWorkAConfluenceCentricAnalysisFramework}

\begin{wrapfigure}{r}{0.22\textwidth}
\centering
\begin{subfigure}{0.22\textwidth}
\centering
\begin{small}
\begin{pseudo*}
  \kw{proc} $X$() \kw{begin} \\+
    \kw{if} \kw{prob}($\frac{1}{3}$) \kw{then} \kw{skip} \\
    \kw{else} \\+
      $X$(); \\
      \kw{if} $\star$ \kw{then} $X$() \kw{fi} \\-
    \kw{fi} \\-
  \kw{end}
\end{pseudo*}
\end{small}
\vspace{-0.5em}
\caption{An example program}
\end{subfigure}
\begin{subfigure}{0.22\textwidth}
\centering
\begin{tikzpicture}[op/.style={fill=gray!40,inner sep=3pt},node distance=0.15cm,font=\small]
  \node (a) {$\mi{prob}[\frac{1}{3}]$};
  \node (b) [below left=0.15cm and -0.1cm of a] {$\aone$};
  \node (c) [below right=0.15cm and -1cm of a] {$\mi{call}[X]$};
  \node (d) [below=0.15cm of c] {$\mi{ndet}$};
  \node (e) [below left=0.15cm and -0.5cm of d] {$\mi{call}[X]$};
  \node (f) [below=0.15cm of e] {$\aone$};
  \node (g) [below right=0.15cm and -0.1cm of d] {$\aone$};

  \draw (a.south) ++ (-5pt,0) edge[->] (b.north);
  \draw (a.south) edge[->] (c.north);
  \draw (c.south) edge[->] (d.north);
  \draw (d.south) edge[->] (e.north);
  \draw (e.south) edge[->] (f.north);
  \draw (d.south) ++ (5pt,0) edge[->] (g.north);
\end{tikzpicture}
\vspace{-0.5em}
\caption{$f(X)$ as a tree}
\end{subfigure}
\caption{A nondeterministic recursive program.}
\label{Fi:ExampleProgramWithNonlinearRecursionAndNondeterminism}
\end{wrapfigure}

To overcome the multiple-confluence-operation issue,
we devise \emph{$\omega$-continuous pre-Markov algebras} ($\omega$PMAs), which can
be seen as a refinement of $\omega$-continuous semirings and previously proposed pre-Markov algebras (PMAs)~\cite{PLDI:WHR18}.
In this section, we first demonstrate an $\omega$PMA $\calR$, then present the
\emph{representation} and \emph{interpretation} of loop-free probabilistic programs
with respect to $\calR$ and how we define the \emph{differentiation} operator
over such a representation, and finally sketch how we exercise the
representation-interpretation-differentiation recipe to support
probabilistic programs with unstructured control-flow, e.g., loops and jumps.

\paragraph{Algebra}
To demonstrate $\omega$PMAs, let us
consider a termination-probability analysis that computes
the \emph{lower bound} on the termination probability.
For probabilistic-choice (e.g., ``\kw{if} \kw{prob}($\frac{1}{2}$) \kw{then} \ldots \kw{else} \ldots \kw{fi}''),
we need an operation that performs addition,
as we have shown in \cref{Se:PriorWorkNewtonianProgramAnalysis}.
For nondeterministic-choice, we instead need an operation that computes the minimum of two values.
Therefore,
we extend the real semiring to be an $\omega$PMA
$\calR = \tuple{\bbR_{\ge 0} \cup \{\infty\}, {\oplus}, {\otimes}, {\gcho{\varphi}}, {\pcho{p}}, {\dashcup}, \azero, \aone}$,
where $\tuple{\bbR_{\ge 0} \cup \{\infty\}, {\oplus}, {\otimes}, \azero,\aone}$ is the real semiring,
and
$\gcho{\varphi}$ (parameterized by a logical condition $\varphi$), $\pcho{p}$ (parameterized by $p \in [0,1]$), and $\dashcup$
are three binary confluence operations
that correspond to conditional-choice, probabilistic-choice, and nondeterministic-choice, respectively.
The confluence operations are defined as $r_1 \pcho{p} r_2 \defeq p\cdot r_1 + (1{-}p) \cdot r_2$ for any $p \in [0,1]$
and $r_1 \dashcup r_2 \defeq \min(r_1,r_2)$.
Because the example does not contain conditional branching, we do not need the $\gcho{\varphi}$ operation.
In general, for any $r_1,r_2$ that satisfy $r_1 \sqsupseteq r_2$, we let $r_1 \ominus r_2$ denote any element $d$ that satisfies $r_2 \oplus d = r_1$.
Here, for nonnegative real numbers, as $r_1 \sqsupseteq r_2 \iff r_1 \ge r_2$, we can define $r_1 \ominus r_2 \defeq r_1 - r_2$.

\begin{remark}
For analysis of non-probabilistic programs with semirings,
the combine operation $\oplus$ plays two roles:
(i) it defines the partial order---sometimes called the abstraction order---in the semiring, and
(ii) it interprets nondeterministic-choice.
However, when probability comes into the setting, especially when
probability interacts with nondeterminism, the order induced by
nondeterministic-choice is usually different from the partial order
in the semantic domain~\cite{ENTCS:TKP09,ENTCS:WHR19}.
Therefore, our design of $\omega$PMAs has both the combine $\oplus$ and
the nondeterministic-choice $\dashcup$ operations, and $\oplus$ is mainly
used to define a partially ordered additive structure. 
\end{remark}

\paragraph{Representation}
We use the program in \cref{Fi:ExampleProgramWithNonlinearRecursionAndNondeterminism}(a)
to demonstrate our design of the form for right-hand sides of equation systems.
We encode a loop-free probabilistic program
as a \emph{simple tree expression}, in which every node corresponds to a program command and every root-to-leaf
path corresponds to a program path.
Confluences are encoded as internal nodes with multiple (typically two) child nodes in the tree.
\cref{Fi:ExampleProgramWithNonlinearRecursionAndNondeterminism}(b) presents the abstract syntax tree (AST) of
the following simple tree expression that corresponds to \cref{Fi:ExampleProgramWithNonlinearRecursionAndNondeterminism}(a),
where $\aone$ denotes \kw{skip}, $\mi{call}[X](\cdot)$ is a procedure call to $X$
whose argument encodes the continuation of the call,
$\mi{prob}[p]({\cdot},{\cdot})$ is a
probabilistic-choice command that takes the left branch with probability $p$, and $\mi{ndet}({\cdot},{\cdot})$ is nondeterministic-choice:
\begin{align}\label{Eq:ExampleRecursionNondeterminism}
  X & = f(X) \defeq \mi{prob}[\tfrac{1}{3}]\bigl(\aone, \mi{call}[X]\bigl( \mi{ndet}( \mi{call}[X]( \aone ), \aone ) \bigr) \bigr) .
\end{align}

\paragraph{Interpretation}
We use a syntax-directed
method to interpret simple tree expressions over $\calR$,
where we parameterize the
interpretation by a value $\nu$ for interpreting procedure calls to $X$:
{\begin{align*}
   \calR\interp{\underline{r}}(\nu) & \defeq \underline{r}, & \calR\interp{\mi{prob}[p](E_1,E_2)}(\nu) & \defeq \calR\interp{E_1}(\nu) \pcho{p}  \calR\interp{E_2}(\nu), \\
  \calR\interp{\mi{call}[X](E)}(\nu) & \defeq {\nu} \otimes \calR\interp{E}(\nu), & \calR\interp{\mi{ndet}(E_1,E_2)}(\nu) & \defeq \calR\interp{E_1}(\nu) \dashcup \calR\interp{E_2}(\nu).
\end{align*}}%
We usually use $f$ to denote its own interpretation, i.e., $f(\nu)$ means $\calR\interp{f(X)}(\nu)$.
The solution of an equation $X=f(X)$ is then a value $\nu$ that satisfies $\nu = f(\nu)$.

\paragraph{Differentiation}
We now describe how our extension of NPA works by showing how \framework{} solves \cref{Eq:ExampleRecursionNondeterminism} over $\calR$.
Recall that NPA requires a notion of \emph{differential} (see \cref{De:UnivariateDifferential} for the semiring case).
We want to define the differential of $f(X)$ (expressed as a simple tree expression) at a value $\nu$,
still denoted by $\calD f|_\nu(Y)$.
Because $\calR$ is an extension of the real semiring, it is natural for us to define $\calD f|_\nu(Y)$ as a
conservative extension of \cref{De:UnivariateDifferential}:
\begin{itemize}
  \item
  If $f = \underline{r}$: Because $\calR\interp{\underline{r}}(\nu) = \underline{r}$ and
  \cref{De:UnivariateDifferential} defines the differential of any constant to be zero,
  we derive $\calD f|_\nu(Y) \defeq \azero$.
  \item
  If $f = \mi{call}[X](g)$: Because $\calR\interp{\mi{call}[X](g)}(\nu) = \nu \otimes \calR\interp{g}(\nu)$ and the differential of $\otimes$ resembles the
  Leibniz
  product rule, we can treat $\mi{call}[X](g)$ as $X \otimes g$ and its differential at $\nu$ is $(Y \otimes g) \oplus (\nu \otimes \calD g|_\nu(Y))$.
  The monomial $Y \otimes g$ can then be transformed back to
  the tree form $\mi{call}[Y](g)$.
  For the monomial $\nu \otimes \calD g|_\nu(Y)$,
  where $\nu$ is a specific known value,
  we introduce a sequencing command $\mi{seq}[\underline{r}]$ whose interpretation is $\calR\interp{\mi{seq}[\underline{r}](E)}(\nu) \defeq \underline{r} \otimes \calR\interp{E}(\nu)$.
  Thus, we derive $\calD f|_\nu(Y) \defeq \mi{call}[Y](g(\nu)) \oplus \mi{seq}[\nu](\calD g|_\nu(Y))$.
  \item
  If $f = \mi{ndet}(g,h)$:
  Recall that we want $\calD f|_\nu(Y)$ to satisfy the under-approximation property \cref{Eq:NewtonLinearizationProperty}:
  \[
  \mi{ndet}(g(\nu),h(\nu)) \oplus \calD f|_\nu(Y) \aord \mi{ndet}(g(\nu \oplus Y), h(\nu \oplus Y)).
  \]
  Applying \cref{Eq:NewtonLinearizationProperty} to the sub-tree-expressions $g$ and $h$, we have
  \[
  g(\nu) \oplus \calD g|_\nu(Y) \aord g(\nu \oplus Y), \qquad h(\nu) \oplus \calD h|_\nu(Y) \aord h(\nu \oplus Y).
  \]
  If $\mi{ndet}$ is a \emph{monotone} operation, we have
  \[
  \mi{ndet}(g(\nu) \oplus \calD g|_\nu(Y), h(\nu) \oplus \calD h|_\nu(Y)) \aord \mi{ndet}(g(\nu \oplus Y), h(\nu \oplus Y) ).
  \]
  Thus, we can safely define $\calD f|_\nu(Y) \defeq \mi{ndet}(g(\nu) \oplus \calD g|_\nu(Y), h(\nu) \oplus \calD h|_\nu(Y)) \ominus \mi{ndet}(g(\nu),h(\nu))$.
  \item
  If $f = \mi{prob}[p](g,h)$:
  We could repeat the derivation for $\mi{ndet}$ by assuming $\mi{prob}[p]$ is a monotone operation;
  however, we take a different approach because of an observation that $\mi{prob}[p]$ usually
  satisfies more algebraic properties than $\mi{ndet}$, i.e., for any elements $a,b,c,d$,
  \[
  \mi{prob}[p](a \oplus b, c \oplus d) = \mi{prob}[p](a, c) \oplus \mi{prob}[p](b, d).
  \]
  For the $\omega$PMA $\calR$, the equation holds trivially because
  $p \cdot (a+b) + (1-p) \cdot (c+d) = p \cdot a + (1-p) \cdot c + p \cdot b + (1-p) \cdot d$.
  We then have
  \begin{align*}
  \mi{prob}[p](g(\nu \oplus Y), h(\nu \oplus Y))
  & \sqsupseteq \mi{prob}[p](g(\nu) \oplus \calD g|_\nu(Y), h(\nu) \oplus \calD h|_\nu(Y))\\
  & = \mi{prob}[p](g(\nu),h(\nu)) \oplus \mi{prob}[p](\calD g|_\nu(Y), \calD h|_\nu(Y)) .
  \end{align*}
  Thus, we can safely define $\calD f|_\nu(Y) \defeq \mi{prob}[p](\calD g|_\nu(Y), \calD h|_\nu(Y))$.
\end{itemize}
To summarize, we have derived
the following rules:
{\begin{equation*}\begin{array}{|l|} \hline
\calD f|_\nu(Y) \defeq \begin{dcases*}
  \underline{0} & if $f = \underline{r}$ \\
  \mi{call}[Y](g(\nu)) \oplus \mi{seq}[\nu]( \calD g|_\nu(Y)) & if $f = \mi{call}[X](g)$ \\
  \mi{ndet}(g(\nu) \oplus \calD g|_\nu(Y), h(\nu) \oplus \calD h|_\nu(Y)) \ominus \mi{ndet}(g(\nu), h(\nu)) & if $f = \mi{ndet}(g,h)$ \\
  \mi{prob}[p](\calD g|_\nu(Y), \calD h|_\nu(Y)) & if $f=\mi{prob}[p](g,h)$
\end{dcases*} \\ \hline \end{array}
\end{equation*}}

\paragraph{Newton's method for $\omega$PMAs}
We then apply Newton's method to find the least fixed-point of $X = f(X)$
by solving a sequence of subproblems for $\nu$:
$\nu^{(0)} = f(\underline{0})$ and $\nu^{(i+1)}  = \nu^{(i)} \oplus \Delta^{(i)}$ for $i \ge 0$,
where $\Delta^{(i)}$ is the least solution of
$
Y = \delta^{(i)} \oplus \calD f|_{\nu^{(i)}} (Y),
$
and $\delta^{(i)} \defeq f(\nu^{(i)}) \ominus \nu^{(i)}$.
\cref{Fi:RightHandSideOfTheLinearizedEquation} graphically shows the subproblem in terms of $Y$ for
the example program shown in \cref{Fi:ExampleProgramWithNonlinearRecursionAndNondeterminism}(a) when $X$ has some value $\nu$.
A key feature of the AST for the differential is that every root-to-leaf path contains
\emph{at most} one $\mi{call}[\cdot]$ node;
thus, the subproblem in terms of $Y$ is again a \emph{linearized} equation.
We now unfold the definition of all algebraic operations and
obtain the following linear numeric equation:
\begin{equation*}
Y = (\tfrac{1}{3} + \tfrac{2}{3} \cdot \nu \cdot \min(\nu, 1) - \nu) + \tfrac{2}{3} \cdot (Y \cdot \min(\nu, 1) + \nu \cdot ( \min(\nu + Y, 1 ) - \min(\nu,1) ) ).
\end{equation*}%
\begin{wrapfigure}{r}{0.3\textwidth}
\centering
\begin{tikzpicture}[op/.style={fill=gray!40,inner sep=3pt},node distance=0.3cm,font=\small]
  \node (a) {$\oplus$};
  \node (b) [below left=0.2cm and -0.1cm of a] {$\delta$};
  \node (c) [below right=0.2cm and -0.1cm of a] {$\mi{prob}[\frac{1}{3}]$};
  \node (d) [below left=0.2cm and -0.3cm of c] {$\azero$};
  \node (e) [below right=0.2cm and -0.5cm of c] {$\oplus$};
  \node (f) [below left=0.2cm and -0.1cm of e] {$\mi{call}[Y]$};
  \node (g) [below right=0.2cm and -0.1cm of e] {$\mi{seq}[\nu]$};
  \node (h) [below=0.2cm of f] {$(\nu \otimes \aone) \dashcup \aone$};
  \node (i) [below=0.2cm of g] {$\ominus$};
  \node (j) [below left=0.2cm and -0.1cm of i] {$\mi{ndet}$};
  \node (k) [below right=0.2cm and -0.1cm of i] {$(\nu \otimes \aone) \dashcup \aone$};
  \node (l) [below left=0.2cm and 0.3cm of j] {$\oplus$};
  \node (m) [below right=0.2cm and 0.3cm of j] {$\oplus$};
  \node (n) [below left=0.2cm and -0.1cm of l] {$\nu \otimes \aone$};
  \node (o) [below right=0.2cm and -0.3cm of l] {$\mi{call}[Y]$};
  \node (p) [below left=0.2cm and -0.1cm of m] {$\aone$};
  \node (q) [below right=0.2cm and -0.1cm of m] {$\azero$};
  \node (r) [below=0.2cm of o] {$\aone$};

  \draw (a.south) ++ (-2pt,0) edge[->] (b.north);
  \draw (a.south) ++ (2pt,0) edge[->] (c.north);
  \draw (c.south) ++ (-5pt,0) edge[->] (d.north);
  \draw (c.south) edge[->] (e.north);
  \draw (e.south) ++ (-2pt,0) edge[->] (f.north);
  \draw (e.south) ++ (2pt,0) edge[->] (g.north);
  \draw (f.south) edge[->] (h.north);
  \draw (g.south) edge[->] (i.north);
  \draw (i.south) ++ (-2pt,0) edge[->] (j.north);
  \draw (i.south) ++ (2pt,0) edge[->] (k.north);
  \draw (j.south) ++ (-2pt,0) edge[->] (l.north);
  \draw (j.south) ++ (2pt,0) edge[->] (m.north);
  \draw (l.south) ++ (-2pt,0) edge[->] (n.north);
  \draw (l.south) ++ (2pt,0) edge[->] (o.north);
  \draw (m.south) ++ (-2pt,0) edge[->] (p.north);
  \draw (m.south) ++ (2pt,0) edge[->] (q.north);
  \draw (o.south) edge[->] (r.north);
\end{tikzpicture}
\caption{$\delta \oplus \calD f|_\nu(Y)$ as a tree.}
\label{Fi:RightHandSideOfTheLinearizedEquation}
\end{wrapfigure}%
Our \framework{} framework then uses an analysis-supplied strategy to solve linear equations like the one shown above.
Here, because the equation is simple enough, we can solve it analytically.
Observing probabilities are real numbers in the interval $[0,1]$, we reduce the equation shown above to the following equivalent linear equation:
$
Y = (\tfrac{1}{3} + \tfrac{2}{3} \cdot \nu^2 - \nu) + \tfrac{4}{3} \cdot \nu \cdot Y.
$
Therefore, on the $(i+1)^\textit{st}$ Newton round, we solve the equation above where $\nu$ is set to $\nu^{(i)}$ to
obtain the correction term  $\Delta^{(i)} \defeq \frac{1 + 2 (\nu^{(i)})^2 - 3\nu^{(i)}}{3 - 4\nu^{(i)}}$,
and perform the assignment $\nu^{(i+1)} {\gets} \nu^{(i)} + \Delta^{(i)} = \frac{ 1 - 2(\nu^{(i)})^2 }{3 - 4\nu^{(i)} } $.
The first four elements of the Newton sequence are
$
\nu^{(0)}  = \tfrac{1}{3}, \nu^{(1)} = \tfrac{7}{15},  \nu^{(2)}  = \tfrac{127}{255}, \nu^{(3)}  = \tfrac{32767}{65535} \approx 0.499992,
$
whereas the first four elements of the Kleene sequence are
$
\kappa^{(0)}  = 0, \kappa^{(1)}  = \tfrac{1}{3},  \kappa^{(2)}  = \tfrac{11}{27},  \kappa^{(3)} = \tfrac{971}{2187} \approx 0.443987,
$
and the Newton sequence converges faster than the Kleene sequence.

\paragraph{Unstructured control-flow}
So far we have discussed algebraic analysis of loop-free probabilistic programs.
Following the representation-interpretation-differentiation recipe,
we support algebraic analysis of unstructured probabilistic programs
via a suitable class of tree expressions.
We cannot apply \citet{POPL:RTP16}'s technique (reviewed in \cref{Se:PriorWorkNewtonianProgramAnalysis}),
because a regular algebra still contains only one confluence operation, whereas
we require three, as discussed in \cref{Exa:NPAGeneralCase}.
Instead,
we adopt a standard technique that represents possibly-infinite trees as finite
$\mu$-terms~\cite{JTCS:Courcelle83,kn:Courcelle90}, namely \emph{regular infinite-tree expressions}.
In this section, we briefly describe our development around such regular infinite-tree expressions;
details are given in \cref{Se:TechnicalDetails}.

To represent a loop, we may unroll the loop indefinitely and obtain an \emph{infinite} tree.
Each regular infinite-tree expression is a finite representation of a
possibly-infinite tree, in which every node still corresponds to a program command,
every root-to-leaf path corresponds to a finite program path, and
every rooted indefinite path corresponds to an infinite program path.
Compared with simple tree expressions, regular infinite-tree expressions have a new
form $\mu Z.\, E$, called a \emph{$\mu$-binder},
where the \emph{bound variable} $Z$ can occur multiple times as a leaf node in tree expression $E$,
to encode self-substituting $E$ for $Z$ an infinite number of times.
For example, the infinite tree corresponding to ``\kw{while} $\kw{prob}(\frac{3}{4})$ \kw{do} $x {\coloneqq} x{+}1$ \kw{od}''
can be encoded as
$
\mu Z.\, \mi{prob}[\frac{3}{4}]\bigl( \mi{seq}[x{\coloneqq}x{+}1](Z), \varepsilon\bigr)  .
$
Note that we use $\varepsilon$ instead of $\aone$ to denote a no-op and
$\mi{seq}[\m{act}]$ (where $\m{act}$ is an action such as an assignment) instead of 
$\mi{seq}[\underline{r}]$ (where $\underline{r}$ is an algebra constant) to
denote a sequencing command.
In this way, the representation of a program is independent from the analysis.
Applying Beki{\'c}'s theorem~\cite{kn:Bekic84}, we can compute a regular infinite-tree
  expression from an unstructured probabilistic program.

The introduction of $\mu$-binders poses challenges to the development of
interpretation and differentiation.
For interpretation, a common approach for computing the meaning of a $\mu$-binder
is through some fixed-point mechanism, which is usually iterative.
Our observation is that we can always reduce the interpretation of a
regular infinite-tree expression to solving a system of linear equations whose
right-hand-sides are simple tree expressions (see \cref{Se:SolvingLinearEquations});
thus, we can apply the analysis-supplied
strategy that solves linear equations, which we have used to solve linearized equations
during a Newton round.
For differentiation, we need to compute the differential of $\mu$-binders.
The major challenge is to differentiate those binding structures.
Our idea is to differentiate a $\mu$-binder as if we are differentiating its
  corresponding infinite tree, intuitively one would again obtain an infinite tree,
  and finally we fold the new tree back to a $\mu$-binder (see \cref{Se:AnalysisNewton}).

\paragraph{Summary}
\cref{Fi:Pipeline} illustrates the overall pipeline of interprocedural program analysis in \framework{}.
The inputs of our framework are (i) a probabilistic program $\{X_i \mapsto E_{X_i}\}_{i=1}^n$, where each $E_{X_i}$ is the regular-infinite-tree-expression representation of the procedure $X_i$, and (ii) an abstract domain for the program analysis.
The output of our framework is a vector $\vec{\nu}$ of elements in the abstract domain,
where
each $\nu_i$ is the analysis result for the procedure $X_i$.
\framework{} first
applies a differentiation process (oval 1) and a normalization transformation (oval 2) to extract a system of linear equations;
and, repeatedly, updates some constants in the equations and solves the linear equations (oval 3).

\begin{figure}
\centering
\includegraphics[width=0.9\textwidth]{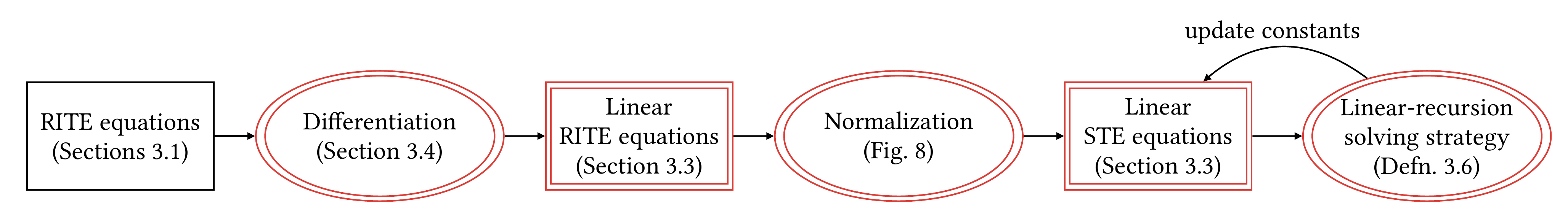}
\caption{The pipeline of \framework{}. ``RITE'' denotes ``regular infinite-tree expression.'' ``STE'' denotes ``simple tree expression.'' Red double borders indicate contributions of this work.
The Newtonian phase (the last oval) repeatedly updates some constants in the linear equations and solves the linear equations.
}
\label{Fi:Pipeline}
\end{figure}


\section{Technical Details}
\label{Se:TechnicalDetails}

We first review regular infinite-tree expressions and how we use them as a representation
of probabilistic programs (\cref{Se:RegularHyperPathExpressions}).
We then proceed to the novel
parts
of \framework{}.
We devise a new family of algebraic structures, called $\omega$PMAs,
for specifying abstract domains of program analyses (\cref{Se:OmegaPMAsAndAlgebraicExpressions}).
We then describe a method for solving a linear equation system over an $\omega$PMA (\cref{Se:SolvingLinearEquations}).
Next, we develop \framework{}'s mechanism
for solving interprocedural equation systems with regular-infinite-tree-expression right-hand sides (\cref{Se:AnalysisNewton}).
We end this section with a discussion for the soundness of \framework{} (\cref{Se:Soundness}).
\iflong
Proofs are included in \Cref{Se:Appendix:CFHG,Se:Appendix:TheoryOfHyperPaths,Se:Appendix:Proofs,Se:Appendix:Soundness}.
\else
The technical report~\cite{Techreport} includes the proofs of this section.
\fi

\subsection{Review: Regular Infinite-tree Expressions}
\label{Se:RegularHyperPathExpressions}

Let $\m{RegExp}^\infty(\calF,\calK)$ denote the set of \emph{regular infinite-tree expressions}
over a ranked alphabet $\calF$ and a set $\calK$ of free variables.
A \emph{ranked alphabet} is a pair $\tuple{\calF,\mi{Arity}}$ where $\calF$
is a nonempty set and $\mi{Arity}$ is a mapping from $\calF$ to $\bbN$.
The \emph{arity} of a symbol $f \in \calF$ is $\mi{Arity}(f)$.
The set of symbols of arity $n$ is denoted by $\calF_n$.
We use parentheses and commas to specify symbols with their arity,
e.g., $f(,)$ specifies a binary symbol $f$.
The set $\calK$ includes \emph{free variables} that can be used as leaves (i.e., symbols with arity zero) and would not be bound
by any $\mu$-binder.
We define $\m{RegExp}^\infty(\calF,\calK)$ to be inductively generated from the rules listed below, where $\uplus$ stands for disjoint union.
%
We use $\infty$ to emphasize that the expressions denote possibly-\emph{infinite} trees.
%
Each regular infinite-tree expression corresponds to exactly one possibly-infinite tree.
\begin{mathpar}\small
  \inferrule*[right=(Leaf)]
  { a \in \calF_0 \uplus \calK }
  { a \in \m{RegExp}^\infty(\calF,\calK) }
  \and
  \inferrule*[right=(Node)]
  { f \in \calF_n \\ n > 0 \\ \Forall{i=1,\cdots,n} E_i \in \m{RegExp}^\infty(\calF,\calK) }
  { f(E_1,\cdots,E_n) \in \m{RegExp}^\infty(\calF,\calK) }
  \and
  \inferrule*[right=(Concat)]
  { E_1 \in \m{RegExp}^\infty(\calF,\calK\uplus\{Z\}) \\ E_2 \in \m{RegExp}^\infty(\calF,\calK) }
  { E_1 \dplus_Z E_2 \in \m{RegExp}^\infty(\calF,\calK) }
  \and
  \inferrule*[right=(Mu)]
  { E \in \m{RegExp}^\infty(\calF,\calK\uplus\{Z\}) }
  { \mu Z.\, E \in \m{RegExp}^\infty(\calF,\calK) }
\end{mathpar}
We include the formal development of possibly-infinite trees in
\iflong
\Cref{Se:Appendix:TheoryOfHyperPaths},
\else
the technical report~\cite{Techreport},
\fi
following a well-studied theory of infinite trees~\cite{JTCS:Courcelle83,kn:Courcelle90}.

A free variable $Z \in \calK$ intuitively represents a \emph{substitution} placeholder,
for which we introduce the following notation.
$E_1 \dplus_Z E_2$ represents the tree obtained by substituting the tree encoded by $E_2$
at each occurrence of a leaf with variable $Z$ in the tree encoded by $E_1$.
A \emph{$\mu$-binder} $\mu Z.\, E$ represents the tree obtained by $(E \dplus_Z E \dplus_Z \cdots \dplus_Z E \dplus_Z \cdots)$,
i.e., self-substituting $E$ for the bound variable $Z$ an infinite number of times.

To represent probabilistic programs,
we define the following ranked alphabet, where $\calA$ is the set of \emph{data actions} and $\calL$ is the set of \emph{logical conditions}, and
$\mi{cond}[\cdot]$, $\mi{prob}[\cdot]$, and $\mi{ndet}$ stands for conditional-choice,
probabilistic-choice, and nondeterministic-choice commands, respectively:
\begin{align*}
\calF & \defeq \{\varepsilon \} \cup \{ \mi{seq}[\m{act}]() \mid \m{act} \in \calA \}
\cup \{ \mi{cond}[\varphi](,) \mid \varphi \in \calL \} \cup \{ \mi{prob}[p](,) \mid p \in [0,1] \} \cup \{ \mi{ndet}(,) \} \\
 & \quad 
\cup \{ \mi{call}[X_i]() \mid \text{$X_i$ is a procedure} \}.
\end{align*}
%
Data actions are atomic program actions that manipulate data, e.g., assignment and random sampling,
excluding control-flow actions like $\kw{break}$ and $\kw{continue}$.
%
For example, for probabilistic Boolean programs, we can define data actions and logical conditions as follows, where $x$ is a program variable and $p \in [0,1]$, and the data action ``$x \sim \cn{Ber}(p)$'' draws a random value from a Bernoulli distribution with mean $p$, i.e., the action
  assigns \kw{true} to $x$ with probability $p$, and otherwise assigns \kw{false}:
\begin{align*}
  \m{act} \in \calA & \Coloneqq x \coloneqq \varphi \mid x \sim \cn{Ber}(p) \mid \kw{skip} &
  \varphi \in \calL & \Coloneqq x \mid \kw{true} \mid \kw{false} \mid \neg \varphi \mid \varphi_1 \wedge \varphi_2 \mid \varphi_1 \vee \varphi_2
\end{align*}

We do not discuss how to transform an unstructured probabilistic program to its
corresponding regular infinite-tree expression here but include it in
\iflong
\Cref{Se:Appendix:CFHG},
\else
the technical report~\cite{Techreport},
\fi
because it is a standard routine
that applies Beki{\'c}'s theorem~\cite{kn:Bekic84} to the program's control-flow equations,
which can be extracted from its control-flow hyper-graph~\cite{PLDI:WHR18}---as
we mentioned in \cref{Se:Introduction}---a suitable representation of probabilistic programs
with multiple confluence operations.

\begin{example}
\begin{figure}
\centering
\begin{subfigure}{0.45\textwidth}
\centering
\begin{small}
\begin{pseudo*}
  \kw{while} \kw{true} \kw{do} \\+
    $b_1 \sim \cn{Ber}(0.5)$; \\
    \kw{while} $b_1 \vee b_2$ \kw{do} \\+
      \kw{if} \kw{prob}(0.1) \kw{then} \kw{return} \kw{fi}; \\
      \kw{if} $b_1$ \kw{then} $b_1 \sim \cn{Ber}(0.2)$ \\
      \kw{else} $b_2 \sim \cn{Ber}(0.8)$ \\
      \kw{fi} \\-
    \kw{od} \\-
  \kw{od}
\end{pseudo*}
\end{small}
\caption{An example program}
\end{subfigure}
\begin{subfigure}{0.45\textwidth}
\centering
\resizebox{0.7\textwidth}{!}{%
\begin{tikzpicture}[op/.style={rectangle,draw,inner sep=3pt},node distance=0.2cm,font=\small]
  \node[op] (outer) {$\mu Z_1$};
  \node (a) [below=of outer] {$\mi{seq}[b_1{\sim}\cn{Ber}(0.5)]$};
  \node[op] (inner) [below=of a] {$\mu Z_2$};
  \node (b) [below=of inner] {$\mi{cond}[b_1{\vee}b_2]$};
  \node (c) [below left=0.2cm and -1.2cm of b] {$\mi{prob}[0.1]$};
  \node (b_to_outer) [below right=0.2cm and -0.5cm of b] {$Z_1$};
  \node (d) [below right=0.2cm and -1.2cm of c] {$\mi{cond}[b_1]$};
  \node (c_to_exit) [below left=0.2cm and -0.2cm of c] {$\varepsilon$};
  \node (e) [below left=0.2cm and -1.4cm of d] {$\mi{seq}[b_1{\sim}\cn{Ber}(0.2)]$};
  \node (f) [below right=0.2cm and 0cm of d] {$\mi{seq}[b_2{\sim}\cn{Ber}(0.8)]$};
  \node (e_to_inner) [below=of e] {$Z_2$};
  \node (f_to_inner) [below=of f] {$Z_2$};
  
  \draw (outer.south) edge[->] (a.north);
  \draw (a.south) edge[->] (inner.north);
  \draw (inner.south) edge[->] (b.north);
  \draw (b.south) edge[->] (c.north);
  \draw (b.south) ++ (5pt,0) edge[->] (b_to_outer.north west);
  \draw (c.south) ++ (-5pt,0) edge[->] (c_to_exit.north east);
  \draw (c.south) edge[->] (d.north);
  \draw (d.south) ++ (-3pt,0) edge[->] (e.north);
  \draw (d.south) ++ (3pt,0) edge[->] (f.north);
  \draw (e.south) edge[->] (e_to_inner.north);
  \draw (f.south) edge[->] (f_to_inner.north);
\end{tikzpicture}}
\vspace{-0.5em}
\caption{Regular infinite-tree expression AST}
\end{subfigure}
\begin{subfigure}{\textwidth}
\centering
\begin{small}
\[
\mu Z_1.\,
\mi{seq}[\m{act}_1]\Biggl(
  \mu Z_2.\,
  \mi{cond}[b_1{\vee}b_2]\biggl(
    \mi{prob}[0.1]\Bigl(
      \varepsilon,
      \mi{cond}[b_1]\bigl(
        \mi{seq}[\m{act}_2](
          Z_2
        ),
        \mi{seq}[\m{act}_3](
          Z_2
        )
      \bigr)
    \Bigr),
    Z_1
  \biggr)
\Biggr)
\]
\end{small}
\vspace{-1em}
\caption{Regular infinite-tree expression (where $\m{act}_1 \defeq b_1{\sim}\cn{Ber}(0.5)$, $\m{act}_2 \defeq b_1{\sim}\cn{Ber}(0.2)$, $\m{act}_3 \defeq b_2{\sim}\cn{Ber}(0.8)$)}
\end{subfigure}
\caption{An example program with nested loops and unstructured control-flow.}
\label{Fi:ExampleProgramWithLoopsAndUnstructuredControlFlow}
\end{figure}
\cref{Fi:ExampleProgramWithLoopsAndUnstructuredControlFlow}(a) presents
an example probabilistic Boolean program with nested loops and unstructured
control-flow.
\cref{Fi:ExampleProgramWithLoopsAndUnstructuredControlFlow}(c) shows the program's
corresponding regular infinite-tree expression and
\cref{Fi:ExampleProgramWithLoopsAndUnstructuredControlFlow}(b) illustrates the AST of
the expression.
Branching (e.g., $\mi{cond}[b_1 \vee b_2]$ and $\mi{prob}[0.1]$) statements become
internal nodes with multiple children in the AST.
Every loop leads to a $\mu$-binder in the AST: $\mu Z_1$ and $\mu Z_2$ stand for the
outer and inner loops, respectively.
\end{example}


\subsection{$\omega$PMAs and Algebraic Regular Infinite-tree Expressions}
\label{Se:OmegaPMAsAndAlgebraicExpressions}

Throughout our development in the rest of \cref{Se:TechnicalDetails}, we fix a probabilistic program $P$ as an equation system $P \defeq \{X_i = E_{X_i}\}_{i=1}^n$,
where each $E_{X_i}$ is in $\m{RegExp}^\infty(\calF,\emptyset)$ and represents
different program analyses depending on how the data actions in the tree are interpreted.
Toward this end, we
devise $\omega$-continuous pre-Markov algebras, which can be
thought as a refinement of both pre-Markov algebras~\cite{PLDI:WHR18} and $\omega$-continuous semirings, to
specify program analyses.
\cref{Se:ThisWorkAConfluenceCentricAnalysisFramework}
described an $\omega$PMA over real numbers.
Recall from \cref{Se:ThisWorkAConfluenceCentricAnalysisFramework} that an $\omega$PMA has three binary confluence operators, $\gcho{\varphi}$ (parameterized by a logical condition $\varphi$), $\pcho{p}$ (parameterized by $p \in [0,1]$), and $\dashcup$, which correspond to conditional-choice, probabilistic-choice, and nondeterministic-choice, respectively.

\begin{definition}\label{De:OmegaContinuousPreMarkovAlgebra}
  An \emph{$\omega$-continuous pre-Markov algebra} ($\omega$PMA) over a set $\calL$ of logical conditions is an 8-tuple
  $\calM = \tuple{M, \oplus_M, \otimes_M, \gcho{\varphi}_M, \pcho{p}_M, \dashcup_M, \azero_M, \aone_M }$,
  where $\tuple{M, \oplus_M, \otimes_M, \azero_M, \aone_M}$ forms an $\omega$-continuous semiring (so it admits
  a partial order $\aord_M \defeq \{(a,b) \in M \times M \mid \Exists{d} a \oplus_M d = b \}$ and we can define
  $a \ominus_M b$ for $a \sqsupseteq_M b$ to be any element $d$ that satisfies $b \oplus_M d = a$);
  the operations $\gcho{\varphi}_M$, $\pcho{p}_M$, $\dashcup_M$ are $\omega$-continuous with respect to $\aord_M$;
  and for all $a,b,c,d \in M$ and $\varphi \in \calL, p \in [0,1]$, it holds that
  $(a \oplus_M b) \gcho{\varphi}_M (c \oplus_M d) = (a \gcho{\varphi}_M c) \oplus_M (b \gcho{\varphi}_M d)$
  and
  $(a \oplus_M b) \pcho{p}_M (c \oplus_M d) = (a \pcho{p}_M c) \oplus_M (b \pcho{p}_M d)$.

  An \emph{$\omega$PMA interpretation} is a pair $\scrM = \tuple{\calM,\interp{\cdot}^\scrM}$,
  where $\interp{\cdot}^\scrM$ maps data actions to $\calM$.
\end{definition}

\begin{example}\label{Exa:BayesianInferenceDomain}
We consider the \emph{Bayesian-inference} analysis of probabilistic Boolean programs
in the style of prior work by \citet{FSE:CRN13}; that is, we are interested in computing the
probability distribution on variable valuations at the end of the program, conditioned on termination of the program.
The inferred probability distribution is called the \emph{posterior distribution}.\footnote{Bayesian inference
usually uses \emph{conditioning} to specify posterior distributions. For the analysis in \cref{Exa:BayesianInferenceDomain} (and also \cref{Se:BayesianInferenceAnalysis}),
we can encode a conditioning failure as a non-terminating loop, i.e., $\kw{condition}(\varphi) \defeq \kw{while}~\neg\varphi~\kw{do}~\kw{skip}~\kw{od}$.}
For a Boolean program with a set $\m{Var}$ of variables, its probabilistic semantics can be modeled
as a transformer from states ($2^{\m{Var}}$) to state-distributions ($2^{\m{Var}} \to [0,1]$).
Therefore, we can represent distribution transformers as matrices of type $2^{\m{Var}} \times 2^{\m{Var}} \to [0,1]$.
\citet{PLDI:WHR18} formulated a PMA on such matrices
with the following definitions of operations and constants,
and here we extend it to be an $\omega$PMA
$\calB = \tuple{2^{\m{Var}} \times 2^{\m{Var}} \to \bbR_{\ge 0} \cup \{\infty\}, {\oplus}, {\otimes}, {\gcho{\varphi}}, {\pcho{p}}, {\dashcup}, \azero, \aone}$:
{\begin{align*}
  \azero & \defeq \mathbf{0}~\text{(all-zero matrix)}, & \mathbf{A} \otimes \mathbf{B} & \defeq \mathbf{A} \cdot \mathbf{B}~\text{(matrix multiplication)}, &
  \mathbf{A} \pcho{p} \mathbf{B} & \defeq p \cdot \mathbf{A} + (1-p) \cdot \mathbf{B}, \\
  \aone & \defeq \mathbf{I} ~\text{(identity matrix)}, &  \mathbf{A} \dashcup \mathbf{B} & \defeq  \min(\mathbf{A}, \mathbf{B})~\text{(pointwise minimum)}, & \mathbf{A} \gcho{\varphi} \mathbf{B} & \defeq \mathbf{\Gamma}_\varphi \cdot \mathbf{A} + \mathbf{\Gamma}_{\neg\varphi} \cdot \mathbf{B},
\end{align*}}%
where $\mathbf{\Gamma}_\varphi$ is a diagonal matrix such that $\mathbf{\Gamma}_\varphi(s,s) = 1$ if $\varphi$ holds in $s$, and otherwise $0$, for any state $s \in 2^{\m{Var}}$.
We further define $\mathbf{A} \oplus \mathbf{B} \defeq \mathbf{A} + \mathbf{B}$ to be pointwise addition.
Note that $\oplus$ does not correspond to any confluence operation;
instead,
it is used to define
a partial order $\mathbf{A} \aord_B \mathbf{B} \defeq \mathbf{A} \le \mathbf{B}$ (pointwise comparison)
and a subtraction operation $\mathbf{A} \ominus_B \mathbf{B} \defeq \mathbf{A} - \mathbf{B}$ (pointwise subtraction).
\citeauthor{PLDI:WHR18} defined a mapping $\interp{\cdot}^\scrB$ from data actions to the algebra $\calB$; for example,
let $\m{Var} \defeq \{b_1,b_2\}$, we can encode sampling statements as matrices as follows, where $p,q\in[0,1]$:
\begin{align*}
  \interp{ b_1{\sim}\cn{Ber}(p) }^\scrB & \defeq \enskip \begin{pNiceMatrix}[first-row,first-col,small]
    & \m{TT} & \m{TF} & \m{FT} & \m{FF} \\
    \m{TT} & p & 0 & 1-p & 0 \\
    \m{TF} & 0 & p & 0 & 1-p \\
    \m{FT} & p & 0 & 1-p & 0 \\
    \m{FF} & 0 & p & 0 & 1-p
  \end{pNiceMatrix}\,, &
  \interp{ b_2{\sim}\cn{Ber}(q) }^\scrB & \defeq \enskip \begin{pNiceMatrix}[first-row,first-col,small]
    & \m{TT} & \m{TF} & \m{FT} & \m{FF} \\
    \m{TT} & q & 1-q  & 0 & 0 \\
    \m{TF} & q & 1-q & 0 & 0 \\
    \m{FT} & 0 & 0 & q & 1-q \\
    \m{FF} & 0 & 0 & q & 1-q
 \end{pNiceMatrix}\,.
\end{align*}
We denote by ``$\sigma_1\sigma_2$'' the program state $\{b_1 \mapsto \sigma_1,b_2 \mapsto \sigma_2\}$, where
$\sigma_1,\sigma_2 \in \{\m{T}, \m{F}\}$, and rows and columns of the matrices stand for pre- and post-states,
respectively.  
\end{example}

Given a regular infinite-tree expression $E \in \m{RegExp}^\infty(\calF,\calK)$, an $\omega$PMA interpretation
$\scrM = \tuple{\calM,\interp{\cdot}^\scrM}$,
a \emph{valuation} $\gamma : \calK \to \calM$ that interprets bound variables,
and a \emph{procedure-summary vector} $\vec{\nu} \in \calM^n$,
we define the interpretation of $E$ under $\gamma$ and $\vec{\nu}$, denoted by $\scrM_\gamma\interp{E}(\vec{\nu})$,
as follows.
\begin{small}
\begin{alignat*}{2}
  \scrM_\gamma\interp{\varepsilon}(\vec{\nu}) \defeq \aone_M, & \quad \scrM_\gamma\interp{Z}(\vec{\nu})  \defeq \gamma(Z), &
  \scrM_\gamma\interp{\mi{cond}[\varphi](E_1,E_2)}(\vec{\nu}) & \defeq \scrM_\gamma\interp{E_1}(\vec{\nu}) \gcho{\varphi}_M \scrM_\gamma\interp{E_2}(\vec{\nu}), \\
  \scrM_\gamma\interp{\mi{seq}[\m{act}](E)}(\vec{\nu}) & \defeq \interp{\m{act}}^\scrM \otimes_M \scrM_\gamma\interp{E}(\vec{\nu}), &
  \scrM_\gamma\interp{\mi{prob}[p](E_1,E_2)}(\vec{\nu}) & \defeq \scrM_\gamma\interp{E_1}(\vec{\nu}) \pcho{p}_M \scrM_\gamma\interp{E_2}(\vec{\nu}), \\
  \scrM_\gamma\interp{\mi{call}[X_i](E)}(\vec{\nu}) & \defeq \nu_i \otimes_M \scrM_\gamma\interp{E}(\vec{\nu}), &
  \scrM_\gamma\interp{\mi{ndet}(E_1,E_2)}(\vec{\nu}) & \defeq \scrM_\gamma\interp{E_1}(\vec{\nu}) \dashcup_M \scrM_\gamma\interp{E_2}(\vec{\nu}), \\
  \scrM_\gamma\interp{\mu Z.\, E}(\vec{\nu}) & \defeq \lfp_{\azero_M}^{\aord_M} \lambda \theta.\; \scrM_{\gamma[Z \mapsto \theta]}\interp{E}(\vec{\nu}),  &
  \scrM_\gamma\interp{E_1 \dplus_Z E_2}(\vec{\nu}) & \defeq \scrM_{\gamma[Z \mapsto \scrM_\gamma\interp{E_2}(\vec{\nu})]}\interp{E_1}(\vec{\nu}).
\end{alignat*}
\end{small}%
For $E_1 \dplus_Z E_2$,
the variable $Z$ does \emph{not} appear in $E_2$,
but it might appear in $E_1$.
Thus, we can interpret $E_2$ under the valuation $\gamma$,
and to interpret $E_1$, we extend $\gamma$ by mapping $Z$ to the interpretation of $E_2$.
For $\mu Z.\, E$,
the bound variable $Z$ might appear in $E$.
In this case, we should find a value $\theta$ such that $\theta = \scrM_{\gamma[Z \mapsto \theta]}\interp{E}$.
Because $\calM$ is an $\omega$-continuous semiring (thus it admits an $\omega$-cpo),
we define the interpretation for $\mu$-binders using \emph{least fixed-points}.
For a procedure call $\mi{call}[X_i](E)$, we call $E$ the \emph{continuation} of this call to $X_i$,
and use the procedure summary $\nu_i$ to interpret the call to $X_i$.

For the probabilistic program $P \defeq \{ X_i = E_{X_i} \}_{i=1}^n$ represented by an
interprocedural equation system with regular-infinite-tree right-hand sides, we define its interpretation via a least fixed-point:
\[
\scrM\interp{P} \defeq \lfp_{\vec{\azero}_M}^{\aord_M} \lambda\vec{\theta}.\; \tuple{ \scrM_{\{\}}\interp{E_{X_1}}(\vec{\theta}), \cdots, \scrM_{\{\}}\interp{E_{X_n}}(\vec{\theta}) }.
\]

For easier manipulation of regular infinite-tree expressions (as well as the differentials we will develop in \cref{Se:AnalysisNewton}), we define \emph{algebraic regular infinite-tree
expressions} over the following ranked alphabet, where we essentially change data actions to algebra elements:
\begin{align*}
\calF^{\alpha} & \defeq \{c \mid c \in \calM \} \cup \{ \mi{seq}[c]() \mid c \in \calM \}
\cup \{ \mi{cond}[\varphi](,) \mid \varphi \in \calL \}
\cup \{ \mi{prob}[p](,) \mid p \in [0,1] \} \\
& 
\quad \cup \{ \mi{ndet}(,) \}
\cup \{ \mi{call}[X_i]() \mid i =1,\cdots,n \},
\end{align*}
Similar to the interpretation of regular infinite-tree expressions,
we can define the interpretation $\calM_\gamma\interp{E}(\vec{\nu})$ of an algebraic expression $E$ under $\gamma$ and $\vec{\nu}$
by induction on the structure of $E$.

\begin{lemma}\label{Lem:EquivalentToAlgebraicExpressions}
  For any expression $E \in \m{RegExp}^\infty(\calF,\calK)$, there exists an expression $E' \in \m{RegExp}^\infty(\calF^\alpha,\calK)$ ,
  such that for any $\gamma : \calK \to \calM$ and $\vec{\nu} \in \calM^n$, it holds that $\scrM_\gamma\interp{E}(\vec{\nu}) = \calM_\gamma\interp{E'}(\vec{\nu})$.
\end{lemma}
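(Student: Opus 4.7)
The plan is to prove the lemma by structural induction on $E$, exhibiting the translation $E \mapsto E'$ as a purely syntactic rewrite that replaces each data action $\m{act}$ appearing in a $\mi{seq}[\m{act}](\cdot)$ node by the algebra element $\interp{\m{act}}^\scrM \in \calM$, while leaving every other constructor untouched. Concretely, define $(\cdot)' : \m{RegExp}^\infty(\calF,\calK) \to \m{RegExp}^\infty(\calF^\alpha,\calK)$ by: $\varepsilon' \defeq \aone_M$ (using the algebra constant, which is a $0$-ary symbol in $\calF^\alpha$); $Z' \defeq Z$ for $Z \in \calK$; $(\mi{seq}[\m{act}](E))' \defeq \mi{seq}[\interp{\m{act}}^\scrM](E')$; $(\mi{cond}[\varphi](E_1,E_2))' \defeq \mi{cond}[\varphi](E_1',E_2')$, and similarly for $\mi{prob}[p]$, $\mi{ndet}$, $\mi{call}[X_i]$; and finally $(E_1 \dplus_Z E_2)' \defeq E_1' \dplus_Z E_2'$ and $(\mu Z.\,E)' \defeq \mu Z.\, E'$.

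Next, I would show by induction on the structure of $E$ that for every valuation $\gamma$ and procedure-summary vector $\vec{\nu}$, $\scrM_\gamma\interp{E}(\vec{\nu}) = \calM_\gamma\interp{E'}(\vec{\nu})$. The leaf and free-variable cases are immediate from the definitions. The $\mi{seq}[\m{act}]$ case uses exactly the fact that $\interp{\m{act}}^\scrM$ is the image of $\m{act}$ under $\interp{\cdot}^\scrM$; after unfolding both sides, the equation follows from the IH applied to the subtree. The cases for $\mi{cond}[\varphi]$, $\mi{prob}[p]$, $\mi{ndet}$, and $\mi{call}[X_i]$ reduce directly to the IH applied componentwise, since the algebraic operators $\gcho{\varphi}_M$, $\pcho{p}_M$, $\dashcup_M$, $\otimes_M$ used on both sides are identical. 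The substitution case $E_1 \dplus_Z E_2$ follows by first applying the IH to $E_2$ to align the extended valuations, then the IH to $E_1$.

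The only case requiring slightly more care is the $\mu$-binder $\mu Z.\,E$. Both interpretations are defined as $\lfp_{\azero_M}^{\aord_M}$ of the operator $\theta \mapsto \scrM_{\gamma[Z \mapsto \theta]}\interp{E}(\vec{\nu})$ on the $\scrM$-side and $\theta \mapsto \calM_{\gamma[Z \mapsto \theta]}\interp{E'}(\vec{\nu})$ on the $\calM$-side. By the IH, these two operators are equal as functions $\calM \to \calM$ (pointwise in $\theta$), so their least fixed-points coincide. No monotonicity/continuity argument beyond what is already baked into $\omega$PMAs (\cref{De:OmegaContinuousPreMarkovAlgebra}) is needed, since the existence of the lfp is guaranteed on both sides and the IH equates the generating operators themselves.

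There is no real obstacle here: the translation is definitional and the induction is routine because $\calF^\alpha$ is designed so that every syntactic form of $\calF$ has a mirror form in $\calF^\alpha$ with the same interpretation rule. The closest thing to a subtlety is keeping track that the two interpretation judgments agree on \emph{both} the action-labeled constructors (which is where $\interp{\cdot}^\scrM$ is used once to ``bake in'' the action's meaning) and on the binding structure of $\dplus_Z$ and $\mu Z$, which is unchanged by the translation.
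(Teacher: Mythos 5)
Your proposal is correct and matches the paper's proof: the paper also defines $E'$ by structural induction, with the only two nontrivial cases being $\varepsilon \mapsto \aone_M$ and $\mi{seq}[\m{act}](E_1) \mapsto \mi{seq}[\interp{\m{act}}^\scrM](E_1')$, leaving the remaining constructors (including $\dplus_Z$ and $\mu Z$) as the routine homomorphic cases you spell out.
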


In the rest of this section, we assume that the right-hand sides of equations are all algebraic.

\subsection{Solving Linear Equations}
\label{Se:SolvingLinearEquations}

In this section, we develop a mechanism to solve \emph{linear} equation systems.
This material describes
box 2, oval 2, box 3, and oval 3 in \cref{Fi:Pipeline}.
Recall that in \cref{Se:ThisWorkAConfluenceCentricAnalysisFramework}, we informally said that a simple tree expression is
\emph{linear} if every root-to-leaf path contains \emph{at most} one $\mi{call}[\cdot]$ node.
We now formalize the idea by introducing \emph{linear} algebraic regular regular-tree expressions over the following ranked alphabet:
\begin{align*}
\calF^\alpha_\m{lin} & \defeq \{c \mid c \in \calM \} \cup \{ \mi{seq}[c]() \mid c \in \calM \}
\cup \{ \mi{cond}[\varphi](,) \mid \varphi \in \calL \}
\cup \{ \mi{prob}[p](,) \mid p \in [0,1] \} \\
& 
\quad \cup \{ \mi{ndet}(,), {\oplus}(,), {\ominus}(,) \}
\cup \{ \mi{call_\m{lin}}[Y_i; c] \mid i =1,\cdots,n, c \in \calM \}.
\end{align*}
In $\calF^\alpha_\m{lin}$,
we add two binary symbol $\oplus,\ominus$ that correspond to $\oplus_M, \ominus_M$, respectively,
and restrict general procedure calls to linear ones
of the form
$\mi{call}_\m{lin}[Y_i;c]$ by
restricting the continuation of a procedure call to $Y_i$ to be a value $c \in \calM$.\footnote{We strategically change the procedure names from $X_i$'s to $Y_i$'s to signify
that later in \cref{Se:AnalysisNewton},
linear expressions are derived by differentials of non-linear expressions---and thus a linearized
equation system should be understood as being defined over a different set of variables from the
original interprocedural equation system.}
We interpret linear algebraic regular infinite-tree expressions in $\m{RegExp}^{\infty}(\calF^\alpha_\m{lin},\calK)$ in the same way
that general algebraic regular infinite-tree expressions are interpreted,
except that we remove the rule for $\mi{call}[X_i](E)$, and add $\calM_\gamma\interp{\mi{call}_\m{lin}[Y_i; c]}(\vec{\nu}) \defeq \nu_i \otimes_M c$,
\begin{small}
\begin{align*}
  \calM_\gamma\interp{\oplus(E_1, E_2)}(\vec{\nu}) & \defeq \calM_\gamma\interp{E_1}(\vec{\nu}) \oplus_M \calM_\gamma\interp{E_2}(\vec{\nu}), &
  \calM_\gamma\interp{\ominus(E_1 , E_2)} & \defeq \calM_\gamma\interp{E_1}(\vec{\nu}) \ominus_M \calM_\gamma\interp{E_2}(\vec{\nu}).
\end{align*}
\end{small}

We can now state the problem of solving a system of linear equations (box 2 in \cref{Fi:Pipeline}):
\[\begin{array}{|l|}
\hline
\text{Solve $\{Y_i = E_{Y_i}\}_{i=1}^n$, where each $E_{Y_i}$ is in $\m{RegExp}^\infty(\calF^\alpha_\m{lin}, \emptyset)$; i.e., find the least $\vec{\theta} \in \calM^n$ that satisfies} \\
\text{$\calM_{\{\}}\interp{E_{Y_i}}(\vec{\theta}) = \theta_i$ for each $i$.} \\
\hline
\end{array}\]
\framework{} relies on an analysis-supplied strategy to solve linear equations.
To simplify matters, we work with equations whose right-hand sides are normalized to be
simple tree expressions (\cref{Se:ThisWorkAConfluenceCentricAnalysisFramework}).
(The normalization transformation is similar to the standard program transformation that replaces
loops with tail-recursion---see below---but is performed on the equation system.)
Let $\m{RegExp}^\m{cf}$ denote $\mu$-free regular infinite-tree expressions, i.e., simple tree expressions.
%
The superscript $\m{cf}$ indicates that the expressions are \emph{closure-free}, in the sense
that a $\mu$-binder introduces a closure.
%
The problem statement now becomes:
\[\begin{array}{|l|}
\hline
\text{Solve $\{Y_i = E_{Y_i}\}_{i=1}^n$, where $E_{Y_i} \in \m{RegExp}^\infty(\calF^\alpha_\m{lin}, \emptyset)$, given a strategy that solves equation systems} \\
\text{of the form $\{Y_i = F_{Y_i}\}_{i=1}^n \uplus \{ Z = F_Z \}_{Z \in \calZ}$, whose right-hand sides are in $\m{RegExp}^\m{cf}(\calF^\alpha_\m{lin}, \calZ)$.} \\
\hline
\end{array}\]
%
This formulation reflects that there are \emph{two} sources of
\emph{multiplicative factors}, i.e., expressions whose interpretation
involves the $\otimes$ operation and the value of a variable ($Y_i$ for some $i$ or $Z \in \calZ$).
The first source involves $Y_i$ for some $i$ and must take the form $\mi{call}_\m{lin}[Y_i;c]$,
whose interpretation is $\nu_i \otimes c$, where $\vec{\nu}$ is a procedure-summary vector.
The second source involves $Z \in \calZ$ and must take the form $\mi{seq}[c](Z)$,
whose interpretation is $c \otimes \gamma(Z)$, where $\gamma$ is a valuation.

The normalization transformation corresponds to oval 2 in \cref{Fi:Pipeline};
it removes $\mu$-binders by introducing
a new equation 
with
a fresh left-hand side
variable
for each $\mu$-binder.
\cref{Fi:ConstraintExtraction} presents some rules for a routine $\Gamma \vdash E \Downarrow F \mid \Theta$
that extracts from a linear regular infinite-tree expression $E$ a $\mu$-free expression $F$ and an equation system $\Theta$,
where $\Gamma$ is a mapping on variables.
For a $\mu$-binder $\mu Z.\, E$, the rule \textsc{(Mu)} introduces a fresh variable $Z'$ as
the result and adds a new equation $Z' = F$, where $F$ is the extracted $\mu$-free expression from $E$
with all occurrences of $Z$ mapped to $Z'$.

\begin{figure}
\centering  
\begin{mathpar}\footnotesize
  \inferrule*[right=(Call-Lin)]
  { c \in \calM
  }
  { \Gamma \vdash \mi{call}_\m{lin}[Y_i;c] \Downarrow \mi{call}_\m{lin}[Y_i;c] \mid \emptyset }
  \and
  \inferrule*[right=(Free-Var)]
  { }
  { \Gamma \vdash Z \Downarrow \Gamma(Z) \mid \emptyset }
  \and
  \inferrule*[right=(Concat)]
  { \Gamma[Z \mapsto Z] \vdash E_1 \Downarrow F_1 \mid \Theta_1 \\
    \Gamma \vdash E_2 \Downarrow F_2 \mid \Theta_2
  }
  { \Gamma \vdash E_1 \dplus_{Z} E_2 \Downarrow F_1 \dplus_{Z} F_2 \mid \Theta_1 \uplus \Theta_2 }
  \and
  \inferrule*[right=(Mu)]
  { Z'~\m{fresh} \\ \Gamma[ Z \mapsto Z' ] \vdash E \Downarrow F \mid \Theta }
  { \Gamma \vdash \mu Z.\, E \Downarrow Z' \mid \Theta \uplus \{ Z' = F \}  }
\end{mathpar}
\caption{Selected rules for extracting a $\mu$-free linear regular infinite-tree expression and an associated equation system from a linear regular infinite-tree expression.}
\label{Fi:ConstraintExtraction}
\end{figure}

\begin{example}\label{Exa:ExtractionAlgebraicEquations}
  Consider the following algebraic regular infinite-tree expression $E$:
\[
\mu Z_1.\,
\mi{seq}[\mathbf{C}_1]\Biggl(
  \mu Z_2.\,
  \mi{cond}[b_1{\vee}b_2]\biggl(
    \mi{prob}[0.1]\Bigl(
      \aone,
      \mi{cond}[b_1]\bigl(
        \mi{seq}[\mathbf{C}_2](
          Z_2
        ),
        \mi{seq}[\mathbf{C}_3](
          Z_2
        )
      \bigr)
    \Bigr),
    Z_1
  \biggr)
\Biggr),
\]
  where $\mathbf{C}_1$, $\mathbf{C}_2$, $\mathbf{C}_3$ are three algebra constants.
  We now show how the normalization transformation works by establishing $ \{\} \vdash E \Downarrow F \mid \Theta$ for some $F$ and $\Theta$.
  After applying the rule \textsc{(Mu)} twice, we derive the following normalization for a $\mu$-free subexpression:
  \[
  \begin{split}
  & \{ Z_1 \mapsto Z_1', Z_2 \mapsto Z_2'\} \vdash 
  \mi{cond}[b_1{\vee}b_2](
    \mi{prob}[0.1](
      \aone,
      \mi{cond}[b_1](
        \mi{seq}[\mathbf{C}_2](
          Z_2
        ),
        \mi{seq}[\mathbf{C}_3](
          Z_2
        )
      )
    ),
    Z_1
  ) \Downarrow \\
  & \qquad \mi{cond}[b_1{\vee}b_2]( \mi{prob}[0.1](\aone, \mi{cond}[b_1](\mi{seq}[\mathbf{C}_2](Z_2'), \mi{seq}[\mathbf{C}_3](Z_2'))), Z_1') \mid \{ \} .
  \end{split}
  \]
  We then add a new equation to $\Theta$ for each application of the rule \textsc{(Mu)}:
  \begin{equation}\label{Eq:ExampleExtraction}
  \begin{split}
  \{ \} \vdash E \Downarrow Z_1' \mid \{ & Z_1' = \mi{seq}[\mathbf{C}_1]( Z_2'),  \\
   & Z_2' = \mi{cond}[b_1{\vee}b_2](\mi{prob}[0.1](\aone_B, \mi{cond}[b_1](\mi{seq}[\mathbf{C}_2](Z_2'), \mi{seq}[\mathbf{C}_3](Z_2') ) ), Z_1') \} .
  \end{split}
  \end{equation}
\end{example}

We now formalize \emph{linear-recursion-solving strategies} (oval 3 in \cref{Fi:Pipeline}).

\begin{definition}\label{De:LinearRecursionSolvingStrategy}
  An algorithm $\m{solve}$ (parameterized by $\calK$) is said to be a \emph{linear-recursion-solving strategy} for an $\omega$PMA
  $\calM$,
  if for any equation system $\{ Y_i = E_{Y_i}\}_{i=1}^n \uplus \{ Z = E_Z \}_{Z \in \calZ}$ where $\calZ$ is a finite set of variables that is
  disjoint from $(\calF_\m{lin}^\alpha)_0 \cup \calK$
  (where $(\calF_\m{lin}^\alpha)_0$ represents the set of zero-arity symbols in $\calF_\m{lin}^\alpha$)
  and right-hand sides of the equations are in $\m{RegExp}^\m{cf}(\calF^\alpha_\m{lin}, \calK \uplus \calZ)$,
  and any valuation $\gamma : \calK \to \calM$,
  the routine $\m{solve}_\calK( \{ Y_i = E_{Y_i} \}_{i=1}^n , \{Z=E_Z\}_{Z \in \calZ }, \gamma)$ returns 
  the \emph{least} mapping $\iota : \calZ \to \calM$ and vector $\vec{\nu}$, such that
  $\iota(Z) = \calM_{\gamma \uplus \iota}\interp{ E_Z }(\vec{\nu})$ for each $Z \in \calZ$
  and $\nu_i = \calM_{\gamma \uplus \iota}\interp{E_{Y_i}}(\vec{\nu})$ for each $i=1,\cdots,n$,
  with respect to the order $\aord_M$.
\end{definition}

Our method for solving an equation system $\{Y_i = E_{Y_i}\}_{i=1}^n$
whose right-hand sides are in $\m{RegExp}^\infty(\calF_\m{lin}^\alpha,\calK)$---given a valuation $\gamma : \calK \to \calM$---proceeds in three steps:
\begin{enumerate}[nosep,leftmargin=*]
  \item For each equation $Y_i = E_{Y_i}$, derive $\{ Z \mapsto Z \}_{Z \in \calK} \vdash E_{Y_i} \Downarrow F_{Y_i} \mid \Theta_i$.
  \item Run $\m{solve}_\calK( \{Y_i = F_{Y_i} \}_{i=1}^n, \biguplus_{i=1}^n \Theta_i, \gamma)$ to obtain a solution mapping $\iota : \dom(\biguplus_{i=1}^n \Theta_i) \to \calM$ and a procedure-summary vector $\vec{\nu} \in \calM^n$.
  \item Compute $\calM_{\gamma \uplus \iota}\interp{F_{Y_i}}(\vec{\nu})$ as the solution for $Y_i$ for each $i=1,\cdots,n$.
\end{enumerate}

\begin{theorem}\label{The:LinearRecursionSolvingSound}
  Given a procedure $\m{solve}$ that meets the requirements of \cref{De:LinearRecursionSolvingStrategy},
  the method presented above computes $\lfp_{\vec{\azero}_M}^{\aord_M} \lambda \vec{\theta}.\; \tuple{\calM_\gamma\interp{E_{Y_1}}(\vec{\theta}), \cdots, \calM_\gamma\interp{E_{Y_n}}(\vec{\theta})}$.
\end{theorem}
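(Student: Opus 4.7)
The plan is to reduce the theorem to a soundness lemma about the extraction judgment $\Gamma \vdash E \Downarrow F \mid \Theta$, and then invoke the specification of $\m{solve}$ (\cref{De:LinearRecursionSolvingStrategy}) to conclude. The central claim I would establish first is the following: for any $E \in \m{RegExp}^\infty(\calF_\m{lin}^\alpha,\calK)$, any $\gamma : \calK \to \calM$, and any $\vec{\nu} \in \calM^n$, if $\{Z \mapsto Z\}_{Z \in \calK} \vdash E \Downarrow F \mid \Theta$ is derivable, then
\[
  \calM_\gamma\interp{E}(\vec{\nu}) \;=\; \calM_{\gamma \uplus \iota^\ast}\interp{F}(\vec{\nu}),
\]
where $\iota^\ast : \dom(\Theta) \to \calM$ is the least mapping (with respect to $\aord_M$) such that $\iota^\ast(Z') = \calM_{\gamma \uplus \iota^\ast}\interp{F_{Z'}}(\vec{\nu})$ for every equation $Z' = F_{Z'}$ in $\Theta$. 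The existence of this least mapping is guaranteed by the $\omega$-continuity of $\calM$, which ensures all relevant monotone operators on $\calM^{|\dom(\Theta)|}$ have least fixed-points reachable via Kleene iteration.

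I would prove the lemma by induction on the derivation of $\Gamma \vdash E \Downarrow F \mid \Theta$. The leaf rules (for constants, $\mi{call}_\m{lin}[Y_i;c]$, and free variables) and the structural rules for $\mi{seq}$, $\mi{cond}$, $\mi{prob}$, $\mi{ndet}$, $\oplus$, $\ominus$, and \textsc{(Concat)} are direct: the normalization leaves the confluence structure intact, and the $\Theta$ components combine as disjoint unions, so the claim follows from the definition of $\calM_\gamma\interp{\cdot}$ together with the inductive hypotheses on the subexpressions. The only nontrivial case is \textsc{(Mu)}, which I discuss next.

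The hard part will be the $\mu$-binder case. Suppose $\{Z \mapsto Z\}_{Z \in \calK} \vdash \mu Z.\, E \Downarrow Z' \mid \Theta \uplus \{Z' = F\}$, where $\Gamma[Z \mapsto Z'] \vdash E \Downarrow F \mid \Theta$ by the inductive derivation. By definition, $\calM_\gamma\interp{\mu Z.\, E}(\vec{\nu}) = \lfp^{\aord_M}_{\azero_M} \lambda \theta.\, \calM_{\gamma[Z \mapsto \theta]}\interp{E}(\vec{\nu})$. I would apply the inductive hypothesis to $E$ (with $Z'$ treated as a free variable) to rewrite this as the least $\theta$ such that $\theta = \calM_{\gamma \uplus \iota^\ast_\theta}\interp{F}(\vec{\nu})$, where $\iota^\ast_\theta$ is the least simultaneous solution of $\Theta$ parametric in the value assigned to $Z'$. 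Then Beki\'c's theorem~\cite{kn:Bekic84}—the same tool the paper invokes to translate CFHGs into regular infinite-tree expressions—lets me collapse the ``outer'' fixed-point on $\theta$ together with the ``inner'' fixed-point defining $\iota^\ast_\theta$ into a single joint least fixed-point over $\Theta \uplus \{Z' = F\}$. The $Z'$-component of that joint fixed-point is precisely the semantics $\calM_\gamma\interp{\mu Z.\, E}(\vec{\nu})$, matching the interpretation of the right-hand side $Z'$ under $\gamma \uplus \iota^\ast$.

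Given the lemma, the main theorem is immediate. Apply the lemma to each $E_{Y_i}$ to get $\calM_\gamma\interp{E_{Y_i}}(\vec{\theta}) = \calM_{\gamma \uplus \iota^\ast_{\vec{\theta}}}\interp{F_{Y_i}}(\vec{\theta})$ for every candidate procedure-summary vector $\vec{\theta}$, where $\iota^\ast_{\vec{\theta}}$ is the least solution of $\biguplus_{i=1}^n \Theta_i$ at $\vec{\theta}$. Thus the least fixed-point of $\lambda \vec{\theta}.\, \tuple{\calM_\gamma\interp{E_{Y_1}}(\vec{\theta}),\dots,\calM_\gamma\interp{E_{Y_n}}(\vec{\theta})}$ coincides, by another application of Beki\'c's theorem, with the $\vec{Y}$-projection of the joint least solution of $\{Y_i = F_{Y_i}\}_{i=1}^n \uplus \biguplus_{i=1}^n \Theta_i$. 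By \cref{De:LinearRecursionSolvingStrategy}, this joint least solution is exactly what $\m{solve}_\calK$ returns, and step~3 reconstructs the $Y_i$-components by interpreting the $\mu$-free expressions $F_{Y_i}$ under $\gamma \uplus \iota$ and $\vec{\nu}$. I expect the only subtlety beyond bookkeeping to be ensuring that the fresh variables introduced in distinct applications of \textsc{(Mu)} remain globally disjoint, which is standard under $\alpha$-renaming.
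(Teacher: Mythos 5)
Your proposal follows essentially the same route as the paper's proof: the paper likewise reduces the theorem to an extraction-soundness lemma for the judgment $\Gamma \vdash E \Downarrow F \mid \Theta$ (its Lemma~\ref{Lem:Appendix:ExtractionSound}), proved by induction on the derivation, and then concludes from the specification of $\m{solve}$ exactly as you do. Two remarks. First, your central claim is stated only for the identity environment $\{Z \mapsto Z\}_{Z \in \calK}$, but the \textsc{(Mu)} case recurses on a derivation with environment $\Gamma[Z \mapsto Z']$, which is no longer of that form, so the induction hypothesis must be strengthened to arbitrary $\Gamma$ together with a valuation for the fresh variables in the range of $\Gamma$ and a compatibility condition (the paper uses an auxiliary map $\eta$ and requires $(\gamma \cup \eta) \circ \Gamma = \gamma$). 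You clearly intend this (``with $Z'$ treated as a free variable''), but the lemma as you state it does not support the induction as written; this is the one repair needed. Second, where you invoke Beki\'c's theorem to identify the nested least fixed-point (outer $\theta$, inner parametric $\iota^\ast_\theta$) with the joint least solution of $\Theta \uplus \{Z' = F\}$, the paper argues leastness directly: it checks that $\iota(Z')$ is a fixed-point of the outer operator and derives a contradiction from the existence of a strictly smaller one, using $\omega$-continuity to propagate the smaller value through the inner system. Since all operations involved are $\omega$-continuous, Beki\'c's theorem applies and gives exactly what you need, so this is a presentational rather than a substantive difference; the same remark applies to your final step relating the composed least fixed-point to the joint least solution returned by $\m{solve}$.
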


\begin{example}\label{Exa:PMAForBooleanPrograms}
  Consider the $\omega$PMA $\calB$ from \cref{Exa:BayesianInferenceDomain} that encodes Bayesian-inference analysis
  and the regular infinite-tree expression $E$ from \cref{Exa:ExtractionAlgebraicEquations} and an equation system $\{Y = E\}$.
  Here $\mathbf{C}_1,\mathbf{C}_2,\mathbf{C}_3$ are three matrices encoding three data actions.
  In step (1), we apply 
  the rules in \cref{Fi:ConstraintExtraction} to $E$
  and then derive the normalization judgment shown in \cref{Eq:ExampleExtraction}.
  In step (2), we run the analysis-supplied strategy, i.e., $\m{solve}_{\{\}}( { Y = Z_1' }, \Theta, \{\})$, to obtain a solution mapping $\iota : \{ Z_1', Z_2' \} \to \calB$ and a procedure-summary $\nu \in \calB$ for $Y$.
  We will describe two such strategies for Bayesian-inference analysis in \cref{Se:BayesianInferenceAnalysis}.
  To demonstrate the idea, under the interpretation with respect to $\calB$,
  the extracted equations are \emph{linear} matrix equations in $Z_1'$ and $Z_2'$, thus
  a strategy needs to solve those linear equations:
{\begin{equation*}
\begin{split}
  Z_1' & = \mathbf{C}_1 \cdot Z_{2}', \\
  Z_2' & = \mathbf{\Gamma}_{b_1{\vee}b_2} \cdot ( 0.1 \cdot \mathbf{I} + 0.9 \cdot ( \mathbf{\Gamma}_{b_1} \cdot \mathbf{C}_2 \cdot Z_{2}' + \mathbf{\Gamma}_{\neg b_1} \cdot \mathbf{C}_3 \cdot Z_{2}' ) ) + \mathbf{\Gamma}_{\neg(b_1{\vee}b_2)} \cdot Z_{1}' .
\end{split}
\end{equation*}}
  Finally, in step (3), we compute $\calB_{\iota}\interp{Z_1'}(\nu) = \iota(Z_1')$ as the solution for $Y$.
\end{example}

\subsection{Solving Equations via Newton's Method}
\label{Se:AnalysisNewton}

In \cref{Se:SolvingLinearEquations}, we discussed solving linear equations.
We now consider the problem of solving equation systems in the general case by extending Newtonian Program Analysis~\cite{ICALP:EKL08,JACM:EKL10}, i.e.,
we develop a differentiation process for regular infinite-tree expressions (oval 1 in \cref{Fi:Pipeline}).

Let $f \in \m{RegExp}^\infty(\calF^\alpha,\calK)$. For brevity, in this section, we use $f_\gamma(\vec{\nu})$ to denote
$\calM_\gamma\interp{f}(\vec{\nu})$ for a valuation $\gamma : \calK \to \calM$ and a procedure-summary vector $\vec{\nu}$.
We also write $f(\vec{X})$ from time to time to indicate that the procedure-call symbols in the alphabet are $\{\mi{call}[X_i]() \mid i=1,\cdots,n\}$.

\begin{definition}\label{De:RegularHyperPathDifferential}
  Let $f(\vec{X}) \in \m{RegExp}^\infty(\calF^\alpha,\calK)$.
  The \emph{differential} of $f(\vec{X})$ with respect to $X_j$ at $\vec{\nu} \in \calM^n$ under a valuation $\gamma : \calK \to \calM$,
  denoted by $\calD_{X_j} f_\gamma|_{\vec{\nu}}(\vec{Y})$, is defined as follows:
  {\small\[\begin{array}{|l|} \hline
  \calD_{X_j} f_\gamma|_{\vec{\nu}}(\vec{Y}) \defeq
  \begin{dcases*}
    \mi{seq}[\nu_k]( \calD_{X_j} g_\gamma|_{\vec{\nu}}( \vec{Y} ) ) & if $f = \mi{call}[X_k](g)$, $k \neq j$ \\[-3pt]
    \oplus( \mi{call}_\m{lin}[Y_j;g_\gamma(\vec{\nu})] , \mi{seq}[\nu_j]( \calD_{X_j} g_\gamma|_{\vec{\nu}}(\vec{Y}) ) ) & if $f = \mi{call}[X_j](g)$ \\[-3pt]
    \ominus( \mi{ndet}( {\oplus }(g_\gamma(\vec{\nu}), \calD_{X_j} g_\gamma|_{\vec{\nu}}(\vec{Y}) ), {\oplus}(h_\gamma(\vec{\nu})  , \calD_{X_j} h_\gamma|_{\vec{\nu}}(\vec{Y}) ) ), f_\gamma(\vec{\nu}) ) & if $f = \mi{ndet}(g,h)$ \\[-3pt]
    Z & if $f = Z \in \calK$ \\[-3pt]
    \calD_{X_j} g_{\gamma[Z \mapsto h_\gamma(\vec{\nu})]}|_{\vec{\nu}}(\vec{Y}) \dplus_{Z} \calD_{X_j} h_\gamma|_{\vec{\nu}}(\vec{Y}) & if $f = g \dplus_{Z} h$ \\[-3pt]
    \mu Z.\, \calD_{X_j} g_{\gamma[Z \mapsto f_\gamma(\vec{\nu})]}|_{\vec{\nu}}(\vec{Y})  & if $f = \mu Z.\, g$
  \end{dcases*} \\ \hline \end{array}
  \]}%
  The complete definition is included in
  \iflong
  \Cref{Se:Appendix:Proofs}.
  \else
  the technical report~\cite{Techreport}.
  \fi
  Let $\vec{f}$ be a vector of algebraic regular infinite-tree expressions.
  The \emph{multivariate differential} of $\vec{f}$ at $\vec{\nu}$ under $\gamma$,
  denoted by $\calD \vec{f}_\gamma|_{\vec{\nu}}(\vec{Y})$, is defined as
  \[
  \calD \vec{f}_\gamma|_{\vec{\nu}}(\vec{Y}) \defeq \left\langle 
    \calD_{X_1} (f_1)_\gamma |_{\vec{\nu}}(\vec{Y}) \oplus \cdots \oplus \calD_{X_n} (f_1)_\gamma |_{\vec{\nu}}(\vec{Y}),
    \cdots,
    \calD_{X_1} (f_n)_\gamma |_{\vec{\nu}}(\vec{Y}) \oplus \cdots \oplus \calD_{X_n} (f_n)_\gamma |_{\vec{\nu}}(\vec{Y})
   \right\rangle,
  \]
  where we use $\oplus$ as an infix operator.
  We use $\calD (f_i)_\gamma |_{\vec{\nu}}(\vec{Y})$ to denote the $i^\textit{th}$ component of $\calD \vec{f}_\gamma|_{\vec{\nu}}(\vec{Y})$.
\end{definition}

\begin{wrapfigure}{r}{0.34\textwidth}
\vspace{-0.5em}
\centering
\begin{tabular}{@{\hspace{0ex}}c@{\hspace{.2ex}}c@{\hspace{0ex}}}
\begin{minipage}{0.18\textwidth}
\centering
\begin{tikzpicture}[op/.style={rectangle,draw,inner sep=3pt},node distance=0.2cm,font=\footnotesize]
  \node[op] (inner) {$\mu Z$};
  \node (b) [below=of inner] {$\mi{cond}[\varphi]$};
  \node (c) [below left=0.2cm and -0.8cm of b] {$\mi{call}[X]$};
  \node (b_to_outer) [below right=0.2cm and -0.2cm of b] {$\aone$};
  \node (e) [below=of c] {$Z$};
  
  \draw (inner.south) edge[->] (b.north);
  \draw (b.south) ++ (-2pt,0) edge[->] (c.north);
  \draw (b.south) ++ (2pt,0) edge[->] (b_to_outer.north west);
  \draw (c.south) edge[->] (e.north);
\end{tikzpicture}
\vspace{-1em}
\end{minipage}
&
\begin{minipage}{0.18\textwidth}
\centering
\begin{tikzpicture}[op/.style={rectangle,draw,inner sep=3pt},node distance=0.2cm,font=\footnotesize]
  \node (a) {$\mi{cond}[\varphi]$};
  \node (b) [below left=0.2cm and -0.8cm of a] {$\mi{call}[X]$};
  \node (c) [below right=0.2cm and -0.2cm of a] {$\aone$};
  \node (d) [below=of b] {$\mi{cond}[\varphi]$};
  \node (e) [below left=0.2cm and -0.8cm of d] {$\mi{call}[X]$};
  \node (f) [below right=0.2cm and -0.2cm of d] {$\aone$};
  \node (g) [below=of e] {$\cdots$};

  \draw (a.south) ++ (-2pt,0) edge[->] (b.north);
  \draw (a.south) ++ (2pt,0) edge[->] (c.north);
  \draw (b.south) edge[->] (d.north);
  \draw (d.south) ++ (-2pt,0) edge[->] (e.north);
  \draw (d.south) ++ (2pt,0) edge[->] (f.north);
  \draw (e.south) edge[->] (g.north);
\end{tikzpicture}
\vspace{-1em}
\end{minipage}
\\
\\
{\small (a) $f(X)$} & {\small (b) Unfold $f(X)$}
\\
\\
\begin{minipage}{0.18\textwidth}
\centering
\begin{tikzpicture}[op/.style={rectangle,draw,inner sep=3pt},node distance=0.2cm,font=\footnotesize]
  \node[op] (inner) {$\mu Z$};
  \node (b) [below=of inner] {$\mi{cond}[\varphi]$};
  \node (plus) [below left=0.2cm and -0.4cm of b] {$\oplus$};
  \node (c) [below left=0.2cm and -0.3cm of plus] {$\mi{call}[Y]$};
  \node (b_to_outer) [below right=0.2cm and -0.2cm of b] {$\azero$};
  \node (d) [below=of c] {$f(\nu)$};
  \node (e) [below right=0.2cm and -0.2cm of plus] {$\mi{seq}[\nu]$};
  \node (g) [below=of e] {$Z$};
  
  \draw (inner.south) edge[->] (b.north);
  \draw (b.south) ++ (-2pt,0) edge[->] (plus.north);
  \draw (plus.south) ++ (-2pt,0) edge[->] (c.north);
  \draw (b.south) ++ (2pt,0) edge[->] (b_to_outer.north west);
  \draw (c.south) edge[->] (d.north);
  \draw (plus.south) ++ (2pt,0) edge[->] (e.north);
  \draw (e.south) edge[->] (g.north);
\end{tikzpicture}
\vspace{-1em}
\end{minipage}
&
\begin{minipage}{0.18\textwidth}
\centering
\begin{tikzpicture}[op/.style={rectangle,draw,inner sep=3pt},node distance=0.15cm,font=\footnotesize]
  \node (a) {$\mi{cond}[\varphi]$};
  \node (aa) [below left=0.15cm and -0.4cm of a] {$\oplus$};
  \node (b) [below left=0.15cm and -0.3cm of aa] {$\mi{call}[Y]$};
  \node (b_alter) [below right=0.15cm and -0.2cm of aa] {$\mi{seq}[\nu]$};
  \node (c) [below right=0.15cm and -0.2cm of a] {$\azero$};
  \node (b_cont) [below=of b] {$f(\nu)$};
  \node (d) [below=of b_alter] {$\mi{cond}[\varphi]$};
  \node (e) [below left=0.15cm and -0.4cm of d] {$\oplus$};
  \node (f) [below right=0.15cm and -0.2cm of d] {$\azero$};
  \node (g) [below left=0.15cm and -0.3cm of e] {$\mi{call}[Y]$};
  \node (h) [below=of g] {$f(\nu)$};
  \node (i) [below right=0.15cm and -0.2cm of e] {$\mi{seq}[\nu]$};
  \node (k) [below=of i] {$\cdots$};

  \draw (a.south) ++ (-2pt,0) edge[->] (aa.north);
  \draw (aa.south) ++ (-2pt,0) edge[->] (b.north);
  \draw (a.south) ++ (2pt,0) edge[->] (c.north);
  \draw (b.south) edge[->] (b_cont.north);
  \draw (d.south) ++ (-2pt,0) edge[->] (e.north);
  \draw (d.south) ++ (2pt,0) edge[->] (f.north);
  \draw (aa.south) ++ (2pt,0) edge[->] (b_alter.north);
  \draw (b_alter.south) edge[->] (d.north);
  \draw (e.south) ++ (-2pt,0) edge[->] (g.north);
  \draw (g.south) edge[->] (h.north);
  \draw (e.south) ++ (2pt,0) edge[->] (i.north);
  \draw (i.south) edge[->] (k.north);
\end{tikzpicture}
\vspace{-1em}
\end{minipage}
\\
\\
{\small (c) $\calD f|_\nu(Y)$} & {\small (d) Unfold $\calD f|_\nu(Y)$}
\end{tabular}
\caption{An example of differentiating $\mu$-binders.
}
\label{Fi:DifferentialOfClosureExamples}
\end{wrapfigure}

We have shown in \cref{Se:ThisWorkAConfluenceCentricAnalysisFramework}
how to derive most of those differentiation rules.
One missing non-trivial case is the rule for $\mu$-binders.
Consider an example $X = f(X) \defeq \mu Z.\,  \mi{cond}[\varphi]\bigl(\mi{call}[X]( Z), \aone\bigr) $
for some condition $\varphi$,
whose right-hand side encodes the program ``\kw{while} $\varphi$ \kw{do} $X()$ \kw{od}.''
\cref{Fi:DifferentialOfClosureExamples}(a) shows the AST of $f(X)$ and \cref{Fi:DifferentialOfClosureExamples}(b)
demonstrates its corresponding infinite tree.
Note that
\cref{Fi:DifferentialOfClosureExamples}(b) shows that $f(X)$  in our example is definitely not linear:
the leftmost rooted path contains more than one $\mi{call}[X]$ node.
We develop the differential of $\mu$-binders as if we are differentiating their corresponding
infinite trees.
One would obtain another infinite tree similar to the one depicted in \cref{Fi:DifferentialOfClosureExamples}(d),
where one can easily ``fold'' this infinite tree and get
the regular infinite-tree expression in \cref{Fi:DifferentialOfClosureExamples}(c).

\begin{example}
 Consider the equation $X = f(X)$ for a single-procedure program with
 $f(X) \defeq \mu Z.\,  \mi{cond}[\varphi]( \mi{call}[X]( \mi{seq}[c_1](Z))) , c_2 )$, which represents the program ``\kw{while} $\varphi$ \kw{do} $X$(); $c_1$ \kw{od}; $c_2$,''
 for some $\varphi \in \calL$ and $c_1,c_2 \in \calM$.
 Suppose we want
 the differential
 of $f$ at $\nu$.
 By \cref{De:RegularHyperPathDifferential}, we
 derive that $\calD f|_{\nu}^{\{\}}(Y)$ is
 \begin{wrapped}{equation*}
  \mu Z.\,  \mi{cond}[\varphi](  {\oplus}( \mi{call}_\m{lin}[Y; c_1 \otimes_M f_{\{\}}(\nu) ] , \mi{seq}[\nu] ( \mi{seq}[c_1]( Z ) ) ) , \azero_M ) ,
 \end{wrapped}
 which
 is a \emph{linear} regular infinite-tree expression in $\m{RegExp}^\infty(\calF_\m{lin}^\alpha,\emptyset)$.
\end{example}

\begin{lemma}\label{Lem:DiffIsLinear}
  If $\vec{f}$ is a vector of regular infinite-tree expressions in $\m{RegExp}^\infty(\calF^\alpha,\calK)$,
  it holds that
  $\calD \vec{f}_\gamma|_{\vec{\nu}}(\vec{Y})$ is a vector of regular infinite-tree expressions in $\m{RegExp}^\infty(\calF_\m{lin}^\alpha, \calK)$.
\end{lemma}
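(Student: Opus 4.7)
The plan is to prove the univariate claim first---that for each $j \in \{1,\dots,n\}$, $\calD_{X_j} f_\gamma|_{\vec{\nu}}(\vec{Y}) \in \m{RegExp}^\infty(\calF_\m{lin}^\alpha, \calK)$---and then conclude the multivariate statement, since the $i^\text{th}$ component of $\calD \vec{f}_\gamma|_{\vec{\nu}}(\vec{Y})$ is an $\oplus$-combination of $n$ univariate differentials, and $\oplus(,) \in \calF_\m{lin}^\alpha$. The univariate claim is proved by structural induction on $f \in \m{RegExp}^\infty(\calF^\alpha,\calK)$, mirroring the case split in \cref{De:RegularHyperPathDifferential}. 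In each case I need to verify two things: (i) every top-level symbol introduced by the differentiation rule belongs to $\calF_\m{lin}^\alpha$, and (ii) every recursive subexpression $\calD_{X_j} g_{\gamma'}|_{\vec{\nu}}(\vec{Y})$ lies in $\m{RegExp}^\infty(\calF_\m{lin}^\alpha,\calK')$ by the inductive hypothesis, applied to $g$ under a possibly extended valuation $\gamma'$ and free-variable set $\calK'$.

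For the constant and leaf cases ($f = c$, $f = \varepsilon$, $f = Z \in \calK$), the differential is either an algebra constant, $\aone_M$/$\azero_M$, or $Z$ itself, each of which is in $\calF_\m{lin}^\alpha \uplus \calK$. For the unary and non-call binary cases ($\mi{seq}[c]$, $\mi{cond}[\varphi]$, $\mi{prob}[p]$, $\mi{ndet}$), the top-level symbol is unchanged---and all of these symbols are present in $\calF_\m{lin}^\alpha$---while the children are either algebra constants $g_\gamma(\vec{\nu})$, $h_\gamma(\vec{\nu})$ (legal leaves in $\calF_\m{lin}^\alpha$) or recursive differentials (linear by IH), possibly combined using $\oplus$ and $\ominus$, both of which are in $\calF_\m{lin}^\alpha$. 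The crucial call cases use $\mi{call}_\m{lin}[Y_j;g_\gamma(\vec{\nu})]$ in place of $\mi{call}[X_k]$; note that $g_\gamma(\vec{\nu}) \in \calM$ is an element, so $\mi{call}_\m{lin}[Y_j;g_\gamma(\vec{\nu})]$ is a legitimate zero-arity symbol of $\calF_\m{lin}^\alpha$, and the residual $\mi{seq}[\nu_k](\calD_{X_j} g_\gamma|_{\vec{\nu}}(\vec{Y}))$ is linear by IH. Thus no occurrence of $\mi{call}[X_k]$ is ever produced.

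The two cases that require a little care are concatenation and $\mu$-binder. For $f = g \dplus_Z h$, the differential is $\calD_{X_j} g_{\gamma[Z \mapsto h_\gamma(\vec{\nu})]}|_{\vec{\nu}}(\vec{Y}) \dplus_{Z} \calD_{X_j} h_\gamma|_{\vec{\nu}}(\vec{Y})$. The grammar-level rule \textsc{(Concat)} for building regular infinite-tree expressions is the same for any underlying ranked alphabet, so $\dplus_Z$ is available in $\m{RegExp}^\infty(\calF_\m{lin}^\alpha,\calK)$; both operands are linear by IH (the first on $g$ with the extended valuation $\gamma[Z \mapsto h_\gamma(\vec{\nu})]$, the second on $h$ with $\gamma$). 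The key observation is that the IH is a syntactic statement about the result's alphabet and hence is insensitive to which valuation is plugged in---extending $\gamma$ only changes which algebra element gets substituted for $Z$ when interpreting, not which symbols appear in the output expression. Exactly the same remark handles $\mu Z.\, g$: the rule \textsc{(Mu)} is grammar-agnostic, so $\mu Z.\, \calD_{X_j} g_{\gamma[Z \mapsto f_\gamma(\vec{\nu})]}|_{\vec{\nu}}(\vec{Y})$ is well-formed in $\m{RegExp}^\infty(\calF_\m{lin}^\alpha,\calK)$ whenever its body is, which follows from IH on $g$.

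The main obstacle I anticipate is bookkeeping around the extended valuations and free-variable sets in the $\dplus_Z$ and $\mu Z$ cases: one must state the inductive hypothesis uniformly over all valuations $\gamma$, all procedure-summary vectors $\vec{\nu}$, and all choices of $\calK$, so that the recursive appeals in these two cases go through. Strengthening the IH in this way (universal quantification over $\gamma$, $\vec{\nu}$ and the ambient free-variable context) is a standard move and resolves the difficulty. Once the univariate version is established, the multivariate conclusion is immediate because $\oplus$ is an available binary symbol of $\calF_\m{lin}^\alpha$.
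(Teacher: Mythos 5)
Your proposal is correct and takes essentially the same approach as the paper: the paper's own proof is the one-line observation that the definition of the differential only ever emits symbols from $\calF^\alpha_\m{lin}$ (plus the grammar-level $\dplus_Z$ and $\mu$-binder constructs), and your structural induction is just a careful elaboration of that observation, including the right handling of the valuation extensions in the concatenation and $\mu$-binder cases.
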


We now describe Newton's method for $\omega$PMAs by defining a Newton sequence of Newton approximants.
In this sense, \framework{} is still \emph{iterative} when solving general recursion.

\begin{definition}\label{De:PMANewtonSequence}
  Let $\vec{f}$ be a vector of regular infinite-tree expressions in $\m{RegExp}^\infty(\calF^\alpha,\emptyset)$.
  %
  A \emph{Newton sequence} $\{\vec{\nu}^{(i)}\}_{i \in \bbN}$ is given by
  $\vec{\nu}^{(0)} \defeq \vec{f}_{\{\}}(\vec{\azero}_M)$ and
  $ \vec{\nu}^{(i+1)}  \defeq \vec{\nu}^{(i)} \oplus_M \vec{\Delta}^{(i)}$ for $i \ge 0$,
  where $\vec{\Delta}^{(i)}$ is the least solution of the following linearized equation system
  \[
  \vec{Y} = {\oplus}(  \vec{f}_{\{\}}(\vec{\nu}^{(i)})  \ominus_M \vec{\nu}^{(i)}  , \calD \vec{f}_{\{\}}|_{\vec{\nu}^{(i)}} (\vec{Y}) ),
  \]
  using the method for solving linear recursion developed in \cref{Se:SolvingLinearEquations,The:LinearRecursionSolvingSound}.
\end{definition}

We show that the Newton sequence converges to the least fixed-point of $\vec{f}$.

\begin{theorem}\label{The:NewtonConvergence}
  Let $\vec{f}$ be a vector of regular infinite-tree expressions in $\m{RegExp}^\infty(\calF^\alpha,\emptyset)$.
  Then the Newton sequence is monotonically increasing, and it converges to the least fixed-point
  as least as fast as the Kleene sequence, i.e., for all $i \in \bbN$, we have
  \[
  \vec{\kappa}^{(i)} \aord_M \vec{\nu}^{(i)} \aord_M \vec{f}_{\{\}}(\vec{\nu}^{(i)}) \aord_M \vec{\nu}^{(i+1)} \aord_M \lfp_{\vec{\azero}_M}^{\aord_M} \vec{f}_{\{\}} = \textstyle\bigsqcup^{\uparrow}_{j \in \bbN} \vec{\kappa}^{(j)},
  \]
  where the Kleene sequence is defined as $\vec{\kappa}^{(j)} \defeq \vec{f}_{\{\}}^j ( \vec{\azero}_M)$ for $j \in \bbN$.
  Specifically, the Newton sequence and the Kleene sequence converge to the same least fixed-point.
\end{theorem}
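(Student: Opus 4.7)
My plan is to reduce the theorem to the linearization property already highlighted in \cref{Eq:NewtonLinearizationProperty}, and then to run a simultaneous induction on $i$ that establishes all four inequalities in the chain at once. Concretely, the proof will depend on the following key under-approximation lemma: for every $\vec{f} \in \m{RegExp}^\infty(\calF^\alpha,\calK)^n$, every valuation $\gamma$, and all $\vec{\nu},\vec{\Delta} \in \calM^n$,
\[
\vec{f}_\gamma(\vec{\nu}) \oplus_M \calD \vec{f}_\gamma|_{\vec{\nu}}(\vec{\Delta}) \;\aord_M\; \vec{f}_\gamma(\vec{\nu} \oplus_M \vec{\Delta}).
\]
I would prove this by structural induction on $\vec{f}$, exactly matching the cases of \cref{De:RegularHyperPathDifferential}. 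The constant, free-variable, and $\varepsilon$ cases are immediate. The $\mi{seq}[c]$ and $\mi{call}[X_k]$ cases use distributivity of $\otimes_M$ over $\oplus_M$; the $\mi{prob}[p]$ and $\mi{cond}[\varphi]$ cases use the two distribution axioms built into \cref{De:OmegaContinuousPreMarkovAlgebra}; the $\mi{ndet}$ case uses only monotonicity of $\dashcup_M$ together with how $\ominus_M$ is used in the differential rule; and the concatenation case follows by combining the two inductive hypotheses through the valuation.

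The main obstacle is the $\mu$-binder case, because the differential of $\mu Z.\,g$ is itself a $\mu$-binder whose body depends on $g_\gamma(\vec{\nu})$ via the updated valuation. To handle it, I would characterize both sides as least fixed-points of operators on $\calM$ and compare them using Park induction: abbreviating $\psi \defeq \scrM_\gamma\interp{\mu Z.g}(\vec{\nu})$, $\psi' \defeq \scrM_\gamma\interp{\mu Z.g}(\vec{\nu}\oplus\vec{\Delta})$, and $\delta \defeq \scrM_\gamma\interp{\calD(\mu Z.g)|_{\vec{\nu}}(\vec{\Delta})}$, it suffices to show that $\psi \oplus_M \delta$ is a pre-fixed point of the operator whose least fixed-point is $\psi'$. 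This reduces to applying the inductive hypothesis to $g$ at the point $(\vec{\nu}, Z\mapsto\psi)$ with perturbation $(\vec{\Delta}, Z\mapsto\delta)$, together with $\omega$-continuity to exchange the outer and inner fixed-points. This is the delicate step and will likely require the full generality of \cref{Lem:DiffIsLinear} and \cref{The:LinearRecursionSolvingSound} to make sense of $\delta$ as a well-defined least solution.

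Armed with the linearization lemma, the chain in the theorem follows by induction on $i$. \textbf{Base case} ($i=0$): $\vec{\kappa}^{(0)} = \vec{\azero}_M \aord \vec{f}_{\{\}}(\vec{\azero}_M) = \vec{\nu}^{(0)}$; the remaining inequalities of the base case are instances of the inductive step. \textbf{Inductive step}: assume $\vec{\kappa}^{(i)} \aord \vec{\nu}^{(i)} \aord \lfp\vec{f}_{\{\}}$. First, $\vec{\Delta}^{(i)} \sqsupseteq \vec{f}_{\{\}}(\vec{\nu}^{(i)}) \ominus_M \vec{\nu}^{(i)}$ since $\vec{\Delta}^{(i)}$ solves the linearized system, so $\vec{\nu}^{(i+1)} = \vec{\nu}^{(i)} \oplus \vec{\Delta}^{(i)} \sqsupseteq \vec{f}_{\{\}}(\vec{\nu}^{(i)}) \sqsupseteq \vec{\nu}^{(i)}$, proving the two middle inequalities. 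Monotonicity of $\vec{f}$ then gives $\vec{\kappa}^{(i+1)} = \vec{f}_{\{\}}(\vec{\kappa}^{(i)}) \aord \vec{f}_{\{\}}(\vec{\nu}^{(i)}) \aord \vec{\nu}^{(i+1)}$. For the upper bound, let $\mu \defeq \lfp \vec{f}_{\{\}}$ and $\vec{d} \defeq \mu \ominus_M \vec{\nu}^{(i)}$; the linearization lemma applied to $\vec{\nu}^{(i)}$ and $\vec{d}$ yields
\[
(\vec{f}_{\{\}}(\vec{\nu}^{(i)}) \ominus_M \vec{\nu}^{(i)}) \oplus_M \calD \vec{f}_{\{\}}|_{\vec{\nu}^{(i)}}(\vec{d}) \;\aord_M\; \mu \ominus_M \vec{\nu}^{(i)} = \vec{d},
\]
so $\vec{d}$ is a (post-)solution of the linearized system, and minimality of $\vec{\Delta}^{(i)}$ gives $\vec{\Delta}^{(i)} \aord \vec{d}$, hence $\vec{\nu}^{(i+1)} \aord \vec{\nu}^{(i)} \oplus \vec{d} = \mu$.

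For the final sentence, the Kleene sequence converges to $\lfp \vec{f}_{\{\}}$ by $\omega$-continuity of $\vec{f}_{\{\}}$, and since $\vec{\kappa}^{(i)} \aord \vec{\nu}^{(i)} \aord \lfp \vec{f}_{\{\}}$ for every $i$, the Newton sequence is squeezed between two sequences with the same supremum and must converge to the same limit. Monotonicity of $\{\vec{\nu}^{(i)}\}$ (already established) ensures its supremum exists in the $\omega$-cpo. I expect the linearization lemma, and specifically its $\mu$-binder case, to be the technical core of the proof; everything else is bookkeeping in the style of \citeauthor{JACM:EKL10}, adapted to the $\omega$PMA signature.
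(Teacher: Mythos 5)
Your overall architecture coincides with the paper's: the same vector/valuation generalization of \cref{Eq:NewtonLinearizationProperty}, proved by structural induction over the cases of \cref{De:RegularHyperPathDifferential}, followed by an induction on $i$ for the chain. The first three inequalities and the final convergence statement are handled essentially as in the paper. However, two of your key steps have genuine gaps.

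The $\mu$-binder case of the linearization lemma does not close by Park induction. Applying the inductive hypothesis for $g$ at $(\vec{\nu}, Z\mapsto\psi)$ with perturbation $(\vec{\Delta}, Z\mapsto\delta)$ yields $\psi \oplus_M \delta \aord_M F(\psi \oplus_M \delta)$, where $F$ is the operator with $\lfp F = \psi'$; that makes $\psi\oplus_M\delta$ a \emph{post}-fixed point of $F$, and post-fixed points need not lie below the least fixed point (the greatest fixed point is one). Showing it is a \emph{pre}-fixed point ($F(x) \aord_M x$) would instead give $\psi' \aord_M \psi\oplus_M\delta$ --- the wrong direction. The paper instead matches the two fixed-point iterations term by term: with $\rho=\psi$, $H$ the body of the differential's $\mu$-binder, and $L(\theta)=\rho\oplus_M H(\theta\ominus_M\rho)$, it proves $L^n(\rho)=\rho\oplus_M H^n(\azero_M)$ by induction on $n$ and takes suprema. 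Some argument of this iterate-matching kind is needed; "pre-fixed point plus continuity" is not enough.

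The step establishing $\vec{\nu}^{(i+1)} \aord_M \lfp \vec{f}_{\{\}}$ is also unsound as written. From the linearization lemma you get
\[
\vec{\nu}^{(i)} \oplus_M \bigl( (\vec{f}_{\{\}}(\vec{\nu}^{(i)}) \ominus_M \vec{\nu}^{(i)}) \oplus_M \calD\vec{f}_{\{\}}|_{\vec{\nu}^{(i)}}(\vec{d}) \bigr) \aord_M \vec{\nu}^{(i)} \oplus_M \vec{d},
\]
and you then cancel $\vec{\nu}^{(i)}$ on both sides to conclude that $\vec{d}$ is a pre-fixed point of the linearized system. $\omega$-continuous semirings are not cancellative (in an idempotent one, $a\oplus a = a\oplus\azero$ yet $a\not\aord\azero$), and $\ominus_M$ is defined only as \emph{some} witness, so this inference fails in general. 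The paper avoids cancellation entirely via its auxiliary lemma: each power $(\calD\vec{f}_{\{\}}|_{\vec{\nu}^{(i)}})^d(\vec{\delta}^{(i)})$ is bounded by the $d$-th increment $\vec{e}^{(d)}(\vec{\nu}^{(i)})$ of the Kleene iteration restarted at $\vec{\nu}^{(i)}$, so the star solution telescopes to $\bigsqcup_d \vec{f}^d_{\{\}}(\vec{\nu}^{(i)}) \aord_M \lfp\vec{f}_{\{\}}$. You need a replacement of this kind for the cancellation step.
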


Finally, we connect the Newton sequence back to our original
equation-solving problem.

\begin{corollary}\label{Cor:NewtonConvergence}
  Consider the equation system $\{X_i = E_{X_i}\}_{i=1}^n$ where $E_{X_i} \in \m{RegExp}^\infty(\calF^\alpha, \emptyset)$ for each $i$.
  Let $\vec{f} \defeq \tuple{E_{X_i}}_{i=1,\cdots,n}$.
  Then by \cref{The:NewtonConvergence}, the Newton sequence converges to the least $\vec{\theta} \in \calM^n$ that
  satisfies $\vec{f}_{\{\}}(\vec{\theta}) = \vec{\theta}$, i.e., $\calM_{\{\}}\interp{E_{X_i}}(\vec{\theta}) = \theta_i$ for each $i$.
\end{corollary}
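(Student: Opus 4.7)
The plan is to observe that this corollary is an immediate specialization of \cref{The:NewtonConvergence} once one unfolds what it means to ``solve'' the equation system $\{X_i = E_{X_i}\}_{i=1}^n$. First I would note that, by how the componentwise interpretation is defined, applying $\vec{f}_{\{\}}$ to a vector $\vec{\theta} \in \calM^n$ yields exactly $\tuple{\calM_{\{\}}\interp{E_{X_1}}(\vec{\theta}), \cdots, \calM_{\{\}}\interp{E_{X_n}}(\vec{\theta})}$. Consequently, $\vec{\theta}$ is a solution of the equation system in the sense stated in the corollary (i.e.\ $\calM_{\{\}}\interp{E_{X_i}}(\vec{\theta}) = \theta_i$ for each $i$) if and only if $\vec{\theta}$ is a fixed-point of $\vec{f}_{\{\}}$.

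Next, I would invoke \cref{The:NewtonConvergence} directly: since each $E_{X_i} \in \m{RegExp}^\infty(\calF^\alpha,\emptyset)$, the hypotheses of the theorem are satisfied by the vector $\vec{f} \defeq \tuple{E_{X_i}}_{i=1,\cdots,n}$. The theorem then asserts that the Newton sequence from \cref{De:PMANewtonSequence} is monotonically increasing and converges to $\lfp_{\vec{\azero}_M}^{\aord_M} \vec{f}_{\{\}}$, which by the order-theoretic structure of $\calM$ (an $\omega$-continuous semiring, hence an $\omega$-cpo with respect to $\aord_M$) is the $\aord_M$-least fixed-point of $\vec{f}_{\{\}}$.

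Combining these two observations yields the corollary: the Newton sequence converges to the $\aord_M$-least $\vec{\theta}$ satisfying $\vec{f}_{\{\}}(\vec{\theta}) = \vec{\theta}$, and by the equivalence noted above this $\vec{\theta}$ is exactly the least vector with $\calM_{\{\}}\interp{E_{X_i}}(\vec{\theta}) = \theta_i$ for each $i$. There is essentially no obstacle here, as all the technical work has already been carried out in \cref{The:NewtonConvergence} and in the soundness of the linear-recursion-solving method (\cref{The:LinearRecursionSolvingSound}) used inside \cref{De:PMANewtonSequence}; the corollary is simply a restatement of the theorem in the vocabulary of the original equation-solving problem.
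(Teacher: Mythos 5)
Your proposal is correct and matches the paper's treatment: the paper offers no separate argument for this corollary, presenting it as an immediate restatement of \cref{The:NewtonConvergence} once one observes that $\vec{f}_{\{\}}(\vec{\theta}) = \tuple{\calM_{\{\}}\interp{E_{X_1}}(\vec{\theta}),\cdots,\calM_{\{\}}\interp{E_{X_n}}(\vec{\theta})}$, so that fixed-points of $\vec{f}_{\{\}}$ are exactly the solutions of the equation system. Your unfolding of this equivalence and direct appeal to the theorem is precisely the intended argument.
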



\subsection{Soundness}
\label{Se:Soundness}

In this section, we sketch our approach for proving the soundness of \framework{}, but include the details
in
\iflong
\Cref{Se:Appendix:Soundness}.
\else
the technical report~\cite{Techreport}.
\fi
We use a recently proposed family of algebraic structures,
namely \emph{Markov algebras} (MAs)~\cite{ENTCS:WHR19}, to specify concrete semantics of probabilistic programs.
We then introduce \emph{soundness relations} between MAs and $\omega$PMAs and show that these
relations guarantee the soundness of program analyses in \framework{}.
%

\paragraph{Semantic foundations}
%
%
We use the interpretation of regular infinite-tree expressions over MAs to define concrete semantics
of probabilistic programs.
%
%
  A \emph{Markov algebra} (MA) $\calM = \tuple{M,\aord_M, \otimes_M,\gcho{\varphi}_M, \pcho{p}_M, \dashcup_M, \azero_M, \aone_M}$ over a set $\calL$ of logical conditions is 
  almost the same as an $\omega$PMA,
  except that it has an explicit partial order $\aord_M$ (instead of the $\oplus_M$ operation), and satisfies a different set of algebraic laws:
  $\tuple{M,\aord_M}$ forms a directed-complete partial order with $\azero_M$ as its least element;
  $\tuple{M,\otimes_M,\aone_M}$ forms a monoid;
  $\dashcup_M$ is idempotent, commutative, and associative, and for all $a,b \in M$ and $\varphi \in \calL, p\in[0,1]$
  it holds that $a \gcho{\varphi}_M b, a \pcho{p}_M b \le_M a \dashcup_M b$,
  where $\le_M$ is the semilattice ordering induced by $\dashcup_M$ (i.e., $a \le_M b$ if $a \dashcup_M b = b$);
  and $\otimes_M,\gcho{\varphi}_M, \pcho{p}_M, \dashcup_M$ are Scott-continuous. 
%
  An \emph{MA interpretation} is then a pair $\scrM = \tuple{\calM,\interp{\cdot}^\scrM}$
  where $\interp{\cdot}^\scrM$ maps data actions
  to $\calM$.
%
%
Given a regular infinite-tree expression $E \in \m{RegExp}^\infty(\calF,\calK)$,
an MA interpretation $\scrM = \tuple{\calM,\interp{\cdot}^\scrM}$,
a valuation $\gamma {:} \calK {\to} \calM$,
and a procedure-summary vector $\vec{\nu} \in \calM^n$,
the interpretation of $E$ under $\gamma$ and $\vec{\nu}$, denoted by $\scrM_\gamma\interp{E}(\vec{\nu})$,
can be defined in the same way as the $\omega$PMA interpretations presented in \cref{Se:OmegaPMAsAndAlgebraicExpressions}.
%

%

\begin{example}\label{Exa:MAForBooleanPrograms}
  Consider probabilistic Boolean programs with a set $\m{Var}$ of program variables.
  Let $S \defeq 2^{\m{Var}}$ denote the state space of such programs.
  We can formulate a demonic denotational semantics~\cite{book:MM05} by defining an MA interpretation
  $\scrC = \tuple{\calC, \interp{\cdot}^\scrC}$ where $\calC$ is an MA on
  $S \to \wp(S \to [0,1])$, i.e., mappings from states to sets of state distributions.
  %
\end{example}

\begin{remark}
It is admissible to use $\omega$PMAs for both the abstract and the concrete semantics.
On the other hand, the concrete semantics does not have to be an $\omega$PMA, which requires
some additional properties (e.g., the $\oplus$ operation and the semiring structure).
Intuitively, Markov algebras are used to characterize a small but reasonable set of properties for
the concrete semantics.
\end{remark}  

\paragraph{Soundness relations}
We adapt abstract interpretation~\cite{POPL:CC77} to justify \framework{}
by establishing a relation between the concrete and abstract semantics.
We express the relation via a \emph{soundness relation}, which is a binary relation between an MA interpretation
and an $\omega$PMA interpretation that is preserved by the algebraic operations.
Intuitively, a (concrete, abstract) pair in the relation should be read as ``the concrete element is
approximated by the abstract element.''


\begin{example}\label{Exa:SoundnessRelationForBayesianInference}
  Let $\scrC = \tuple{\calC,\interp{\cdot}^\scrC }$ be an MA interpretation where $\calC$ is an MA for
  probabilistic Boolean programs, which is introduced in \cref{Exa:MAForBooleanPrograms}.
  Let $\scrB = \tuple{\calB, \interp{\cdot}^\scrB }$ be an $\omega$PMA interpretation where $\calB$ is the $\omega$PMA over matrices
  described in \cref{Exa:BayesianInferenceDomain}.
  We define the following approximation relation to indicate that the program analysis reasons about lower bounds:
  \[
  r \Vdash \mathbf{A} \iff \Forall{s \in S} \Forall{\mu \in r(s)} \Forall{s' \in S} \mu(s') \ge \mathbf{A}(s,s').
  \]
  %
\end{example}

The correctness of an interpretation
is then justified by the following soundness theorem, which followed by induction on the structure of
regular infinite-tree expressions.

\begin{theorem}
\label{The:InterSoundness}
  Let $\scrC = \tuple{\calC, \interp{\cdot}^\scrC}$ be an MA interpretation.
  Let $\scrM = \tuple{\calM, \interp{\cdot}^\scrM}$ be an $\omega$PMA interpretation.
  Let ${\Vdash} \subseteq \calC \times \calM$ be a soundness relation.
  Then for any regular infinite-tree expression $E \in \m{RegExp}^\infty(\calF,\calK)$,
  $\gamma : \calK \to \calC$, $\gamma^\sharp : \calK \to \calM$ such that
  $\gamma(Z) \Vdash \gamma^\sharp(Z)$ for all $Z \in \calK$,
  $\vec{\nu} : \calC^n$, and $\vec{\nu}^\sharp : \calM^n$ such that $\nu_i \Vdash \nu^\sharp_i$ for $i=1,\cdots,n$,
  we have $\scrC_{\gamma}\interp{E}(\vec{\nu}) \Vdash \scrM_{\gamma^\sharp}\interp{E}(\vec{\nu}^\sharp)$.
\end{theorem}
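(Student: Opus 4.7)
The plan is to prove the theorem by structural induction on the regular infinite-tree expression $E$, using at each step the fact that the definition of a soundness relation (spelled out in the appendix) requires preservation under all algebraic operations of the two algebras, together with an admissibility condition with respect to directed/$\omega$-chain suprema. Base cases: for $E = \varepsilon$ we invoke that the soundness relation contains $(\aone_C, \aone_M)$, and for $E = Z \in \calK$ we use the hypothesis $\gamma(Z) \Vdash \gamma^\sharp(Z)$. For each compound symbol---$\mi{cond}[\varphi](,)$, $\mi{prob}[p](,)$, $\mi{ndet}(,)$, and $\mi{seq}[\m{act}](\cdot)$---we apply the inductive hypothesis to the subexpressions to obtain related pairs, then invoke preservation of $\Vdash$ under $\gcho{\varphi}$, $\pcho{p}$, $\dashcup$, and $\otimes$ (after using $\interp{\m{act}}^\scrC \Vdash \interp{\m{act}}^\scrM$, which is a definitional assumption on the interpretations paired with the soundness relation). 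For $\mi{call}[X_i](E)$, we combine the inductive hypothesis on $E$ with the assumption $\nu_i \Vdash \nu^\sharp_i$ and preservation under $\otimes$.

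For the concatenation form $E_1 \dplus_Z E_2$, I would first apply the inductive hypothesis to $E_2$ under $(\gamma, \gamma^\sharp)$, obtaining $\scrC_\gamma\interp{E_2}(\vec{\nu}) \Vdash \scrM_{\gamma^\sharp}\interp{E_2}(\vec{\nu}^\sharp)$; then extend the valuations to $\gamma' \defeq \gamma[Z \mapsto \scrC_\gamma\interp{E_2}(\vec{\nu})]$ and $\gamma^{\sharp\prime} \defeq \gamma^\sharp[Z \mapsto \scrM_{\gamma^\sharp}\interp{E_2}(\vec{\nu}^\sharp)]$, which satisfy the hypothesis of the theorem for the extra variable; finally invoke the inductive hypothesis on $E_1$ with the extended valuations. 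The unfolded semantic equation for $\dplus_Z$ then matches directly.

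The main obstacle is the $\mu$-binder case $E = \mu Z.\, E'$, because its interpretation on each side is defined as a least fixed-point computed through a chain. I would argue as follows. Define the operators $\Phi_C(\theta) \defeq \scrC_{\gamma[Z \mapsto \theta]}\interp{E'}(\vec{\nu})$ on $\calC$ and $\Phi_M(\theta^\sharp) \defeq \scrM_{\gamma^\sharp[Z \mapsto \theta^\sharp]}\interp{E'}(\vec{\nu}^\sharp)$ on $\calM$. By the inductive hypothesis applied to $E'$, if $\theta \Vdash \theta^\sharp$ then $\Phi_C(\theta) \Vdash \Phi_M(\theta^\sharp)$, so $\Phi_C$ and $\Phi_M$ are related pointwise. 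Starting from $\azero_C \Vdash \azero_M$ (again part of the definition of a soundness relation), a simple induction on $k$ gives $\Phi_C^k(\azero_C) \Vdash \Phi_M^k(\azero_M)$ for every $k$. The final step passes to the suprema: in $\scrC$ the least fixed-point is the directed supremum of this chain by Scott-continuity of $\Phi_C$, and in $\scrM$ it is the $\omega$-chain supremum by $\omega$-continuity of $\Phi_M$; the admissibility condition on $\Vdash$ (i.e., that related chains have related suprema) then yields $\scrC_\gamma\interp{\mu Z.\, E'}(\vec{\nu}) \Vdash \scrM_{\gamma^\sharp}\interp{\mu Z.\, E'}(\vec{\nu}^\sharp)$.

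Subtleties I expect to confront are the mismatch between directed completeness on the concrete side and $\omega$-continuity on the abstract side, and whether the chain on the concrete side is indeed an $\omega$-chain (it is, because it is generated by iterating $\Phi_C$ starting at $\azero_C$). I also need the soundness relation's definition to bundle in admissibility, a condition on units and zeros, and preservation under every operation of $\omega$PMAs; the paper presumably packages these requirements into the definition of $\Vdash$, so the proof is otherwise a routine, uniform structural induction.
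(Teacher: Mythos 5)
Your proposal is correct and follows essentially the same route as the paper's proof: a structural induction on $E$ in which each case is discharged by the corresponding preservation clause in the definition of a soundness relation (units, data actions, the confluence operations, $\otimes$ for calls, and valuation extension for $\dplus_Z$). The only cosmetic difference is the $\mu$-binder case, where you re-derive fixed-point preservation via Kleene iteration from $\azero_C \Vdash \azero_M$ plus an admissibility condition, whereas the paper's Definition of soundness relations directly stipulates that pointwise-related Scott-continuous operators have related least fixed-points, making that case immediate.
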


%


\section{Case Studies}
\label{Se:CaseStudies}

We implemented a prototype of \framework{} in OCaml; the core framework consists of around 1,000 lines of code.
%
We conducted four case studies in which we instantiated \framework{} to perform different analyses of probabilistic programs.
This section presents preliminary experimental results, and shows the generality of \framework{} by sketching instantiations with existing abstract domains for
(i) finding probability distributions of variable valuations (\cref{Se:BayesianInferenceAnalysis} \&
\iflong
\Cref{Se:BayesianInferenceADD})
\else
the technical report~\cite{Techreport})
\fi
and (ii) upper bounds on moments
\iflong
(\Cref{Se:HigherMomentAnalysis}),
\else
(the technical report~\cite{Techreport})
\fi
as well as
with new abstract domains for (iii) finding linear expectation invariants (\cref{Se:ExpectationInvariantAnalysis}) and (iv) non-linear expectation invariants
(\cref{Se:ExpectationRecurrenceAnalysis}).
For some of those case studies, the subproblems generated from Newton's method are reduced to
\emph{linear programming} (LP);
in the
implementations of those instantiations,
we used the off-the-shelf LP solver CoinOr CLP~\cite{misc:CLP22}.
%
The case studies were performed on a machine with an Apple M2 processor (3.50 GHz, 8 cores) and 24 GB of RAM under macOS Sonoma 14.3.1.

\subsection{Bayesian-inference Analysis}
\label{Se:BayesianInferenceAnalysis}

\paragraph{Matrix-based domain}
In \cref{Exa:BayesianInferenceDomain}, we introduced the Bayesian-inference analysis of probabilistic Boolean programs and its corresponding $\omega$PMA $\calB$.
The analysis is in the style of prior work~\cite{FSE:CRN13,PLDI:WHR18,OOPSLA:HBM20,JACM:EY15};
it computes the probability distribution of variable valuations conditioned on program termination.
When the analyzed program contains nondeterminism, the analysis computes a lower bound
on the probability distribution.
In \cref{Exa:SoundnessRelationForBayesianInference}, we formulated a soundness relation between
a concrete interpretation $\scrC$ for probabilistic Boolean programs \Omit{(\cref{Exa:MAForBooleanPrograms})} and the
abstract interpretation $\scrB$.

We now describe a linear-recursion-solving strategy $\m{solve}$ for this analysis.
Consider $\m{solve}_\calK( \{Y_i = E_{Y_i} \}_{i=1}^n, \{ Z = E_Z\}_{Z \in \calZ}, \gamma)$
where $E_{Y_i}$'s and $E_Z$'s are linear regular infinite-tree expressions in $\m{RegExp}^\m{cf}(\calF^\alpha_\m{lin},\calK \uplus \calZ)$,
and $\gamma : \calK \to \calB$ is a valuation.
We then reduce the equation-solving problem to \emph{linear programming} (LP).
We transform a min-linear matrix equation system to an LP instance by introducing a fresh variable for each $\min$-subexpression,
and thus obtain a \emph{normalized} matrix equation system $\vec{Z} = \vec{f}(\vec{Z})$ in which each $f_i(\vec{Z})$ is either
a linear matrix expression over $\vec{Z}$ or the minimum of two variables $\min(Z_j,Z_k)$, for some $j,k$.
The least solution of $\vec{Z} = \vec{f}(\vec{Z})$ can then be obtained via the following LP instance:
{\small\begin{alignat*}{6}
  \textbf{maximize} \enskip & \textstyle\sum_i Z_i, &
  \enskip \textbf{s.t.} \enskip & Z_i = \mathbf{C}_i + \textstyle\sum_{j,k} \mathbf{A}_{i,j,k} \cdot Z_j \cdot \mathbf{B}_{i,j,k} & & \enskip \text{for $i$ such that $f_i(\vec{Z}) = \mathbf{C}_i + \textstyle \sum_{j,k} \mathbf{A}_{i,j,k} \cdot Z_j \cdot \mathbf{B}_{i,j,k}$}, \\[-3pt]
  & & & Z_i \le Z_j, Z_i \le Z_k & & \enskip \text{for $i$ such that $f_i(\vec{Z}) = \min(Z_j,Z_k)$}, \\[-3pt]
  & & & Z_i \ge \mathbf{0}   & & \enskip \text{for each $i$},
\end{alignat*}}%
where each $Z_i$ can be represented by a matrix of numeric variables in an LP solver.

Our implementation of this domain consists of about 400 lines of code.
We evaluated the performance of \framework{} against Kleene iteration on 100 randomly generated probabilistic programs,
each of which has two Boolean-valued variables and consists of 100 procedures, each of which takes one of the following forms:
(i) $\mi{prob}[p]( \mi{seq}[x{\coloneqq}a]( \mi{call}[X_i]( \varepsilon ) ), \mi{seq}[y{\coloneqq}b]( \mi{call}[X_j]( \varepsilon ) ) )$
for some probability $p$, variables $x,y$, Boolean constants $a,b$, and procedures $X_i,X_j$;
(ii) $\mi{prob}[p]( \mi{cond}[x]( \mi{call}[X_i]( \varepsilon ), \mi{call}[X_j]( \varepsilon ) ), \varepsilon )$
for some probability $p$, variable $x$, and procedures $X_i,X_j$; or
(iii) $\mi{call}[X_i]( \mi{call}[X_j] (\varepsilon))$ for some procedures $X_i,X_j$.
On the benchmark suite of 100 programs, \framework{} and Kleene iteration derive the same
analysis results (up to some negligible floating-point error). Such an experimental result
is unsurprising because \cref{The:NewtonConvergence} ensures that Newton iteration and
Kleene iteration converge to the same fixed-point.
On this benchmark suite,
the average number of Newton rounds performed by \framework{} is 9.15,
whereas the average number of Kleene rounds is 3,070.42.
\cref{Fi:RuntimePlotBayesianInference} (left) presents a scatter plot that compares the \emph{running time} (in seconds) of Kleene iteration ($x$-axis)
against \framework{} ($y$-axis), where a point in the lower-right triangle indicates than \framework{} has
better performance on that benchmark.
Overall, \framework{} outperforms Kleene iteration:
the geometric mean of $x/y$ is 4.08.
Although a Newton round is usually
more expensive than a Kleene round,
\framework{} greatly reduces the number of rounds (3,070.42 $\to$ 9.15), resulting in an average speedup of 4.08x.

\begin{figure}
\centering
\begin{subfigure}[b]{0.4\textwidth}
\centering
\vspace{-0.1em}
\includegraphics[width=\textwidth]{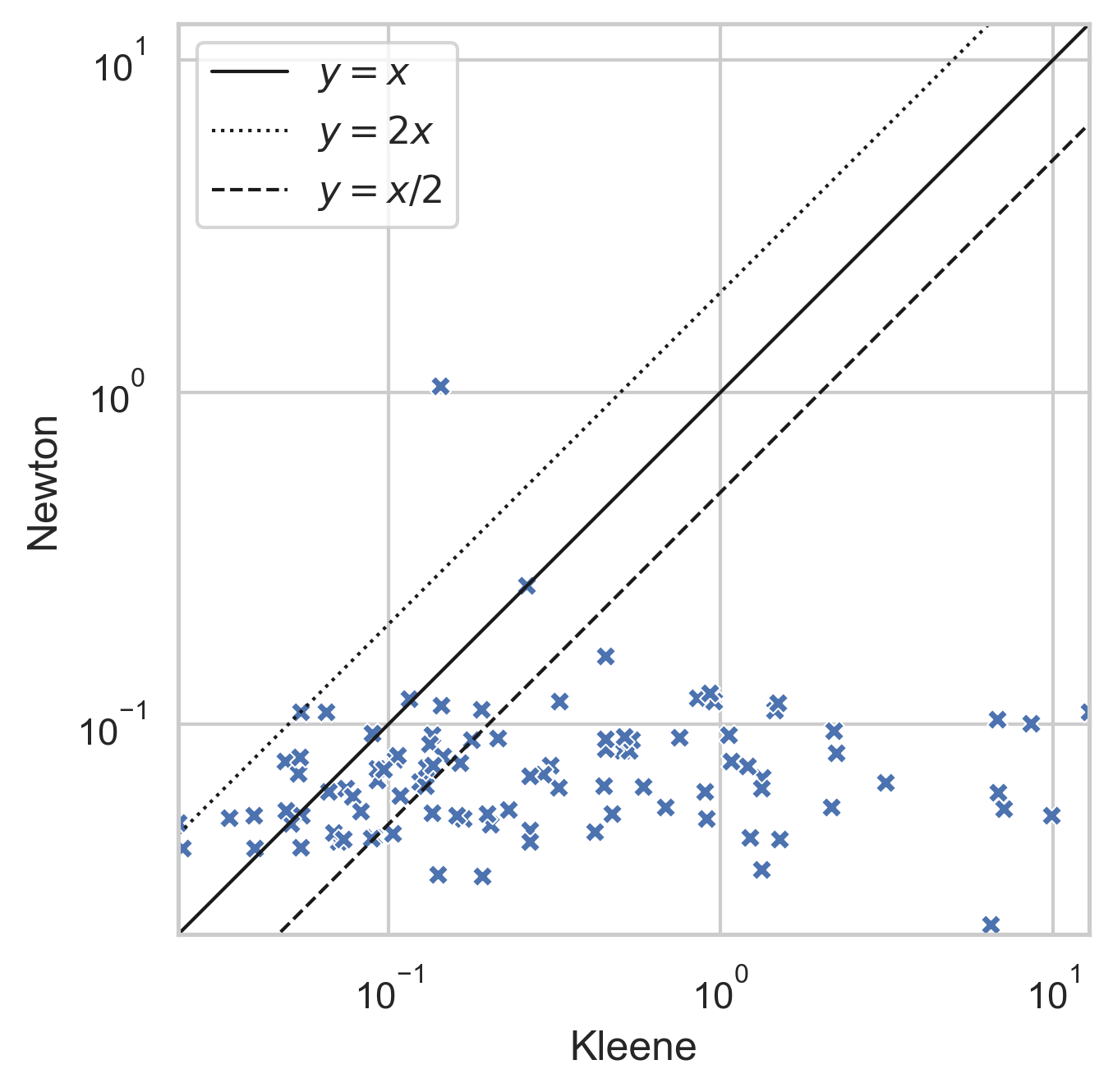}
\vspace{-0.7em}
\end{subfigure}
\hfil
\begin{subfigure}[b]{0.4\textwidth}  
\centering
\vspace{-0.1em}
\includegraphics[width=\textwidth]{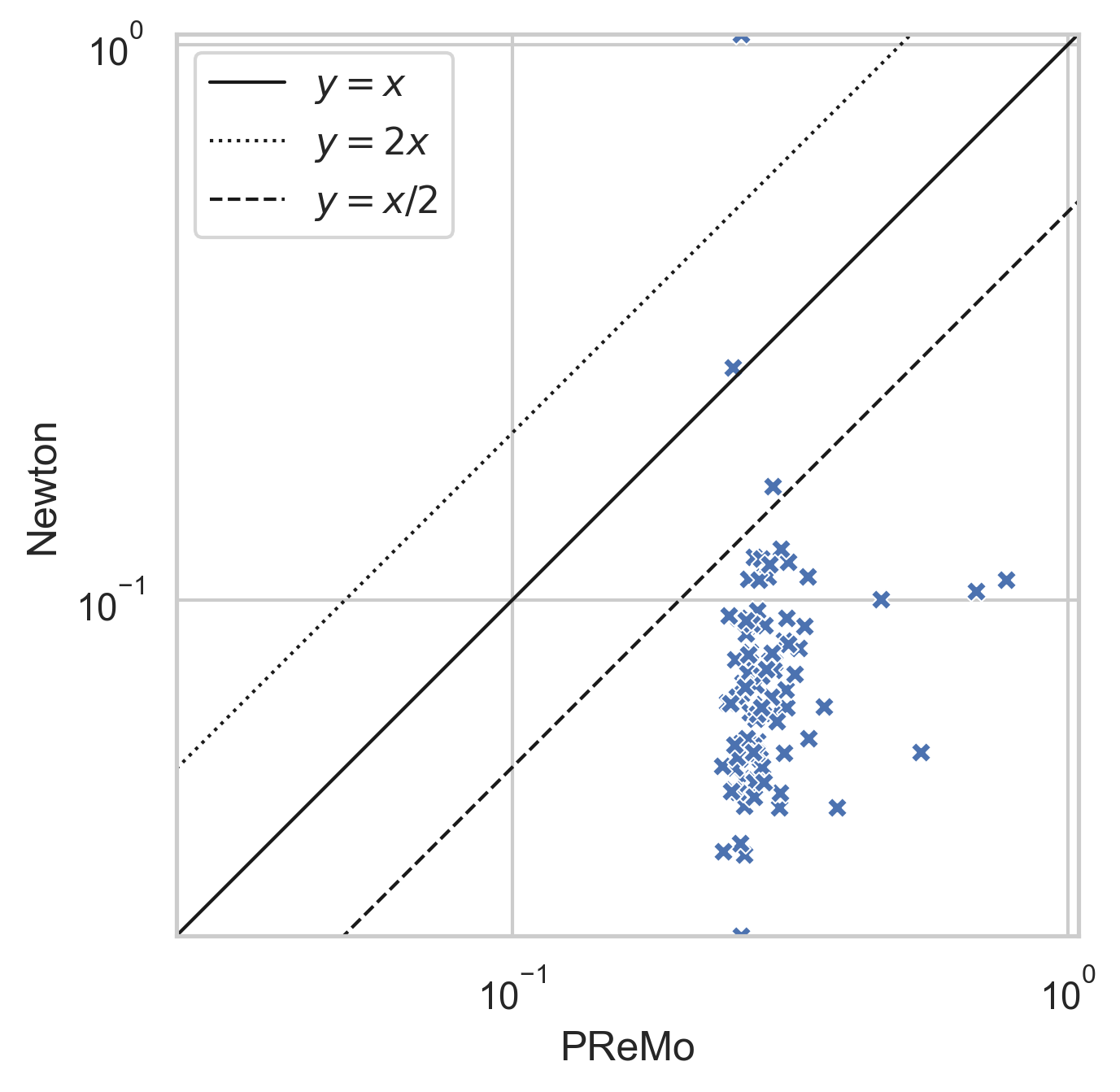}
\vspace{-0.7em}
\end{subfigure}
\caption{Log-log scatter plots
of running times (in seconds) of \framework{} (``Newton'') versus Kleene-iteration (left plot) and \framework{} (``Newton'') versus PReMo (right plot),
where the solid line, dotted line, and dashed line indicate equal performance, 2x slowdown, and 2x speedup, respectively.
}
\label{Fi:RuntimePlotBayesianInference}
\end{figure}

We also compared our prototype implementation with PReMo~\cite{TACAS:WE07}, a tool for
analyzing probabilistic recursive models such as recursive MDPs.
PReMo also applies a generalized form of Newton's method, specialized to polynomial
equation systems over the real semiring.
\cref{Fi:RuntimePlotBayesianInference} (right) shows a scatter plot that compares the
running time of PReMo ($x$-axis) against \framework{} ($y$-axis).
From the statistics, \framework{} outperforms PReMo
(the geometric mean of $x/y$ is 4.07),
but it should be noted that PReMo is implemented in Java, which might incur more overhead than our implementation.

\paragraph{Algebraic-decision-diagram-based domain} 
%
In
\iflong
\Cref{Se:BayesianInferenceADD},
\else
the technical report~\cite{Techreport},
\fi
we develop an instantiation that uses \emph{Algebraic Decision Diagrams}~\cite{FMSD:BFG97} to 
represent distribution transformers and
carry out Bayesian-inference analysis of probabilistic programs \emph{without} nondeterminism.
We evaluated the performance of this domain on the benchmark suite mentioned earlier.
\framework{} achieves an overall 1.54x speedup compared with Kleene iteration
(computed as the geometric mean).
%
We adopted the technique of \citet{POPL:RTP16} for solving linear equation systems using tensor products.
%
%
In their experiments, NPA-TP was
slower than chaotic iteration on a BDD-based predicate-abstraction domain, whereas our result is much better than that.
We think it is the quantitative nature of probabilistic programs that unleashes the potential of Newton's method.

\subsection{Expectation-invariant Analysis}
\label{Se:ExpectationInvariantAnalysis}

For a probabilistic program $P$ with a set $\m{Var}$ of numeric program variables,
$\bbE[\calE_2] \bowtie \calE_1$ is called an \emph{expectation invariant}~\cite{PLDI:WHR18,SAS:KMM10,SAS:CS14},
where (i)
$\calE_1$ (resp. $\calE_2$) is an
arithmetic expression over $\m{Var}$ (resp., $\m{Var}'$, primed copies of the variables in $\m{Var}$),
(ii)
$\bowtie$ is a comparison operator (e.g., $\le$),
and
(iii)
the expectation of $\calE_2$ in the final valuation (i.e., the primed copies in $\m{Var}'$ represent the values of variables at the termination of $P$) is related to the value of $\calE_1$ in the initial valuation of $P$ by $\bowtie$.
We studied two kinds of expectation-invariant analysis:
in this section, we consider linear expectation invariants of the form $\bbE[x'] \le \calE$, where $x$ is a variable, for programs where all variables are nonnegative.
In \cref{Se:ExpectationRecurrenceAnalysis}, we will describe an analysis for obtaining non-linear expectation invariants, but considers a more restricted setting in which randomness does \emph{not} influence the flow of control in a program.

We consider programs with nonnegative program variables
and we want to derive upper bounds on the expected value $\bbE[x']$ of every program variable $x$.
In particular, we consider expectation invariants of the form $\bbE[x'] \le \calE$ where $\calE$ is a linear expression.
Expected-value analysis is useful in many applications; for example, a recent work has used expected-value analysis
as a subroutine for performing expected-cost analysis~\cite{OOPSLA:AMS20}.
Suppose that a program uses $k$ variables and they are represented by a vector $\vec{x}$.
We can encode upper bounds on expected values of $\vec{x}$ as a nonnegative matrix $\mathbf{T}$
of the form $\begin{pNiceArray}[small]{c|c} p & \vec{b} \\ \hline \vec{0}^\m{T} & \mathbf{M}  \end{pNiceArray}$,
where $p$ is a number in the unit interval $[0,1]$,
$\vec{b}$ is a
$k$-dimensional vector, and $\mathbf{M}$ is a $k$-by-$k$ matrix, in the sense that
$[ \bbE[ 1]  \mid \bbE[\vec{x}'] ] \le [1 \mid \vec{x}] \cdot \mathbf{T}$.
Let $V$ denote the space of such matrices.

We now define an $\omega$PMA $\calV = \tuple{V, \oplus_V, \otimes_V, \gcho{\varphi}_V, \pcho{p}_V, \dashcup_V, \azero_V, \aone_V}$ as follows.
Note that we ignore $\gcho{\varphi}_V$ and $\dashcup_V$ temporarily because they are more involved;
we will revisit them later.
{\small\begin{align*}
  \azero_V & \defeq \begin{pNiceArray}[small]{c|c} 0 & \vec{0} \\ \hline \vec{0}^\m{T} & \mathbf{0} \end{pNiceArray}, &
  \aone_V & \defeq \begin{pNiceArray}[small]{c|c} 1 & \vec{0} \\ \hline \vec{0}^\m{T} & \mathbf{I} \end{pNiceArray} , &
  \mathbf{T}_1 \oplus_V \mathbf{T}_2 & \defeq \mathbf{T}_1 + \mathbf{T}_2, &
  \mathbf{T}_1 \otimes_V \mathbf{T}_2 & \defeq \mathbf{T}_1 \cdot \mathbf{T}_2, 
  &
  \mathbf{T}_1 \pcho{p}_V \mathbf{T}_2 & \defeq p \cdot \mathbf{T}_1 + (1-p) \cdot \mathbf{T}_2.
\end{align*}}%
The $\omega$PMA $\calV$ admits
the partial order $\mathbf{T}_1 \aord_V \mathbf{T}_2 \defeq \mathbf{T}_1 \le \mathbf{T}_2$ (i.e., pointwise comparison)
and the subtraction operation $\mathbf{T}_1 \ominus_V \mathbf{T}_2 \defeq \mathbf{T}_1 - \mathbf{T}_2$ (i.e., pointwise subtraction).
Observing that the linear-recursion-solving strategy for $\calV$ needs to essentially
solve a system of linear matrix equations (similar to the setting in \cref{Se:BayesianInferenceAnalysis}), we again apply LP to solve linear recursion. 

\begin{figure}
\centering
\begin{subfigure}{0.25\textwidth}
\centering\footnotesize
\begin{spacing}{0.8}
\begin{pseudo}
  \kw{proc} $X$() \kw{begin} \\+
    \kw{if} \kw{prob}($\frac{2}{3}$) \kw{then} \\+
      \kw{skip} \\-
    \kw{else} \\+
      $t \coloneqq t + x$; \\
      \kw{if} \kw{prob}($\frac{1}{2}$) \kw{then} \\+
        $x \coloneqq 1$ \\-
      \kw{else} $x \coloneqq 0$ \kw{fi}; \\
      $X$(); $t \coloneqq t + 1$; \\
      $X$(); $t \coloneqq t + 1$ \\-
    \kw{fi} \\-
  \kw{end}
\end{pseudo}
\end{spacing}
\caption{A recursive program}
\end{subfigure}
\begin{subfigure}{0.48\textwidth}
\centering\footnotesize
\begin{align*}
  \nu^{(0)} & : \bbE[1] {\le} 1, \bbE[x'] {\le} 0, \bbE[t'] {\le} t \\[-5pt]
  f(\nu^{(0)}) & : \bbE[1] {\le} 1, \bbE[x'] {\le} \textstyle\frac{2}{3} x, \bbE[t'] {\le}  t + \frac{1}{3} x + \frac{2}{3} \\[-5pt]
  \delta^{(0)} & : \bbE[1] {\le} 0, \bbE[x'] {\le} \textstyle\frac{2}{3} x, \bbE[t'] {\le} \frac{1}{3} x + \frac{2}{3} \\[-5pt]
  \Delta^{(0)} & : \bbE[1] {\le} 0, \bbE[x'] {\le} \textstyle\frac{2}{3} x, \bbE[t'] {\le} \frac{1}{3} x + \frac{13}{6} \\ \hline
  \nu^{(1)} & : \bbE[1] {\le} 1, \bbE[x'] {\le} \textstyle\frac{2}{3} x, \bbE[t'] {\le} t + \frac{1}{3} x + \frac{13}{6} \\[-5pt]
  f(\nu^{(1)}) & : \bbE[1] {\le} 1, \bbE[x'] {\le} \textstyle\frac{2}{3} x + \frac{2}{27}, \bbE[t'] {\le} t +  \frac{1}{3}x + \frac{119}{54} \\[-5pt]
  \delta^{(1)} & : \bbE[1] {\le} 0, \bbE[x'] {\le} \textstyle\frac{2}{27}, \bbE[t'] {\le} \frac{1}{27} \\[-5pt]
  \Delta^{(1)} & : \bbE[1] {\le} 0, \bbE[x'] {\le} \textstyle\frac{1}{6}, \bbE[t'] {\le} \frac{1}{6} \\ \hline
  \nu^{(2)} & : \bbE[1] {\le} 1, \bbE[x'] {\le} \textstyle\frac{2}{3} x + \frac{1}{6}, \bbE[t'] {\le} t + \frac{1}{3}x + \frac{7}{3} \\[-5pt]
  f(\nu^{(2)}) & : \bbE[1] {\le} 1, \bbE[x'] {\le} \textstyle\frac{2}{3} x + \frac{1}{6},  \bbE[t'] {\le} t + \frac{1}{3}x + \frac{7}{3} 
\end{align*}
\caption{The Newton sequence for the program in (a)}
\end{subfigure}
\begin{subfigure}{0.25\textwidth}
\centering\footnotesize
\begin{spacing}{0.8}
\begin{pseudo}
  $(x,y) \coloneqq (0,0)$; \\
  \kw{while} $x-y < n$ \kw{do} \\+
    \kw{if} \kw{prob}($\frac{3}{4}$) \kw{then} \\+
      $x \coloneqq x + 1$ \\-
    \kw{else} \\+
      $y \coloneqq y + 1$ \\-
    \kw{fi}; \\
    $t \coloneqq t + 1$ \\-
  \kw{od}; \\
  \kw{skip}
\end{pseudo}
\end{spacing}
\caption{A random-walk program}
\end{subfigure}
\caption{Example programs for expected-value analysis.}\label{Fi:ExampleExpectedValueAnalysis}
\vspace{-0.5em}
\end{figure}

  Consider the recursive procedure $X$ shown in \cref{Fi:ExampleExpectedValueAnalysis}(a)
  with two program variables $x$ and $t$.
  The variable $t$ can be seen as a reward accumulator and thus its expected value corresponds
  to the expected accumulated reward.
  \cref{Fi:ExampleExpectedValueAnalysis}(b) presents the Newton sequence obtained when analyzing
  procedure $X$ with respect to $\omega$PMA $\calV$.
  We observe that the Newton sequence converges in three iterations and the analysis derives expectation invariants $\bbE[x'] \le \frac{2}{3} x + \frac{1}{6}$ and $\bbE[t'] \le t + \frac{1}{3}x + \frac{7}{3}$.
  Note that the sequence shown in \cref{Fi:ExampleExpectedValueAnalysis}(b) does \emph{not} start the iteration from $\azero_V$;
  instead, we set $\nu^{(0)}$ to be
$\begin{pNiceArray}[small]{c|c c}[first-col,first-row]
     & 1& x' & t' \\
  1 & 1 & 0 & 0 \\
  \hline
  x & 0 & 0 & 0 \\
  t & 0 & 0 & 1
\end{pNiceArray}$
  because (i) we are interested in \emph{partial correctness} (thus, we assume the program terminates with probability one by having
  $\bbE[1] \le 1$ as an expectation invariant),
 and (ii) the program shown in \cref{Fi:ExampleExpectedValueAnalysis}(a) will only
  increment $t$ during its execution
  (thus $\bbE[t'] \le t$ is certainly an under-approximation of the final solution
  and we can safely set $\bbE[t'] \le t$ in $\nu^{(0)}$).

A straightforward way to define the conditional-choice operation $\gcho{\varphi}_V$
is to use pointwise maximum of the operand matrices, i.e., $\mathbf{A} \gcho{\varphi}_V \mathbf{B} \defeq \max(\mathbf{A}, \mathbf{B})$.
Alternatively, we can implement a more precise conditional-choice operation by adopting the idea of \emph{rewrite functions}~\cite{CAV:CHR17,PLDI:NCH18}, which provide a mechanism to incorporate information about invariant conditions into an analysis.
A rewrite function at a given location $l$ in the program is based on a term $t$ over program variables.
When $t \ge 0$ is an invariant at $l$, $t$ can be used as a \emph{nonnegativity rewrite function}.
When $t = 0$ is an invariant at $l$, $t$ can be used as a \emph{vanishing rewrite function}.
The rewrite functions we use are linear expressions over program variables (and such an expression can contain negative coefficients).
For example, consider the probabilistic program in \cref{Fi:ExampleExpectedValueAnalysis}(c).
which implements a biased one-dimensional random walk, where the current position of the walk is represented by $x-y$.
By a simple non-probabilistic program analysis, we can derive that (i) at the beginning of each loop iteration,
the expression $n - x + y - 1$ is always nonnegative (and hence can be used as a nonnegativity rewrite function), and (ii) at the end of the loop, the expression $n - x + y$ is always zero (and hence can be used as a vanishing rewrite function).
Consider computing $\mathbf{A} \gcho{x{-}y<n}_V \mathbf{B}$, where $\mathbf{A}$ and $\mathbf{B}$ summarize
the property of the computations starting from line 3 and line 10 of the program, respectively.
Before performing the $\max$ operator, we can add $c_1 \cdot (n-x+y-1)$ and $c_2 \cdot (n-x+y)$ to the upper bounds indicated by $\mathbf{A}$ and $\mathbf{B}$, respectively,
where $c_1 \ge 0$ and $c_2$ is an arbitrary number.
For example, if $\mathbf{A}$ implies $\bbE[t'] \le t + 2n-2x+2y$ and $\mathbf{B}$ implies $\bbE[t'] \le t$,
the direct maximum would yield $\bbE[t'] \le t +2n+2y$,
but we can add $2 \cdot (n-x+y)$ to the right-hand side of $\bbE[t'] \le t$ and obtain the more precise upper bound $\bbE[t'] \le t+2n-2x+2y$.
Using the technique of rewriting functions, we can again reduce the problem of computing $\gcho{\varphi}_V$
to LP.
In practice, we assume that each $\mi{cond}[\varphi](,)$ is associated with two sets of rewriting functions $\Gamma_\varphi,\Gamma_{\neg\varphi} \subseteq \bbR[\vec{x}]$ of linear expressions, in the sense that the expressions in $\Gamma_\varphi$
(resp., $\Gamma_{\neg\varphi}$) are guaranteed to be always nonnegative if the conditional choice takes the ``then'' (resp., ``else'') branch.\footnote{
  A vanishing rewrite function $t$ can be represented by two nonnegativity rewrite functions $t$ and ${-}t$.
}
Our expected-value analysis derives the following expectation invariants for the random-walk program:
$
\bbE[t'] \le t+2n, \bbE[x'] \le \textstyle \frac{3}{2} n, \bbE[y'] \le \frac{1}{2} n.
$

\begin{table}
\centering
\footnotesize
\caption{Selected evaluation results of expectation-invariant analysis. (Time is in seconds.)}
\label{Ta:ExpectationInvariantAnalysis}
\begin{tabular}{c c c l}
\hline
\# & Program & Time & Invariants (selected) \\ \hline
8 & eg-tail & 0.008772 & $\bbE[x'] \le x + 3, \bbE[y'] \le y + 3, \bbE[z'] \le 0.25 z + 0.75$ \\[-2pt]
12 & recursive & 0.001473 & $\bbE[x'] \le x + 9$ \\[-2pt]
15 & coupon-five-fsm & 0.005259 & $\bbE[t'] \le t + 11.4167$ \\[-2pt]
16 & dice & 0.001575 & $\bbE[r'] \le 3.5, \bbE[t'] \le t + 1.33333$ \\[-2pt]
17 & exponential & 0.040215 & $\bbE[n'] \le 1, \bbE[r'] \le 0.2$ \\[-2pt]
18 & geometric & 0.001840 & $\bbE[i'] \le n, \bbE[t'] \le -2560 i + 2560 n + t$ \\[-2pt]
19 & non-linear-recursion & 0.001642 & $\bbE[t'] \le t + 0.333333 x + 2.33333, \bbE[x'] \le 0.666667 x + 0.166667$ \\[-2pt]
22 & unbiased & 0.000624 & $\bbE[r'] \le 1.5, \bbE[t'] \le t + 11.1111$ \\
\hline
\end{tabular}  
\end{table}

Our implementation of the abstract domain for linear expected-value analysis consists of about 500 lines of code.
To evaluate its effectiveness, we collected 22 benchmark programs---13 of which come from prior work~\cite{PLDI:WHR18}.
\cref{Ta:ExpectationInvariantAnalysis} presents selected evaluation results (the full table is included in
\iflong
\Cref{Se:Appendix:ExpectationInvariantAnalysis}),
\else
the technical report~\cite{Techreport}),
\fi
including the derived
expectation invariants and the time taken by the implementation.
Note that the domain used here captures a more restrictive class of invariants than the one used in prior work~\cite{PLDI:WHR18}.
Thus, a comparison of running times is not informative.
Nevertheless, the analysis created via \framework{} still efficiently obtains non-trivial results.
For this domain, Newton's method is slower than Kleene iteration on most of the benchmarks.
%
The reasons are that (i) the benchmark programs are quite small;
(ii) only 8 of the 22 programs involve recursion,
and (iii) the LP solving used by our instantiation has some runtime overhead.

\subsection{Non-linear Expectation Invariants via Expectation-recurrence Analysis}
\label{Se:ExpectationRecurrenceAnalysis}

\begin{wrapfigure}{r}{0.38\textwidth}
\vspace{-0.5em}
\centering
\begin{subfigure}{0.38\textwidth}
\centering\small
\begin{spacing}{0.8}
\begin{pseudo*}
  \kw{assume}($n > 0$); $i \coloneqq 0$; \\
  \kw{while} $i<n$ \kw{do} \\+
    $y \coloneqq y + 0.1 * x$; \\
    $z \sim \cn{Unif}(-0.1,0.1)$; \\
    $x \coloneqq 0.8 * x + z$; \\
    $i \coloneqq i + 1$ \\-
  \kw{od}
\end{pseudo*}
\end{spacing}
\vspace{-0.7em}
\caption{A lane-keeping model}
\end{subfigure}

\vspace{1.2em}

\begin{subfigure}{0.38\textwidth}
\centering\small
\begin{spacing}{0.8}
\begin{pseudo*}
  \kw{assume}($n{>}0$); $(i,x,y,z) \coloneqq (0,1,1,1)$; \\
  \kw{while} $i<n$ \kw{do} \\+
    $x \sim \cn{Cat}( (x+1, 0.5), (x+2, 0.5) )$; \\
    $y \sim \cn{Cat}( (y{+}z{+}x{*}x, \frac{1}{3}), (y{-}z{-}x, \frac{2}{3}) )$; \\
    $z \sim \cn{Cat}( (z+y, \frac{1}{4}), (z-y, \frac{3}{4}) )$; \\
    $i \coloneqq i + 1$ \\-
  \kw{od}
\end{pseudo*}
\end{spacing}
\vspace{-0.7em}
\caption{A program with non-linear arithmetic}
\end{subfigure}
\caption{Example programs.}
\label{Fi:ExampleForExpectationRecurrence}
\end{wrapfigure}

Prior work~\cite{ATVA:BKS19,TACAS:BGP16} has studied program analysis of probabilistic programs where
the control flow of a program does \emph{not} depend on randomness;
i.e., the program does not branch on the values of
random variables.
Such probabilistic programs are not uncommon in applications such as robotics and control theory~\cite{TACAS:BGP16}.
\cref{Fi:ExampleForExpectationRecurrence}(a) shows a probabilistic program that
models a simple lane-keeping controller:
variable $y$ tracks the distance between the controlled vehicle
and the middle of the lane;
during each iteration, the vehicle changes its position with respect to an angle stored
in variable $x$, and then updates the angle by a random amount (see the random variable $z$).
Expectation invariants that involve $\bbE[y']$ can be used to derive tail bounds on the
final position $y'$ of the vehicle via \emph{concentration-of-measure} inequalities~\cite{book:DP09}, such as the Markov inequality.

When the control flow of analyzed programs is not probabilistic,
we can categorize program variables into deterministic ones $\m{Var}_{\m{det}}$ and random ones $\m{Var}_{\m{rnd}}$.
Randomness can only flow to variables in $\m{Var}_{\m{rnd}}$,
and branch conditions can only depend on variables in $\m{Var}_{\m{det}}$.
We then apply the abstract domain of Compositional Recurrence Analysis (CRA)~\cite{FMCAD:FK15,POPL:KCB18}---reviewed
below---to analyze expectation recurrences.

\paragraph{Transition-formula analysis}
  We consider \emph{transition formulas} over integer arithmetic.
  A transition formula $\varphi(\m{Var},\m{Var}')$
  is a formula over the variables $\m{Var}$ (e.g., $\{i,x,y\}$) and their primed copies $\m{Var}'$ (e.g., $\{i',x',y'\}$),
  representing the values of the variables in the pre- and post-state of an execution.
  For example, the transition formula of ``$i{\coloneqq}i{-}1$'' is $i'=i-1 \wedge x'=x \wedge y' = y$.
  The sequencing operation is defined as relational composition and choice is defined as disjunction:
    $\varphi_1 \otimes \varphi_2 \defeq \Exists{\m{Var}''} \varphi_1(\m{Var},\m{Var}'') \wedge \varphi_2(\m{Var}'',\m{Var}')$, 
    $\varphi_1 \oplus \varphi_2 \defeq \varphi_1 \vee \varphi_2.$
  
  The key innovation of CRA
  is a method for interpreting the iteration operation $^\circledast$ in the transition-formula domain.
  Given a formula $\varphi_{\mi{body}}$ that encodes the property of a loop body, CRA computes a formula
  $\varphi_{\mi{body}}^\circledast$ that over-approximates any number of iterations of $\varphi_{\mi{body}}$,
  by extracting recurrence relations from $\varphi_\mi{body}$
  and computing their closed forms.
  For example, for
  ``\kw{while} $i>0$ \kw{do} \kw{if} $\star$ \kw{then} $x{\coloneqq}x{+}i$ \kw{else} $y{\coloneqq}y{+}i$ \kw{fi}; $i{\coloneqq}i{-}1$ \kw{od},'' CRA
  summarizes the loop body by the
  transition formula
  \[
  \varphi_{\mi{body}}: i > 0 \wedge \bigl( (x'=x+i \wedge y'=y) \vee (x'=x \wedge y'=y+i) \bigr) \wedge i' = i - 1 .
  \]
  CRA then extracts recurrences from $\varphi_\mi{body}$ and computes their closed forms; some instances are presented below,
  where $i^{(k)}$ denotes the value of $i$ at the end of the $k^{\textit{th}}$ iteration of the loop.
  \begin{center}
  \begin{small}
    \begin{tabular}{l|l@{\hspace{4em}}l|l}
      Recurrence & Closed form & Recurrence & Closed form \\ \hline
      $i' = i - 1$ & $i^{(k)} = i^{(0)} - k$ & $x'+y'=x+y+i$ & $x^{(k)}+y^{(k)} = x^{(0)} + y^{(0)}+k(k-1)/2+k i^{(0)}$
    \end{tabular}
  \end{small}
  \end{center}
  Finally, CRA introduces an existentially quantified nonnegative variable $k$ as the number of iterations and conjoins the closed forms to obtain a summary of the loop:
  \[
  \varphi_\mi{body}^\circledast : \Exists{k} k \ge 0 \wedge i' = i-k \wedge x'+y'=x+y +k(k-1)/2+k i .
  \]

\paragraph{Expectation-recurrence analysis}
The abstract domain consists of transition formulas of the form
$\varphi(\m{Var}_{\m{det}} \cup \m{Var}_{\m{rnd}}, \m{Var}_{\m{det}}' \cup \bbE[\m{Var}_{\m{rnd}}'])$,
which describe non-probabilistic invariants for $\m{Var}_{\m{det}}$ and expectation invariants for $\m{Var}_{\m{rnd}}$.
For example, the transition formula of ``$z{\sim}\cn{Unif}(-0.1,0.1)$'' is $i'=i \wedge n'=n \wedge \bbE[x'] = x \wedge \bbE[y'] = y \wedge \bbE[z'] = 0$.
Let $\tuple{ I, \oplus_I, \otimes_I, ^{\circledast_I}, \azero_I, \aone_I }$ be the regular algebra for CRA
where $I$ is the set of transition formulas for both non-probabilistic and expectation invariants.

We define a PMA $\calI = \tuple{ I, \aord_I, \otimes_I, \gcho{\varphi}_I, \pcho{p}_I, \dashcup_I, \azero_I, \aone_I }$
with the following definitions of operations:
\begin{gather*}
  \begin{aligned}
  \varphi_1 \aord_I \varphi_2 & \defeq (\varphi_1 {\implies} \varphi_2), &
  \varphi_1 \gcho{\psi} \varphi_2 & \defeq (\psi \otimes_I \varphi_1) \oplus_I (\neg\psi \otimes_I \varphi_2), &
  \varphi_1 \dashcup_I \varphi_2 & \defeq \varphi_1 \oplus_I \varphi_2,
  \end{aligned}
\end{gather*}
where we do not define $\pcho{p}_I$ because we assume that the control flow of programs is not random.

We now describe a loop-solving strategy $\m{solve}$ for this analysis.
Consider $\m{solve}_\calK(\{X = E_X\}_{X \in \calX}, \gamma )$ where $E_X$'s are regular infinite-tree expressions
in $\m{RegExp}^{\m{cf}}(\calF_\m{intra}, \calK \uplus \calX)$.
Because the ranked alphabet $\calF_\m{intra}$ for intraprocedural analysis does not contain procedure-call symbols,
the equation system $\{X = E_X\}_{X \in \calX}$ can be interpreted as a system of \emph{regular} equations over
the algebra $\tuple{I,\oplus_I,\otimes_I,\azero_I,\aone_I}$.
Thus, we can use the approach for algebraic program analysis of non-probabilistic programs~\cite{CAV:KRC21,JACM:Tarjan81,arxiv:FK13},
i.e., apply Tarjan's path-expression algorithm~\cite{JACM:Tarjan81:Alg} to solve the regular equations.
  We do not describe a linear-recursion-solving strategy because it is complex and
  not central to this article.
  Interprocedural CRA is developed by \citet{PLDI:KBB17}.

  For the lane-keeping program from \cref{Fi:ExampleForExpectationRecurrence}(a), our analysis obtains the following non-linear expectation invariants:
  $
  \bbE[z'] = 0 , \bbE[x'] = (\textstyle\frac{4}{5})^n \cdot x , \bbE[y'] = y + \frac{1}{2} \cdot x - \frac{1}{2}  \cdot (\frac{4}{5})^n \cdot x.
  $
  \cref{Fi:ExampleForExpectationRecurrence}(b) presents another example program 
  where $\cn{Cat}(\cdot)$ stands for categorical distributions.
  Our analysis derives
  \[
  \bbE[x'] = \textstyle\frac{3}{2} \cdot n + 1 , \bbE[y'] = -9 \cdot n - \frac{37}{5} \cdot (\frac{2}{3})^n + \frac{67}{5} \cdot (\frac{3}{2})^n - 5 , \bbE[z'] = \frac{9}{4} \cdot n^2 + \frac{19}{4} \cdot n - \frac{201}{10} \cdot (\frac{3}{2})^{n} - \frac{37}{5} \cdot (\frac{2}{3})^n + \frac{57}{2} .
  \]

To implement the abstract domain for expectation-recurrence analysis, we extended CRA with about 200 lines of code.
To evaluate its effectiveness,
we collected 14 benchmark programs---12 of which come from prior work on \textsc{Polar}~\cite{OOPSLA:MSB22}.\footnote{%
We excluded a few benchmark programs from \textsc{Polar}, each of which involves non-trivial stochastic control flow.
As we discussed earlier in \cref{Se:ExpectationRecurrenceAnalysis}, our domain for expectation-recurrence analysis does not support control flow that depends on the values of random variables.
Extending this domain to support stochastic control flow is interesting future research.
}
\cref{Ta:ExpectationRecurrenceAnalysis} presents the results of the evaluation, including the derived expectation
invariants and the time taken by the implementation.
The analysis instantiated via \framework{} is capable of deriving complex non-linear expectation recurrences
and achieves comparable performance with \textsc{Polar} on many benchmark programs.
Another take-away is that \framework{} is flexible enough to allow techniques originally 
developed for analyzing non-probabilistic programs to be extended to probabilistic programs.

\newcommand{\pfrac}[2]{#1/#2}
\begin{table}
\centering
\footnotesize
\caption{
Selected results from
expectation-recurrence analysis. (Time is in seconds.)}
\label{Ta:ExpectationRecurrenceAnalysis}
\begin{tabular}{c c c l}
\hline
\multirow{1}{*}{\#} & \multirow{1}{*}{Program} & Time (our work / \textsc{Polar}) & \multirow{1}{*}{Invariants (selected) derived by our work} \\ \hline
1 & 50coinflips & 108.920 / \textbf{0.448} & $\bbE[\mi{r0}'] = 1 - (\pfrac{1}{2})^{n}, \bbE[\mi{total}'] = 50(1 -  (\pfrac{1}{2})^{n-1} )$ \\[-2pt]
3 & dbn-component-health & \textbf{0.109} / 0.181 & $\bbE[\mi{obs}'] = (\pfrac{2683}{1000}) \cdot (\pfrac{9}{100})^n + \pfrac{701}{1000}$ \\[-2pt]
8 & hermann3 & T/O / \textbf{0.294} & - \\[-2pt]
9 & las-vegas-search & \textbf{0.059} / 0.288 & $\bbE[\mi{found}'] = 1 -  (\pfrac{20}{21})^n, \bbE[\mi{attempts}'] = \pfrac{20 n}{21}  $ \\[-2pt]
\multirow{1}{*}{11} & \multirow{1}{*}{randomized-response} & \multirow{1}{*}{\textbf{0.093} / 0.168} & $\bbE[\mi{n1}'] = \pfrac{5n}{4} - p n, \bbE[\mi{p1}'] = \pfrac{n}{4} + \pfrac{p n}{2}, \bbE[\mi{ntrue}'] = \pfrac{3n}{4}$ \\[-2pt]
\multirow{1}{*}{12} & \multirow{1}{*}{rock-paper-scissors} & \multirow{1}{*}{\textbf{0.110} / 0.215} & $\bbE[\mi{p1bal}'] = n(\mi{p2}(-\mi{q2}-2\mi{q3}+1) + \mi{p3}(\mi{q2}-\mi{q3})+\mi{q3})$ \\
\hline
\end{tabular}
\end{table}


\vspace{-2pt}
\section{Related Work}
\label{Se:RelatedWork}

\Omit{
We have discussed prior work that is most closely related to our work in previous sections.
%
%
We now discuss related work on automata techniques and the two program analyses
described in \cref{Se:CaseStudies}.
}

\paragraph{Markov decision processes (MDPs)}
MDPs provide a well-studied mechanism to model both probabilistic and nondeterministic choices~\cite{book:Puterman94}.
\citet{ICALP:EY05} proposed recursive Markov decision processes (RMDPs), which augment MDPs with recursive calls.
RMDPs can serve as models of nondeterministic probabilistic programs, but as discussed in \cref{Se:TheGap}, they assume a finite space for program states, whereas
our \framework{} framework aims at general state spaces.
Nevertheless, for the case study on Bayesian inference of Boolean programs described
in \cref{Se:BayesianInferenceAnalysis}, the state space is finite and the inference
problem can be reduced to reaching-probability analysis of RMDPs.
On the algorithmic side, several studies~\cite{ICALP:EGK08,ICALP:ESY12,ICALP:ESY15} have
generalized Newton's method to analyze the reaching probability of RMDPs.
The linearization technique proposed in prior work coincides with our differentiation
technique for the particular case of the real semiring, whereas our technique applies
more broadly to $\omega$PMAs.
The prior work used specialized techniques, such as strategy iteration, for solving
linearized equations over the real semiring, which can be cast as
linear-recursion-solving strategies in our framework.

\paragraph{Automata techniques}
\Omit{
Tree-automata theory \cite{misc:TATA07} has a long history.
Fundamental results include:
(i) a tree language $L$ (a set of finite trees over a ranked alphabet) is \emph{recognizable} by some finite tree automaton if and only if
$L$ is a \emph{regular} tree language, and
(ii) a tree language $L$ is \emph{regular} iff $L$ can be finitely represented by a \emph{regular tree expression}.
Therefore, a finite tree automaton that recognizes $L$ can be converted to a regular tree expression for $L$, and vice versa.
%
%
In our work, we need to consider possibly-infinite control-flow hyper-paths;
thus, we adopted a well-studied theory of regular infinite-tree expressions~\cite{JTCS:Courcelle83,kn:Courcelle90}.
}
%
The theory of regular word languages has been lifted to languages that
can contain infinite words, and several different kinds of automata
have been defined \cite{book:PP04}.
%
\Omit{
(Interested readers can consult \citet{book:PP04}, or other resources, for more details.)
}
%
%
%
%
\citet{PLDI:ZK21} observed that the capability of describing sets of infinite words makes \emph{$\omega$-regular expressions}
suitable for reasoning about program \emph{termination}, and proposed an algebraic termination analysis (for non-probabilistic programs)
that uses $\omega$-regular path expressions to represent sets of infinite execution paths in a control-flow graph.
In our work, we have to consider multiple kinds of confluences and thus develop a theory of trees rather than words.

\paragraph{Bayesian inference of Boolean programs (\cref{Se:BayesianInferenceAnalysis})}
%
\citet{FSE:CRN13}
developed an (intraprocedural) dataflow analysis for computing posterior distributions\Omit{ and
used Algebraic Decision Diagrams \cite{FMSD:BFG97} to represent probability distributions compactly}.
\citet{OOPSLA:HBM20} proposed a technique that reduces Bayesian inference of discrete programs to weighted model counting\Omit{ and uses
Binary Decision Diagrams \cite{TC:Bryant86} to encode and manipulate Boolean formulas efficiently}.
\citet{OOPSLA:CMS23} developed a technique that reduces inference of recursive programs with recursive
datatypes to solving polynomial equations via Newton's method.
There has been a line of work on inference in probabilistic logics and logic programs~\cite{IJCAI:DKT07,TPLP:FDG15,TPLP:RS11,JTCS:CIN05}.
Probabilistic model checking~\cite{CAV:KNP11,CAV:DJK17} can also be used to perform Bayesian inference of finite-state programs.
It might be possible to obtain more efficient Newtonian analyzers for Bayesian inference
by formulating the aforementioned techniques in the framework of \framework{}.

\Omit{
\noindent
\textbf{Cost analysis of probabilistic programs (\cref{Se:HigherMomentAnalysis}).}
%
%
Expected-accumulated-reward analysis is a classic problem in the area of Markov decision processes (MDPs)~\cite{book:Puterman94}.
\citet{ICALP:EWY08} studied recursive MDPs with rewards.
In the context of program analysis, those MDP models can easily be implemented as probabilistic programs,
and the notion of rewards can also be generalized.
%
Running time is one of the most popular kind of costs and there have been many techniques for expected runtime analysis,
e.g.,~\cite{ESOP:KKM16,LICS:OKK16,ESOP:BKK18,POPL:CFN16,CAV:CFG16,POPL:FH15,PLDI:NCH18,OOPSLA:AMS20,ICFP:WKH20,TACAS:MHG21,TOPLAS:TOU21,CAV:CFM17}.
Non-monotone costs (such as cash flow) have also been studied~\cite{PLDI:WFG19,PLDI:WHR21A}.
%
It would be interesting future research to instantiate \framework{} to carry out interprocedural cost analysis
of probabilistic programs.
}

\paragraph{Expectation-invariant analysis (\cref{Se:ExpectationInvariantAnalysis} \& \cref{Se:ExpectationRecurrenceAnalysis})}
%
%
There has been a line of work on generating expectation invariants (sometimes called stochastic invariants or probabilistic invariants),
e.g.~\cite{CAV:CS13,SAS:CS14,CAV:BEF16,POPL:CNZ17,SAS:KMM10,ATVA:BKS19,arxiv:SO19,PLDI:WHR18}.
One observation is that
for a certain class of probabilistic programs,
we can apply existing techniques of generating invariants for non-probabilistic programs to discover expectation invariants.\
%
Our future research would investigate using non-probabilistic recurrence-solving techniques (e.g.,~\cite{PLDI:KBB17,POPL:KBC19,POPL:KCB18,PLDI:BCK20,FMCAD:FK15})
to derive higher-moment invariants of probabilistic programs.


\section{Conclusion}
\label{Se:Conclusion}

\framework{} features a new family of $\omega$-continuous pre-Markov algebras for
formulating abstract domains, an adoption of regular infinite-tree expressions for
encoding unstructured control-flow, and a new differentiation method for linearizing
interprocedural equation systems.
Experiments on four domains have demonstrated
that \framework{} is both efficient and flexible.
On the other hand, \framework{} aims at algebraic dataflow analysis, so
it does not support control-flow analysis or 
have access to contextual information from surrounding constructs when analyzing a program fragment.
In the future, we plan to develop more advanced and powerful instantiations of \framework{}.


\section*{Data-Availability Statement}

The artifact that includes a prototype implementation of \framework{}
and supports the case studies described in \cref{Se:CaseStudies} is available on Zenodo~\cite{Artifact}.

\begin{acks}
We thank the referees for suggestions of how to clarify several items in the presentation.
The work was supported, in part,
by a gift from Rajiv and Ritu Batra;
by ONR under grant N00014-17-1-2889;
and by NSF under grants CCF-\{2211968,2212558\}.
Any opinions, findings, and conclusions or recommendations
expressed in this publication are those of the authors,
and do not necessarily reflect the views of the sponsoring
entities.
\end{acks}

\bibliographystyle{ACM-Reference-Format}
\bibliography{db,misc}


\begin{thebibliography}{79}


\ifx \showCODEN    \undefined \def \showCODEN     #1{\unskip}     \fi
\ifx \showDOI      \undefined \def \showDOI       #1{#1}\fi
\ifx \showISBNx    \undefined \def \showISBNx     #1{\unskip}     \fi
\ifx \showISBNxiii \undefined \def \showISBNxiii  #1{\unskip}     \fi
\ifx \showISSN     \undefined \def \showISSN      #1{\unskip}     \fi
\ifx \showLCCN     \undefined \def \showLCCN      #1{\unskip}     \fi
\ifx \shownote     \undefined \def \shownote      #1{#1}          \fi
\ifx \showarticletitle \undefined \def \showarticletitle #1{#1}   \fi
\ifx \showURL      \undefined \def \showURL       {\relax}        \fi
\providecommand\bibfield[2]{#2}
\providecommand\bibinfo[2]{#2}
\providecommand\natexlab[1]{#1}
\providecommand\showeprint[2][]{arXiv:#2}

\bibitem[Avanzini et~al\mbox{.}(2020)]%
        {OOPSLA:AMS20}
\bibfield{author}{\bibinfo{person}{Martin Avanzini}, \bibinfo{person}{Georg Moser}, {and} \bibinfo{person}{Michael Schaper}.} \bibinfo{year}{2020}\natexlab{}.
\newblock \showarticletitle{{A Modular Cost Analysis for Probabilistic Programs}}.
\newblock \bibinfo{journal}{\emph{Proc.\ ACM Program.\ Lang.}} \bibinfo{volume}{4}, \bibinfo{number}{172} (\bibinfo{date}{November} \bibinfo{year}{2020}).
\newblock
Issue OOPSLA.
\urldef\tempurl%
\url{https://doi.org/10.1145/3428240}
\showDOI{\tempurl}


\bibitem[Bahar et~al\mbox{.}(1997)]%
        {FMSD:BFG97}
\bibfield{author}{\bibinfo{person}{R.~I. Bahar}, \bibinfo{person}{E.~A. Frohm}, \bibinfo{person}{C.~M. Gaona}, \bibinfo{person}{G.~D. Hachtel}, \bibinfo{person}{E. Macii}, \bibinfo{person}{A. Pardo}, {and} \bibinfo{person}{F. Somenzi}.} \bibinfo{year}{1997}\natexlab{}.
\newblock \showarticletitle{{Algebraic Decision Diagrams and their Applications}}.
\newblock \bibinfo{journal}{\emph{Formal Methods in System Design}}  \bibinfo{volume}{10} (\bibinfo{date}{April} \bibinfo{year}{1997}).
\newblock
\urldef\tempurl%
\url{https://doi.org/10.1023/A:1008699807402}
\showDOI{\tempurl}


\bibitem[Barthe et~al\mbox{.}(2016)]%
        {CAV:BEF16}
\bibfield{author}{\bibinfo{person}{Gilles Barthe}, \bibinfo{person}{Thomas Espitau}, \bibinfo{person}{Luis~Mar{\'i}a Ferrer~Fioriti}, {and} \bibinfo{person}{Justin Hsu}.} \bibinfo{year}{2016}\natexlab{}.
\newblock \showarticletitle{{Synthesizing Probabilistic Invariants via Doob's Decomposition}}. In \bibinfo{booktitle}{\emph{Computer Aided Verif.}} \emph{(\bibinfo{series}{CAV'16})}.
\newblock
\urldef\tempurl%
\url{https://doi.org/10.1007/978-3-319-41528-4_3}
\showDOI{\tempurl}


\bibitem[Bartocci et~al\mbox{.}(2019)]%
        {ATVA:BKS19}
\bibfield{author}{\bibinfo{person}{Ezio Bartocci}, \bibinfo{person}{Laura Kov{\'a}cs}, {and} \bibinfo{person}{Miroslav Stankovi{\v c}}.} \bibinfo{year}{2019}\natexlab{}.
\newblock \showarticletitle{{Automatic Generation of Moment-Based Invariants for Prob-Solvable Loops}}. In \bibinfo{booktitle}{\emph{Automated Tech.\ for Verif.\ and Analysis}} \emph{(\bibinfo{series}{ATVA'19})}.
\newblock
\urldef\tempurl%
\url{https://doi.org/10.1007/978-3-030-31784-3_15}
\showDOI{\tempurl}


\bibitem[Beki{\'c}(1984)]%
        {kn:Bekic84}
\bibfield{author}{\bibinfo{person}{Hans Beki{\'c}}.} \bibinfo{year}{1984}\natexlab{}.
\newblock \showarticletitle{{Definable Operations in General Algebras, and the Theory of Automata and Flowcharts}}.
\newblock In \bibinfo{booktitle}{\emph{Programming Languages and Their Definition}}. \bibinfo{publisher}{Springer}, \bibinfo{pages}{30--55}.
\newblock
\urldef\tempurl%
\url{https://doi.org/10.1007/BFb0048939}
\showDOI{\tempurl}


\bibitem[Bouajjani et~al\mbox{.}(2003)]%
        {POPL:BET03}
\bibfield{author}{\bibinfo{person}{Ahmed Bouajjani}, \bibinfo{person}{Javier Esparza}, {and} \bibinfo{person}{Tayssir Touili}.} \bibinfo{year}{2003}\natexlab{}.
\newblock \showarticletitle{{A Generic Approach to the Static Analysis of Concurrent Programs with Procedures}}. In \bibinfo{booktitle}{\emph{Princ.\ of Prog.\ Lang.}} \emph{(\bibinfo{series}{POPL'03})}.
\newblock
\urldef\tempurl%
\url{https://doi.org/10.1145/604131.604137}
\showDOI{\tempurl}


\bibitem[Bouissou et~al\mbox{.}(2016)]%
        {TACAS:BGP16}
\bibfield{author}{\bibinfo{person}{Olivier Bouissou}, \bibinfo{person}{Eric Goubault}, \bibinfo{person}{Sylvie Putot}, \bibinfo{person}{Aleksandar Chakarov}, {and} \bibinfo{person}{Sriram Sankaranarayanan}.} \bibinfo{year}{2016}\natexlab{}.
\newblock \showarticletitle{{Uncertainty Propagation Using Probabilistic Affine Forms and Concentration of Measure Inequalities}}. In \bibinfo{booktitle}{\emph{Tools and Algs.\ for the Construct.\ and Anal.\ of Syst.}} \emph{(\bibinfo{series}{TACAS'16})}.
\newblock
\urldef\tempurl%
\url{https://doi.org/10.1007/978-3-662-49674-9_13}
\showDOI{\tempurl}


\bibitem[Br{\'a}zdil et~al\mbox{.}(2011)]%
        {CAV:BKK11}
\bibfield{author}{\bibinfo{person}{Tom{\'a} Br{\'a}zdil}, \bibinfo{person}{Stefan Kiefer}, {and} \bibinfo{person}{Anton{\'i}n Ku{\v c}era}.} \bibinfo{year}{2011}\natexlab{}.
\newblock \showarticletitle{{Efficient Analysis of Probabilistic Programs with an Unbounded Counter}}. In \bibinfo{booktitle}{\emph{Computer Aided Verif.}} \emph{(\bibinfo{series}{CAV'11})}. \bibinfo{pages}{208--224}.
\newblock
\urldef\tempurl%
\url{https://doi.org/10.1007/978-3-642-22110-1_18}
\showDOI{\tempurl}


\bibitem[Breck et~al\mbox{.}(2020)]%
        {PLDI:BCK20}
\bibfield{author}{\bibinfo{person}{Jason Breck}, \bibinfo{person}{John Cyphert}, \bibinfo{person}{Zachary Kincaid}, {and} \bibinfo{person}{Thomas Reps}.} \bibinfo{year}{2020}\natexlab{}.
\newblock \showarticletitle{{Templates and Recurrences: Better Together}}. In \bibinfo{booktitle}{\emph{Prog.\ Lang.\ Design and Impl.}} \emph{(\bibinfo{series}{PLDI'20})}. \bibinfo{pages}{688--702}.
\newblock
\urldef\tempurl%
\url{https://doi.org/10.1145/3385412.3386035}
\showDOI{\tempurl}


\bibitem[Carbonneaux et~al\mbox{.}(2017)]%
        {CAV:CHR17}
\bibfield{author}{\bibinfo{person}{Quentin Carbonneaux}, \bibinfo{person}{Jan Hoffmann}, \bibinfo{person}{Thomas Reps}, {and} \bibinfo{person}{Zhong Shao}.} \bibinfo{year}{2017}\natexlab{}.
\newblock \showarticletitle{{Automated Resource Analysis with Coq Proof Objects}}. In \bibinfo{booktitle}{\emph{Computer Aided Verif.}} \emph{(\bibinfo{series}{CAV'17})}.
\newblock
\urldef\tempurl%
\url{https://doi.org/10.1007/978-3-319-63390-9_4}
\showDOI{\tempurl}


\bibitem[Chakarov and Sankaranarayanan(2013)]%
        {CAV:CS13}
\bibfield{author}{\bibinfo{person}{Aleksandar Chakarov} {and} \bibinfo{person}{Sriram Sankaranarayanan}.} \bibinfo{year}{2013}\natexlab{}.
\newblock \showarticletitle{{Probabilistic Program Analysis with Martingales}}. In \bibinfo{booktitle}{\emph{Computer Aided Verif.}} \emph{(\bibinfo{series}{CAV'13})}. \bibinfo{pages}{511--526}.
\newblock
\urldef\tempurl%
\url{https://doi.org/10.1007/978-3-642-39799-8_34}
\showDOI{\tempurl}


\bibitem[Chakarov and Sankaranarayanan(2014)]%
        {SAS:CS14}
\bibfield{author}{\bibinfo{person}{Aleksandar Chakarov} {and} \bibinfo{person}{Sriram Sankaranarayanan}.} \bibinfo{year}{2014}\natexlab{}.
\newblock \showarticletitle{{Expectation Invariants for Probabilistic Program Loops as Fixed Points}}. In \bibinfo{booktitle}{\emph{Static Analysis Symp.}} \emph{(\bibinfo{series}{SAS'14})}. \bibinfo{pages}{85--100}.
\newblock
\urldef\tempurl%
\url{https://doi.org/10.1007/978-3-319-10936-7_6}
\showDOI{\tempurl}


\bibitem[Chatterjee et~al\mbox{.}(2017)]%
        {POPL:CNZ17}
\bibfield{author}{\bibinfo{person}{Krishnendu Chatterjee}, \bibinfo{person}{Petr Novotn{\'y}}, {and} \bibinfo{person}{Ðorđe {\v Z}ikeli{\'c}}.} \bibinfo{year}{2017}\natexlab{}.
\newblock \showarticletitle{{Stochastic Invariants for Probabilistic Termination}}. In \bibinfo{booktitle}{\emph{Princ.\ of Prog.\ Lang.}} \emph{(\bibinfo{series}{POPL'17})}. \bibinfo{pages}{145--160}.
\newblock
\urldef\tempurl%
\url{https://doi.org/10.1145/3093333.3009873}
\showDOI{\tempurl}


\bibitem[Chiang et~al\mbox{.}(2023)]%
        {OOPSLA:CMS23}
\bibfield{author}{\bibinfo{person}{David Chiang}, \bibinfo{person}{Colin McDonald}, {and} \bibinfo{person}{Chung-chieh Shan}.} \bibinfo{year}{2023}\natexlab{}.
\newblock \showarticletitle{{Exact Recursive Probabilistic Programming}}.
\newblock \bibinfo{journal}{\emph{Proc.\ ACM Program.\ Lang.}} \bibinfo{volume}{7}, \bibinfo{number}{98} (\bibinfo{date}{April} \bibinfo{year}{2023}), \bibinfo{pages}{665--695}.
\newblock
Issue OOPSLA1.
\urldef\tempurl%
\url{https://doi.org/10.1145/3586050}
\showDOI{\tempurl}


\bibitem[Claret et~al\mbox{.}(2013)]%
        {FSE:CRN13}
\bibfield{author}{\bibinfo{person}{Guillaume Claret}, \bibinfo{person}{Sriram~K. Rajamani}, \bibinfo{person}{Aditya~V. Nori}, \bibinfo{person}{Andrew~D. Gordon}, {and} \bibinfo{person}{Johannes Borgstr{\"o}m}.} \bibinfo{year}{2013}\natexlab{}.
\newblock \showarticletitle{{Bayesian Inference using Data Flow Analysis}}. In \bibinfo{booktitle}{\emph{Found.\ of Softw.\ Eng.}} \emph{(\bibinfo{series}{FSE'13})}. \bibinfo{pages}{92--102}.
\newblock
\urldef\tempurl%
\url{https://doi.org/10.1145/2491411.2491423}
\showDOI{\tempurl}


\bibitem[Cleaveland et~al\mbox{.}(2005)]%
        {JTCS:CIN05}
\bibfield{author}{\bibinfo{person}{Rance Cleaveland}, \bibinfo{person}{S.~Purushothaman Iyer}, {and} \bibinfo{person}{Murali Narasimha}.} \bibinfo{year}{2005}\natexlab{}.
\newblock \showarticletitle{{Probabilistic temporal logics via the modal mu-calculus}}.
\newblock \bibinfo{journal}{\emph{Theor.\ Comp.\ Sci.}}  \bibinfo{volume}{342} (\bibinfo{date}{September} \bibinfo{year}{2005}), \bibinfo{pages}{316--350}.
\newblock
Issue 2--3.
\urldef\tempurl%
\url{https://doi.org/10.1016/j.tcs.2005.03.048}
\showDOI{\tempurl}


\bibitem[{Clp team}(2022)]%
        {misc:CLP22}
\bibfield{author}{\bibinfo{person}{{Clp team}}.} \bibinfo{year}{2022}\natexlab{}.
\newblock \bibinfo{title}{{COIN-OR Linear Programming Solver}}.
\newblock \bibinfo{howpublished}{Available on \url{https://projects.coin-or.org/Clp}}.
\newblock


\bibitem[Comon et~al\mbox{.}(2007)]%
        {misc:TATA07}
\bibfield{author}{\bibinfo{person}{Hubert Comon}, \bibinfo{person}{Max Dauchet}, \bibinfo{person}{Remi Gilleron}, \bibinfo{person}{Florent Jacquemard}, \bibinfo{person}{Denis Lugiez}, \bibinfo{person}{Christof L{\"o}ding}, \bibinfo{person}{Sophie Tison}, {and} \bibinfo{person}{Marc Tommasi}.} \bibinfo{year}{2007}\natexlab{}.
\newblock \bibinfo{title}{{Tree Automata Techniques and Applications}}.
\newblock \bibinfo{howpublished}{Available on \url{http://www.grappa.univ-lille3.fr/tata}}.
\newblock


\bibitem[Courcelle(1983)]%
        {JTCS:Courcelle83}
\bibfield{author}{\bibinfo{person}{Bruno Courcelle}.} \bibinfo{year}{1983}\natexlab{}.
\newblock \showarticletitle{{Fundamental Properties of Infinite Trees}}.
\newblock \bibinfo{journal}{\emph{Theor.\ Comp.\ Sci.}}  \bibinfo{volume}{25} (\bibinfo{date}{March} \bibinfo{year}{1983}), \bibinfo{pages}{95--169}.
\newblock
Issue 2.
\urldef\tempurl%
\url{https://doi.org/10.1016/0304-3975(83)90059-2}
\showDOI{\tempurl}


\bibitem[Courcelle(1990)]%
        {kn:Courcelle90}
\bibfield{author}{\bibinfo{person}{Bruno Courcelle}.} \bibinfo{year}{1990}\natexlab{}.
\newblock \showarticletitle{{CHAPTER 9 - Recursive Applicative Program Schemes}}.
\newblock In \bibinfo{booktitle}{\emph{Formal Models and Semantics}}. \bibinfo{publisher}{Elsevier}, \bibinfo{pages}{461--492}.
\newblock
\urldef\tempurl%
\url{https://doi.org/10.1016/B978-0-444-88074-1.50014-7}
\showDOI{\tempurl}


\bibitem[Cousot and Cousot(1977)]%
        {POPL:CC77}
\bibfield{author}{\bibinfo{person}{Patrick Cousot} {and} \bibinfo{person}{Radhia Cousot}.} \bibinfo{year}{1977}\natexlab{}.
\newblock \showarticletitle{{Abstract Interpretation: A Unified Latice Model for Static Analysis of Programs by Construction or Approximation of Fixpoints}}. In \bibinfo{booktitle}{\emph{Princ.\ of Prog.\ Lang.}} \emph{(\bibinfo{series}{POPL'77})}.
\newblock
\urldef\tempurl%
\url{https://doi.org/10.1145/512950.512973}
\showDOI{\tempurl}


\bibitem[Cousot and Monerau(2012)]%
        {ESOP:CM12}
\bibfield{author}{\bibinfo{person}{Patrick Cousot} {and} \bibinfo{person}{Michael Monerau}.} \bibinfo{year}{2012}\natexlab{}.
\newblock \showarticletitle{{Probabilistic Abstract Interpretation}}. In \bibinfo{booktitle}{\emph{European Symp.\ on Programming}} \emph{(\bibinfo{series}{ESOP'12})}. \bibinfo{pages}{169--193}.
\newblock
\urldef\tempurl%
\url{https://doi.org/10.1007/978-3-642-28869-2_9}
\showDOI{\tempurl}


\bibitem[De~Raedt et~al\mbox{.}(2007)]%
        {IJCAI:DKT07}
\bibfield{author}{\bibinfo{person}{Luc De~Raedt}, \bibinfo{person}{Angelika Kimmig}, {and} \bibinfo{person}{Hannu Toivonen}.} \bibinfo{year}{2007}\natexlab{}.
\newblock \showarticletitle{{ProbLog: A Probabilistic Prolog and its Application in Link Discovery}}. In \bibinfo{booktitle}{\emph{Int.\ Joint Conf.\ on Artif. Intelligence}} \emph{(\bibinfo{series}{IJCAI'07})}.
\newblock
\urldef\tempurl%
\url{https://dl.acm.org/doi/10.5555/1625275.1625673}
\showURL{%
\tempurl}


\bibitem[Dehnert et~al\mbox{.}(2017)]%
        {CAV:DJK17}
\bibfield{author}{\bibinfo{person}{Christian Dehnert}, \bibinfo{person}{Sebastian Junges}, \bibinfo{person}{Joost-Pieter Katoen}, {and} \bibinfo{person}{Matthias Volk}.} \bibinfo{year}{2017}\natexlab{}.
\newblock \showarticletitle{{A Storm is Coming: A Modern Probabilistic Model Checker}}. In \bibinfo{booktitle}{\emph{Computer Aided Verif.}} \emph{(\bibinfo{series}{CAV'17})}.
\newblock
\urldef\tempurl%
\url{https://doi.org/10.1007/978-3-319-63390-9_31}
\showDOI{\tempurl}


\bibitem[Dubhashi and Panconesi(2009)]%
        {book:DP09}
\bibfield{author}{\bibinfo{person}{Devdatt~P. Dubhashi} {and} \bibinfo{person}{Alessandro Panconesi}.} \bibinfo{year}{2009}\natexlab{}.
\newblock \bibinfo{booktitle}{\emph{{Concentration of Measure for the Analysis of Randomized Algorithms}}}.
\newblock \bibinfo{publisher}{Cambridge University Press}.
\newblock
\urldef\tempurl%
\url{https://doi.org/10.1017/CBO9780511581274}
\showDOI{\tempurl}


\bibitem[Esparza et~al\mbox{.}(2008a)]%
        {ICALP:EGK08}
\bibfield{author}{\bibinfo{person}{Javier Esparza}, \bibinfo{person}{Thomas Gawlitza}, \bibinfo{person}{Stefan Kiefer}, {and} \bibinfo{person}{Helmut Seidl}.} \bibinfo{year}{2008}\natexlab{a}.
\newblock \showarticletitle{{Approximative Methods for Monotone Systems of Min-Max-Polynomial Equations}}. In \bibinfo{booktitle}{\emph{Int.\ Colloq.\ on Automata, Langs., and Programming}} \emph{(\bibinfo{series}{ICALP'08})}.
\newblock
\urldef\tempurl%
\url{https://doi.org/10.1007/978-3-540-70575-8_57}
\showDOI{\tempurl}


\bibitem[Esparza et~al\mbox{.}(2008b)]%
        {ICALP:EKL08}
\bibfield{author}{\bibinfo{person}{Javier Esparza}, \bibinfo{person}{Stefan Kiefer}, {and} \bibinfo{person}{Michael Luttenberger}.} \bibinfo{year}{2008}\natexlab{b}.
\newblock \showarticletitle{{Newton's Method for $\omega$-Continuous Semirings}}. In \bibinfo{booktitle}{\emph{Int.\ Colloq.\ on Automata, Langs., and Programming}} \emph{(\bibinfo{series}{ICALP'08})}.
\newblock
\urldef\tempurl%
\url{https://doi.org/10.1007/978-3-540-70583-3_2}
\showDOI{\tempurl}


\bibitem[Esparza et~al\mbox{.}(2010)]%
        {JACM:EKL10}
\bibfield{author}{\bibinfo{person}{Javier Esparza}, \bibinfo{person}{Stefan Kiefer}, {and} \bibinfo{person}{Michael Luttenberger}.} \bibinfo{year}{2010}\natexlab{}.
\newblock \showarticletitle{{Newtonian Program Analysis}}.
\newblock \bibinfo{journal}{\emph{J.\ ACM}} \bibinfo{volume}{57}, \bibinfo{number}{33} (\bibinfo{date}{October} \bibinfo{year}{2010}).
\newblock
Issue 6.
\urldef\tempurl%
\url{https://doi.org/10.1145/1857914.1857917}
\showDOI{\tempurl}


\bibitem[Esparza et~al\mbox{.}(2004)]%
        {LICS:EKM04}
\bibfield{author}{\bibinfo{person}{Javier Esparza}, \bibinfo{person}{Anton{\'i}n Ku{\v c}era}, {and} \bibinfo{person}{Richard Mayr}.} \bibinfo{year}{2004}\natexlab{}.
\newblock \showarticletitle{{Model Checking Probabilistic Pushdown Automata}}. In \bibinfo{booktitle}{\emph{Logic in Computer Science}} \emph{(\bibinfo{series}{LICS'04})}.
\newblock
\urldef\tempurl%
\url{https://doi.org/10.1109/LICS.2004.1319596}
\showDOI{\tempurl}


\bibitem[Esparza et~al\mbox{.}(2005)]%
        {LICS:EKM05}
\bibfield{author}{\bibinfo{person}{Javier Esparza}, \bibinfo{person}{Anton{\'i}n Ku{\v c}era}, {and} \bibinfo{person}{Richard Mayr}.} \bibinfo{year}{2005}\natexlab{}.
\newblock \showarticletitle{{Quantitative Analysis of Probabilistic Pushdown Automata: Expectations and Variances}}. In \bibinfo{booktitle}{\emph{Logic in Computer Science}} \emph{(\bibinfo{series}{LICS'05})}.
\newblock
\urldef\tempurl%
\url{https://doi.org/10.1109/LICS.2005.39}
\showDOI{\tempurl}


\bibitem[Etessami et~al\mbox{.}(2012)]%
        {ICALP:ESY12}
\bibfield{author}{\bibinfo{person}{Kousha Etessami}, \bibinfo{person}{Alistair Stewart}, {and} \bibinfo{person}{Mihalis Yannakakis}.} \bibinfo{year}{2012}\natexlab{}.
\newblock \showarticletitle{{Polynomial Time Algorithms for Branching Markov Decision Processes and Probabilistic Min(Max) Polynomial Bellman Equations}}. In \bibinfo{booktitle}{\emph{Int.\ Colloq.\ on Automata, Langs., and Programming}} \emph{(\bibinfo{series}{ICALP'12})}.
\newblock
\urldef\tempurl%
\url{https://doi.org/10.1007/978-3-642-31594-7_27}
\showDOI{\tempurl}


\bibitem[Etessami et~al\mbox{.}(2015)]%
        {ICALP:ESY15}
\bibfield{author}{\bibinfo{person}{Kousha Etessami}, \bibinfo{person}{Alistair Stewart}, {and} \bibinfo{person}{Mihalis Yannakakis}.} \bibinfo{year}{2015}\natexlab{}.
\newblock \showarticletitle{{Greatest Fixed Points of Probabilistic Min/Max Polynomial Equations, and Reachability for Branching Markov Decision Processes}}. In \bibinfo{booktitle}{\emph{Int.\ Colloq.\ on Automata, Langs., and Programming}} \emph{(\bibinfo{series}{ICALP'15})}. \bibinfo{pages}{184--196}.
\newblock
\urldef\tempurl%
\url{https://doi.org/10.1007/978-3-662-47666-6_15}
\showDOI{\tempurl}


\bibitem[Etessami et~al\mbox{.}(2008)]%
        {ICALP:EWY08}
\bibfield{author}{\bibinfo{person}{Kousha Etessami}, \bibinfo{person}{Dominik Wojtczak}, {and} \bibinfo{person}{Mihalis Yannakakis}.} \bibinfo{year}{2008}\natexlab{}.
\newblock \showarticletitle{{Recursive Stochastic Games with Positive Rewards}}. In \bibinfo{booktitle}{\emph{Int.\ Colloq.\ on Automata, Langs., and Programming}} \emph{(\bibinfo{series}{ICALP'08})}.
\newblock
\urldef\tempurl%
\url{https://doi.org/10.1007/978-3-540-70575-8_58}
\showDOI{\tempurl}


\bibitem[Etessami and Yannakakis(2005a)]%
        {STACS:EY05}
\bibfield{author}{\bibinfo{person}{Kousha Etessami} {and} \bibinfo{person}{Mihalis Yannakakis}.} \bibinfo{year}{2005}\natexlab{a}.
\newblock \showarticletitle{{Recursive Markov Chains, Stochastic Grammars, and Monotone Systems of Nonlinear Equations}}. In \bibinfo{booktitle}{\emph{Symp.\ on Theor.\ Aspects of Comp.\ Sci.}} \emph{(\bibinfo{series}{STACS'05})}.
\newblock
\urldef\tempurl%
\url{https://doi.org/10.1007/978-3-540-31856-9_28}
\showDOI{\tempurl}


\bibitem[Etessami and Yannakakis(2005b)]%
        {ICALP:EY05}
\bibfield{author}{\bibinfo{person}{Kousha Etessami} {and} \bibinfo{person}{Mihalis Yannakakis}.} \bibinfo{year}{2005}\natexlab{b}.
\newblock \showarticletitle{{Recursive Markov Decision Processes and Recursive Stochastic Games}}. In \bibinfo{booktitle}{\emph{Int.\ Colloq.\ on Automata, Langs., and Programming}} \emph{(\bibinfo{series}{ICALP'05})}. \bibinfo{pages}{891--903}.
\newblock
\urldef\tempurl%
\url{https://doi.org/10.1007/11523468_72}
\showDOI{\tempurl}


\bibitem[Etessami and Yannakakis(2015)]%
        {JACM:EY15}
\bibfield{author}{\bibinfo{person}{Kousha Etessami} {and} \bibinfo{person}{Mihalis Yannakakis}.} \bibinfo{year}{2015}\natexlab{}.
\newblock \showarticletitle{{Recursive Markov Decision Processes and Recursive Stochastic Games}}.
\newblock \bibinfo{journal}{\emph{J.\ ACM}} \bibinfo{volume}{62}, \bibinfo{number}{11} (\bibinfo{date}{May} \bibinfo{year}{2015}).
\newblock
Issue 2.
\urldef\tempurl%
\url{https://doi.org/10.1145/2699431}
\showDOI{\tempurl}


\bibitem[Farzan and Kincaid(2013)]%
        {arxiv:FK13}
\bibfield{author}{\bibinfo{person}{Azadeh Farzan} {and} \bibinfo{person}{Zachary Kincaid}.} \bibinfo{year}{2013}\natexlab{}.
\newblock \bibinfo{title}{{An Algebraic Framework For Compositional Program Analysis}}.
\newblock
\newblock
\urldef\tempurl%
\url{https://arxiv.org/abs/1310.3481}
\showURL{%
\tempurl}


\bibitem[Farzan and Kincaid(2015)]%
        {FMCAD:FK15}
\bibfield{author}{\bibinfo{person}{Azadeh Farzan} {and} \bibinfo{person}{Zachary Kincaid}.} \bibinfo{year}{2015}\natexlab{}.
\newblock \showarticletitle{{Compositional Recurrence Analysis}}. In \bibinfo{booktitle}{\emph{Formal Methods in Computer-Aided Design}} \emph{(\bibinfo{series}{FMCAD'15})}.
\newblock
\urldef\tempurl%
\url{https://doi.org/10.1109/FMCAD.2015.7542253}
\showDOI{\tempurl}


\bibitem[Ferrer~Fioriti et~al\mbox{.}(2012)]%
        {ATVA:FHH12}
\bibfield{author}{\bibinfo{person}{Luis~Mar{\'i}a Ferrer~Fioriti}, \bibinfo{person}{Ernst~Moritz Hahn}, \bibinfo{person}{Holger Hermanns}, {and} \bibinfo{person}{Bj{\"o}rn Wachter}.} \bibinfo{year}{2012}\natexlab{}.
\newblock \showarticletitle{{Variable Probabilsitic Abstraction Refinement}}. In \bibinfo{booktitle}{\emph{Automated Tech.\ for Verif.\ and Analysis}} \emph{(\bibinfo{series}{ATVA'12})}. \bibinfo{pages}{300--316}.
\newblock
\urldef\tempurl%
\url{https://doi.org/10.1007/978-3-642-33386-6_24}
\showDOI{\tempurl}


\bibitem[Fierens et~al\mbox{.}(2015)]%
        {TPLP:FDG15}
\bibfield{author}{\bibinfo{person}{Daan Fierens}, \bibinfo{person}{Guy~Van den Broeck}, \bibinfo{person}{Joris Renkens}, \bibinfo{person}{Dimitar Shterionov}, \bibinfo{person}{Bernd Gutmann}, \bibinfo{person}{Ingo Thon}, \bibinfo{person}{Gerda Janssens}, {and} \bibinfo{person}{Luc De~Raedt}.} \bibinfo{year}{2015}\natexlab{}.
\newblock \showarticletitle{{Inference and learning in probabilistic logic programs using weighted Boolean formulas}}.
\newblock \bibinfo{journal}{\emph{Theory and Practice of Logic Programming}}  \bibinfo{volume}{15} (\bibinfo{date}{May} \bibinfo{year}{2015}).
\newblock
Issue 3.
\urldef\tempurl%
\url{https://doi.org/10.1017/S1471068414000076}
\showDOI{\tempurl}


\bibitem[Forejt et~al\mbox{.}(2011)]%
        {SFM:FKN11}
\bibfield{author}{\bibinfo{person}{Vojt{\v e}ch Forejt}, \bibinfo{person}{Marta Kwiatkowska}, \bibinfo{person}{Gethin Norman}, {and} \bibinfo{person}{David Parker}.} \bibinfo{year}{2011}\natexlab{}.
\newblock \showarticletitle{{Automated Verification Techniques for Probabilistic Systems}}. In \bibinfo{booktitle}{\emph{Formal Methods for Eternal Networked Software Systems}} \emph{(\bibinfo{series}{SFM'11})}.
\newblock
\urldef\tempurl%
\url{https://doi.org/10.1007/978-3-642-21455-4_3}
\showDOI{\tempurl}


\bibitem[Graf and Saidi(1997)]%
        {CAV:GS97}
\bibfield{author}{\bibinfo{person}{Susanne Graf} {and} \bibinfo{person}{Hassen Saidi}.} \bibinfo{year}{1997}\natexlab{}.
\newblock \showarticletitle{{Construction of Abstract State Graphs with PVS}}. In \bibinfo{booktitle}{\emph{Computer Aided Verif.}} \emph{(\bibinfo{series}{CAV'97})}.
\newblock
\urldef\tempurl%
\url{https://doi.org/10.1007/3-540-63166-6_10}
\showDOI{\tempurl}


\bibitem[Hahn et~al\mbox{.}(2010)]%
        {TACAS:HHW10}
\bibfield{author}{\bibinfo{person}{Ernst~Moritz Hahn}, \bibinfo{person}{Holger Hermanns}, \bibinfo{person}{Bj{\"o}rn Wachter}, {and} \bibinfo{person}{Lijun Zhang}.} \bibinfo{year}{2010}\natexlab{}.
\newblock \showarticletitle{{PASS: Abstraction Refinement for Infinite Probabilistic Models}}. In \bibinfo{booktitle}{\emph{Tools and Algs.\ for the Construct.\ and Anal.\ of Syst.}} \emph{(\bibinfo{series}{TACAS'10})}. \bibinfo{pages}{353--357}.
\newblock
\urldef\tempurl%
\url{https://doi.org/10.1007/978-3-642-12002-2_30}
\showDOI{\tempurl}


\bibitem[Hermanns et~al\mbox{.}(2008)]%
        {CAV:HWZ08}
\bibfield{author}{\bibinfo{person}{Holger Hermanns}, \bibinfo{person}{Bj{\"o}rn Wachter}, {and} \bibinfo{person}{Lijun Zhang}.} \bibinfo{year}{2008}\natexlab{}.
\newblock \showarticletitle{{Probabilistic CEGAR}}. In \bibinfo{booktitle}{\emph{Computer Aided Verif.}} \emph{(\bibinfo{series}{CAV'08})}. \bibinfo{pages}{162--175}.
\newblock
\urldef\tempurl%
\url{https://doi.org/10.1007/978-3-540-70545-1_16}
\showDOI{\tempurl}


\bibitem[Holtzen et~al\mbox{.}(2018)]%
        {ICML:HBM18}
\bibfield{author}{\bibinfo{person}{Steven Holtzen}, \bibinfo{person}{Guy Broeck}, {and} \bibinfo{person}{Todd Millstein}.} \bibinfo{year}{2018}\natexlab{}.
\newblock \showarticletitle{{Sound Abstraction and Decomposition of Probabilistic Programs}}. In \bibinfo{booktitle}{\emph{Int.\ Conf.\ on Machine Learning}} \emph{(\bibinfo{series}{ICML'18})}. \bibinfo{pages}{1999--2008}.
\newblock


\bibitem[Holtzen et~al\mbox{.}(2020)]%
        {OOPSLA:HBM20}
\bibfield{author}{\bibinfo{person}{Steven Holtzen}, \bibinfo{person}{Guy~Van den Broeck}, {and} \bibinfo{person}{Todd Millstein}.} \bibinfo{year}{2020}\natexlab{}.
\newblock \showarticletitle{{Scaling Exact Inference for Discrete Probabilistic Programs}}.
\newblock \bibinfo{journal}{\emph{Proc.\ ACM Program.\ Lang.}} \bibinfo{volume}{4}, \bibinfo{number}{140} (\bibinfo{date}{November} \bibinfo{year}{2020}).
\newblock
Issue OOPSLA.
\urldef\tempurl%
\url{https://doi.org/10.1145/3428208}
\showDOI{\tempurl}


\bibitem[Jeannin et~al\mbox{.}(2017)]%
        {FI:JKS17}
\bibfield{author}{\bibinfo{person}{Jean-Baptiste Jeannin}, \bibinfo{person}{Dexter Kozen}, {and} \bibinfo{person}{Alexandra Silva}.} \bibinfo{year}{2017}\natexlab{}.
\newblock \showarticletitle{{CoCaml: Functional Programming with Regular Coinductive Types}}.
\newblock \bibinfo{journal}{\emph{Fundamenta Informaticae}}  \bibinfo{volume}{150} (\bibinfo{date}{March} \bibinfo{year}{2017}).
\newblock
Issue 3-4.
\urldef\tempurl%
\url{https://doi.org/10.3233/FI-2017-1473}
\showDOI{\tempurl}


\bibitem[Kaminski et~al\mbox{.}(2016)]%
        {ESOP:KKM16}
\bibfield{author}{\bibinfo{person}{Benjamin~Lucien Kaminski}, \bibinfo{person}{Joost-Pieter Katoen}, \bibinfo{person}{Christoph Matheja}, {and} \bibinfo{person}{Federico Olmedo}.} \bibinfo{year}{2016}\natexlab{}.
\newblock \showarticletitle{{Weakest Precondition Reasoning for Expected Run--Times of Probabilistic Programs}}. In \bibinfo{booktitle}{\emph{European Symp.\ on Programming}} \emph{(\bibinfo{series}{ESOP'16})}. \bibinfo{pages}{364--389}.
\newblock
\urldef\tempurl%
\url{https://doi.org/10.1007/978-3-662-49498-1_15}
\showDOI{\tempurl}


\bibitem[Katoen et~al\mbox{.}(2010)]%
        {SAS:KMM10}
\bibfield{author}{\bibinfo{person}{Joost-Pieter Katoen}, \bibinfo{person}{Annabelle~K. McIver}, \bibinfo{person}{Larissa~A. Meinicke}, {and} \bibinfo{person}{Carroll~C. Morgan}.} \bibinfo{year}{2010}\natexlab{}.
\newblock \showarticletitle{{Linear-Invariant Generation for Probabilistic Programs: Automated Support for Proof-Based Methods}}. In \bibinfo{booktitle}{\emph{Static Analysis Symp.}} \emph{(\bibinfo{series}{SAS'10})}. \bibinfo{pages}{390--406}.
\newblock
\urldef\tempurl%
\url{https://doi.org/10.1007/978-3-642-15769-1_24}
\showDOI{\tempurl}


\bibitem[Kattenbelt et~al\mbox{.}(2009)]%
        {VMCAI:KKN09}
\bibfield{author}{\bibinfo{person}{Mark Kattenbelt}, \bibinfo{person}{Marta Kwiatkowska}, \bibinfo{person}{Gethin Norman}, {and} \bibinfo{person}{David Parker}.} \bibinfo{year}{2009}\natexlab{}.
\newblock \showarticletitle{{Abstraction Refinement for Probabilistic Software}}. In \bibinfo{booktitle}{\emph{Verif., Model Checking, and Abs.\ Interp.}} \emph{(\bibinfo{series}{VMCAI'09})}. \bibinfo{pages}{182--197}.
\newblock
\urldef\tempurl%
\url{https://doi.org/10.1007/978-3-540-93900-9_17}
\showDOI{\tempurl}


\bibitem[Kincaid et~al\mbox{.}(2017)]%
        {PLDI:KBB17}
\bibfield{author}{\bibinfo{person}{Zachary Kincaid}, \bibinfo{person}{Jason Breck}, \bibinfo{person}{Ashkan~Forouhi Boroujeni}, {and} \bibinfo{person}{Thomas Reps}.} \bibinfo{year}{2017}\natexlab{}.
\newblock \showarticletitle{{Compositional Recurrence Analysis Revisited}}. In \bibinfo{booktitle}{\emph{Prog.\ Lang.\ Design and Impl.}} \emph{(\bibinfo{series}{PLDI'17})}. \bibinfo{pages}{248--262}.
\newblock
\urldef\tempurl%
\url{https://doi.org/10.1145/3062341.3062373}
\showDOI{\tempurl}


\bibitem[Kincaid et~al\mbox{.}(2019)]%
        {POPL:KBC19}
\bibfield{author}{\bibinfo{person}{Zachary Kincaid}, \bibinfo{person}{Jason Breck}, \bibinfo{person}{John Cyphert}, {and} \bibinfo{person}{Thomas Reps}.} \bibinfo{year}{2019}\natexlab{}.
\newblock \showarticletitle{{Closed Forms for Numerical Loops}}.
\newblock \bibinfo{journal}{\emph{Proc.\ ACM Program.\ Lang.}} \bibinfo{volume}{3}, \bibinfo{number}{55} (\bibinfo{date}{January} \bibinfo{year}{2019}).
\newblock
Issue POPL.
\urldef\tempurl%
\url{https://doi.org/10.1145/3290368}
\showDOI{\tempurl}


\bibitem[Kincaid et~al\mbox{.}(2018)]%
        {POPL:KCB18}
\bibfield{author}{\bibinfo{person}{Zachary Kincaid}, \bibinfo{person}{John Cyphert}, \bibinfo{person}{Jason Breck}, {and} \bibinfo{person}{Thomas Reps}.} \bibinfo{year}{2018}\natexlab{}.
\newblock \showarticletitle{{Non-Linear Reasoning for Invariant Synthesis}}.
\newblock \bibinfo{journal}{\emph{Proc.\ ACM Program.\ Lang.}} \bibinfo{volume}{2}, \bibinfo{number}{54} (\bibinfo{date}{January} \bibinfo{year}{2018}).
\newblock
Issue POPL.
\urldef\tempurl%
\url{https://doi.org/10.1145/3158142}
\showDOI{\tempurl}


\bibitem[Kincaid et~al\mbox{.}(2021)]%
        {CAV:KRC21}
\bibfield{author}{\bibinfo{person}{Zachary Kincaid}, \bibinfo{person}{Thomas Reps}, {and} \bibinfo{person}{John Cyphert}.} \bibinfo{year}{2021}\natexlab{}.
\newblock \showarticletitle{{Algebraic Program Analysis}}. In \bibinfo{booktitle}{\emph{Computer Aided Verif.}} \emph{(\bibinfo{series}{CAV'21})}. \bibinfo{pages}{46--83}.
\newblock
\urldef\tempurl%
\url{https://doi.org/10.1007/978-3-030-81685-8_3}
\showDOI{\tempurl}


\bibitem[Kozen(1981)]%
        {JCSS:Kozen81}
\bibfield{author}{\bibinfo{person}{Dexter Kozen}.} \bibinfo{year}{1981}\natexlab{}.
\newblock \showarticletitle{{Semantics of Probabilistic Programs}}.
\newblock \bibinfo{journal}{\emph{J.\ Comput.\ Syst.\ Sci.}}  \bibinfo{volume}{22} (\bibinfo{date}{June} \bibinfo{year}{1981}).
\newblock
Issue 3.
\urldef\tempurl%
\url{https://doi.org/10.1016/0022-0000(81)90036-2}
\showDOI{\tempurl}


\bibitem[Kozen and Silva(2017)]%
        {JMSCS:KS17}
\bibfield{author}{\bibinfo{person}{Dexter Kozen} {and} \bibinfo{person}{Alexandra Silva}.} \bibinfo{year}{2017}\natexlab{}.
\newblock \showarticletitle{{Practical coinduction}}.
\newblock \bibinfo{journal}{\emph{Math.\ Struct.\ Comp.\ Sci.}}  \bibinfo{volume}{27} (\bibinfo{date}{October} \bibinfo{year}{2017}).
\newblock
Issue 7.
\urldef\tempurl%
\url{https://doi.org/10.1017/S0960129515000493}
\showDOI{\tempurl}


\bibitem[Kura et~al\mbox{.}(2019)]%
        {TACAS:KUH19}
\bibfield{author}{\bibinfo{person}{Satoshi Kura}, \bibinfo{person}{Natsuki Urabe}, {and} \bibinfo{person}{Ichiro Hasuo}.} \bibinfo{year}{2019}\natexlab{}.
\newblock \showarticletitle{{Tail Probability for Randomized Program Runtimes via Martingales for Higher Moments}}. In \bibinfo{booktitle}{\emph{Tools and Algs.\ for the Construct.\ and Anal.\ of Syst.}} \emph{(\bibinfo{series}{TACAS'19})}.
\newblock
\urldef\tempurl%
\url{https://doi.org/10.1007/978-3-030-17465-1_8}
\showDOI{\tempurl}


\bibitem[Kwiatkowska et~al\mbox{.}(2011)]%
        {CAV:KNP11}
\bibfield{author}{\bibinfo{person}{Marta Kwiatkowska}, \bibinfo{person}{Gethin Norman}, {and} \bibinfo{person}{David Parker}.} \bibinfo{year}{2011}\natexlab{}.
\newblock \showarticletitle{{PRISM 4.0: Verification of Probabilistic Real-Time Systems}}. In \bibinfo{booktitle}{\emph{Computer Aided Verif.}} \emph{(\bibinfo{series}{CAV'11})}.
\newblock
\urldef\tempurl%
\url{https://doi.org/10.1007/978-3-642-22110-1_47}
\showDOI{\tempurl}


\bibitem[Li and Eisner(2009)]%
        {EMNLP:LE09}
\bibfield{author}{\bibinfo{person}{Zhifei Li} {and} \bibinfo{person}{Jason Eisner}.} \bibinfo{year}{2009}\natexlab{}.
\newblock \showarticletitle{{First- and Second-Order Expectation Semirings with Applications to Minimum-Risk Training on Translation Forests}}. In \bibinfo{booktitle}{\emph{Empirical Methods in Natural Language Processing}} \emph{(\bibinfo{series}{EMNLP'09})}.
\newblock
\urldef\tempurl%
\url{https://dl.acm.org/doi/10.5555/1699510.1699517}
\showURL{%
\tempurl}


\bibitem[McIver and Morgan(2005)]%
        {book:MM05}
\bibfield{author}{\bibinfo{person}{Annabelle~K. McIver} {and} \bibinfo{person}{Carroll~C. Morgan}.} \bibinfo{year}{2005}\natexlab{}.
\newblock \bibinfo{booktitle}{\emph{{Abstraction, Refinement and Proof for Probabilistic Systems}}}.
\newblock \bibinfo{publisher}{Springer Science+Business Media, Inc.}
\newblock
\urldef\tempurl%
\url{https://doi.org/10.1007/b138392}
\showDOI{\tempurl}


\bibitem[Moosbrugger et~al\mbox{.}(2022)]%
        {OOPSLA:MSB22}
\bibfield{author}{\bibinfo{person}{Marcel Moosbrugger}, \bibinfo{person}{Miroslav Stankovi{\v c}}, \bibinfo{person}{Ezio Bartocci}, {and} \bibinfo{person}{Laura Kov{\'a}cs}.} \bibinfo{year}{2022}\natexlab{}.
\newblock \showarticletitle{{This Is the Moment for Probabilistic Loops}}.
\newblock \bibinfo{journal}{\emph{Proc.\ ACM Program.\ Lang.}} \bibinfo{volume}{6}, \bibinfo{number}{178} (\bibinfo{date}{October} \bibinfo{year}{2022}).
\newblock
Issue OOPSLA2.
\urldef\tempurl%
\url{https://doi.org/10.1145/3563341}
\showDOI{\tempurl}


\bibitem[Ngo et~al\mbox{.}(2018)]%
        {PLDI:NCH18}
\bibfield{author}{\bibinfo{person}{Van~Chan Ngo}, \bibinfo{person}{Quentin Carbonneaux}, {and} \bibinfo{person}{Jan Hoffmann}.} \bibinfo{year}{2018}\natexlab{}.
\newblock \showarticletitle{{Bounded Expectations: Resource Analysis for Probabilistic Programs}}. In \bibinfo{booktitle}{\emph{Prog.\ Lang.\ Design and Impl.}} \emph{(\bibinfo{series}{PLDI'18})}. \bibinfo{pages}{496--512}.
\newblock
\urldef\tempurl%
\url{https://doi.org/10.1145/3192366.3192394}
\showDOI{\tempurl}


\bibitem[Pin and Perrin(2004)]%
        {book:PP04}
\bibfield{author}{\bibinfo{person}{Jean-Eric Pin} {and} \bibinfo{person}{Dominique Perrin}.} \bibinfo{year}{2004}\natexlab{}.
\newblock \bibinfo{booktitle}{\emph{{Infinite Words: Automata, Semigroups, Logic and Games}}}.
\newblock \bibinfo{publisher}{Elsevier}.
\newblock


\bibitem[Puterman(1994)]%
        {book:Puterman94}
\bibfield{author}{\bibinfo{person}{Martin~L. Puterman}.} \bibinfo{year}{1994}\natexlab{}.
\newblock \bibinfo{booktitle}{\emph{{Markov Decision Processes: Discrete Stochastic Dynamic Programming}}}.
\newblock \bibinfo{publisher}{John Wiley \& Sons, Inc.}
\newblock
\urldef\tempurl%
\url{https://dl.acm.org/doi/book/10.5555/528623}
\showURL{%
\tempurl}


\bibitem[Reps et~al\mbox{.}(2007)]%
        {FSTTCS:RLK07}
\bibfield{author}{\bibinfo{person}{Thomas Reps}, \bibinfo{person}{Akash Lal}, {and} \bibinfo{person}{Nick Kidd}.} \bibinfo{year}{2007}\natexlab{}.
\newblock \showarticletitle{{Program Analysis Using Weighted Pushdown System}}. In \bibinfo{booktitle}{\emph{Found.\ of Soft.\ Tech.\ and Theor.\ Comput.\ Sci.}} \emph{(\bibinfo{series}{FSTTCS'07})}.
\newblock
\urldef\tempurl%
\url{https://doi.org/10.1007/978-3-540-77050-3_4}
\showDOI{\tempurl}


\bibitem[Reps et~al\mbox{.}(2003)]%
        {SAS:RSJ03}
\bibfield{author}{\bibinfo{person}{Thomas Reps}, \bibinfo{person}{Stefan Schwoon}, {and} \bibinfo{person}{Somesh Jha}.} \bibinfo{year}{2003}\natexlab{}.
\newblock \showarticletitle{{Weighted Pushdown Systems and their Application to Interprocedural Dataflow Analysis}}. In \bibinfo{booktitle}{\emph{Static Analysis Symp.}} \emph{(\bibinfo{series}{SAS'03})}.
\newblock
\urldef\tempurl%
\url{https://dl.acm.org/doi/10.5555/1760267.1760283}
\showURL{%
\tempurl}


\bibitem[Reps et~al\mbox{.}(2016)]%
        {POPL:RTP16}
\bibfield{author}{\bibinfo{person}{Thomas Reps}, \bibinfo{person}{Emma Turetsky}, {and} \bibinfo{person}{Prathmesh Prabhu}.} \bibinfo{year}{2016}\natexlab{}.
\newblock \showarticletitle{{Newtonian Program Analysis via Tensor Product}}. In \bibinfo{booktitle}{\emph{Princ.\ of Prog.\ Lang.}} \emph{(\bibinfo{series}{POPL'16})}. \bibinfo{pages}{663--677}.
\newblock
\urldef\tempurl%
\url{https://doi.org/10.1145/2837614.2837659}
\showDOI{\tempurl}


\bibitem[Riguzzi and Swift(2011)]%
        {TPLP:RS11}
\bibfield{author}{\bibinfo{person}{Fabrizio Riguzzi} {and} \bibinfo{person}{Terrance Swift}.} \bibinfo{year}{2011}\natexlab{}.
\newblock \showarticletitle{{The PITA system: Tabling and answer subsumption for reasoning under uncertainty}}.
\newblock \bibinfo{journal}{\emph{Theory and Practice of Logic Programming}}  \bibinfo{volume}{11} (\bibinfo{date}{July} \bibinfo{year}{2011}).
\newblock
Issue 4-5.
\urldef\tempurl%
\url{https://doi.org/10.1017/S147106841100010X}
\showDOI{\tempurl}


\bibitem[Schreuder and Ong(2019)]%
        {arxiv:SO19}
\bibfield{author}{\bibinfo{person}{Anne Schreuder} {and} \bibinfo{person}{Luke Ong}.} \bibinfo{year}{2019}\natexlab{}.
\newblock \bibinfo{title}{{Polynomial Probabilistic Invariants and the Optional Stopping Theorem}}.
\newblock
\newblock
\urldef\tempurl%
\url{https://arxiv.org/abs/1910.12634}
\showURL{%
\tempurl}


\bibitem[Tarjan(1981a)]%
        {JACM:Tarjan81}
\bibfield{author}{\bibinfo{person}{Robert~Endre Tarjan}.} \bibinfo{year}{1981}\natexlab{a}.
\newblock \showarticletitle{{A Unified Approach to Path Problems}}.
\newblock \bibinfo{journal}{\emph{J.\ ACM}}  \bibinfo{volume}{28} (\bibinfo{date}{July} \bibinfo{year}{1981}).
\newblock
Issue 3.
\urldef\tempurl%
\url{https://doi.org/10.1145/322261.322272}
\showDOI{\tempurl}


\bibitem[Tarjan(1981b)]%
        {JACM:Tarjan81:Alg}
\bibfield{author}{\bibinfo{person}{Robert~Endre Tarjan}.} \bibinfo{year}{1981}\natexlab{b}.
\newblock \showarticletitle{{Fast Algorithms for Solving Path Problems}}.
\newblock \bibinfo{journal}{\emph{J.\ ACM}}  \bibinfo{volume}{28} (\bibinfo{date}{July} \bibinfo{year}{1981}).
\newblock
Issue 3.
\urldef\tempurl%
\url{https://doi.org/10.1145/322261.322273}
\showDOI{\tempurl}


\bibitem[Tix et~al\mbox{.}(2009)]%
        {ENTCS:TKP09}
\bibfield{author}{\bibinfo{person}{Regina Tix}, \bibinfo{person}{Klaus Keimel}, {and} \bibinfo{person}{Gordon~D. Plotkin}.} \bibinfo{year}{2009}\natexlab{}.
\newblock \showarticletitle{{Semantic Domains for Combining Probability and Non-Determinism}}.
\newblock \bibinfo{journal}{\emph{Electr.\ Notes Theor.\ Comp.\ Sci.}}  \bibinfo{volume}{222} (\bibinfo{date}{February} \bibinfo{year}{2009}), \bibinfo{pages}{3--99}.
\newblock
\urldef\tempurl%
\url{https://doi.org/10.1016/j.entcs.2009.01.002}
\showDOI{\tempurl}


\bibitem[Wang et~al\mbox{.}(2018)]%
        {PLDI:WHR18}
\bibfield{author}{\bibinfo{person}{Di Wang}, \bibinfo{person}{Jan Hoffmann}, {and} \bibinfo{person}{Thomas Reps}.} \bibinfo{year}{2018}\natexlab{}.
\newblock \showarticletitle{{PMAF: An Algebraic Framework for Static Analysis of Probabilistic Programs}}. In \bibinfo{booktitle}{\emph{Prog.\ Lang.\ Design and Impl.}} \emph{(\bibinfo{series}{PLDI'18})}.
\newblock
\urldef\tempurl%
\url{https://doi.org/10.1145/3192366.3192408}
\showDOI{\tempurl}


\bibitem[Wang et~al\mbox{.}(2019b)]%
        {ENTCS:WHR19}
\bibfield{author}{\bibinfo{person}{Di Wang}, \bibinfo{person}{Jan Hoffmann}, {and} \bibinfo{person}{Thomas Reps}.} \bibinfo{year}{2019}\natexlab{b}.
\newblock \showarticletitle{{A Denotational Semantics for Low-Level Probabilistic Programs with Nondeterminism}}.
\newblock \bibinfo{journal}{\emph{Electr.\ Notes Theor.\ Comp.\ Sci.}}  \bibinfo{volume}{347} (\bibinfo{date}{November} \bibinfo{year}{2019}).
\newblock
\urldef\tempurl%
\url{https://doi.org/10.1016/j.entcs.2019.09.016}
\showDOI{\tempurl}
\newblock
\shownote{Proceedings of the Thirty-Fifth Conference on the Mathematical Foundations of Programming Semantics}.


\bibitem[Wang et~al\mbox{.}(2021)]%
        {PLDI:WHR21A}
\bibfield{author}{\bibinfo{person}{Di Wang}, \bibinfo{person}{Jan Hoffmann}, {and} \bibinfo{person}{Thomas Reps}.} \bibinfo{year}{2021}\natexlab{}.
\newblock \showarticletitle{{Central Moment Analysis for Cost Accumulators in Probabilistic Programs}}. In \bibinfo{booktitle}{\emph{Prog.\ Lang.\ Design and Impl.}} \emph{(\bibinfo{series}{PLDI'21})}.
\newblock
\urldef\tempurl%
\url{https://doi.org/10.1145/3453483.3454062}
\showDOI{\tempurl}


\bibitem[Wang and Reps(2024)]%
        {Artifact}
\bibfield{author}{\bibinfo{person}{Di Wang} {and} \bibinfo{person}{Thomas Reps}.} \bibinfo{year}{2024}\natexlab{}.
\newblock \bibinfo{title}{{Newtonian Programs Analysis of Probabilistic Programs (Artifact)}}.
\newblock
\newblock
\urldef\tempurl%
\url{https://doi.org/10.5281/zenodo.10791709}
\showDOI{\tempurl}


\bibitem[Wang et~al\mbox{.}(2019a)]%
        {PLDI:WFG19}
\bibfield{author}{\bibinfo{person}{Peixin Wang}, \bibinfo{person}{Hongfei Fu}, \bibinfo{person}{Amir~Kafshdar Goharshady}, \bibinfo{person}{Krishnendu Chatterjee}, \bibinfo{person}{Xudong Qin}, {and} \bibinfo{person}{Wenjun Shi}.} \bibinfo{year}{2019}\natexlab{a}.
\newblock \showarticletitle{{Cost Analysis of Nondeterministic Probabilistic Programs}}. In \bibinfo{booktitle}{\emph{Prog.\ Lang.\ Design and Impl.}} \emph{(\bibinfo{series}{PLDI'19})}.
\newblock
\urldef\tempurl%
\url{https://doi.org/10.1145/3314221.3314581}
\showDOI{\tempurl}


\bibitem[Wojtczak and Etessami(2007)]%
        {TACAS:WE07}
\bibfield{author}{\bibinfo{person}{Dominik Wojtczak} {and} \bibinfo{person}{Kousha Etessami}.} \bibinfo{year}{2007}\natexlab{}.
\newblock \showarticletitle{{PReMo: An Analyzer for Probabilistic Recursive Models}}. In \bibinfo{booktitle}{\emph{Tools and Algs.\ for the Construct.\ and Anal.\ of Syst.}} \emph{(\bibinfo{series}{TACAS'07})}. \bibinfo{pages}{66--71}.
\newblock
\urldef\tempurl%
\url{https://doi.org/10.1007/978-3-540-71209-1_7}
\showDOI{\tempurl}


\bibitem[Zhu and Kincaid(2021)]%
        {PLDI:ZK21}
\bibfield{author}{\bibinfo{person}{Shaowei Zhu} {and} \bibinfo{person}{Zachary Kincaid}.} \bibinfo{year}{2021}\natexlab{}.
\newblock \showarticletitle{{Termination Analysis without the Tears}}. In \bibinfo{booktitle}{\emph{Prog.\ Lang.\ Design and Impl.}} \emph{(\bibinfo{series}{PLDI'21})}.
\newblock
\urldef\tempurl%
\url{https://doi.org/10.1145/3453483.3454110}
\showDOI{\tempurl}


\end{thebibliography}

\iflong
\newpage
\appendix
\section{A Theory of Regular Hyper-paths}
\label{Se:Appendix:TheoryOfHyperPaths}

Our development in this section is inspired by the theory on
regular tree languages and regular tree expressions~\cite{misc:TATA07}.
However, we have to tackle a technical challenge that a hyper-path can correspond
to a tree with infinite depth.
Induction is a well-established proof principle for reasoning about
inductively defined datatypes (e.g., finite lists and finite trees),
but does not work for infinite datatypes in general.
Thus, in this section, we rely on a principle of \emph{coinduction},
which, as shown in prior work (e.g.~\cite{FI:JKS17,JMSCS:KS17}),
is indeed a well-founded principle for reasoning about coinductive datatypes
(e.g., infinite streams and infinite trees).

\subsection{Possibly-infinite Trees}
\label{Se:Appendix:PossiblyInfiniteTrees}

A \emph{ranked alphabet} is a pair $(\calF,\mi{Arity})$ where $\calF$
is a nonempty set and $\mi{Arity}$ is a mapping from $\calF$ to $\bbN$.
The \emph{arity} of a symbol $f \in \calF$ is $\mi{Arity}(f)$.
The set of symbols of arity $n$ is denoted by $\calF_n$.
For simplicity, we use parentheses and commas to specify symbols with their arity;
for example, $f(,)$ specifies a binary symbol $f$.

Let $\calK$ be a set of constants (i.e., symbols with arity zero) called \emph{holes}.
We assume that the sets $\calK$ and $\calF_0$ are disjoint,
and there is a distinguished symbol $\infholder \not\in \calF_0 \cup \calK$ with arity zero.
Intuitively, a hole symbol in $\calK$ represents a placeholder for later substitution with trees,
and the $\infholder$ symbol indicates a ``yet unknown subtree.''
A \emph{possibly-infinite tree} $t \in \m{Tree}^\infty(\calF,\calK)$ over
the ranked alphabet $\calF$ and the set of variables $\calK$
is a partial map $t : \bbN_{>0}^* \rightharpoonup \calF \cup \calK \cup \{ \infholder \}$ with domain
$\dom(t) \subseteq \bbN_{>0}^*$ satisfying the
following properties:
\begin{itemize}[nosep,leftmargin=*]
  \item $\dom(t)$ is non-empty and prefix-closed (thus $\epsilon \in \dom(t)$);
  \item for any $p \in \dom(t)$, if $t(p) \in \calF_n$ for some $n > 0$, then
  $\{ j \mid p j \in \dom(t) \} = \{ 1, \cdots, n \}$;
  \item for any $p \in \dom(t)$, if $t(p) \in \calF_0 \cup \calK \cup \{\infholder\}$, then
  $\{ j \mid p j \in \dom(t) \} = \emptyset$.
\end{itemize}

For brevity, we denote by $a$ the finite tree $\{ \epsilon \mapsto a \}$, for any $a \in \calF_0 \cup \calK \cup \{ \infholder \}$.
We also denote by $f(s_1,\cdots,s_n)$ the possibly-infinite tree
\[
\{ \epsilon \mapsto f \} \cup \textstyle\bigcup_{j \in \{1,\cdots,n\}} \{ jp \mapsto s_j(p) \mid p \in \dom(s_j) \},
\]
for any $f \in \calF_n$ with $n > 0$ where $s_1,\cdots,s_n$ are possibly-infinite trees.

Let $t \in \m{Tree}^\infty(\calF,\calK)$ be a possibly-infinite tree.
Every element in $\dom(t)$ is called a \emph{position}.
A \emph{leaf position} is a position $p$ such that $p j \not\in \dom(t)$ for any $j \in \bbN$.
We denote by $\mi{root}(t)$ the \emph{root symbol} of $t$, defined by $\mi{root}(t) \defeq t(\epsilon)$.
A \emph{subtree} $t|_p$ of a tree $t$ at position $p$ is the tree defined by the following:
\begin{itemize}[nosep,leftmargin=*]
  \item $\dom(t|_p) = \{q \mid p q \in \dom(t) \}$, and
  \item $\Forall{q \in \dom(t|_p)} t|_p(q) = t(p q)$. 
\end{itemize}
A tree $t$ can then be decomposed as $f(t|_1,\cdots,t|_n)$, where $f = \mi{root}(t)$ with arity $n$.

To compare two possibly-infinite trees, we define a \emph{refinement relation} $\sqsubseteq_\scrT \subseteq \m{Tree}^\infty(\calF,\calK) \times \m{Tree}^\infty(\calF,\calK)$ as the \emph{maximum} binary relation satisfying the following property:
if $t_1 \sqsubseteq_\scrT t_2$, then either (i) $t_1 = \infholder$, or (ii)
$\mi{root}(t_1) = f$ and  $\mi{root}(t_2) = f$ for some $f \in \calF \cup \calK$,
and for all $i \in \{1,\cdots,\mi{Arity}(f) \}$, it holds that $t_1|_i \sqsubseteq_\scrT t_2|_i$.
This property formulates the idea that the symbol $\infholder$ indicates a ``yet unknown subtree.''
Because the relation $\sqsubseteq_\scrT$ is maximum, it can be characterized by the
\emph{greatest} fixed point of the operator below.
\[
\begin{split}
& T_{\sqsubseteq_\scrT}( R ) \defeq \{ \tuple{\infholder, t} \mid t \in \m{Tree}^\infty(\calF,\calK) \} \cup {} \\
& \quad  \{ \tuple{ f(s_1,\!\cdots\!,s_n), f(t_1,\!\cdots\!,t_n) } \mid f \in \calF \!\cup\! \calK, \mi{Arity}(f) \!=\! n, \Forall{j \in \{1,\!\cdots\!,n\}} \tuple{s_j,t_j} \in R \}.
\end{split}
\]

\begin{lemma}\label{Lem:Appendix:RefinementOrderPartial}
  The relation $\sqsubseteq_\scrT$ is a partial order on $\m{Tree}^\infty(\calF,\calK)$.
\end{lemma}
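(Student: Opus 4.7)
The plan is to prove the three required properties of a partial order---reflexivity, antisymmetry, and transitivity---using coinduction, since $\sqsubseteq_\scrT$ is defined as the greatest fixed point of the monotone operator $T_{\sqsubseteq_\scrT}$. Recall the coinduction principle associated with a greatest fixed point: to show that a given relation $R$ is contained in $\sqsubseteq_\scrT$, it suffices to show that $R \subseteq T_{\sqsubseteq_\scrT}(R)$.

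For \textbf{reflexivity}, I would take $R_\mathrm{refl} \defeq \{(t,t) \mid t \in \m{Tree}^\infty(\calF,\calK)\}$ and verify $R_\mathrm{refl} \subseteq T_{\sqsubseteq_\scrT}(R_\mathrm{refl})$ by case analysis on $\mi{root}(t)$: if $t = \infholder$, then $(t,t)$ is captured by the first clause of $T_{\sqsubseteq_\scrT}$; otherwise $\mi{root}(t) = f \in \calF \cup \calK$ for some $f$ of arity $n$, and $(t|_j, t|_j) \in R_\mathrm{refl}$ for each $j \in \{1,\dots,n\}$, so $(t,t)$ is captured by the second clause.

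For \textbf{transitivity}, I would take $R_\mathrm{trans} \defeq \{(r,t) \mid \exists s.\, r \sqsubseteq_\scrT s \wedge s \sqsubseteq_\scrT t\}$ and verify $R_\mathrm{trans} \subseteq T_{\sqsubseteq_\scrT}(R_\mathrm{trans})$. Fix $(r,t) \in R_\mathrm{trans}$ witnessed by some $s$. If $r = \infholder$, then $(r,t)$ belongs to the first clause immediately. Otherwise, unfolding $r \sqsubseteq_\scrT s$ through the fixed-point equation forces $\mi{root}(r) = \mi{root}(s) = f$ for some $f \in \calF \cup \calK$, with $r|_i \sqsubseteq_\scrT s|_i$ for every child index $i$; since $\mi{root}(s) = f \ne \infholder$, unfolding $s \sqsubseteq_\scrT t$ likewise forces $\mi{root}(t) = f$ and $s|_i \sqsubseteq_\scrT t|_i$. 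Each subtree pair $(r|_i, t|_i)$ is then back in $R_\mathrm{trans}$ (witnessed by $s|_i$), which places $(r,t)$ in the second clause of $T_{\sqsubseteq_\scrT}(R_\mathrm{trans})$.

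For \textbf{antisymmetry}, the cleanest route is a direct argument on positions rather than a second coinduction: I would prove by induction on $|p|$ that if $s \sqsubseteq_\scrT t$ and $t \sqsubseteq_\scrT s$, then for every $p \in \bbN_{>0}^*$ we have $p \in \dom(s) \iff p \in \dom(t)$ and $s(p) = t(p)$ in that case. The base case $p = \epsilon$ uses the key observation that $\infholder \notin \calF_0 \cup \calK$: if one of $s, t$ has root $\infholder$, then the second clause of the refinement relation cannot apply in the opposite direction, forcing the other root to be $\infholder$ as well; otherwise both roots are symbols in $\calF \cup \calK$ and must coincide as in the transitivity step. The inductive step applies the induction hypothesis to the corresponding subtrees $s|_i$ and $t|_i$, which inherit both refinement relations from the unfolded definition.

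The main obstacle is that structural induction on trees is unavailable here because trees may be infinite; the whole argument must be phrased either coinductively (as for reflexivity and transitivity) or by induction on the external parameter of position length (as for antisymmetry). The crucial technical fact that makes antisymmetry go through---and that distinguishes this setting from a purely coinductive bisimulation setup---is the side condition $\infholder \notin \calF_0 \cup \calK$ built into the definition in Section~\ref{Se:Appendix:PossiblyInfiniteTrees}, which prevents the ``yet unknown subtree'' marker from appearing on either side of an equality.
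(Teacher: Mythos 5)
Your proposal is correct, and for reflexivity and transitivity it follows essentially the same path as the paper: the paper also argues reflexivity by noting that the identity relation satisfies the defining property of the refinement relation and that $\sqsubseteq_\scrT$ is maximum, and proves transitivity by appealing to a ``coinduction hypothesis'' on the subtrees after unfolding both hypotheses at the root --- your post-fixed-point formulation $R_{\mathrm{trans}} \subseteq T_{\sqsubseteq_\scrT}(R_{\mathrm{trans}})$ is just a more explicit rendering of the same argument. Where you genuinely diverge is antisymmetry: the paper proves it coinductively as well (case on $\mi{root}(t_1)$, then invoke the coinduction hypothesis on the children, with a footnote justifying why this use of the hypothesis is legitimate), whereas you induct on the length of positions $p$ and show $\dom(s)$ and $\dom(t)$ agree together with the labels at each position. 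Your route is arguably the more self-contained one, because tree equality is \emph{defined} as equality of partial maps on $\bbN_{>0}^*$, so positional induction discharges it without having to treat equality itself as a coinductively characterized relation (i.e., without the implicit ``bisimulation implies equality'' step the paper's footnote is papering over); the cost is that you must carry the quantification over all mutually-refining pairs through the induction so that the hypothesis applies to the subtrees $s|_i, t|_i$, which you do correctly. Your identification of $\infholder \notin \calF_0 \cup \calK$ as the fact forcing both roots to be $\infholder$ simultaneously is exactly the point the paper's antisymmetry case also hinges on.
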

\begin{proof}
  The relation $\sqsubseteq_\scrT$ is reflexive because $\{ \tuple{t,t} \mid t \in \m{Tree}^\infty(\calF,\calK) \}$ satisfies properties of the refinement relation, and the relation $\sqsubseteq_\scrT$ is maximum by definition.
  
  We claim that $\sqsubseteq_\scrT$ is antisymmetric, i.e., if $t_1 \sqsubseteq_\scrT t_2$ and $t_2 \sqsubseteq_\scrT t_1$, then $t_1 = t_2$.
  If $\mi{root}(t_1) = \infholder$, then by $t_2 \sqsubseteq_\scrT t_1$ we also have $\mi{root}(t_2) = \infholder$, thus $t_1 = t_2$.
  Otherwise, we know that $\mi{root}(t_1) = f$ for some $f \in \calF \cup \calK$, then by $t_1 \sqsubseteq_\scrT t_2$ we have $\mi{root}(t_2) = f$, and for all $i \in \{1,\cdots,\mi{Arity}(f) \}$, it holds that $t_1|_i \sqsubseteq_\scrT t_2|_i$.
  By $t_2 \sqsubseteq_\scrT t_1$, we also have $t_2|_i \sqsubseteq_\scrT t_1|_i$ for all $i \in \{1,\cdots,\mi{Arity}(f) \}$.
  Therefore, by coinduction hypothesis\footnote{As pointed out by \citet{JMSCS:KS17}, we can use the coinduction hypothesis as long as there is progress in observing the roots of trees, and there is no further investigation of the children of the trees.}, we have $t_1|_i = t_2|_i$ for all $i \in \{1,\cdots,\mi{Arity}(f)\}$, thus $t_1 = t_2$.
  
  We claim that $\sqsubseteq_\scrT$ is transitive, i.e., if $t_1 \sqsubseteq_\scrT t_2$ and $t_2 \sqsubseteq_\scrT t_3$, then $t_1 \sqsubseteq_\scrT t_3$.
  If $\mi{root}(t_1) = \infholder$, then we have $t_1 \sqsubseteq_\scrT t_3$ by definition.
  Otherwise, we know that $\mi{root}(t_1) = f$ for some $f \in \calF \cup \calK$.
  By $t_1 \sqsubseteq_\scrT t_2$, we know that $\mi{root}(t_2) = f$ and for all $i \in \{1,\cdots,\mi{Arity}(f) \}$, it holds that $t_1|_i \sqsubseteq_\scrT t_2|_i$.
  By $t_2 \sqsubseteq_\scrT t_3$, we know that $\mi{root}(t_3) = f$ and for all $i \in \{1,\cdots,\mi{Arity}(f) \}$, it holds that $t_2|_i \sqsubseteq_\scrT t_3|_i$.
  Thus, by coinduction hypothesis, for all $i \in \{1,\cdots,\mi{Arity}(f)\}$, it holds that $t_1|_i \sqsubseteq_\scrT t_3|_i$.
  Therefore, by definition, we conclude that $t_1 \sqsubseteq_\scrT t_3$.
\end{proof}

The lemma below justifies the relation $\sqsubseteq_\scrT$ in the sense that if $t_1 \sqsubseteq_\scrT t_2$, it must hold that $t_2$ is obtained by substituting some $\infholder$ symbols in $t_1$ with trees.

\begin{lemma}\label{Lem:Appendix:RefinementOrderRefine}
  If $t_1 \sqsubseteq_\scrT t_2$, then $\dom(t_1) \subseteq \dom(t_2)$ and for all $p \in \dom(t_1)$, $t_1(p) {=} \infholder$ or $t_1(p) {=} t_2(p)$.
\end{lemma}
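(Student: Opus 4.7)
The plan is to prove both parts simultaneously by induction on the length of the position $p \in \dom(t_1)$, peeling one level off the trees at each step by invoking the defining property of $\sqsubseteq_\scrT$.

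For the base case $p = \epsilon$: since $\dom(t_2)$ is always prefix-closed and non-empty, $\epsilon \in \dom(t_2)$ automatically. For the symbol condition, I would unfold the definition of the refinement relation: from $t_1 \sqsubseteq_\scrT t_2$, either $t_1 = \infholder$, in which case $t_1(\epsilon) = \infholder$, or $\mi{root}(t_1) = \mi{root}(t_2) = f$ for some $f \in \calF \cup \calK$, which gives $t_1(\epsilon) = t_2(\epsilon)$.

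For the inductive step, suppose $p = jq$ with $j \in \bbN_{>0}$ and $p \in \dom(t_1)$. Since $p$ is a non-empty position of $t_1$, the root of $t_1$ must have positive arity, ruling out both $t_1(\epsilon) = \infholder$ and $t_1(\epsilon) \in \calF_0 \cup \calK$ (the latter by the third bullet in the definition of possibly-infinite trees). So the defining property of $\sqsubseteq_\scrT$ forces $\mi{root}(t_1) = \mi{root}(t_2) = f$ with $j \in \{1, \ldots, \mi{Arity}(f)\}$ and $t_1|_j \sqsubseteq_\scrT t_2|_j$. Since $jq \in \dom(t_1)$ gives $q \in \dom(t_1|_j)$, and $q$ is strictly shorter than $p$, the induction hypothesis applied to $t_1|_j \sqsubseteq_\scrT t_2|_j$ at position $q$ yields $q \in \dom(t_2|_j)$ and the required symbol condition for $t_1|_j(q) = t_1(p)$ versus $t_2|_j(q) = t_2(p)$. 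Translating back through the subtree equations $t_i|_j(q) = t_i(jq) = t_i(p)$ and $q \in \dom(t_i|_j) \iff jq \in \dom(t_i)$ finishes the step.

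The main obstacle here is largely bookkeeping rather than conceptual. The only subtle point is ensuring that the induction is well-founded over $\dom(t_1)$ rather than the tree structure itself, since $t_1$ may have infinite depth; this is why I pivot to induction on $|p|$, which is a finite natural number, and use the definitional unfolding of $\sqsubseteq_\scrT$ at each level to descend one step. One should also verify that the non-$\infholder$ case of the definition really is available whenever $p$ is non-empty---this follows from the structural constraints in the definition of $\m{Tree}^\infty(\calF,\calK)$, which ensure that leaf symbols ($\calF_0 \cup \calK \cup \{\infholder\}$) have no children in $\dom(t_1)$.
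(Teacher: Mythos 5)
Your proof is correct and follows essentially the same route as the paper's: unfold the defining property of $\sqsubseteq_\scrT$ at the root to match symbols and obtain $t_1|_j \sqsubseteq_\scrT t_2|_j$, then recurse into the subtree containing the position. The only difference is that the paper phrases the recursion as an appeal to a ``coinduction hypothesis,'' whereas you ground it as an induction on the length $|p|$ of the position; since the conclusion is a universally quantified statement over finite positions, your formulation is, if anything, the more directly justified of the two.
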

\begin{proof}
  If $\mi{root}(t_1) = \infholder$, then $\dom(t_1) = \{ \epsilon \}$ and the lemma holds obviously.
  Otherwise, we know that $\mi{root}(t_1) = f$ for some $f \in \calF \cup \calK$,
  and by $t_1 \sqsubseteq_\scrT t_2$, we have $\mi{root}(t_2) = f$ and for all $i \in \{1,\cdots,\mi{Arity}(f)\}$, it holds that $t_1|_i \sqsubseteq_\scrT t_2|_i$.
  By coninduction hypothesis, for all $i \in \{1,\cdots,\mi{Arity}(f)\}$, we know that
  the lemma holds for $t_1|_i \sqsubseteq_\scrT t_2|_i$.
  Thus, we have 
  \begin{align*}
  \dom(t_1) & = \textstyle\bigcup_{i \in \{1,\cdots,\mi{Arity}(f)\}} \{ i p \mid p \in \dom(t_1|_i) \} \\
  & \subseteq \textstyle\bigcup_{i \in \{1,\cdots,\mi{Arity}(f)\}} \{ i p \mid p \in \dom(t_2|_i) \} \\
  & = \dom(t_2).
  \end{align*}
  For any $p \in \dom(t_1)$, either $p = \epsilon$, or there exists $i$ such that $p = i p'$ and $p' \in \dom(t_1|_i)$.
  In the former case, we have $t_1(\epsilon) = f = t_2(\epsilon)$.
  In the latter case,
  we know that either $t_1|_i(p') = \infholder$ or $t_1|_i(p') = t_2|_i(p')$.
  Therefore, we conclude that either $t_1(p) = t_1(i p') = t_1|_i(p') = \infholder$,
  or $t_1(p) = t_1(i p') = t_1|_i(p') = t_2|_i(p') = t_2(i p') = t_2(p)$.
\end{proof}

The set $\m{Tree}^\infty(\calF,\calK)$ of possibly-infinite trees forms an $\omega$-cpo with respect to $\sqsubseteq_\scrT$.

\begin{lemma}\label{Lem:Appendix:RefinementOrderContinuous}
  The relation $\sqsubseteq_\scrT$ is an $\omega$-continuous partial order with $\infholder$ as the least element.
\end{lemma}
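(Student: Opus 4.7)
The plan is to first observe that the partial-order part is already delivered by \cref{Lem:Appendix:RefinementOrderPartial}, and that $\infholder$ is the least element directly by the definition of $\sqsubseteq_\scrT$ (the first clause says $\tuple{\infholder, t}$ belongs to the relation for every $t$). So the only real content is $\omega$-continuity: every $\omega$-chain $t_0 \sqsubseteq_\scrT t_1 \sqsubseteq_\scrT \cdots$ has a supremum.

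The construction of the supremum $t_*$ will use \cref{Lem:Appendix:RefinementOrderRefine} as its engine. That lemma says that along a chain, domains grow and pointwise values either stay as $\infholder$ or stabilize to a fixed symbol in $\calF \cup \calK$. So I will define $t_*$ positionally by
\[
\dom(t_*) \defeq \textstyle\bigcup_{i \in \bbN} \dom(t_i), \qquad
t_*(p) \defeq \begin{cases} t_i(p) & \text{if some $i$ has $t_i(p) \in \calF \cup \calK$,} \\ \infholder & \text{otherwise.} \end{cases}
\]
The first step is to check this is well-defined, i.e., that whenever two indices $i,j$ both give non-$\infholder$ values at $p$, those values agree. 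This is immediate from \cref{Lem:Appendix:RefinementOrderRefine} applied to $t_i \sqsubseteq_\scrT t_j$ or $t_j \sqsubseteq_\scrT t_i$ (whichever is applicable in the chain).

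The second step is to verify that $t_*$ satisfies the three conditions in the definition of possibly-infinite trees (non-empty prefix-closed domain; correct number of children under $n$-ary symbols; no children under $0$-ary symbols). Prefix-closure and non-emptiness follow because each $\dom(t_i)$ has these properties and both are preserved by unions; the arity conditions at each position $p$ follow from picking a single $t_i$ with $p \in \dom(t_i)$ and $t_i(p) = t_*(p)$, then using that $t_i$ already satisfies the arity conditions.

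The third step is to show $t_i \sqsubseteq_\scrT t_*$ for every $i$ and that $t_*$ is the least such tree. For $t_i \sqsubseteq_\scrT t_*$, I will exhibit the binary relation
\[
R \defeq \{ \tuple{s,u} \mid \Exists{i \in \bbN, p \in \dom(t_i)} s = t_i|_p \text{ and } u = t_*|_p \}
\]
(together with $\{\tuple{\infholder,t}\}$) and verify $R \subseteq T_{\sqsubseteq_\scrT}(R)$; then $R$ is contained in the greatest fixed point $\sqsubseteq_\scrT$. Coinductively, if $t_i(p) = \infholder$ the first clause of $T_{\sqsubseteq_\scrT}$ applies; otherwise $t_i(p) = t_*(p) = f$ and the children pairs $\tuple{t_i|_{pj}, t_*|_{pj}}$ lie in $R$. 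For leastness, suppose $t_i \sqsubseteq_\scrT u$ for all $i$; I will show $t_* \sqsubseteq_\scrT u$ again coinductively, by defining $R' \defeq \{\tuple{t_*|_p, u|_p} \mid p \in \dom(t_*)\}$ and checking $R' \subseteq T_{\sqsubseteq_\scrT}(R')$ using the construction of $t_*(p)$ at each position.

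The main obstacle I anticipate is notational rather than mathematical: coinductive reasoning about subtrees indexed by positions requires some care that the relations I cook up really are closed under the one-step operator $T_{\sqsubseteq_\scrT}$, especially that taking a child of $t_*|_p$ at index $j$ gives back $t_*|_{pj}$ (so the relation stays inside itself). The nontrivial calculation is verifying $(t_*)|_p(q) = t_*(pq)$ with $t_*$ defined case-split on whether some chain element realizes a concrete symbol at that position; this follows because if some $t_i$ realizes a concrete symbol at $pq$, then (since $\dom(t_i)$ is prefix-closed and $t_i$ is a tree) $t_i$ also realizes the correct ancestor structure, so the subtree and whole-tree views agree.
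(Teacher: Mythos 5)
Your proposal is correct and follows essentially the same route as the paper's proof: the supremum is built position-by-position over the union of the domains, well-definedness of the stabilized symbol comes from \cref{Lem:Appendix:RefinementOrderRefine}, and the upper-bound property is established coinductively on root symbols. Your explicit post-fixed-point relations $R$ and $R'$ are just a more careful rendering of what the paper does informally (in particular, the paper's leastness step is terser than yours), but the underlying argument is the same.
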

\begin{proof}
  Fix an $\omega$-chain $\{t_k\}_{k \in \bbN}$ of possibly-infinite trees with respect to the partial order $\sqsubseteq_\scrT$ (\cref{Lem:Appendix:RefinementOrderPartial}).
  We then try to construct the least upper bound of the chain, and call it $t'$.
  We set $\dom(t')$ to be $\bigcup_{k \in \bbN} \dom(t_k)$.
  By \cref{Lem:Appendix:RefinementOrderRefine}, we know that $\{\dom(t_k)\}_{k \in \bbN}$ is a $\subseteq$-chain.
  For a position $p \in \dom(t')$, we set $t'(p)$ to be
  \begin{itemize}
    \item $\infholder$, if there exists $K \in \bbN$ such that $t_k(p) = \infholder$ for all $k \ge K$;
    \item $t_{k_p}(p)$, if there exists $k_p \in \bbN$ satisfying $p \in \dom(t_{k_p})$ and $t_{k_p}(p) \neq \infholder$.
  \end{itemize}
  The well-definedness of $t'$ is guaranteed by \cref{Lem:Appendix:RefinementOrderRefine}.
  If $t'$ is an upper bound on $\{t_k\}_{k \in \bbN}$, then $t'$ is the least one as $\dom(t')$ is the least upper bound on $\{\dom(t_k)\}_{k \in \bbN}$.
  Thus, it remains to prove that $t'$ is indeed an upper bound.
  
  We then proceed the proof by coinduction. For any $\ell \in \bbN$,
  if $\mi{root}(t_\ell) = \infholder$, then $t_\ell \sqsubseteq_\scrT t'$.
  Thus, without loss of generality, we can assume that $t_1 \neq \infholder$.
  Let $f = \mi{root}(t_1) \in \calF \cup \calK$.
  Thus, by definition, we know that $\mi{root}(t_k) = f$ for all $k \in \bbN$, and also $\mi{root}(t') = f$.
  Let $n = \mi{Arity}(f)$.
  Then for each $j \in \{1,\cdots,n\}$, it holds that $\{ t_k|_j \}_{k \in \bbN}$ is an $\omega$-chain, and by coinduction hypothesis, we know that $t'|_j$ is an upper bound on $\{t_k|_j \}_{k \ge \bbN}$.
  Therefore, $f(t'|_1,\cdots,t'|_n)$ is an upper bound of
  $\{ f(t_k|_1,\cdots, t_k|_n) \}_{k \in \bbN}$.
\end{proof}

We now formulate a mechanism to define functions on possibly-infinite trees.
Note that the standard definition of inductive functions does \emph{not} work because of
the existence of infinite trees.
Let $(D,\sqsubseteq_D)$ be an $\omega$-cpo with a least element $\bot_D$.
To define a function from possibly-infinite trees in $\m{Tree}^\infty(\calF,\calK)$ to $D$, we require a mapping
$\m{base} : \calF_0 \cup \calK \to D$ and a family of mappings into $\omega$-continuous functions $\m{ind}_n : \calF_n \to [D^n \to D]$ for all $n > 0$ such that $\calF_n \neq \emptyset$.
We then introduce an $\omega$-chain of functions $\{h_i\}_{i \ge 0}$ as follows, and define the target function by $h \defeq \bigsqcup_{i \ge 0} h_i$ with respect to the pointwise extension of $\sqsubseteq_D$.
\begin{align*}
  h_0 & \defeq \lambda t.\; \bot_D, \\
  h_{i+1} & \defeq \lambda t.\; \begin{dcases*}
    \bot_D & if $\mi{root}(t) = \infholder$ \\
    \m{base}(\mi{root}(t)) & if $\mi{root}(t) \in \calF_0 \cup \calK$ \\
    \m{ind}_n(\mi{root}(t))( h_i(t|_1), \cdots, h_i(t|_n ) ) & if $\mi{root}(t) \in \calF_n$ for some $n > 0$
  \end{dcases*} .
\end{align*}

\begin{lemma}\label{Lem:Appendix:CoinductionPreserveBottom}
  Let $h : \m{Tree}^\infty(\calF,\calK) \to D$ be a function from
  possibly-infinite trees to an $\omega$-cpo $(D,\sqsubseteq_D)$ with a least element $\bot_D$.
  Then $h(\infholder) = \bot_D$.
\end{lemma}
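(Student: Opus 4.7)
The claim is essentially immediate from unfolding the definition of $h$. The plan is to observe that $h$ is defined as the pointwise supremum $\bigsqcup_{i \ge 0} h_i$ of the chain of approximants, so evaluating $h(\infholder)$ reduces to evaluating each $h_i(\infholder)$ separately and taking the supremum in $D$.

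First, I would check each approximant at the single point $t = \infholder$. By the definition of $h_0$, we have $h_0(\infholder) = \bot_D$. For the inductive clause, $h_{i+1}(t)$ is specified by case analysis on $\mi{root}(t)$; since $\mi{root}(\infholder) = \infholder$ (the tree $\infholder$ is the singleton map $\{\epsilon \mapsto \infholder\}$, so its root symbol is $\infholder$ itself), the first case in the definition of $h_{i+1}$ applies, yielding $h_{i+1}(\infholder) = \bot_D$.

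Hence $h_i(\infholder) = \bot_D$ for every $i \ge 0$. Taking the supremum in the $\omega$-cpo $(D, \sqsubseteq_D)$ along this constant chain gives
\[
h(\infholder) = \textstyle\bigsqcup_{i \ge 0} h_i(\infholder) = \textstyle\bigsqcup_{i \ge 0} \bot_D = \bot_D,
\]
as desired. There is no real obstacle here; the only subtlety worth making explicit is that the supremum defining $h$ is taken pointwise, which is why evaluating at the particular point $\infholder$ commutes with the $\bigsqcup$. This lemma is presumably stated as a sanity check to be invoked later when reasoning about $h$ on trees whose subtrees include $\infholder$ placeholders (for instance, when comparing $h$'s value on a finite refinement of a tree to its value on the tree itself).
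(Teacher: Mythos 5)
Your proof is correct and is essentially identical to the paper's: both unfold $h = \bigsqcup_{i \ge 0} h_i$, observe that each $h_i(\infholder) = \bot_D$ (the paper simply calls this ``obvious,'' whereas you spell out the case analysis on $\mi{root}(\infholder)$), and conclude by taking the supremum of the constant chain. No issues.
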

\begin{proof}
  By definition, we know that there exists an $\omega$-chain of functions $\{h_j\}_{j \ge 0}$
  defined as above such that $h = \bigsqcup_{j \ge 0} h_j$.
  Also, it is obvious that for each $j \ge 0$, $h_j(\infholder) = \bot_D$.
  Thus, we conclude that $h(\infholder) = \bigsqcup_{j \ge 0} h_j(\infholder) = \bigsqcup_{j \ge 0} \bot_D = \bot_D$.
\end{proof}

\begin{lemma}\label{Lem:Appendix:CoinductionContinuous}
  Let $h : \m{Tree}^\infty(\calF,\calK) \to D$ be a function from
  possibly-infinite trees to an $\omega$-cpo $(D,\sqsubseteq_D)$ with a least element $\bot_D$,
  induced by mappings $\m{base}$ and $\m{ind}_n$ for $n > 0$.
  Then $h$ is $\omega$-continuous with respect to the pointwise extension of $\sqsubseteq_D$.
\end{lemma}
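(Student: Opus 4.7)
\medskip\noindent
\textbf{Proof plan.}
The strategy is to reduce continuity of the limit $h = \bigsqcup_{i \ge 0} h_i$ to continuity of each finite approximant $h_i$, which in turn I would prove by induction on $i$ using the $\omega$-continuity of the building blocks $\m{ind}_n(f)$ and the explicit characterization of chain suprema given by \cref{Lem:Appendix:RefinementOrderContinuous}.

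First, I would verify that $\{h_i\}_{i \ge 0}$ is actually a pointwise $\sqsubseteq_D$-chain, so the supremum $h$ is well-defined. This is done by induction on $i$: the base case $h_0 \equiv \bot_D$ is immediate, and the step case inspects the root of $t$ and appeals to monotonicity of $\m{ind}_n(\mi{root}(t))$ (which follows from its $\omega$-continuity) together with the induction hypothesis applied to the subtrees $t|_1,\ldots,t|_n$.

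Second, I would prove by induction on $i$ that each $h_i : \m{Tree}^\infty(\calF,\calK) \to D$ is $\omega$-continuous. The base case is trivial since $h_0$ is constant. For the step case, fix an $\omega$-chain $\{t_k\}_{k \in \bbN}$ with supremum $t'$, as explicitly constructed in the proof of \cref{Lem:Appendix:RefinementOrderContinuous}. I would case-split on $\mi{root}(t')$:
\begin{itemize}[nosep]
\item If $\mi{root}(t') = \infholder$, then by construction every $t_k = \infholder$, so both sides equal $\bot_D$.
\item If $\mi{root}(t') \in \calF_0 \cup \calK$, then there is some index $K$ past which $\mi{root}(t_k) = \mi{root}(t')$, and the chain $\{h_{i+1}(t_k)\}_k$ is eventually constant at $\m{base}(\mi{root}(t'))$, which equals $h_{i+1}(t')$.
\item If $\mi{root}(t') = f \in \calF_n$ with $n>0$, then again past some $K$ we have $\mi{root}(t_k) = f$ and $\{t_k|_j\}_{k \ge K}$ is an $\omega$-chain with supremum $t'|_j$ for each $j$. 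The induction hypothesis gives $\bigsqcup_k h_i(t_k|_j) = h_i(t'|_j)$, and then $\omega$-continuity of $\m{ind}_n(f)$ in every argument yields $\bigsqcup_k h_{i+1}(t_k) = h_{i+1}(t')$.
\end{itemize}

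Third, I would conclude by the standard fact that a pointwise supremum of an $\omega$-chain of $\omega$-continuous functions between $\omega$-cpos is itself $\omega$-continuous: for any chain $\{t_k\}_k$ with supremum $t'$,
\[
h(t') = \bigsqcup_i h_i(t') = \bigsqcup_i \bigsqcup_k h_i(t_k) = \bigsqcup_k \bigsqcup_i h_i(t_k) = \bigsqcup_k h(t_k),
\]
where the middle swap is the usual double-supremum exchange for $\bbN \times \bbN$-indexed monotone families.

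The main obstacle is step two, specifically the $f \in \calF_n$ sub-case: one must be careful to rely on the explicit construction of the chain supremum from \cref{Lem:Appendix:RefinementOrderContinuous} (so that $t'|_j$ really is the supremum of the subtree chains $\{t_k|_j\}_k$, eventually in $k$) rather than reasoning abstractly, and then to feed the inductive hypothesis through each of the $n$ arguments of $\m{ind}_n(f)$ before invoking its joint continuity.
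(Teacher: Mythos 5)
Your proposal is correct and follows essentially the same route as the paper's proof: an induction on $i$ showing each approximant $h_i$ is $\omega$-continuous (with a case split on the stabilized root of the chain and an appeal to the continuity of $\m{ind}_n(f)$, using the explicit supremum construction from \cref{Lem:Appendix:RefinementOrderContinuous}), followed by the standard double-supremum exchange for $h = \bigsqcup_i h_i$. Your extra preliminary check that $\{h_i\}$ is a pointwise chain is sound hygiene that the paper leaves implicit in its setup.
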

\begin{proof}
  We know that there exists an $\omega$-chain of functions
  $\{ h_j \}_{j \ge 0}$ defined as above such that $h = \bigsqcup_{j \ge 0} h_j$.
  We claim that for each $j \ge 0$, the function $h_j$ is $\omega$-continuous.
  We proceed by induction on $j$.
  \begin{description}[labelindent=\parindent]
    \item[When $j=0$:]\
    
    By definition, we have $h_0 \defeq \lambda t.\; \bot$, which is obviously $\omega$-continuous.
    
    \item[When $j=k+1$:]\
        
    By definition, we have
    \[
    h_j \defeq \lambda t.\; \begin{dcases*}
      \bot & if $\mi{root}(t) = \infholder$, \\
      \m{base}(\mi{root}(t)) & if $\mi{root}(t) \in \calF_0 \cup \calK$, \\
      \m{ind}_n(\mi{root}(t))(h_k(t|_1),\cdots,h_k(t|_n)) & if $\mi{root}(t) \in \calF_n$ for some $n > 0$.
    \end{dcases*}.
    \]
    
    Consider an $\omega$-chain of trees $\{t_i\}_{i \ge 0}$.
    Without loss of generality, we assume that $\mi{root}(t_0) \neq \infholder$ (because $h_j(\{ \epsilon \mapsto \infholder \}) = \bot$).
    
    Then by the definition of the refinement order $\sqsubseteq_\scrT$, we know that
    $\{ \mi{root}(t_i) \mid i \ge 0\}$ is a singleton set.
    
    If the singleton set contains a symbol $a \in \calF_0 \cup \calK$,
    we know that $h_j(t_i) = \m{base}(a)$ for all $i \ge 0$, thus $h_j(\bigsqcup_{i \ge 0} t_i)= h_j(a) = \bigsqcup_{i \ge 0} h_j(a) = \bigsqcup_{i \ge 0} h_j(t_i)$.
    
    If the singleton set contains a symbol $f \in \calF_n$ for some $n > 0$,
    then by the assumption that $\m{ind}_n(f)$ is $\omega$-continuous and by induction hypothesis that $h_k$ is $\omega$-continuous, we can derive
    \begin{align*}
      h_j(\textstyle\bigsqcup_{i \ge 0} t_i) & = \m{ind}_n(f)(h_k( (\textstyle\bigsqcup_{i \ge 0} t_i)|_1), \cdots, h_k( (\textstyle\bigsqcup_{i \ge 0} t_i)|_n)) \\
      & = \textstyle \m{ind}_n(f)(h_k(\bigsqcup_{i \ge 0} (t_i|_1)), \cdots, h_k(\bigsqcup_{i \ge 0}(t_i |_ n))) \\
      & = \textstyle \m{ind}_n(f)( \bigsqcup_{i \ge 0} h_k(t_i|_1), \cdots, \bigsqcup_{i \ge 0} h_k(t_i|_n)) \\
      & = \textstyle \bigsqcup_{i_1 \ge 0} \cdots \bigsqcup_{i_n \ge 0} \m{ind}_n(f)(h_k(t_{i_1}|_1), \cdots, h_k(t_{i_n}|_n)) \\
      & \hphantom{{}={}} \text{($\sqsupseteq$: obvious)} \\
      & \hphantom{{}={}} \text{($\sqsubseteq$: $\forall i_1, \cdots i_n$, the LHS item is bounded by the $\max(i_1,\cdots,i_n)$-th RHS item)} \\
      & = \textstyle \bigsqcup_{i \ge 0} \m{ind}_n(f)(h_k(t_i|_1), \cdots, h_k(t_i|_n)) \\
      & = \textstyle \bigsqcup_{i \ge 0} h_j(t_i).
    \end{align*}
    Thus, we conclude the proof of the claim that $h_j$ is $\omega$-continuous for all $j \ge 0$.
  \end{description}
  
  Let us now consider an $\omega$-chain of trees $\{t_i\}_{i \ge 0}$. Then we can conclude the proof by
  \begin{align*}
    h(\textstyle\bigsqcup_{i \ge 0} t_i) & = (\textstyle\bigsqcup_{j \ge 0} h_j)(\textstyle\bigsqcup_{i \ge 0} t_i ) \\
    & = \textstyle\bigsqcup_{j \ge 0} h_j(\textstyle\bigsqcup_{i \ge 0} t_i) \\
    & = \textstyle\bigsqcup_{j \ge 0} \bigsqcup_{i \ge 0} h_j(t_i) \\
    & = \textstyle\bigsqcup_{i \ge 0} \bigsqcup_{j \ge 0} h_j(t_i) \\
    & = \textstyle\bigsqcup_{i \ge 0} (\textstyle \bigsqcup_{j \ge 0} h_j) (t_i) \\
    & = \textstyle\bigsqcup_{i \ge 0} h(t_i).
  \end{align*}
\end{proof}

We end this section by demonstrating a function that maps possibly-infinite trees
to collections of \emph{rooted paths} on trees.
We define $\mi{paths}(t) \subseteq \m{Path}^\infty(\calF,\calK) \defeq (\calF \cup \calK) \cdot (\bbN \cdot (\calF \cup \calK))^\infty$
with the following base and induction steps:
\begin{align*}
  \mi{paths} & : \m{Tree}^\infty(\calF,\calK) \to \m{Path}^\infty(\calF,\calK) \\
  \mi{paths}(a) & \defeq \{ a \} \quad \text{for $a \in \calF_0 \cup \calK$}, \\
  \mi{paths}(f(s_1,\cdots,s_n)) & \defeq \{ f  j  w \mid j=1,\dots,n \wedge w \in \mi{paths}(s_j) \} \quad \text{for $f \in \calF_n$ for some $n > 0$}.
\end{align*}
Note that the set $\m{Path}^\infty(\calF,\calK)$ of possibly-infinite paths admits
an $\omega$-cpo with the ordering
\[
A \sqsubseteq_{\m{P}} B \defeq (A \cap \m{Path}^+(\calF,\calK) \subseteq B \cap \m{Path}^+(\calF,\calK)) \wedge (A \cap \m{Path}^\omega(\calF,\calK) \supseteq B \cap \m{Path}^\omega(\calF,\calK)),
\]
and the least element $\bot_\m{P} \defeq \m{Path}^\omega(\calF,\calK)$,
where $\m{Path}^+(\calF,\calK) \defeq (\calF \cup \calK) \cdot (\bbN \cdot (\calF \cup \calK))^*$ is the set of finite paths,
and $\m{Path}^\omega(\calF,\calK) \defeq (\calF \cup \calK) \cdot (\bbN \cdot (\calF \cup \calK))^\omega$ is the set of infinite paths.
We can again see that the $\infholder$ symbol indicates a ``yet unknown tree,''
because $\mi{paths}(\infholder)$, by \cref{Lem:Appendix:CoinductionPreserveBottom}, equals $\bot_\m{P}$, i.e., the set of all infinite paths
(and no finite paths).

\subsection{Hyper-path Interpretation of Regular Infinite-tree Expressions}
\label{Se:Appendix:InterpRegularHyperPathExpressions}

To formally interpret regular infinite-tree expressions as possibly-infinite trees,
we first introduce a \emph{substitution} operator, defined as a function on possibly-infinite trees with the following base and induction steps,
where ${-}_1$ and ${-}_2$ indicate the two arguments of the substitution operator,
and the function is defined coinductively on the first argument:
\begin{align*}
  ({-}_1) \{ \square \leftleadsto {-}_2 \} & : \m{Tree}^\infty(\calF,\calK \cup \{ \square \}) \times \m{Tree}^\infty(\calF,\calK ) \to \m{Tree}^\infty(\calF,\calK  ) \\
  \square \{ \square \leftleadsto t \} & \defeq t, \\
  a \{ \square \leftleadsto t \} & \defeq a \quad \text{for $a \neq \square$}, \\
  f(s_1,\cdots,s_n)\{ \square \leftleadsto t \} & \defeq f(s_1\{ \square \leftleadsto t \} ,\cdots,s_n\{ \square \leftleadsto t \}) .
\end{align*}

\begin{example}\label{Exa:Appendix:TreeSubstitution}
  Let $\calF = \{ \mi{seq}[x{\sim}\cn{Ber}(0.5)](), \mi{cond}[x](,), \varepsilon \}$
  and $\calK = \{ \square_1, \square_2 \}$.
  Let $t = \mi{cond}[x](\square_1, \square_2)$ and $s = \mi{seq}[x{\sim}\cn{Ber}(0.5)](\varepsilon)$.
  Then
  \begin{center}\small
  $t\{ \square_1 \leftleadsto s \} =$ \begin{tabular}{l} \Tree[.{$\mi{cond}[x]$} [.{$\mi{seq}[x{\sim}\cn{Ber}(0.5)]$} {$\varepsilon$} ] {$\square_2$} ] \end{tabular}
  ; \quad
  $t\{ \square_2 \leftleadsto s \} =$ \begin{tabular}{l} \Tree[.{$\mi{cond}[x]$} {$\square_1$} [.{$\mi{seq}[x{\sim}\cn{Ber}(0.5)]$} {$\varepsilon$} ] ] \end{tabular}
  .
  \end{center}
\end{example}

\begin{proposition}\label{Prop:Appendix:TreeSubstContinuous}
  The substitution operator is $\omega$-continuous in its second argument.
\end{proposition}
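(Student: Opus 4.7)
The plan is to apply the coinductive function-definition framework from \cref{Lem:Appendix:CoinductionContinuous}, viewing the substitution operator $(-)\{\square \leftleadsto -\}$ as a single function coinductively defined on its first argument whose values happen to be $\omega$-continuous functions of its second argument. To do so, I take the target $\omega$-cpo to be $D \defeq [\m{Tree}^\infty(\calF,\calK) \to \m{Tree}^\infty(\calF,\calK)]_{\mathrm{cont}}$, the set of $\omega$-continuous self-maps on trees, ordered pointwise with least element $\lambda t.\,\infholder$. The first auxiliary fact I need is that pointwise suprema of $\omega$-chains of $\omega$-continuous functions remain $\omega$-continuous (a standard exchange-of-suprema argument) so that $D$ is indeed an $\omega$-cpo.

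The second auxiliary fact is that each tree constructor $f \in \calF_n$ for $n > 0$, viewed as a map $\m{Tree}^\infty(\calF,\calK)^n \to \m{Tree}^\infty(\calF,\calK)$, is $\omega$-continuous in each argument with respect to $\sqsubseteq_\scrT$. This can be read off directly from the explicit construction of suprema in the proof of \cref{Lem:Appendix:RefinementOrderContinuous}: if $\{u_{i,k}\}_k$ is an $\omega$-chain in the $i$-th slot (with the other slots fixed), then $\{f(\ldots,u_{i,k},\ldots)\}_k$ is an $\omega$-chain of trees whose supremum has root $f$ and whose subtrees are the coordinatewise suprema.

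With these preliminaries in hand, I would instantiate the framework with $\m{base}(\square) \defeq \lambda t.\, t$, with $\m{base}(a) \defeq \lambda t.\, a$ for $a \in (\calF_0 \cup \calK) \setminus \{\square\}$, and with $\m{ind}_n(f)(g_1,\ldots,g_n) \defeq \lambda t.\, f(g_1(t),\ldots,g_n(t))$. The base values lie in $D$ because identities and constants are continuous. Given $g_1,\ldots,g_n \in D$, the function $\lambda t.\, f(g_1(t),\ldots,g_n(t))$ is continuous by composing the continuity of each $g_i$ in $t$ with the continuity of $f$ just established; a similar pointwise argument shows that $\m{ind}_n(f)$ is itself $\omega$-continuous as a map $D^n \to D$. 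The coinductive recurrence these data induce coincides, clause by clause, with the defining equations of $(-)\{\square \leftleadsto -\}$, so \cref{Lem:Appendix:CoinductionContinuous} yields a well-defined map $s \mapsto s\{\square \leftleadsto -\}$ whose values lie in $D$; that is, each $s\{\square \leftleadsto -\}$ is $\omega$-continuous.

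The main obstacle is the interplay of two levels of continuity inside the inductive clause: one needs continuity of $\m{ind}_n(f)$ in its functional arguments \emph{and} continuity of each resulting function in its tree argument, both of which ultimately rely on the somewhat subtle continuity of tree constructors with respect to $\sqsubseteq_\scrT$. If invoking the framework feels too heavy, an alternative is a direct coinductive argument on $s$ preceded by a separate monotonicity lemma; but either route bottoms out at the same continuity fact about tree constructors.
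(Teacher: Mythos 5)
Your proof is correct and is exactly the argument the paper implicitly relies on: the paper states this proposition without proof, having defined substitution coinductively on its first argument, and your instantiation of that framework with $D$ taken to be the $\omega$-continuous self-maps on trees (ordered pointwise, with least element $\lambda t.\,\infholder$) is the intended route. The two supporting facts you isolate---closure of continuous functions under pointwise suprema of $\omega$-chains, and $\omega$-continuity of each tree constructor with respect to $\sqsubseteq_\scrT$, read off from the explicit supremum construction in the proof of \cref{Lem:Appendix:RefinementOrderContinuous}---are precisely what is needed and both hold.
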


We now define an interpretation from regular infinite-tree expressions to possibly-infinite
trees via the mapping $\scrT_\calK\interp{{-}}$, which is inductively defined on the structure of regular infinite-tree expressions as follows, where ``$\lfp$'' denotes the least-fixed-point operator.
\begin{align*}
  \scrT_\calK\interp{{-}} & : \m{RegExp}^\infty(\calF,\calK) \to \m{Tree}^\infty(\calF,\calK) \\
  \scrT_\calK\interp{a} & \defeq a,  \\
  \scrT_\calK\interp{f(E_1,\cdots,E_n)} & \defeq f(\scrT_\calK\interp{ s_1},\cdots, \scrT_\calK\interp{s_n}),  \\
  \scrT_\calK\interp{E_1 \cdot_\square E_2} & \defeq \scrT_{\calK \cup \{\square\}}\interp{E_1} \{ \square \leftleadsto \scrT_\calK\interp{E_2} \}, \\
  \scrT_\calK\interp{E^{\infty_\square}} & \defeq \lfp_{  \infholder }^{\sqsubseteq_\scrT} \lambda t.\; (\scrT_{\calK\cup\{\square\}}\interp{E} \{ \square \leftleadsto t \}).
\end{align*}

\begin{example}\label{Exa:Appendix:RegExpInterp}
  The regular infinite-tree expression $E \defeq (\mi{cond}[x](\mi{seq}[x{\sim}\cn{Ber}(0.5)](\square), \varepsilon))^{\infty_\square}$ gives a finite representation of the infinite hyper-path of the loop ``\kw{while} $x$ \kw{do} $x\sim\cn{Ber}(0.5)$ \kw{od}.''
  Intuitively, the map $\scrT_\emptyset\interp{E}$ constructs the following
  $\omega$-chain of trees and obtains the least upper bound as the interpretation of $E$:
  \begin{center}\small
    \begin{tabular}{l} $\infholder$ \end{tabular}
    $\sqsubseteq_\scrT$
    \begin{tabular}{l} \Tree[.{$\mi{cond}[x<0]$} [.{$\mi{seq}[x \coloneqq x + 1]$} {$\infholder$} ] {$\varepsilon$} ] \end{tabular}
    $\sqsubseteq_\scrT$
    \begin{tabular}{l} \Tree[.{$\mi{cond}[x<0]$} [.{$\mi{seq}[x \coloneqq x + 1]$} [.{$\mi{cond}[x<0]$} [.{$\mi{seq}[x \coloneqq x + 1]$} {$\infholder$} ] {$\varepsilon$} ] ] {$\varepsilon$} ] \end{tabular}
    $\sqsubseteq_\scrT$
    $\cdots$
    .
  \end{center}
\end{example}

\subsection{Hyper-path Interpretation is Sound for MA Interpretation}
\label{Se:Appendix:HyperPathAbstractMA}

The key theoretical result we will establish in this section is that the tree-based interpretation (developed in \Cref{Se:Appendix:InterpRegularHyperPathExpressions}) is \emph{sound} for any MA interpretation, i.e., regular infinite-tree expressions with the same tree-based interpretation also yield the same interpretation under any MA.
(Recall the definition of Markov algebras reviewed in \cref{Se:Soundness}.)

\begin{theorem}\label{The:Appendix:Sound}
  Let $E,F \in \m{RegExp}^\infty(\calF,\calK)$.
  If
  $\scrT_\calK\interp{E} = \scrT_\calK\interp{F}$,
  then for any MA interpretation $\scrM = (\calM,\interp{\cdot}^\scrM)$
  and any hole-valuation $\gamma : \calK \to \calM$, it holds that $\scrM_{\gamma}\interp{E} = \scrM_\gamma\interp{F}$.
\end{theorem}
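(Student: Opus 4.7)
The plan is to show that the MA interpretation of a regular infinite-tree expression depends only on the possibly-infinite tree that it denotes. More precisely, I will factor $\scrM_\gamma\interp{{-}}$ through $\scrT_\calK\interp{{-}}$ by constructing an intermediate map $\tau_\gamma : \m{Tree}^\infty(\calF,\calK) \to \calM$ and proving $\scrM_\gamma\interp{E} = \tau_\gamma(\scrT_\calK\interp{E})$ for every $E$. The theorem then follows by transitivity of equality.

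First, I will define $\tau_\gamma$ using the coinductive mechanism developed in \Cref{Se:Appendix:PossiblyInfiniteTrees}. Since $\tuple{M,\aord_M}$ is a dcpo with least element $\azero_M$ (hence in particular an $\omega$-cpo) and the operations $\otimes_M,\gcho{\varphi}_M,\pcho{p}_M,\dashcup_M$ are Scott-continuous (hence $\omega$-continuous), the required hypotheses are in place. Concretely, take $\m{base}(\infholder) \defeq \azero_M$, $\m{base}(Z) \defeq \gamma(Z)$ for $Z \in \calK$, $\m{base}(\varepsilon) \defeq \aone_M$, and define $\m{ind}$ on each $f \in \calF_n$ by the evident operation of $\calM$ (e.g.\ $\m{ind}(\mi{seq}[\m{act}])(x) \defeq \interp{\m{act}}^\scrM \otimes_M x$, $\m{ind}(\mi{prob}[p])(x_1,x_2) \defeq x_1 \pcho{p}_M x_2$, and so on). By \Cref{Lem:Appendix:CoinductionPreserveBottom,Lem:Appendix:CoinductionContinuous}, $\tau_\gamma(\infholder) = \azero_M$ and $\tau_\gamma$ is $\omega$-continuous with respect to $\sqsubseteq_\scrT$.

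Next I will establish a substitution lemma stating
\[
  \tau_\gamma\bigl(t_1 \{\square \leftleadsto t_2\}\bigr) \;=\; \tau_{\gamma[\square \mapsto \tau_\gamma(t_2)]}(t_1),
\]
proved by coinduction on $t_1$ using \Cref{Prop:Appendix:TreeSubstContinuous}: the cases $t_1 = \square$, $t_1 = a \neq \square$ are immediate from the definitions; the case $t_1 = f(s_1,\ldots,s_n)$ reduces to the coinduction hypothesis after one step of unfolding $\tau_\gamma$ and the substitution operator. Using this, I will prove the factorization lemma $\scrM_\gamma\interp{E} = \tau_\gamma(\scrT_\calK\interp{E})$ by structural induction on the regular infinite-tree expression $E$. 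The leaf and node cases are direct; the concatenation case $E_1 \dplus_\square E_2$ is exactly the substitution lemma combined with the induction hypothesis.

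The main obstacle will be the $\mu$-binder case $E = F^{\infty_\square}$, where both sides are defined as least fixed points but in \emph{different} $\omega$-cpos: $\scrM_\gamma\interp{E}$ is $\lfp_{\azero_M}^{\aord_M} \Phi$ with $\Phi(\theta) \defeq \scrM_{\gamma[\square \mapsto \theta]}\interp{F}$ in $\calM$, while $\scrT_\calK\interp{E}$ is $\lfp_{\infholder}^{\sqsubseteq_\scrT} \Psi$ with $\Psi(t) \defeq \scrT_{\calK \cup \{\square\}}\interp{F}\{\square \leftleadsto t\}$ in $\m{Tree}^\infty(\calF,\calK)$. To bridge these, I will show $\tau_\gamma \circ \Psi = \Phi \circ \tau_\gamma$ using the substitution lemma and the induction hypothesis applied to $F$; then, because $\tau_\gamma$ is $\omega$-continuous and sends $\infholder$ to $\azero_M$, a standard fixed-point fusion argument (computing both fixed points as suprema of the Kleene chains $\Psi^n(\infholder)$ and $\Phi^n(\azero_M)$ respectively, and pushing $\tau_\gamma$ through the supremum) yields $\tau_\gamma(\lfp \Psi) = \lfp \Phi$, as required. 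Once the factorization lemma is established, the theorem is immediate: $\scrM_\gamma\interp{E} = \tau_\gamma(\scrT_\calK\interp{E}) = \tau_\gamma(\scrT_\calK\interp{F}) = \scrM_\gamma\interp{F}$.
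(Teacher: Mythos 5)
Your proposal is correct and follows essentially the same route as the paper: your intermediate map $\tau_\gamma$ is the paper's tree-interpretation map $r^\scrM_\gamma$ (defined via the same coinductive mechanism from \Cref{Se:Appendix:PossiblyInfiniteTrees}), your substitution lemma is \Cref{Lem:Appendix:Subst}, and the factorization $\scrM_\gamma\interp{E} = r^\scrM_\gamma(\scrT_\calK\interp{E})$ is proved by the same structural induction, with the $\mu$-binder case handled identically by matching the Kleene chains $\Phi^n(\azero_M)$ and $\Psi^n(\infholder)$ term by term using continuity of the intermediate map.
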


Before proving the theorem, we define an interpretation map from possibly-infinite trees $\m{Tree}^\infty(\calF,\calK)$ into a semantic algebra $\calM$ equipped with an interpretation $\scrM = (\calM,\interp{\cdot}^\scrM)$.
Because $\calM$ admits a dcpo, thus it also admits an $\omega$-cpo, I can follow the coinductive principle for function definitions on trees (developed in \Cref{Se:Appendix:PossiblyInfiniteTrees}).
Below gives the base and induction steps for an interpretation map $r^\scrM_\gamma$, parameterized by a hole-valuation $\gamma : \calK \to \calM$.
The interpretation is well-defined because the operators $\otimes_M$, $\gcho{\varphi}_M$, $\pcho{p}_M$ are all Scott-continuous, thus they are also $\omega$-continuous.
\begin{align*}
  r^\scrM_\gamma(\varepsilon) & \defeq 1_M \\
  r^\scrM_\gamma(\mi{seq}[\m{act}](s_1)) & \defeq \interp{\m{act}}^\scrM \otimes_M r^\scrM_\gamma(s_1) \\
  r^\scrM_\gamma(\mi{cond}[\varphi](s_1,s_2)) & \defeq r^\scrM_\gamma(s_1) \gcho{\varphi}_M r^\scrM_\gamma(s_2) \\
  r^\scrM_\gamma(\mi{prob}[p](s_1,s_2)) & \defeq r^\scrM_\gamma(s_1) \pcho{p}_M r^\scrM_\gamma(s_2) \\
  r^\scrM_\gamma(\mi{ndet}(s_1,s_2)) & \defeq r^\scrM_\gamma(s_1) \dashcup_M r^\scrM_\gamma(s_2) \\
  r^\scrM_\gamma(\square) & \defeq \gamma(\square)  
\end{align*}

We then prove the following property of $r^\scrM_\gamma$ about substitutions.

\begin{lemma}\label{Lem:Appendix:Subst}
  For any $t \in \m{Tree}^\infty(\calF, \calK \cup \{\square\})$, $u \in \m{Tree}^\infty(\calF, \calK)$,
  and $\gamma : \calK \to \calM$,
  it holds that $r^\scrM_{\gamma[\square \mapsto r^\scrM_\gamma(u)]}(t) = r^\scrM_{\gamma}( t \{ \square \leftleadsto u \} )$.
\end{lemma}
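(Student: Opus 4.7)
The plan is to establish this identity by coinduction on $t$, in the style of the paper's earlier coinductive proofs such as that of \cref{Lem:Appendix:RefinementOrderPartial}. The core observation is that both $r^\scrM_\gamma$ (defined via the coinductive scheme of \Cref{Se:Appendix:PossiblyInfiniteTrees}) and the substitution operator $({-}_1)\{\square \leftleadsto {-}_2\}$ (defined coinductively in its first argument) inspect $\mi{root}(t)$ and recurse structurally on the children of $t$; consequently both sides of the target equation unfold in lock-step with the outer structure of $t$, so a coinductive case analysis on $\mi{root}(t)$ will suffice.

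The case analysis will split as follows. If $t = \infholder$, then since $\infholder \neq \square$ substitution leaves $t$ unchanged, and \cref{Lem:Appendix:CoinductionPreserveBottom} gives that both sides equal $\azero_M$. If $t = \square$, then $t\{\square \leftleadsto u\} = u$ while the extended valuation evaluates $\square$ directly to $r^\scrM_\gamma(u)$, matching the right-hand side. If $t = a$ for some $a \in \calF_0 \cup \calK$ with $a \neq \square$, then substitution is the identity and the modified valuation agrees with $\gamma$ on $a$, so both sides reduce to $r^\scrM_\gamma(a)$. Finally, if $t = f(s_1,\ldots,s_n)$ for some $f \in \calF_n$ with $n>0$, unfolding the definitions of $r^\scrM_\gamma$ and of substitution on both sides exposes the same interpreted operator (the one associated with $f$) applied to $r^\scrM_{\gamma[\square\mapsto r^\scrM_\gamma(u)]}(s_i)$ on the left and $r^\scrM_\gamma(s_i\{\square\leftleadsto u\})$ on the right; the coinduction hypothesis applied to each $s_i$ then closes the case.

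The chief subtlety, and the step I expect to require the most care, is the legitimacy of appealing to the coinduction hypothesis in the $f(s_1,\ldots,s_n)$ case: strictly speaking we are asserting equality in $M$, not bisimilarity of trees, so the coinductive principle must be justified by the guardedness of both sides' definitions (each observation of $\mi{root}(t)$ is productive before any recursive appeal). This is precisely the pattern used successfully in \cref{Lem:Appendix:RefinementOrderPartial,Lem:Appendix:RefinementOrderRefine} and carries over here because the interpreting operators $\otimes_M, \gcho{\varphi}_M, \pcho{p}_M, \dashcup_M$ are $\omega$-continuous. If a more elementary justification is preferred, one may instead induct on the approximation level $j$ in the chain $\{h_j\}_{j \ge 0}$ underlying the definition of $r^\scrM_\gamma$, establishing both $h_j^{\gamma[\square\mapsto r^\scrM_\gamma(u)]}(t) \sqsubseteq_M r^\scrM_\gamma(t\{\square\leftleadsto u\})$ and the converse inequality at each finite level (note that on the left the full value $r^\scrM_\gamma(u)$, not any approximation of it, is plugged in for $\square$), and then passing to the supremum in $j$ using the $\omega$-continuity of $r^\scrM_\gamma$ guaranteed by \cref{Lem:Appendix:CoinductionContinuous}.
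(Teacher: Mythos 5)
Your proposal is correct and follows essentially the same route as the paper's proof: a case analysis on $\mi{root}(t)$ ($\square$, a non-$\square$ leaf, and each $n$-ary symbol $f$), with the (co)inductive hypothesis applied to the subtrees $s_i$ and the definitions of $r^\scrM_\gamma$ and of substitution unfolded in lock-step. The only differences are in your favor: you explicitly handle the $t=\infholder$ case via \cref{Lem:Appendix:CoinductionPreserveBottom} and you flag (and repair, via the approximation chain $\{h_j\}$ and \cref{Lem:Appendix:CoinductionContinuous}) the fact that "induction on the structure of $t$" is not literally well-founded for infinite trees — a point the paper's own proof glosses over.
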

\begin{proof}
  Let $\gamma' \defeq \gamma[\square \mapsto r^\scrM_\gamma(u)]$.
  Then $\gamma' : (\calK \cup \{ \square \}) \to \calM$.
  
  We proceed by induction on the structure of $t$.
  \begin{description}[labelindent=\parindent]
    \item[Case $t = \varepsilon$:]\
    
    By definition, we have $r^\scrM_{\gamma'}(\varepsilon) = 1_M$.
    
    By definition, we have $\varepsilon\{\square \leftleadsto u\} = \varepsilon$, and also $r^\scrM_\gamma(\varepsilon) = 1_M$.
    
    Thus, we conclude that $r^\scrM_{\gamma'}(\varepsilon) = r^\scrM_\gamma(\varepsilon \{ \square \leftleadsto u\} )$.
    
    \item[Case {$t = \mi{seq}[\m{act}](s)$}:]\
    
    By induction hypothesis, we have $r^\scrM_{\gamma'}(s) = r^\scrM_\gamma(s \{ \square \leftleadsto u \})$.
    
    By definition, we have $r^\scrM_{\gamma'}(\mi{seq}[\m{act}](s)) = \interp{\m{act}}^\scrM \otimes_M r^\scrM_{\gamma'}(s)$.
    
    By definition, we have $\mi{seq}[\m{act}](s) \{ \square \leftleadsto u \} = \mi{seq}[\m{act}]( s\{ \square \leftleadsto u \} )$,
    and also $r^\scrM_\gamma(\mi{seq}[\m{act}]( s \{ \square \leftleadsto u \})) = \interp{\m{act}}^\scrM \otimes_M r^\scrM_\gamma( s \{\square \leftleadsto u \} )$.
    
    Thus, we conclude that $r^\scrM_{\gamma'}(\mi{seq}[\m{act}](s)) = r^\scrM_\gamma(\mi{seq}[\m{act}](s) \{ \square \leftleadsto u \})$.
    
    \item[Case {$t = \mi{cond}[\varphi](s_1,s_2)$}:]\
    
    By induction hypothesis on $s_1$ and $s_2$, respectively, we have $r^\scrM_{\gamma'}(s_1) = r^\scrM_\gamma(s_1\{\square \leftleadsto u\})$ and $r^\scrM_{\gamma'}(s_2) = r^\scrM_\gamma(s_2\{\square \leftleadsto u\})$, respectively.
    
    By definition, we have $r^\scrM_{\gamma'}(\mi{cond}[\varphi](s_1,s_2)) = r^\scrM_{\gamma'}(s_1) \gcho{\varphi}_M r^\scrM_{\gamma'}(s_2)$.
    
    By definition, we have $\mi{cond}[\varphi](s_1,s_2)\{\square \leftleadsto u\} = \mi{cond}[\varphi](s_1\{ \square \leftleadsto u \}, s_2\{\square \leftleadsto u \})$, and also
    \[
    r^\scrM_\gamma(\mi{cond}[\varphi](s_1\{ \square \leftleadsto u\}, s_2\{\square\leftleadsto u\}) ) = r^\scrM_\gamma( s_1\{ \square \leftleadsto u\} ) \gcho{\varphi}_M r^\scrM_\gamma( s_2\{\square\leftleadsto u\} ).
    \]
    
    Thus, we conclude that $r^\scrM_{\gamma'}(\mi{cond}[\varphi](s_1,s_2)) = r^\scrM_\gamma(\mi{cond}[\varphi](s_1,s_2)\{\square \leftleadsto u\})$.
    
    \item[Case {$t = \mi{prob}[p](s_1,s_2)$}:]\
    
    By induction hypothesis on $s_1$ and $s_2$, respectively, we have $r^\scrM_{\gamma'}(s_1) = r^\scrM_\gamma(s_1\{\square \leftleadsto u\})$ and $r^\scrM_{\gamma'}(s_2) = r^\scrM_\gamma(s_2\{\square \leftleadsto u\})$, respectively.
    
    By definition, we have $r^\scrM_{\gamma'}(\mi{prob}[p](s_1,s_2)) = r^\scrM_{\gamma'}(s_1) \pcho{p}_M r^\scrM_{\gamma'}(s_2)$.
    
    By definition, we have $\mi{prob}[p](s_1,s_2)\{\square \leftleadsto u\} = \mi{prob}[p](s_1\{ \square \leftleadsto u \}, s_2\{\square \leftleadsto u \})$, and also
    \[
    r^\scrM_\gamma(\mi{prob}[p](s_1\{ \square \leftleadsto u\}, s_2\{\square\leftleadsto u\}) ) = r^\scrM_\gamma( s_1\{ \square \leftleadsto u\} ) \pcho{p}_M r^\scrM_\gamma( s_2\{\square\leftleadsto u\} ).
    \]
    
    Thus, we conclude that $r^\scrM_{\gamma'}(\mi{prob}[p](s_1,s_2)) = r^\scrM_\gamma(\mi{prob}[p](s_1,s_2)\{\square \leftleadsto u\})$.
    
    \item[Case {$t = \mi{ndet}(s_1,s_2)$}:]\
    
    By induction hypothesis on $s_1$ and $s_2$, respectively, we have $r^\scrM_{\gamma'}(s_1) = r^\scrM_\gamma(s_1\{\square \leftleadsto u\})$ and $r^\scrM_{\gamma'}(s_2) = r^\scrM_\gamma(s_2\{\square \leftleadsto u\})$, respectively.
    
    By definition, we have $r^\scrM_{\gamma'}(\mi{ndet}(s_1,s_2)) = r^\scrM_{\gamma'}(s_1) \dashcup_M r^\scrM_{\gamma'}(s_2)$.
    
    By definition, we have $\mi{ndet}(s_1,s_2)\{\square \leftleadsto u\} = \mi{ndet}(s_1\{ \square \leftleadsto u \}, s_2\{\square \leftleadsto u \})$, and also
    \[
    r^\scrM_\gamma(\mi{ndet}(s_1\{ \square \leftleadsto u\}, s_2\{\square\leftleadsto u\}) ) = r^\scrM_\gamma( s_1\{ \square \leftleadsto u\} ) \dashcup_M r^\scrM_\gamma( s_2\{\square\leftleadsto u\} ).
    \]
    
    Thus, we conclude that $r^\scrM_{\gamma'}(\mi{ndet}(s_1,s_2)) = r^\scrM_\gamma(\mi{ndet}(s_1,s_2)\{\square \leftleadsto u\})$.
    
    \item[Case $t = \square$:]\
    
    By definition, we have $r^\scrM_{\gamma'}(\square) = \gamma'(\square) = r^\scrM_\gamma(u)$.
    
    By definition, we have $\square\{\square \leftleadsto u\} = u$.
    
    Thus, we conclude that $r^\scrM_{\gamma'}(\square) = r^\scrM_{\gamma}(\square\{\square \leftleadsto u \})$.
    
    \item[Case $t = \square'$ \textnormal{where $\square' \neq \square$}:]\
    
    By definition, we have $r^\scrM_{\gamma'}(\square') = \gamma'(\square') = \gamma(\square')$.
    
    By definition, we have $\square'\{\square \leftleadsto u\} = \square'$, and also $r^\scrM_\gamma(\square') = \gamma(\square')$.
    
    Thus, we conclude that $r^\scrM_{\gamma'}(\square') = r^\scrM_{\gamma}(\square'\{\square \leftleadsto u \})$.
  \end{description}
\end{proof}

We can now present the proof of \cref{The:Appendix:Sound}.

\begin{proof}[Proof of \cref{The:Appendix:Sound}]
  Fix $E,F \in \m{RegExp}^\infty(\calF,\calK)$ such that $\scrT_\calK\interp{E} = \scrT_\calK\interp{F}$.
  Fix an interpretation $\scrM = (\calM, \interp{\cdot}^\scrM)$
  and a hole-valuation $\gamma : \calK \to \calM$.
  The theorem requires us to show that $\scrM_\gamma\interp{E} = \scrM_\gamma\interp{F}$.
  We claim that for any expression $E$, it holds that $\scrM_\gamma\interp{E} = r^\scrM_\gamma( \scrT_\calK\interp{E} )$.
  If this is true, we can conclude the proof by
  \[
  \scrM_\gamma\interp{E} = r^\scrM_\gamma( \scrT_\calK\interp{E} ) = r^\scrM_\gamma( \scrT_\calK\interp{F} ) = \scrM_\gamma\interp{F}.
  \]
  
  To prove the claim, we fix an expression $E$ and proceed by induction on the structure of $E$.
    \begin{description}[labelindent=\parindent]
    \item[Case $E = \varepsilon$:]\
    
    By definition, we have $\scrM_\gamma\interp{\varepsilon} = 1_M$.
    
    By definition, we have $\scrT_\calK\interp{\varepsilon} = \varepsilon$, and also $r^\scrM_\gamma( \varepsilon ) = 1_M$.
    
    Thus, we conclude that $\scrM_\gamma\interp{\varepsilon} = r^\scrM_\gamma(\scrT_\calK\interp{\varepsilon})$.
    
    \item[Case {$E = \mi{seq}[\m{act}](E_1)$}:]\
    
    By induction hypothesis, we have $\scrM_\gamma\interp{E_1} = r^\scrM_\gamma(\scrT_\calK\interp{E_1})$.
    
    By definition, we have $\scrM_\gamma\interp{\mi{seq}[\m{act}](E_1)} = \interp{\m{act}}^\scrM \otimes_M \scrM_\gamma\interp{E_1}$.
    
    By definition, we have $\scrT_\calK\interp{\mi{seq}[\m{act}](E_1)} =  \mi{seq}[\m{act}](s_1) $, where $s_1 \defeq \scrT_\calK\interp{E_1}$,
    and also $r^\scrM_\gamma(\mi{seq}[\m{act}]( s_1 )) =  \interp{\m{act}}^\scrM \otimes_M r^\scrM_\gamma(s_1)$.
    
    Thus, we conclude that $\scrM_\gamma\interp{\mi{seq}[\m{act}](E_1)} = r^\scrM_\gamma(\scrT_\calK\interp{\mi{seq}[\m{act}](E_1)})$.
    
    \item[Case {$E = \mi{cond}[\varphi](E_1,E_2)$}:]\
    
    By induction hypothesis on $E_1$ and $E_2$, respectively, we have $\scrM_\gamma\interp{E_1} = r^\scrM_\gamma(\scrT_\calK\interp{E_1})$ and $\scrM_\gamma\interp{E_2} = r^\scrM_\gamma(\scrT_\calK\interp{E_2})$, respectively.
    
    By definition, we have $\scrM_\gamma\interp{\mi{cond}[\varphi](E_1,E_2)} = \scrM_\gamma\interp{E_1} \gcho{\varphi}_M \scrM_\gamma\interp{E_2}$.
    
    By definition, we have $\scrT_\calK\interp{\mi{cond}[\varphi](E_1,E_2)} =  \mi{cond}[\varphi](s_1,  s_2)$,  where $s_1 \defeq \scrT_\calK\interp{E_1}$, $s_2 \defeq \scrT_\calK\interp{E_1}$,  and also
    $r^\scrM_\gamma(\mi{cond}[\varphi](s_1,s_2)) = r^\scrM_\gamma(s_1) \gcho{\varphi}_M r^\scrM_\gamma(s_2)$.
    
    Thus, we conclude that $\scrM_\gamma\interp{\mi{cond}[\varphi](E_1,E_2)} = r^\scrM_\gamma(\scrT_\calK\interp{\mi{cond}[\varphi](E_1,E_2)})$.
    
    \item[Case {$E = \mi{prob}[p](E_1,E_2)$}:]\
    
    By induction hypothesis on $E_1$ and $E_2$, respectively, we have $\scrM_\gamma\interp{E_1} = r^\scrM_\gamma(\scrT_\calK\interp{E_1})$ and $\scrM_\gamma\interp{E_2} = r^\scrM_\gamma(\scrT_\calK\interp{E_2})$, respectively.
    
    By definition, we have $\scrM_\gamma\interp{\mi{prob}[p](E_1,E_2)} = \scrM_\gamma\interp{E_1} \pcho{p}_M \scrM_\gamma\interp{E_2}$.
    
    By definition, we have $\scrT_\calK\interp{\mi{prob}[p](E_1,E_2)} =  \mi{prob}[p](s_1,  s_2)$,  where $s_1 \defeq \scrT_\calK\interp{E_1}$, $s_2 \defeq \scrT_\calK\interp{E_1}$,  and also
    $r^\scrM_\gamma(\mi{prob}[p](s_1,s_2)) = r^\scrM_\gamma(s_1) \pcho{p}_M r^\scrM_\gamma(s_2)$.
    
    Thus, we conclude that $\scrM_\gamma\interp{\mi{prob}[p](E_1,E_2)} = r^\scrM_\gamma(\scrT_\calK\interp{\mi{prob}[p](E_1,E_2)})$.
    
    \item[Case {$E = \mi{ndet}(E_1,E_2)$}:]\
    
    By induction hypothesis on $E_1$ and $E_2$, respectively, we have $\scrM_\gamma\interp{E_1} = r^\scrM_\gamma(\scrT_\calK\interp{E_1})$ and $\scrM_\gamma\interp{E_2} = r^\scrM_\gamma(\scrT_\calK\interp{E_2})$, respectively.
    
    By definition, we have $\scrM_\gamma\interp{\mi{ndet}(E_1,E_2)} = \scrM_\gamma\interp{E_1} \dashcup_M \scrM_\gamma\interp{E_2}$.
    
    By definition, we have $\scrT_\calK\interp{\mi{ndet}(E_1,E_2)} =  \mi{ndet}(s_1,  s_2)$,  where $s_1 \defeq \scrT_\calK\interp{E_1}$, $s_2 \defeq \scrT_\calK\interp{E_1}$,  and also
    $r^\scrM_\gamma(\mi{ndet}(s_1,s_2)) = r^\scrM_\gamma(s_1) \dashcup_M r^\scrM_\gamma(s_2)$.
    
    Thus, we conclude that $\scrM_\gamma\interp{\mi{ndet}(E_1,E_2)} = r^\scrM_\gamma(\scrT_\calK\interp{\mi{ndet}(E_1,E_2)})$.

    \item[Case $E = \square$ \textnormal{for some $\square \in \calK$}:]\
    
    By definition, we have $\scrM_\gamma\interp{\square} = \gamma(\square)$.
    
    By definition, we have $\scrT_\calK\interp{\square} = \square$, and also $r^\scrM_\gamma(\square) = \gamma(\square)$.
    
    Thus, we conclude that $\scrM_\gamma\interp{\square} = r^\scrM_\gamma(\scrT_\calK\interp{\square})$.
    
    \item[Case $E = E_1 \cdot_\square E_2$:]\
    
    By induction hypothesis on $E_2$, we have $\scrM_\gamma\interp{E_2} = r^\scrM_\gamma(\scrT_\calK\interp{E_2})$.
    
    Let $\gamma' \defeq \gamma[\square \mapsto \scrM_\gamma\interp{E_2}]$.
    Then $\gamma' : (\calK \cup \{\square\}) \to \calM$.
    
    By induction hypothesis on $E_1$ (which is an expression in $\m{RegExp}^\infty(\calF,\calK \cup \{ \square \})$), we have $\scrM_{\gamma'}\interp{E_1} = r^\scrM_{\gamma'}(\scrT_{\calK \cup \{\square\}}\interp{E_1})$.
    
    By definition, we have $\scrM_\gamma\interp{E_1 \cdot_\square E_2} = \scrM_{\gamma'}\interp{E_1}$.
    
    By definition, we have $\scrT_\calK\interp{E_1 \cdot_\square E_2} = \scrT_{\calK \cup \{\square\}}\interp{E_1} \{ \square \leftleadsto \scrT_\calK\interp{E_2} \}$.
    
    By \cref{Lem:Appendix:Subst}, we know that for any $s \in \m{Tree}^\infty(\calF,\calK\cup\{\square\})$, it holds that $r^\scrM_{\gamma'}(s) = r_\gamma(s\{ \square \leftleadsto \scrT_\calK\interp{E_2} \})$.
    Therefore, we have
    \[
    r^\scrM_\gamma( \scrT_{\calK \cup \{\square\}}\interp{E_1} \{ \square \leftleadsto \scrT_\calK\interp{E_2} \} ) = r^\scrM_{\gamma'}( \scrT_{\calK \cup \{\square\}}\interp{E_1} ).
    \]
    
    Thus, we conclude that $\scrM_\gamma\interp{E_1 \cdot_\square E_2} = r^\scrM_\gamma(\scrT_\calK\interp{E_1 \cdot_\square E_2})$.
    
    \item[Case $E = (E_1)^{\infty_\square}$:]\
    
    By definition, we have $\scrM_\gamma\interp{(E_1)^{\infty_\square}} = \lfp_{\bot_M}^{\aord_M} \lambda X.\; \scrM_{\gamma[\square \mapsto X]} \interp{E_1}$.
    
    By definition, we have $\scrT_\calK\interp{(E_1)^{\infty_\square}} = \lfp_{  \infholder  }^{\sqsubseteq_\scrT} \lambda X.\; ( \scrT_{\calK \cup \{\square\}}\interp{E_1} \{ \square \leftleadsto X \} )$.
    
    Let $F \defeq \lambda X.\; \scrM_{\gamma[\square \mapsto X]} \interp{E_1}$
    and $G \defeq \lambda X.\; ( \scrT_{\calK \cup \{\square\}}\interp{E_1} \{ \square \leftleadsto X \} )$.
    By the Kleene fixed-point theorem,
    it suffices to show
    that for any $n \ge 0$, $F^n(\bot_M) = r^\scrM_\gamma(G^n(\infholder))$.
    We prove the claim by induction on $n$.
    \begin{description}
      \item[When $n=0$:]\
      
      By definition, we have $F^0(\bot_M) = \bot_M$.
      
      By definition, we have $G^0(\infholder) = \infholder $, and also $r^\scrM_\gamma(\infholder) = \bot_M$.
      
      Thus, we conclude this that $F^0(\bot_M) = r^\scrM_\gamma(G^0(\infholder))$.
      
      \item[When $n=k+1$:]\
      
      By induction hypothesis (on $n$), we have $F^k(\bot_M) = r^\scrM_\gamma(G^k(\infholder))$.
      
      Let $\gamma' \defeq \gamma[\square \mapsto F^k(\bot_M)]$.
      Then $\gamma' : (\calK \cup \{ \square \}) \to \calM$.
      
      By definition, we have $F^n(\bot_M) = \scrM_{\gamma'}\interp{E_1}$.
      
      By definition, we have $G^n(\infholder) = \scrT_{\calK \cup \{\square\}}\interp{E_1} \{ \square \gets G^k(\infholder) \}$, and also by \cref{Lem:Appendix:Subst},
      we have
      $
      r^\scrM_\gamma( \scrT_{\calK \cup \{\square\}}\interp{E_1} \{ \square \gets G^k(\infholder) \} ) = r^\scrM_{\gamma'}( \scrT_{\calK \cup \{\square\}}\interp{E_1} ).
      $
      
      By induction hypothesis (on $E_1$), we have $\scrM_{\gamma'}\interp{E_1} = r^\scrM_{\gamma'}(\scrT_{\calK\cup\{\square\}}\interp{E_1})$.
      
      Thus, we conclude that $F^n(\bot_M) = r^\scrM_\gamma(G^n(\infholder))$.
    \end{description}
  \end{description}
\end{proof}


\section{Probabilistic Programs and Control-flow Hyper-graphs}
\label{Se:Appendix:CFHG}

In this article, we use an imperative probabilistic programming
language developed by \citet{PLDI:WHR18}.
The semantics of the language is parameterized by a set $\calA$ of \emph{data actions} (such as assignment)
and a set $\calL$ of \emph{logical conditions} (such as comparison).
The language uses control-flow hyper-graphs (CFHGs) to encode unstructured control-flow
and multiple kinds of confluences.
A \emph{hyper-graph} $H$ is a quadruple $\tuple{V,E,v^\m{entry},v^\m{exit}}$, where
$V$ is a set of nodes, $E$ is a set of hyper-edges, and $v^\m{entry},v^\m{exit} \in V$ are
distinguished entry and exit nodes, respectively.
Each \emph{hyper-edge} in $E$ is a pair $\tuple{x,(y_1,\cdots,y_k)}$, where $k \ge 1$ and $x,y_1,\cdots,y_k \in V$
(and the order of $y_i$'s is significant).
We assume that $v^\m{entry}$ (resp., $v^\m{exit}$) does not have incoming (resp., outgoing) hyper-edges.

A \emph{probabilistic program} is then defined as a sequence of procedures $\{X_i \mapsto H_i\}_{i=1}^n$,
where the $i^\textit{th}$ procedure has name $X_i$ and CFHG $H_i = \tuple{V_i,E_i,v^\m{entry}_i,v^\m{exit}_i}$,
in which each node in $V_i \setminus \{ v^\m{exit}_i \}$ has exactly one outgoing hyper-edge.
We assume that $V_1,\cdots,V_n$ are pairwise disjoint.
Commands are attached to hyper-edges via a mapping $\mi{Cmd} : \bigcup_{i=1}^n E_i \to \m{Cmd}$,
where each command in $\m{Cmd}$ takes one of the following forms: $\mi{seq}[\m{act}]$ for a data action $\m{act} \in \calA$,
$\mi{cond}[\varphi]$ for a logical condition $\varphi \in \calL$, $\mi{prob}[p]$ for a number $p \in [0,1]$,
$\mi{ndet}$, or $\mi{call}[X_i]$ for a procedure $X_i$.

\begin{example}[Probabilistic Boolean programs]\label{Exa:Appendix:ProbabilisticBooleanPrograms}
  \cref{Fi:Appendix:ExampleProgramWithLoopsAndUnstructuredControlFlow}(a) demonstrates a probabilistic Boolean program, i.e., a probabilistic program
  with Boolean-valued variables.
  The grammar below defines data actions and logical conditions for Boolean programs, where $x$ is a program variable and $p \in [0,1]$.
  The data action ``$x \sim \cn{Ber}(p)$'' draws a random value from a Bernoulli distribution with mean $p$, i.e., the action
  assigns \kw{true} to $x$ with probability $p$, and otherwise assigns \kw{false}.
  {\begin{align*}
    \m{act} \in \calA & \Coloneqq x \coloneqq \varphi \mid x \sim \cn{Ber}(p) \mid \kw{skip} &
    \varphi \in \calL & \Coloneqq x \mid \kw{true} \mid \kw{false} \mid \neg \varphi \mid \varphi_1 \wedge \varphi_2 \mid \varphi_1 \vee \varphi_2
  \end{align*}}%
  \cref{Fi:Appendix:ExampleProgramWithLoopsAndUnstructuredControlFlow}(b) shows the CFHG of the program in \cref{Fi:Appendix:ExampleProgramWithLoopsAndUnstructuredControlFlow}(a), where
  $v_1$ is the entry node and $v_7$ is the exit node.
  There are three hyper-edges, each with two destination nodes in the CFHG:
  edge $\tuple{v_2, (v_3,v_1)}$ is associated with a conditional-branching command $\mi{cond}[b_1 \vee b_2]$,
  edge $\tuple{v_3, (v_7,v_4)}$ is associated with a probabilistic-branching command $\mi{prob}[0.1]$,
  and edge $\tuple{v4, (v_5,v_6)}$ is associated with a conditional-branching command $\mi{cond}[b_1]$.
  Note that \kw{return} (and \kw{goto}, etc.) are not data actions, and
  they are encoded as control-flow hyper-edges with a singleton-valued destination-node set.
\end{example}

\begin{figure}
\centering
\begin{subfigure}{0.27\textwidth}
\centering
\begin{small}
\begin{pseudo*}
  \kw{while} \kw{true} \kw{do} \\+
    $b_1 \sim \cn{Ber}(0.5)$; \\
    \kw{while} $b_1 \vee b_2$ \kw{do} \\+
      \kw{if} \kw{prob}(0.1) \kw{then} \kw{return} \kw{fi}; \\
      \kw{if} $b_1$ \kw{then} $b_1 \sim \cn{Ber}(0.2)$ \\
      \kw{else} $b_2 \sim \cn{Ber}(0.8)$ \\
      \kw{fi} \\-
    \kw{od} \\-
  \kw{od}
\end{pseudo*}
\end{small}
\caption{An example program}
\end{subfigure}
\begin{subfigure}{0.35\textwidth}
\centering
\resizebox{0.8\textwidth}{!}{%
\begin{tikzpicture}[node/.style={circle,draw,minimum size=20pt,font=\small},conf/.style={circle,fill,inner sep=0pt,outer sep=0pt,minimum size=0pt},every edge quotes/.style={font=\small},node distance=0.5cm]
  \node[node] (a) {$v_1$};
  \node[node] (b) [below=of a] {$v_2$};
  \node[conf] (cond) [below=of b] {};
  \node[node] (c) [below=of cond] {$v_3$};
  \node[conf] (prob) [below=of c] {};
  \node[node] (g) [left=of prob] {$v_7$};
  \node[node] (d) [below=of prob] {$v_4$};
  \node[conf] (phi) [right=1.5cm of d] {};
  \node[node] (e) [above right=0.6cm and 0.8cm of phi] {$v_5$};
  \node[node] (f) [below right=0.1cm and 0.8cm of phi] {$v_6$};
  
  \draw (a.south) edge["$\!\mi{seq}\sq{b_1{\sim}\cn{Ber}(0.5)}$",right,->] (b.north);
  \draw (b.south) edge["$\!\mi{cond}\sq{b_1{\vee}b_2}$",right] (cond.north);
  \draw (cond.west) edge["F",->,out=180,in=180] (a.west);
  \draw (cond.south) edge["T",->] (c.north);
  \draw (c.south) edge["$\!\mi{prob}\sq{0.1}$",right] (prob.north);
  \draw (prob.west) edge["T",->] (g.east);
  \draw (prob.south) edge["F",->] (d.north);
  \draw (d.east) edge["$\mi{cond}\sq{b_1}$",above] (phi.west);
  \draw (phi.east) edge["T",->] (e.west);
  \draw (phi.east) edge["F",->] (f.west);
  \draw (e.north) edge["$\mi{seq}\sq{b_1{\sim}\cn{Ber}(0.2)}$",->,in=30,out=45,sloped,below] (b.east);
  \draw (f.east) edge["$\mi{seq}\sq{b_2{\sim}\cn{Ber}(0.8)}$",->,sloped,above,in=30,out=40] (b.north east);
\end{tikzpicture}}
\caption{Control-flow hyper-graph}
\end{subfigure}
\begin{subfigure}{0.36\textwidth}
\centering
\resizebox{0.7\textwidth}{!}{%
\begin{tikzpicture}[op/.style={rectangle,draw,inner sep=3pt},node distance=0.2cm,font=\small]
  \node[op] (outer) {$\mu Z_1$};
  \node (a) [below=of outer] {$\mi{seq}[b_1{\sim}\cn{Ber}(0.5)]$};
  \node[op] (inner) [below=of a] {$\mu Z_2$};
  \node (b) [below=of inner] {$\mi{cond}[b_1{\vee}b_2]$};
  \node (c) [below left=0.2cm and -1.2cm of b] {$\mi{prob}[0.1]$};
  \node (b_to_outer) [below right=0.2cm and -0.5cm of b] {$Z_1$};
  \node (d) [below right=0.2cm and -1.2cm of c] {$\mi{cond}[b_1]$};
  \node (c_to_exit) [below left=0.2cm and -0.2cm of c] {$\varepsilon$};
  \node (e) [below left=0.2cm and -1.4cm of d] {$\mi{seq}[b_1{\sim}\cn{Ber}(0.2)]$};
  \node (f) [below right=0.2cm and 0cm of d] {$\mi{seq}[b_2{\sim}\cn{Ber}(0.8)]$};
  \node (e_to_inner) [below=of e] {$Z_2$};
  \node (f_to_inner) [below=of f] {$Z_2$};
  
  \draw (outer.south) edge[->] (a.north);
  \draw (a.south) edge[->] (inner.north);
  \draw (inner.south) edge[->] (b.north);
  \draw (b.south) edge[->] (c.north);
  \draw (b.south) ++ (5pt,0) edge[->] (b_to_outer.north west);
  \draw (c.south) ++ (-5pt,0) edge[->] (c_to_exit.north east);
  \draw (c.south) edge[->] (d.north);
  \draw (d.south) ++ (-3pt,0) edge[->] (e.north);
  \draw (d.south) ++ (3pt,0) edge[->] (f.north);
  \draw (e.south) edge[->] (e_to_inner.north);
  \draw (f.south) edge[->] (f_to_inner.north);
\end{tikzpicture}}
\vspace{-0.5em}
\caption{Regular infinite-tree expression AST}
\end{subfigure}
\begin{subfigure}{\textwidth}
\centering
\begin{small}
\[
\mu Z_1.\,
\mi{seq}[\m{act}_1]\Biggl(
  \mu Z_2.\,
  \mi{cond}[b_1{\vee}b_2]\biggl(
    \mi{prob}[0.1]\Bigl(
      \varepsilon,
      \mi{cond}[b_1]\bigl(
        \mi{seq}[\m{act}_2](
          Z_2
        ),
        \mi{seq}[\m{act}_3](
          Z_2
        )
      \bigr)
    \Bigr),
    Z_1
  \biggr)
\Biggr)
\]
\end{small}
\vspace{-1em}
\caption{Regular infinite-tree expression (where $\m{act}_1 \defeq b_1{\sim}\cn{Ber}(0.5)$, $\m{act}_2 \defeq b_1{\sim}\cn{Ber}(0.2)$, $\m{act}_3 \defeq b_2{\sim}\cn{Ber}(0.8)$)}
\end{subfigure}
\caption{An example program with nested loops and unstructured control-flow.}
\vspace{-1.3em}
\label{Fi:Appendix:ExampleProgramWithLoopsAndUnstructuredControlFlow}
\end{figure}

To represent the control-flow hyper-path of a probabilistic program,
we define the following ranked alphabet, where $\calA$ is the set of data actions and $\calL$ is the set of logical conditions, both introduced in \cref{Exa:Appendix:ProbabilisticBooleanPrograms}:
\begin{align*}
\calF & \defeq \{\varepsilon \} \cup \{ \mi{seq}[\m{act}]() \mid \m{act} \in \calA \}
\cup \{ \mi{cond}[\varphi](,) \mid \varphi \in \calL \} \cup \{ \mi{prob}[p](,) \mid p \in [0,1] \} \cup \{ \mi{ndet}(,) \} \\
 & \quad 
\cup \{ \mi{call}[X_i]() \mid \text{$X_i$ is a procedure} \}.
\end{align*}

\Omit{
\begin{remark}
  We omit the formal development of hyper-paths (as \textbf{possibly-infinite trees}) but include that in \Cref{Se:Appendix:TheoryOfHyperPaths}.
  %
  %
  Following a well-studied theory of infinite trees~\cite{JTCS:Courcelle83,kn:Courcelle90},
  we develop an interpretation that maps a regular infinite-tree expression to a possibly-infinite hyper-path,
  %
  and show that
  if two regular infinite-tree expressions have the same hyper-path interpretation,
  their respective algebraic interpretations (which we develop in \cref{Se:Soundness})
  are also the same.
  %
\end{remark}
}


We now define the correspondence between a regular infinite-tree expression and the CFHG of a probabilistic programs.
Recall that a 
CFHG
$H$ is a quadruple
$\tuple{V,E,v^\m{entry},v^\m{exit}}$, where each hyper-edge $e \in E$ is associated
with a command $\mi{Cmd}(e) \in \calF$.
%
%
Let $Z_v$ for each $v \in V$ be a free variable.
We then treat each hyper-edge $e = \tuple{v,(u_1,\cdots,u_k)} \in E$
as an \emph{equation} $Z_v = \mi{Cmd}(e)\; (Z_{u_1},\cdots,Z_{u_k})$,
where the right-hand side is a regular infinite-tree expression in $\m{RegExp}^\infty(\calF,\{Z_v \mid v \in V \})$ by the rule \textsc{(Node)}.
Thus, we can extract the following equation system from a
CFHG:
\begin{align*}
  Z_v & = \mi{Cmd}(e)\; (Z_{u_1},\cdots,Z_{u_k}) \quad \text{for $e = \tuple{v,(u_1,\cdots,u_k)} \in E$}, &
  Z_{v^\m{exit}} & = \varepsilon,
\end{align*}
where each free variable $Z_v$ appears exactly once as a left-hand side, because
every non-exit node has exactly one outgoing
hyper-edge.

We then apply Beki{\'c}'s theorem~\cite{kn:Bekic84} to compute a
regular infinite-tree expression $E_v \in \m{RegExp}^\infty(\calF,\emptyset)$ for each
left-hand side $Z_v$ ($v \in V$) as a closed-form solution to the equation system.
%
Informally, Beki{\'c}'s theorem allows splitting a mutual recursion into \emph{univariate} recursions;
in our setting, this amounts to treating the whole equation system extracted from a CFHG as a mutual recursion over
$\{Z_v \mid v \in V \}$ and obtaining for each left-hand side $Z_v$, an expression where all free variables are bound by univariate recursions, i.e., $\mu$-binders.
%
We present an implementation of Beki{\'c}'s theorem based on Gaussian elimination in \cref{Alg:GaussianElimination}.
%
\Omit{
The front-solving phase eliminates variables in $\{Z_i \mid i = 1,\cdots, n\}$ one-by-one,
in the sense that after the $i^{\textit{th}}$ step, the free variable $Z_i$ does \emph{not} appear in the
right-hand side of any equation $Z_j = R_j$ for $j \ge i$.
The back-solving phase then eliminates all variable occurrences from right-hand sides by,
at the $i^\textit{th}$ step, substituting the closed-form $R_i$ for the free variable $Z_i$ in the
equation $Z_j = R_j$ for $j < i$.
The key ingredient of this algorithm is the loop-solving step, which solves
a single recursive equation $Z_i = R_i$, where $Z_i$ appears in $R_i$, using the
$\mu$-binder and takes $\mu Z_i.\, R_i$ to be the solution.
}

\begin{algorithm}[tb!]
\caption{Beki{\'c}'s theorem based on Gaussian elimination}
\label{Alg:GaussianElimination}
\small
\begin{spacing}{0.8}
\begin{algorithmic}
  \Require An equation system $\{ Z_i = R_i \}_{i=1}^n$ where each $R_i \in \m{RegExp}^\infty(\calF, \{ Z_i \mid i = 1,\cdots, n\})$
  \Ensure A closed-form solution $\{Z_i = E_i\}_{i=1}^n$ where each $E_i \in \m{RegExp}^\infty(\calF,\emptyset)$
  \For{$i \gets 1$ to $n$} \Comment{Front-solving}
    \If{$Z_i$ appears in $R_i$}
      $R_i \gets \mu Z_i.\, R_i$ \Comment{Loop-solving}
    \EndIf
    \ForEach{$j > i$ such that $Z_i$ appears in $R_j$}
      $R_j \gets R_j \dplus_{Z_i} R_i$
    \EndFor
  \EndFor
  \For{$i \gets n$ to $2$} \Comment{Back-solving}
    \ForEach{$j < i$ such that $Z_i$ appears in $R_j$}
      $R_j \gets R_j \dplus_{Z_i} R_i$
    \EndFor
  \EndFor
  \State \Return $\{Z_i = R_i\}_{i=1}^n$
\end{algorithmic}
\end{spacing}
\end{algorithm}

\section{Proofs for Newton's method with $\omega$PMAs}
\label{Se:Appendix:Proofs}

\begin{definition}[\cref{De:RegularHyperPathDifferential}]
  Let $f(\vec{X}) \in \m{RegExp}^\infty(\calF^\alpha,\calK)$.
  The \emph{differential} of $f(\vec{X})$ with respect to $X_j$ at $\vec{\nu} \in \calM^n$ under a hole-valuation $\gamma : \calK \to \calM$,
  denoted by $\calD_{X_j} f_\gamma|_{\vec{\nu}}(\vec{Y})$, is defined as follows:
  {\footnotesize\[
  \calD_{X_j} f_\gamma|_{\vec{\nu}}(\vec{Y}) \defeq
  \begin{dcases*}
    \azero_M & if $f = c \in \calM$ \\
    \mi{seq}[c]( \calD_{X_j} g_\gamma|_{\vec{\nu}}(\vec{Y}) ) & if $f = \mi{seq}[c](g)$ \\
    \mi{seq}[\nu_k]( \calD_{X_j} g_\gamma|_{\vec{\nu}}( \vec{Y} ) ) & if $f = \mi{call}[X_k](g)$ and $k \neq j$ \\
    \oplus( \mi{call}_\m{lin}[Y_j;g_\gamma(\vec{\nu})] , \mi{seq}[\nu_j]( \calD_{X_j} g_\gamma|_{\vec{\nu}}(\vec{Y}) ) ) & if $f = \mi{call}[X_j](g)$ \\
    \mi{cond}[\varphi]( \calD_{X_j} g_\gamma|_{\vec{\nu}}(\vec{Y}) , \calD_{X_j} h_\gamma|_{\vec{\nu}}(\vec{Y}) ) & if $f = \mi{cond}[\varphi](g,h)$ \\
    \mi{prob}[p]( \calD_{X_j} g_\gamma|_{\vec{\nu}}(\vec{Y}) , \calD_{X_j} h_\gamma|_{\vec{\nu}}(\vec{Y}) ) & if $f = \mi{prob}[p](g,h)$ \\
    \ominus( \mi{ndet}( {\oplus }(g_\gamma(\vec{\nu}), \calD_{X_j} g_\gamma|_{\vec{\nu}}(\vec{Y}) ), {\oplus}(h_\gamma(\vec{\nu})  , \calD_{X_j} h_\gamma|_{\vec{\nu}}(\vec{Y}) ) ), f_\gamma(\vec{\nu}) ) & if $f = \mi{ndet}(g,h)$ \\
    \square & if $f = \square \in \calK$ \\
    \calD_{X_j} g_{\gamma[\square \mapsto h_\gamma(\vec{\nu})]}|_{\vec{\nu}}(\vec{Y}) \cdot_{\square} \calD_{X_j} h_\gamma|_{\vec{\nu}}(\vec{Y}) & if $f = g \cdot_{\square} h$ \\
    (\calD_{X_j} g_{\gamma[\square \mapsto f_\gamma(\vec{\nu})]}|_{\vec{\nu}}(\vec{Y}) )^{\infty_\square} & if $f = g^{\infty_\square}$
  \end{dcases*}
  .
  \]}
\end{definition}

\begin{figure}
\centering  
\begin{mathpar}\small
  \Rule{Const}
  { c \in \calM }
  { \Gamma \vdash c \Downarrow c \mid \emptyset }  
  \and
  \Rule{Seq}
  { c \in \calM \\ \Gamma \vdash E \Downarrow F \mid \Theta }
  { \Gamma \vdash \mi{seq}[c](E) \Downarrow \mi{seq}[c](F) \mid \Theta }
  \and
  \Rule{Cond}
  { \Gamma \vdash E_1 \Downarrow F_1 \mid \Theta_1 \\
    \Gamma \vdash E_2 \Downarrow F_2 \mid \Theta_2
  }
  { \Gamma \vdash \mi{cond}[\varphi](E_1,E_2) \Downarrow \mi{cond}[\varphi](F_1,F_2) \mid \Theta_1 \cup \Theta_2 }
  \and
  \Rule{Prob}
  { \Gamma \vdash E_1 \Downarrow F_1 \mid \Theta_1 \\
    \Gamma \vdash E_2 \Downarrow F_2 \mid \Theta_2
  }
  { \Gamma \vdash \mi{prob}[p](E_1,E_2) \Downarrow \mi{prob}[p](F_1,F_2) \mid \Theta_1 \cup \Theta_2 }
  \and
  \Rule{Ndet}
  { \Gamma \vdash E_1 \Downarrow F_1 \mid \Theta_1 \\
    \Gamma \vdash E_2 \Downarrow F_2 \mid \Theta_2
  }
  { \Gamma \vdash \mi{ndet}(E_1,E_2) \Downarrow \mi{ndet}(F_1,F_2) \mid \Theta_1 \cup \Theta_2 }
  \and
  \Rule{Hole}
  { }
  { \Gamma \vdash \square \Downarrow \Gamma(\square) \mid \emptyset }
  \and
  \Rule{Concatenation}
  { \Gamma[\square \mapsto \square] \vdash E_1 \Downarrow F_1 \mid \Theta_1 \\
    \Gamma \vdash E_2 \Downarrow F_2 \mid \Theta_2
  }
  { \Gamma \vdash E_1 \cdot_{\square} E_2 \Downarrow F_1 \cdot_{\square} F_2 \mid \Theta_1 \cup \Theta_2 }
  \and
  \Rule{Closure}
  { Z~\m{fresh} \\ \Gamma[ \square \mapsto Z ] \vdash E \Downarrow F \mid \Theta }
  { \Gamma \vdash E^{\infty_\square} \Downarrow Z \mid \Theta \cup \{ Z = F \}  }
  \and
  \Rule{Add}
  { \Gamma \vdash E_1 \Downarrow F_1 \mid \Theta_1 \quad
    \Gamma \vdash E_2 \Downarrow F_1 \mid \Theta_2
  }
  { \Gamma \vdash {\oplus}(E_1,E_2) \Downarrow {\oplus}(F_1,F_2) \mid \Theta_1 \cup \Theta_2 }
  \and
  \Rule{Sub}
  { \Gamma \vdash E_1 \Downarrow F_1 \mid \Theta_1 \quad
    \Gamma \vdash E_2 \Downarrow F_1 \mid \Theta_2
  }
  { \Gamma \vdash {\ominus}(E_1,E_2) \Downarrow {\ominus}(F_1,F_2) \mid \Theta_1 \cup \Theta_2 }
  \and
  \Rule{Call-Lin}
  { c \in \calM
  }
  { \Gamma \vdash \mi{call}_\m{lin}[Y_i;c] \Downarrow \mi{call}_\m{lin}[Y_i;c] \mid \emptyset }
\end{mathpar}
\caption{Rules for extracting a closure-free linear regular infinite-tree expression and an associated equation system from a linear regular infinite-tree expression.}
\label{Fi:Appendix:ConstraintExtraction}
\end{figure}

\begin{lemma*}[\cref{Lem:EquivalentToAlgebraicExpressions}]
  For any expression $E \in \m{RegExp}^\infty(\calF,\calK)$, there exists an expression $E' \in \m{RegExp}^\infty(\calF^\alpha,\calK)$ ,
  such that for any $\gamma : \calK \to \calM$ and $\vec{\nu} \in \calM^n$, it holds that $\scrM_\gamma\interp{E}(\vec{\nu}) = \calM_\gamma\interp{E'}(\vec{\nu})$.
\end{lemma*}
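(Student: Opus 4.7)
The plan is to define a translation $\tau : \m{RegExp}^\infty(\calF,\calK) \to \m{RegExp}^\infty(\calF^\alpha,\calK)$ by structural induction on the regular infinite-tree expression $E$, and then verify by a parallel structural induction that $\scrM_\gamma\interp{E}(\vec{\nu}) = \calM_\gamma\interp{\tau(E)}(\vec{\nu})$ for every $\gamma$ and $\vec{\nu}$. The only symbols that differ between $\calF$ and $\calF^\alpha$ are (i) the nullary $\varepsilon$ (present only in $\calF$) versus the algebra constants $c \in \calM$ (present only in $\calF^\alpha$), and (ii) $\mi{seq}[\m{act}]$ for data actions $\m{act} \in \calA$ versus $\mi{seq}[c]$ for $c \in \calM$. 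All other constructors ($\mi{cond}[\varphi]$, $\mi{prob}[p]$, $\mi{ndet}$, $\mi{call}[X_i]$, free variables, concatenation, and $\mu$-binders) are common to both alphabets.

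Accordingly, $\tau$ will act as the identity on the common constructors (traversing their children recursively), will send $\varepsilon$ to the leaf $\aone_M$, and will rewrite each $\mi{seq}[\m{act}](E_1)$ into $\mi{seq}[\interp{\m{act}}^\scrM](\tau(E_1))$, using the fact that $\interp{\cdot}^\scrM$ already maps every $\m{act} \in \calA$ to an element of $\calM$. Because $\tau$ does not rename any free variables or rearrange binders, it commutes with the substitution operator used in the semantics of concatenation ($\cdot \dplus_Z \cdot$) and with the valuation-extension used in the semantics of $\mu$-binders.

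The semantic equivalence then follows by induction on $E$. The $\varepsilon$ and leaf-variable cases are immediate from $\scrM_\gamma\interp{\varepsilon}(\vec{\nu}) = \aone_M = \calM_\gamma\interp{\aone_M}(\vec{\nu})$ and $\gamma(Z) = \gamma(Z)$. The cases for $\mi{cond}[\varphi]$, $\mi{prob}[p]$, $\mi{ndet}$, and $\mi{call}[X_i]$ are purely mechanical: both interpretations peel off the same algebraic operator ($\gcho{\varphi}_M$, $\pcho{p}_M$, $\dashcup_M$, or $\nu_i \otimes_M \cdot$) and recurse, so the claim reduces to the induction hypothesis on the children. The $\mi{seq}[\m{act}](E_1)$ case is the only one that actually uses the mapping $\interp{\cdot}^\scrM$, and it reduces to the identity $\interp{\m{act}}^\scrM \otimes_M \scrM_\gamma\interp{E_1}(\vec{\nu}) = \interp{\m{act}}^\scrM \otimes_M \calM_\gamma\interp{\tau(E_1)}(\vec{\nu})$.

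The main technical care, rather than an obstacle, will lie in the two binder-related cases. For $E_1 \dplus_Z E_2$, I first apply the induction hypothesis to $E_2$ to equate the interpretations used to extend $\gamma$, then apply it to $E_1$ under the extended valuation. For $\mu Z.\, E_0$, both interpretations are least fixed points of the functionals $\Phi \defeq \lambda \theta.\; \scrM_{\gamma[Z \mapsto \theta]}\interp{E_0}(\vec{\nu})$ and $\Phi' \defeq \lambda \theta.\; \calM_{\gamma[Z \mapsto \theta]}\interp{\tau(E_0)}(\vec{\nu})$; the induction hypothesis on $E_0$, applied parametrically in $\theta$, gives $\Phi = \Phi'$ pointwise, and $\omega$-continuity of $\calM$ (\cref{De:OmegaContinuousPreMarkovAlgebra}) promotes this to equality of the least fixed points, completing the argument.
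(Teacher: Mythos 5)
Your proposal is correct and follows essentially the same route as the paper: a structural translation that is the identity on the shared constructors, sends $\varepsilon$ to the constant $\aone_M$, and replaces $\mi{seq}[\m{act}]$ by $\mi{seq}[\interp{\m{act}}^\scrM]$, with the equivalence verified by a parallel structural induction (the paper's proof records only these two non-trivial cases and leaves the rest, including the binder cases you spell out, as immediate). One minor remark: once the induction hypothesis gives pointwise equality of the two functionals in the $\mu$-binder case, equality of their least fixed points follows directly, so the appeal to $\omega$-continuity there is unnecessary.
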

\begin{proof}
  By induction on the structure of $E$.
  Below shows the only two non-trivial cases.
  \begin{description}[labelindent=\parindent]
    \item[Case $E = \varepsilon$:] We set $E'$ to $\aone_M$.
    \item[{Case $E = \mi{seq}[\m{act}](E_1)$:}] By induction hypothesis, there exists $E_1'$ such that $\scrM_\gamma\interp{E_1}(\vec{\nu}) = \calM_\gamma\interp{E_1'}(\vec{\nu})$. We then set $E'$ to $\mi{seq}[ \interp{\m{act}}^\scrM ](E_1')$.
  \end{description}
\end{proof}

\begin{lemma}\label{Lem:Appendix:ExtractionSound}
  If $E \in \m{RegExp}^\infty(\calF^\alpha_\m{lin}, \calK)$,
  $\dom(\Theta) = \calZ$,
  $\Gamma \vdash E \Downarrow F \mid \Theta$,
  $\gamma : \calK \to \calM$,
  $\eta : (\mathrm{ran}(\Gamma) \setminus \calK) \to \calM$,
  $\vec{\nu} \in \calM^n$,
  $\iota : \calZ \to \calM$ is the least solution of $\Theta$ with respect to $\gamma \cup \eta$ and $\vec{\nu}$,
  and $(\gamma \cup \eta) \circ \Gamma = \gamma$,
  then
  $\calM_{\gamma}\interp{E}(\vec{\nu}) = \calM_{\gamma \cup \eta \cup \iota}\interp{F}(\vec{\nu})$.
\end{lemma}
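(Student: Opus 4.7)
The proof proceeds by induction on the derivation of $\Gamma \vdash E \Downarrow F \mid \Theta$. The base cases \textsc{Const} and \textsc{Call-Lin} are immediate because $F = E$ contains no extracted variables and $\Theta$ is empty. For \textsc{Hole}, where $F = \Gamma(\square)$, the goal reduces to $\gamma(\square) = ((\gamma \cup \eta \cup \iota) \circ \Gamma)(\square)$, which follows from the hypothesis $(\gamma \cup \eta) \circ \Gamma = \gamma$ together with the freshness invariant $\mathrm{ran}(\Gamma) \cap \dom(\iota) = \emptyset$ ensured by the extraction rules.

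For the congruence rules \textsc{Seq}, \textsc{Cond}, \textsc{Prob}, \textsc{Ndet}, \textsc{Add}, and \textsc{Sub}, the derivation splits into independent subderivations with $\Theta = \Theta_1 \uplus \Theta_2$. Since the extracted variables in $\Theta_1$ and $\Theta_2$ are disjoint by the freshness discipline, the restriction of the least solution $\iota$ to each $\dom(\Theta_i)$ remains a least solution for $\Theta_i$; applying the induction hypothesis to each premise and using the definitional clauses of $\calM_\gamma\interp{\cdot}$ for the corresponding constructor delivers the conclusion. The \textsc{Concatenation} case $E = E_1 \cdot_\square E_2$ is a minor variant: after handling $E_2$ via the IH to obtain the value $v_2 \defeq \calM_\gamma\interp{E_2}(\vec{\nu})$, I invoke the IH on $E_1$ with the extended hole-valuation $\gamma[\square \mapsto v_2]$ and the extended map $\Gamma[\square \mapsto \square]$, verifying the compatibility condition for the new entry by computation.

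The main obstacle is the \textsc{Closure} rule, where $E = (E_0)^{\infty_\square}$, $F = Z$ is a fresh variable, and $\Theta = \Theta_0 \uplus \{Z = F_0\}$ with $\Gamma[\square \mapsto Z] \vdash E_0 \Downarrow F_0 \mid \Theta_0$. The left-hand side is the least fixed-point of $\Phi(t) \defeq \calM_{\gamma[\square \mapsto t]}\interp{E_0}(\vec{\nu})$, while the right-hand side is $\iota(Z)$. The plan is to apply the IH pointwise in $t$: for each $t \in \calM$, set $\iota_t(Z) \defeq t$ and extend $\iota_t$ to $\dom(\Theta_0)$ as the least solution of $\Theta_0$ given this choice of $\iota_t(Z)$; verify $(\gamma[\square \mapsto t] \cup \eta \cup \iota_t) \circ \Gamma[\square \mapsto Z] = \gamma[\square \mapsto t]$ using the freshness of $Z$ and the given hypothesis on $\Gamma$; and conclude $\Phi(t) = \calM_{\gamma \cup \eta \cup \iota_t}\interp{F_0}(\vec{\nu}) \eqdef \Psi(t)$. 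The hard step will be justifying that the least solution $\iota(Z)$ of $\Theta$ coincides with the least fixed-point of $\Psi$; this follows from Beki{\'c}'s theorem applied to the block decomposition of $\Theta$ into $\{Z = F_0\}$ and $\Theta_0$, which permits computing the least solution by first eliminating $\Theta_0$ as a $Z$-parameterized least solution and then taking a single-variable least fixed-point over $Z$. The $\omega$-continuity of $\calM_\gamma\interp{\cdot}$ in the hole-valuation, which in turn follows from the $\omega$-continuity of the operations of $\calM$, ensures that both $\Phi$ and $\Psi$ are $\omega$-continuous and that the pointwise equality lifts to equality of least fixed-points.
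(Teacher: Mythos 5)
Your proof is correct and its overall skeleton---induction on the derivation of $\Gamma \vdash E \Downarrow F \mid \Theta$, with the congruence cases handled by restricting the least solution $\iota$ to the disjoint blocks $\dom(\Theta_1)$ and $\dom(\Theta_2)$---matches the paper's proof exactly in every case except the crux, the \textsc{Closure}/\textsc{Mu} rule. There the two arguments diverge in how leastness is established. The paper works directly with the given least solution $\iota$ of the whole system: it sets $\theta \defeq \iota(Z)$, uses the induction hypothesis (with $Z \mapsto \theta$ placed in $\eta$) to show $\theta$ is a fixed point of $\Phi(t) \defeq \calM_{\gamma[\square\mapsto t]}\interp{E_0}(\vec{\nu})$, and then rules out any strictly smaller fixed point $\rho \sqsubset_M \theta$ by a contradiction: a smaller $\rho$ would induce (via $\omega$-continuity and a second application of the IH) a solution of $\Theta_0 \cup \{Z = F_0\}$ strictly below $\iota$. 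You instead apply the IH \emph{pointwise} in the parameter $t$ to obtain $\Phi = \Psi$ where $\Psi(t)$ interprets $F_0$ under the $t$-parameterized least solution of $\Theta_0$, and then invoke Beki{\'c}'s theorem to identify $\lfp \Psi$ with the $Z$-component of the least solution of the block system $\Theta_0 \uplus \{Z = F_0\}$. This buys a cleaner, more modular argument---the minimality reasoning is delegated entirely to a standard theorem about block-wise least fixed points of continuous maps on cpos, which the paper already uses elsewhere (for CFHG-to-expression translation)---at the cost of having to check the continuity of the $\mu$-free interpretations in the valuation, which you correctly note follows from the $\omega$-continuity of the algebra operations. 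One small bookkeeping remark: the IH as stated requires $\dom(\iota_t) = \dom(\Theta_0)$, so the binding $Z \mapsto t$ should formally be placed in the $\eta$ component (as the paper does with $\eta' \defeq \eta[Z \mapsto \theta]$) rather than in $\iota_t$; the combined valuation is the same either way, so nothing substantive changes.
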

\begin{proof}
  We proceed by induction on the derivation of $\Gamma \vdash E \Downarrow F \mid \Theta$.
  \begin{description}[labelindent=\parindent]
    \item[Case (\textsc{Const}):]
    \[
    \inferrule{ c \in \calM }{ \Gamma \vdash c \Downarrow c \mid \emptyset }
    \]
    
    We have $\calM_{\gamma}\interp{c}(\vec{\nu}) = c$.
    
    We have $\calM_{\gamma \cup \eta \cup \iota}\interp{c}(\vec{\nu}) = c$.
    
    Then we conclude this case. 
       
    \item[Case (\textsc{Seq}):]
    \[
    \inferrule
    { c \in \calM \\ \Gamma \vdash E_1 \Downarrow F_1 \mid \Theta }
    { \Gamma \vdash \mi{seq}[c](E_1) \Downarrow \mi{seq}[c](F_1) \mid \Theta }
    \]
    
    By induction hypothesis, we know $\calM_{\gamma}\interp{E_1}(\vec{\nu}) = \calM_{\gamma \cup \eta \cup \iota}\interp{F_1}(\vec{\nu})$.
    
    We have $\calM_{\gamma}\interp{\mi{seq}[c](E_1)}(\vec{\nu}) = c \otimes_M \calM_{\gamma}\interp{E_1}(\vec{\nu})$.
    
    We have $\calM_{\gamma \cup \eta \cup \iota}\interp{\mi{seq}[c](F_1)}(\vec{\nu}) = c \otimes_M \calM_{\gamma\cup\eta\cup\iota}\interp{F_1}(\vec{\nu})$.
    
    Then we conclude this case.
    
    \item[Case (\textsc{Cond}):]
    \[
    \inferrule
    { \Gamma \vdash E_1 \Downarrow F_1 \mid \Theta_1 \\
      \Gamma \vdash E_2 \Downarrow F_2 \mid \Theta_2
    }
    { \Gamma \vdash \mi{cond}[\varphi](E_1,E_2) \Downarrow \mi{cond}[\varphi](F_1,F_2) \mid \Theta_1 \cup \Theta_2 }
    \]
    
    Let $\iota_1 \defeq \iota|_{\dom(\Theta_1)}$ and $\iota_2 \defeq \iota|_{\dom(\Theta_2)}$. Then $\iota = \iota_1 \cup \iota_2$.
    
    By the premise, we know that $\Theta_1$ and $\Theta_2$ are two independent sets of equations, given $\gamma \cup \eta$ and $\vec{\nu}$.
    
    Thus, $\iota_i$ is the least solution of $\Theta_i$ ($i \in \{1,2\}$).
    
    By induction hypothesis, we know $\calM_{\gamma}\interp{E_1}(\vec{\nu}) = \calM_{\gamma \cup \eta \cup \iota_1}\interp{F_1}(\vec{\nu})$.
    
    By induction hypothesis, we know $\calM_{\gamma}\interp{E_2}(\vec{\nu}) = \calM_{\gamma \cup \eta \cup \iota_2}\interp{F_2}(\vec{\nu})$.
    
    We have $\calM_{\gamma}\interp{\mi{cond}[\varphi](E_1,E_2)}(\vec{\nu}) = \calM_{\gamma}\interp{E_1}(\vec{\nu}) \gcho{\varphi}_M \calM_{\gamma}\interp{E_2}(\vec{\nu})$.
    
    We have $\calM_{\gamma\cup\eta\cup\iota}\interp{\mi{cond}[\varphi](F_1,F_2)}(\vec{\nu}) = \calM_{\gamma\cup\eta\cup\iota_1}\interp{F_1}(\vec{\nu}) \gcho{\varphi}_M \calM_{\gamma\cup\eta\cup\iota_2}\interp{F_2}(\vec{\nu})$.
    
    Then we conclude this case.
    
    (Cases (\textsc{Prob}), (\textsc{Ndet}), (\textsc{Add}), and (\textsc{Sub}) are similar to this case.)
    
    \item[Case (\textsc{Hole}):]
    \[
    \inferrule{ }{ \Gamma \vdash \square \Downarrow \Gamma(\square) \mid \emptyset }
    \]
    
    We have $\calM_\gamma\interp{\square}(\vec{\nu}) = \gamma(\square)$.
    
    We have $\calM_{\gamma\cup\eta\cup\iota}\interp{\Gamma(\square)}(\vec{\nu}) = ((\gamma \cup \eta) \circ \Gamma)(\square)$.
    
    We conclude by the assumption $(\gamma \cup \eta) \circ \Gamma = \gamma$.
    
    \item[Case (\textsc{Concatenation}):]
    \[
    \inferrule
    { \Gamma[\square \mapsto \square] \vdash E_1 \Downarrow F_1 \mid \Theta_1 \\
      \Gamma \vdash E_2 \Downarrow F_2 \mid \Theta_2
    }
    { \Gamma \vdash E_1 \cdot_{\square} E_2 \Downarrow F_1 \cdot_\square F_2 \mid \Theta_1 \cup \Theta_2 }
    \]
    
    Let $\iota_1 \defeq \iota|_{\dom(\Theta_1)}$ and $\iota_2 \defeq \iota|_{\dom(\Theta_2)}$. Then $\iota = \iota_1 \cup \iota_2$.
    
    By the premise, we know that $\Theta_1$ and $\Theta_2$ are two independent sets of equations, given $\gamma \cup \eta$ and $\vec{\nu}$.
    
    Thus, $\iota_i$ is the least solution of $\Theta_i$ ($i \in \{1,2\}$).

    By induction hypothesis, we know $\calM_\gamma\interp{E_2}(\vec{\nu}) = \calM_{\gamma \cup \eta \cup \iota_2}\interp{F_2}(\vec{\nu})$.
    
    Let $\gamma' \defeq \gamma[\square \mapsto \calM_\gamma\interp{E_2}(\vec{\nu})]$.
    We still have $((\gamma' \cup \eta)) \circ (\Gamma[\square \mapsto \square]) = \gamma'$.
    
    By induction hypothesis, we know $\calM_{\gamma'}\interp{E_1}(\vec{\nu}) = \calM_{\gamma' \cup \eta \cup \iota_1}\interp{E_1}(\vec{\nu})$.
    
    We have $\calM_{\gamma}\interp{E_1 \cdot_\square E_2}(\vec{\nu}) = \calM_{\gamma'}\interp{E_1}(\vec{\nu})$.
    
    We have $\calM_{\gamma \cup \eta \cup \iota}\interp{F_1 \cdot_\square F_2}(\vec{\nu}) = \calM_{(\gamma \cup \eta \cup \iota_1)[\square \mapsto \calM_{\gamma \cup \eta \cup \iota_2}\interp{F_2}(\vec{\nu}) ]}\interp{F_1}(\vec{\nu}) = \calM_{\gamma' \cup \eta \cup \iota_1}\interp{F_1}(\vec{\nu})$.
    
    Then we conclude this case.
    
    \item[Case (\textsc{Closure}):]
    \[
    \inferrule
    { Z~\m{fresh} \\ \Gamma[\square \mapsto Z] \vdash E_1 \Downarrow F_1 \mid \Theta_1 }
    { \Gamma \vdash (E_1)^{\infty_\square} \Downarrow Z \mid \Theta_1 \cup \{Z = F_1\} }
    \]
    
    Let $\theta \defeq \iota(Z)$. Let $\iota_1 \defeq \iota|_{\dom(\Theta_1)}$.
    
    By assumption, $\iota$ is the least solution of $\Theta_1 \cup \{ Z = F_1 \}$, given $\gamma \cup \eta$ and $\vec{\nu}$.
    
    Thus, $\theta = \calM_{\gamma \cup \eta \cup \iota}\interp{F_1}(\vec{\nu})$.
    Let $\eta' \defeq \eta[Z \mapsto \theta]$. Then $\theta = \calM_{\gamma \cup \eta' \cup \iota_1}\interp{F_1}(\vec{\nu})$.
    
    Let $\gamma' \defeq \gamma[\square \mapsto \theta]$. We have $(\gamma' \cup \eta') \circ \Gamma[\square \mapsto Z] = \gamma'$.
    
    Also, $\iota_1$ is the least solution of $\Theta_1$, given $\gamma' \cup \eta'$ and $\vec{\nu}$.
    
    By induction hypothesis, we know $\calM_{\gamma'}\interp{E_1}(\vec{\nu}) = \calM_{\gamma' \cup \eta' \cup \iota_1}\interp{F_1}(\vec{\nu}) = \theta$.
    
    We have $\calM_\gamma\interp{(E_1)^{\infty_\square}}(\vec{\nu}) = \rho$ where $\rho$ is the least fixed-point
    of $\lambda \rho.\; \calM_{\gamma[\square \mapsto \rho]}\interp{E_1}(\vec{\nu})$.
    
    Indeed, $\theta$ is a fixed-point of $\lambda \rho.\; \calM_{\gamma[\square \mapsto \rho]}\interp{E_1}(\vec{\nu})$.
    It remains to show $\theta$ is the least one.
    
    Again, by assumption, $\theta$ is the least solution of $Z=F_1$, given $\gamma[\square \mapsto \rho] \cup \eta \cup \iota_1$ and $\vec{\nu}$.
    
    Suppose that $\rho \sqsubset_M \theta$.
    Let $\iota_1'$ be the least solution of $\Theta_1$ given $\gamma' \cup \eta[Z \mapsto \rho]$ and $\vec{\nu}$.
    By $\omega$-continuity, we know that $\iota_1' \aord_M \iota_1$.
        
    Thus, by induction hypothesis, we know $\calM_{\gamma[\square \mapsto \rho]}\interp{E_1}(\vec{\nu}) =
    \calM_{\gamma[\square \mapsto \rho] \cup \eta[Z \mapsto \rho] \cup \iota_1'}\interp{F_1}(\vec{\nu})$.
    
    But $\rho = \calM_{\gamma[\square \mapsto \rho]}\interp{E_1}(\vec{\nu})$.
    Thus $\rho$ is a solution of $Z=F_1$ given $\rho[\square \mapsto \rho] \cup \eta \cup \iota_1'$ and $\vec{\nu}$.
    
    Therefore, let $\iota' \defeq \iota_1' \cup \{ Z \mapsto \rho \}$, then $\iota'$ is a solution of $\Theta_1 \cup \{ Z = F_1 \}$ given $\gamma \cup \eta$ and $\vec{\nu}$.
    
    But then we have $\iota' \sqsubset_M \iota$, which contradicts the assumption that $\iota$ is the least solution.
    
    Then we conclude this case.
    
    \item[Case (\textsc{Call-Lin}):]
    \[
    \inferrule
    { c \in \calM }
    { \Gamma \vdash \mi{call}_\m{lin}[Y_i;c] \Downarrow \mi{call}_\m{lin}[Y_i;c] \mid \emptyset }
    \]
    
    We have $\calM_\gamma\interp{\mi{call}_\m{lin}[Y_i;c]}(\vec{\nu}) = \nu_i \otimes_M c$.
    
    We have $\calM_{\gamma\cup\eta\cup\iota}\interp{\m{call}_\m{lin}[Y_i;c]}(\vec{\nu}) = \nu_i \otimes_M c$.
    
    Then we conclude this case.
  \end{description}
\end{proof}

\begin{theorem*}[\cref{The:LinearRecursionSolvingSound}]
  The linear-equation-solving method presented in \cref{Se:SolvingLinearEquations} computes $\lfp_{\vec{\azero}_M}^{\aord_M} \lambda \vec{\theta}.\; \tuple{\calM_\gamma\interp{E_{Y_1}}(\vec{\theta}), \cdots, \calM_\gamma\interp{E_{Y_n}}(\vec{\theta})}$.
\end{theorem*}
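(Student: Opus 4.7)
The plan is to reduce the problem to two ingredients already at hand: the semantic soundness of the normalization transformation (\cref{Lem:Appendix:ExtractionSound}), and the specification of the analysis-supplied strategy (\cref{De:LinearRecursionSolvingStrategy}). Let $\Gamma_0 \defeq \{Z \mapsto Z\}_{Z \in \calK}$, let $\Gamma_0 \vdash E_{Y_i} \Downarrow F_{Y_i} \mid \Theta_i$ be the judgments produced in step~(1), and let $(\iota, \vec{\nu})$ be the pair returned by $\m{solve}_\calK(\{Y_i = F_{Y_i}\}_i, \biguplus_i \Theta_i, \gamma)$ in step~(2). Observe first that $\Gamma_0$ satisfies the side condition $(\gamma \uplus \eta) \circ \Gamma_0 = \gamma$ of \cref{Lem:Appendix:ExtractionSound} for any $\eta$ with an appropriate domain, so the lemma will apply cleanly. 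Also, because the rule \textsc{(Mu)} uses a fresh variable each time, the domains of the $\Theta_i$'s are pairwise disjoint, so a solution of $\biguplus_i \Theta_i$ uniquely decomposes into solutions of each $\Theta_i$ and vice versa.

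The meat of the argument is to show that $\vec{\nu}$ equals the desired least fixed-point $\vec{\nu}^\star \defeq \lfp_{\vec\azero_M}^{\aord_M} \vec{H}$, where $\vec{H}(\vec{\theta}) \defeq \langle \calM_\gamma\interp{E_{Y_i}}(\vec{\theta}) \rangle_i$. For one direction, I will verify that $\vec{\nu}$ is a fixed-point of $\vec{H}$. To do so, fix $i$; let $\iota^{(i)} \defeq \iota|_{\dom(\Theta_i)}$. Because the $\Theta_i$'s are decoupled, $\iota^{(i)}$ is the least solution of $\Theta_i$ given $\gamma$ and $\vec{\nu}$. \cref{Lem:Appendix:ExtractionSound} applied to $\Gamma_0 \vdash E_{Y_i} \Downarrow F_{Y_i} \mid \Theta_i$ then gives $\calM_\gamma\interp{E_{Y_i}}(\vec{\nu}) = \calM_{\gamma \uplus \iota}\interp{F_{Y_i}}(\vec{\nu}) = \nu_i$, where the last equality is the specification of $\m{solve}$ (\cref{De:LinearRecursionSolvingStrategy}). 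Hence $\vec{H}(\vec{\nu}) = \vec{\nu}$. Step~(3) of the method returns exactly $\langle \calM_{\gamma \uplus \iota}\interp{F_{Y_i}}(\vec{\nu})\rangle_i = \vec{\nu}$, so the method's output is a fixed-point of $\vec{H}$.

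For the other direction, I will show that $\vec{\nu}$ is below every fixed-point. Let $\vec{\theta}$ be any fixed-point of $\vec{H}$, and let $\iota^\theta$ be the least solution of $\biguplus_i \Theta_i$ given $\gamma$ and $\vec{\theta}$ (which exists by $\omega$-continuity of the $\omega$PMA operators). Applying \cref{Lem:Appendix:ExtractionSound} again yields $\theta_i = \calM_\gamma\interp{E_{Y_i}}(\vec{\theta}) = \calM_{\gamma \uplus \iota^\theta}\interp{F_{Y_i}}(\vec{\theta})$ for each $i$. Combined with $\iota^\theta(Z) = \calM_{\gamma \uplus \iota^\theta}\interp{F_Z}(\vec{\theta})$ from the definition of $\iota^\theta$, the pair $(\iota^\theta, \vec{\theta})$ satisfies the joint system whose least solution $\m{solve}$ returns. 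By the leastness clause of \cref{De:LinearRecursionSolvingStrategy}, $(\iota, \vec{\nu}) \aord_M (\iota^\theta, \vec{\theta})$ componentwise, so in particular $\vec{\nu} \aord_M \vec{\theta}$. Combining the two directions with the fact that $\vec{H}$ is $\omega$-continuous (so its least fixed-point exists and is the $\aord_M$-least among all fixed-points) yields $\vec{\nu} = \vec{\nu}^\star$.

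The one delicate spot I expect to have to be careful about is the domain-bookkeeping in the second direction: I need $\iota^\theta$ to be defined on exactly $\dom(\biguplus_i \Theta_i)$ and need the $\omega$-continuity argument that guarantees its existence to go through uniformly for every candidate $\vec{\theta}$. The disjointness of the $\dom(\Theta_i)$'s coming from the freshness side condition in rule \textsc{(Mu)}, together with the $\omega$-continuity of the interpretation of linear regular infinite-tree expressions in their free variables, suffice; this parallels the reasoning used in the \textsc{(Closure)} case of the proof of \cref{Lem:Appendix:ExtractionSound} and so requires no new algebraic machinery.
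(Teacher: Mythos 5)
Your proposal is correct and follows essentially the same route as the paper's proof: both directions rest on \cref{Lem:Appendix:ExtractionSound} (with $\eta = \{\}$) together with the leastness clause of \cref{De:LinearRecursionSolvingStrategy}, using the disjointness of the $\dom(\Theta_i)$'s to decompose solutions. The only difference is cosmetic --- you establish minimality directly by mapping an arbitrary fixed-point $\vec{\theta}$ to a solution $(\iota^\theta,\vec{\theta})$ of the joint system, whereas the paper phrases the same argument as a contradiction with a hypothetical fixed-point strictly below $\vec{\nu}$.
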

\begin{proof}
  By assumption, we have $\{ \square \mapsto \square \}_{\square \in \calK} \vdash E_{Y_i} \Downarrow F_{Y_i} \mid \Theta_i$ for each $i$.
  
  Let $\m{solve}_\calK(\{Y_i=F_{Y_i}\}_{i=1}^n, \bigcup_{i=1}^n \Theta_i, \gamma)$ return $\iota : \dom(\bigcup_{i=1}^n \Theta_i) \to \calM$ and $\vec{\nu} \in \calM^n$, i.e., the least solution of $\{Y_i=F_{Y_i}\}_{i=1}^n \cup \bigcup_{i=1}^n \Theta_i$.
  
  The method then returns $\calM_{\gamma \cup \iota}\interp{F_{Y_i}}(\vec{\nu})$ for the solution of $Y_i$, but
  because $\m{solve}$ returns the least solution, the value coincides with $\nu_i$.
  
  Thus, here it is sufficient to prove $\vec{\nu}$ is the least fixed-point of $\lambda \vec{\theta}.\; \tuple{\calM_\gamma\interp{E_{Y_1}}(\vec{\theta}), \cdots, \calM_\gamma\interp{E_{Y_n}}(\vec{\theta})}$.
  
  If we fix $\vec{\theta} \in \calM^n$, then the sets $\Theta_i$'s are independent sets of equations under $\gamma$ and $\vec{\theta}$.
  Therefore, for each $i$, by \cref{Lem:Appendix:ExtractionSound} (set $\eta$ to $\{\}$),
  we know that $\calM_{\gamma}\interp{E_{Y_i}}(\vec{\theta}) = \calM_{\gamma \cup \iota}\interp{F_{Y_i}}(\vec{\theta})$.
  
  Therefore, $\vec{\nu}$ is a fixed point. It remains to show it is the least one.
  
  Suppose there is another fixed-point $\vec{\rho}$ that satisfies $\vec{\rho} \sqsubset_M \vec{\nu}$.
  
  For each $i$,
  Let $\iota_i'$ be the least solution of $\Theta_i$ given $\gamma$ and $\vec{\rho}$.
  By $\omega$-continuity, we know that $\iota_i' \sqsubseteq \iota_i \defeq \iota|_{\dom(\Theta_i)}$.
  
  Define $\iota' \defeq \bigcup_{i=1}^m \iota_i'$. Then $\iota' \aord_M \iota$.
  
  Then, again, by \cref{Lem:Appendix:ExtractionSound} (set $\eta$ to $\{\}$), for each $i$, we know that
  $\calM_{\gamma}\interp{E_{Y_i}}(\vec{\rho}) = \calM_{\gamma \cup \iota'}\interp{F_{Y_i}}(\vec{\rho})$.
  
  But $\rho_i = \calM_{\gamma}\interp{E_{Y_i}}(\vec{\rho})$ for each $i$.
  Thus $\iota'$ and $\vec{\rho}$ is a solution to $\{Y_i = F_{Y_i}\}_{i=1}^n \cup \bigcup_{i=1}^n \Theta_i$ under $\gamma$.
  
  But then we obtain $(\vec{\rho}, \iota') \sqsubset (\vec{\nu}, \iota)$, which contradicts the assumption
  that $(\vec{\nu}, \iota)$ is the least solution.
  
  Then we conclude the proof.
\end{proof}

\begin{lemma*}[\cref{Lem:DiffIsLinear}]
  If $\vec{f}$ is a vector of regular infinite-tree expressions in $\m{RegExp}^\infty(\calF^\alpha,\calK)$,
  it holds that
  $\calD \vec{f}_\gamma|_{\vec{\nu}}(\vec{Y})$ is a vector of regular infinite-tree expressions in $\m{RegExp}^\infty(\calF_\m{lin}^\alpha, \calK)$.
\end{lemma*}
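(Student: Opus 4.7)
The plan is to proceed by structural induction on the regular infinite-tree expression, then lift to the multivariate case. The key observation is that ``linearity'' is built directly into the alphabet $\calF_\m{lin}^\alpha$: it contains no general procedure-call symbol $\mi{call}[X_i]()$, only the restricted form $\mi{call}_\m{lin}[Y_i;c]$ where the continuation has been collapsed to an algebra element $c \in \calM$. So to prove the lemma it suffices to show, by induction on $f$, that every symbol introduced by $\calD_{X_j} f_\gamma|_{\vec{\nu}}(\vec{Y})$ lies in $\calF_\m{lin}^\alpha$.

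First I would handle the univariate statement: for $f \in \m{RegExp}^\infty(\calF^\alpha,\calK)$, show $\calD_{X_j} f_\gamma|_{\vec{\nu}}(\vec{Y}) \in \m{RegExp}^\infty(\calF_\m{lin}^\alpha,\calK)$. The base cases $f = c$ and $f = Z \in \calK$ are immediate, since $\azero_M \in \calM$ and free variables are permitted in both alphabets. For $f = \mi{seq}[c](g)$, $f = \mi{cond}[\varphi](g,h)$, and $f = \mi{prob}[p](g,h)$, the result is built with symbols already present in $\calF_\m{lin}^\alpha$ applied to the inductively linear subexpressions. For the two non-trivial cases, $f = \mi{call}[X_k](g)$ with $k \neq j$ produces $\mi{seq}[\nu_k](\calD g)$, which is linear because $\nu_k \in \calM$ and by IH on $g$; for $k = j$ the differential produces $\oplus(\mi{call}_\m{lin}[Y_j;g_\gamma(\vec{\nu})],\,\mi{seq}[\nu_j](\calD g))$. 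The critical point is that the continuation $g_\gamma(\vec{\nu})$ is an algebra element (i.e., lives in $\calM$) because it is the $\omega$PMA interpretation of $g$ under $\gamma$ and $\vec{\nu}$, so $\mi{call}_\m{lin}[Y_j;g_\gamma(\vec{\nu})]$ is a legal symbol of $\calF_\m{lin}^\alpha$. For $f = \mi{ndet}(g,h)$ the output uses $\ominus,\mi{ndet},\oplus$ together with the algebra elements $g_\gamma(\vec{\nu}), h_\gamma(\vec{\nu}), f_\gamma(\vec{\nu})$, all of which are admissible.

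The remaining cases are the binding constructs $f = g \dplus_Z h$ and $f = \mu Z.\,g$. For concatenation, the differential is $(\calD g) \dplus_Z (\calD h)$ under a possibly updated valuation, and the \textsc{(Concat)} rule of the grammar preserves membership in $\m{RegExp}^\infty(\calF_\m{lin}^\alpha,\calK)$, so applying IH to $g$ (under $\gamma[Z \mapsto h_\gamma(\vec{\nu})]$) and $h$ suffices. For the $\mu$-binder, the rule \textsc{(Mu)} similarly lifts linearity through $\mu Z.\,\calD g$ once IH gives linearity of $\calD g$ under the extended valuation $\gamma[Z \mapsto f_\gamma(\vec{\nu})]$. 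Finally, the multivariate differential $\calD\vec{f}_\gamma|_{\vec{\nu}}(\vec{Y})$ is a componentwise combination of the univariate differentials using the binary symbol $\oplus$, which belongs to $\calF_\m{lin}^\alpha$; hence each component is in $\m{RegExp}^\infty(\calF_\m{lin}^\alpha,\calK)$.

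The only real subtlety, rather than an obstacle, is bookkeeping in the $\mu$-binder and concatenation cases: the valuation $\gamma$ changes when descending under the binder, so the induction hypothesis must be phrased uniformly over all valuations and procedure-summary vectors. Once the IH is stated in that generality, every case reduces to checking that the symbols produced on the right-hand side of \cref{De:RegularHyperPathDifferential} are syntactically within $\calF_\m{lin}^\alpha$, which the construction of the rule explicitly ensures.
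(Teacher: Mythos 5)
Your proof is correct and follows the same route as the paper, which simply observes that every right-hand side in the definition of the differential is built exclusively from symbols of $\calF_\m{lin}^\alpha$ (your case-by-case induction, including the note that continuations like $g_\gamma(\vec{\nu})$ are algebra elements and hence legal parameters of $\mi{call}_\m{lin}$, is just the explicit elaboration of that observation). The paper records this as a one-line "straightforward by inspection of \cref{De:RegularHyperPathDifferential}" argument.
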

\begin{proof}
  Straightforward because we only use the alphabet $\calF^\alpha_\m{lin}$ in the definition of differentials (see \cref{De:RegularHyperPathDifferential}).
\end{proof}

\begin{lemma}\label{Lem:Appendix:TaylorUni}
  Let $\vec{f}(\vec{X})$ be a vector of regular infinite-tree expression in $\m{RegExp}^\infty(\calF^\alpha,\calK)$,
  $\gamma,\gamma' : \calK \to \calM$ be two hole-valuations,
  and $\vec{u}, \vec{\nu}$ be two vectors in $\calM^n$.
  We have
  \[
  \vec{f}_\gamma(\vec{u}) \oplus_M ( \calD \vec{f}_\gamma|_{\vec{u}} (\vec{\nu}) )_{\gamma'} \aord_M f_{\gamma \oplus_M \gamma'}(\vec{u} \oplus_M \vec{\nu}).
  \]
\end{lemma}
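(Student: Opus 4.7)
The plan is to prove the vectorial inequality componentwise and reduce it to an analogous statement about each partial differential $\calD_{X_j} f_\gamma|_{\vec{u}}(\vec{\nu})$, then proceed by structural induction on the regular infinite-tree expression $f \in \m{RegExp}^\infty(\calF^\alpha,\calK)$. Concretely, I will first establish, for each fixed $j$, the ``partial'' inequality
\[
f_\gamma(\vec{u}) \oplus_M (\calD_{X_j} f_\gamma|_{\vec{u}}(\vec{\nu}))_{\gamma'} \aord_M f_{\gamma \oplus_M \gamma'}\bigl(\vec{u} \oplus_M (\azero_M,\ldots,\nu_j,\ldots,\azero_M)\bigr),
\]
and then obtain the full statement by telescoping: moving from $\vec{u}$ to $\vec{u} \oplus_M \vec{\nu}$ one coordinate at a time and combining the $n$ one-coordinate bounds via monotonicity of the interpretation. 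This reduction isolates the combinatorics from the algebra.

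For the base cases and most inductive cases, the argument is routine algebra in the $\omega$PMA $\calM$. For constants, holes, and free variables, both sides are equal or the differential is $\azero_M$. For $\mi{seq}[c](g)$, applying $c \otimes_M (-)$ on both sides and using the inductive hypothesis on $g$ together with monotonicity of $\otimes_M$ suffices. For $\mi{cond}[\varphi](g,h)$ and $\mi{prob}[p](g,h)$, the key ingredient is the distributivity law
\[
(a \oplus_M b) \gcho{\varphi}_M (c \oplus_M d) = (a \gcho{\varphi}_M c) \oplus_M (b \gcho{\varphi}_M d)
\]
(and similarly for $\pcho{p}_M$) required by \cref{De:OmegaContinuousPreMarkovAlgebra}; this lets the $\oplus_M$ of $f_\gamma(\vec{u})$ and the differential regroup exactly into the two branches. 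For $\mi{ndet}(g,h)$, the very definition of the differential subtracts $f_\gamma(\vec{u})$ so that the sum telescopes to $\mi{ndet}(g_\gamma(\vec{u}) \oplus_M \calD g, h_\gamma(\vec{u}) \oplus_M \calD h)$, and monotonicity of $\dashcup_M$ plus the inductive hypothesis on $g$ and $h$ closes the case. For $\mi{call}[X_k](g)$ the product-rule form of the differential mirrors the Leibniz expansion $(\nu_k \oplus_M Y_k) \otimes_M (g_\gamma(\vec{u}) \oplus_M \calD g)$, and the inequality comes from \emph{dropping} the nonnegative cross-term $Y_k \otimes_M \calD g$ after applying distributivity---this is exactly where linearity is purchased at the price of the bound being a lower bound rather than an equality. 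Concatenation $g \dplus_\square h$ follows by applying the inductive hypothesis first to $h$ (extending the valuations) and then to $g$ under the updated valuation, invoking the substitution lemma from \Cref{Se:Appendix:HyperPathAbstractMA}.

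The main obstacle will be the closure/$\mu$-binder case $f = g^{\infty_\square}$. Here $f_\gamma(\vec{u})$ is a least fixed-point and $\calD f_\gamma|_{\vec{u}}(\vec{\nu}) = (\calD g_{\gamma[\square \mapsto f_\gamma(\vec{u})]}|_{\vec{u}}(\vec{\nu}))^{\infty_\square}$ is itself defined via a least fixed-point, so one cannot simply apply structural induction to a proper subterm. My plan is to use fixed-point induction on the Kleene approximants simultaneously for both sides. Writing $\theta_k$ and $\tau_k$ for the Kleene approximants of $f_\gamma(\vec{u})$ and $(\calD f_\gamma|_{\vec{u}}(\vec{\nu}))_{\gamma'}$ respectively, I will prove by induction on $k$ that
\[
\theta_k \oplus_M \tau_k \aord_M \textstyle f_{\gamma \oplus \gamma'}(\vec{u} \oplus_M \vec{\nu}),
\]
using the structural inductive hypothesis on $g$ at each step (which requires the valuation to be extended by the \emph{current} approximants, hence the use of $f_\gamma(\vec{u})$ inside the differential of the closure) together with monotonicity of $g$'s interpretation in its hole valuation. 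Passing to the supremum via $\omega$-continuity of $\oplus_M$ and of the interpretation yields the desired inequality on the limits. Once this case is handled, the multivariate telescoping described above completes the proof of \cref{Lem:Appendix:TaylorUni}.
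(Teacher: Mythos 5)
Your case-by-case reasoning for the individual constructors matches the paper's proof almost exactly: the distributivity of $\gcho{\varphi}_M$ and $\pcho{p}_M$ over $\oplus_M$ from \cref{De:OmegaContinuousPreMarkovAlgebra}, the Leibniz-style expansion and dropped cross-term for calls, the exact cancellation built into the $\mi{ndet}$ differential, and the substitution argument for $\dplus$. The gap is in the outer architecture. Your reduction to per-coordinate inequalities followed by telescoping does not go through in an $\omega$PMA, because $\oplus_M$ is not idempotent and the differential is not monotone in its base point. Concretely, writing $D_j \defeq (\calD_{X_j} f_\gamma|_{\vec{u}}(\vec{\nu}))_{\gamma'}$, the $n$ bounds $f_\gamma(\vec{u}) \oplus_M D_j \aord_M f_{\gamma\oplus_M\gamma'}(\vec{u}\oplus_M\vec{\nu})$ do not yield $f_\gamma(\vec{u}) \oplus_M D_1 \oplus_M \cdots \oplus_M D_n \aord_M f_{\gamma\oplus_M\gamma'}(\vec{u}\oplus_M\vec{\nu})$; in the real semiring, $1+0.5\le 2$ and $1+0.7\le 2$ do not give $1+1.2\le 2$. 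The alternative of chaining through the intermediate points requires re-basing the $j$-th partial differential at the $(j{-}1)$-st intermediate point, hence monotonicity of $\calD_{X_j} f|_{(\cdot)}$ in the base point; this fails for $f=\mi{ndet}(g,h)$ because of the trailing $\ominus_M f_\gamma(\cdot)$ (with $\dashcup_M = \min$, the quantity $\min(g{+}a,h{+}b)-\min(g,h)$ is neither monotone nor antitone in $(g,h)$). The per-coordinate bounds also each consume the full hole increment $\gamma'$, which the target consumes only once. The paper's proof avoids all of this by carrying the \emph{summed} differential through a single structural induction and never decomposing by coordinate; you should do the same.

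The closure case has a second, related mismatch. By \cref{De:RegularHyperPathDifferential}, the differential of a $\mu$-binder is itself a $\mu$-binder whose body is the differential of $g$ taken under the valuation that maps the bound variable to the \emph{full} least fixed point $\rho \defeq f_\gamma(\vec{u})$. If you run simultaneous Kleene iteration from $\azero_M$ on both sides, the approximant $\tau_{k+1}$ is computed with base valuation $[\square\mapsto\rho]$, whereas the structural induction hypothesis you want to invoke pairs it with $\theta_{k+1}$, whose base is only $[\square\mapsto\theta_k]$ with $\theta_k\aord_M\rho$; bridging the two again needs a monotonicity property of the differential in the base valuation that $\mi{ndet}$ does not have. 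The repair---and the paper's actual argument---is to keep the left-hand component exact rather than approximated: restart the fixed-point iteration for $f_{\gamma\oplus_M\gamma'}(\vec{u}\oplus_M\vec{\nu})$ at $\rho$ and prove the identity $L^n(\rho)=\rho\oplus_M H^n(\azero_M)$, where $H$ is the body of the differential's $\mu$-binder. Then every application of the structural induction hypothesis uses the base valuation $[\square\mapsto\rho]$ with increment $[\square\mapsto\theta\ominus_M\rho]$, and $\omega$-continuity of $\oplus_M$ completes the limit passage exactly as you intend.
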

\begin{proof}
  It suffices to show the inequality for each component separately, so W.L.O.G. let $\vec{f}(\vec{X}) = f(\vec{X})$ be a regular infinite-tree expression.
  We then proceed by induction on the structure of $f$.
  \begin{description}[labelindent=\parindent]
    \item[Case $f=c$:]
    We have $\calD f_\gamma|_{\vec{u}}(\vec{\nu}) = \azero_M$ and thus $c \oplus_M \azero_M \aord_M c$.
    
    \item[{Case $f=\mi{seq}[c](g)$:}]
    We have $f_\gamma(\vec{x}) = c \otimes_M g_\gamma(\vec{x})$ and $(\calD f_\gamma|_{\vec{x}}(\vec{y}))_{\gamma'} = c \otimes_M (\calD g_\gamma|_{\vec{x}}(\vec{y}))_{\gamma'}$. Thus
    \begin{align*}
      f_{\gamma \oplus_M \gamma'}(\vec{u} \oplus_M \vec{\nu}) &
      = c \otimes_M g_{\gamma \oplus_M \gamma'}(\vec{u} \oplus_M \vec{\nu}) \\
      & \sqsupseteq_M c \otimes_M (g_{\gamma}(\vec{u}) \oplus_M (\calD g_\gamma|_{\vec{u}}(\vec{\nu}))_{\gamma'}) & \text{(induction hypothesis)} \\
      & = (c \otimes_M g_\gamma(\vec{u})) \oplus_M (c \otimes_M (\calD g_\gamma|_{\vec{u}}(\vec{\nu}))_{\gamma'}) \\
      & = f_\gamma(\vec{u}) \oplus_M (\calD f_\gamma|_{\vec{u}}(\vec{\nu}))_{\gamma'} .
    \end{align*}
    
    \item[{Case $f=\mi{call}[X_i](g)$:}]
    We have $f_\gamma(\vec{x}) = x_i \otimes_M g_\gamma(\vec{x})$ and
    \begin{align*}
      \calD f_\gamma|_{\vec{x}}(\vec{Y})
      & = \calD_{X_i} f_\gamma|_{\vec{x}}(\vec{Y}) \oplus \bigoplus_{j \neq i} \calD_{X_j} f_\gamma|_{\vec{x}}(\vec{Y}) \\
      & = ( \mi{call}_\m{lin}[Y_i; g_\gamma(\vec{x})] \oplus \mi{seq}[x_j]( \calD_{X_i} g_\gamma|_{\vec{x}}(\vec{Y}))) \oplus \bigoplus_{j \neq i} \mi{seq}[x_i](\calD_{X_j} g|_{\vec{x}}(\vec{Y})), \\
      (\calD f_\gamma|_{\vec{x}}(\vec{y}))_{\gamma'}
      & = ((y_i \otimes_M g_\gamma(\vec{x})) \oplus_M (x_j \otimes_M (\calD_{X_i} g_\gamma|_{\vec{x}}(\vec{y}))_{\gamma'})) \oplus_M \bigoplus_{M, j \neq i} ( x_j \otimes_M (\calD_{X_j} g_\gamma|_{\vec{x}}(\vec{y}))_{\gamma'}) \\
      & = (y_i \otimes_M g_\gamma(\vec{x})) \oplus_M (x_j \otimes_M (\calD g_{\gamma}|_{\vec{x}}(\vec{y}))_{\gamma'}).
    \end{align*}
    Thus
    \begin{align*}
      f_{\gamma \oplus_M \gamma'}(\vec{u} \oplus_M \vec{\nu})
      & = (u_i \oplus_M \nu_i) \otimes_M g_{\gamma \oplus_M \gamma'}(\vec{u} \oplus_M \vec{\nu}) \\
      & \sqsupseteq_M (u_i \oplus_M \nu_i) \otimes_M (g_\gamma(\vec{u}) \oplus_M (\calD g_{\gamma}|_{\vec{u}}(\vec{\nu}))_{\gamma'}) \qquad\quad \text{(induction hypothesis)} \\
      & = (u_i \otimes_M g_\gamma(\vec{u})) \oplus_M (\nu_i \otimes_M g_{\gamma}(\vec{u})) \oplus_M ((u_i \oplus_M \nu_i) \otimes_M (\calD g_{\gamma}|_{\vec{u}}(\vec{\nu}))_{\gamma'}) \\
      & \sqsupseteq_M f_\gamma(\vec{u}) \oplus_M ( (\nu_i \otimes_M g_{\gamma}(\vec{u})) \oplus_M (u_i \otimes_M (\calD g_{\gamma}|_{\vec{u}}(\vec{\nu}))_{\gamma'}) ) \\
      & = f_\gamma(\vec{u}) \oplus_M (\calD f_\gamma |_{\vec{u}}(\vec{\nu}))_{\gamma'}.
    \end{align*}
    
    \item[{Case $f= \mi{cond}[\varphi](g,h)$:}]
    We have $f_\gamma(\vec{x}) = g_\gamma(\vec{x}) \gcho{\varphi}_M h_\gamma(\vec{x})$
    and $(\calD f_\gamma|_{\vec{x}}(\vec{y}))_{\gamma'} = (\calD g_\gamma|_{\vec{x}}(\vec{y}))_{\gamma'} \gcho{\varphi}_M (\calD h_\gamma|_{\vec{x}}(\vec{y}))_{\gamma'}$. Thus
    \begin{align*}
      f_{\gamma \oplus_M \gamma'}(\vec{u} \oplus_M \vec{\nu})
      & = g_{\gamma \oplus_M \gamma'}(\vec{u} \oplus_M \vec{\nu}) \gcho{\varphi}_M h_{\gamma \oplus_M \gamma'}(\vec{u} \oplus_M \vec{\nu}) \\
      & \sqsupseteq_M ( g_\gamma(\vec{u}) \oplus_M (\calD g_\gamma|_{\vec{u}}(\vec{\nu}))_{\gamma'}) \gcho{\varphi}_M
      ( h_\gamma(\vec{u}) \oplus_M (\calD h_\gamma|_{\vec{u}}(\vec{\nu}))_{\gamma'}) & \text{(induction hypothesis)} \\
      & = ( g_\gamma(\vec{u}) \gcho{\varphi}_M h_\gamma(\vec{u}) ) \oplus_M ((\calD h_\gamma|_{\vec{u}}(\vec{\nu}))_{\gamma'} \gcho{\varphi}_M (\calD h_\gamma|_{\vec{u}}(\vec{\nu}))_{\gamma'} ) & \text{(\cref{De:OmegaContinuousPreMarkovAlgebra})} \\
      & = f_\gamma(\vec{u}) \oplus_M (\calD f_\gamma|_{\vec{u}}(\vec{\nu}))_{\gamma'}.
    \end{align*}
    
    \item[{Case $f = \mi{prob}[p](g,h)$:}]
    We have $f_\gamma(\vec{x}) = g_\gamma(\vec{x}) \pcho{p}_M h_\gamma(\vec{x})$
    and $(\calD f_\gamma|_{\vec{x}}(\vec{y}))_{\gamma'} = (\calD g_\gamma|_{\vec{x}}(\vec{y}))_{\gamma'} \pcho{p}_M (\calD h_\gamma|_{\vec{x}}(\vec{y}))_{\gamma'}$. Thus
    \begin{align*}
      f_{\gamma \oplus_M \gamma'}(\vec{u} \oplus_M \vec{\nu})
      & = g_{\gamma \oplus_M \gamma'}(\vec{u} \oplus_M \vec{\nu}) \pcho{p}_M h_{\gamma \oplus_M \gamma'}(\vec{u} \oplus_M \vec{\nu}) \\
      & \sqsupseteq_M ( g_\gamma(\vec{u}) \oplus_M (\calD g_\gamma|_{\vec{u}}(\vec{\nu}))_{\gamma'}) \pcho{p}_M
      ( h_\gamma(\vec{u}) \oplus_M (\calD h_\gamma|_{\vec{u}}(\vec{\nu}))_{\gamma'}) & \text{(induction hypothesis)} \\
      & = ( g_\gamma(\vec{u}) \pcho{p}_M h_\gamma(\vec{u}) ) \oplus_M ((\calD h_\gamma|_{\vec{u}}(\vec{\nu}))_{\gamma'} \pcho{p}_M (\calD h_\gamma|_{\vec{u}}(\vec{\nu}))_{\gamma'} ) & \text{(\cref{De:OmegaContinuousPreMarkovAlgebra})} \\
      & = f_\gamma(\vec{u}) \oplus_M (\calD f_\gamma|_{\vec{u}}(\vec{\nu}))_{\gamma'}.
    \end{align*}
    
    \item[{Case $f = \mi{ndet}(g,h)$:}]
    We have $f_\gamma(\vec{x}) = g_\gamma(\vec{x}) \dashcup_M h_\gamma(\vec{x})$ and
    $(\calD f_\gamma|_{\vec{x}}(\vec{y}))_{\gamma'} = ( ( g_\gamma(\vec{x}) \oplus_M (\calD g_\gamma|_{\vec{x}}(\vec{y}))_{\gamma'} ) \dashcup_M ( h_\gamma(\vec{x}) \oplus_M (\calD h_\gamma|_{\vec{x}}(\vec{y}))_{\gamma'} ) ) \ominus_M f_\gamma(\vec{x})$.
    Thus
    \begin{align*}
      f_{\gamma \oplus_M \gamma'}(\vec{u} \oplus_M \vec{\nu})
      & = g_{\gamma \oplus_M \gamma'}(\vec{u} \oplus_M \vec{\nu}) \dashcup_M h_{\gamma \oplus_M \gamma'}(\vec{u} \oplus_M \vec{\nu}) \\
      & \sqsupseteq_M ((g_\gamma(\vec{u}) \oplus_M (\calD g_\gamma|_{\vec{u}}(\vec{\nu}))_{\gamma'} ) \dashcup_M (h_\gamma(\vec{u}) \oplus_M (\calD h_\gamma|_{\vec{u}}(\vec{\nu}))_{\gamma'} ) \qquad \text{(induction hypothesis)} \\
      & = f_\gamma(\vec{u}) \oplus_M ( ( ((g_\gamma(\vec{u}) \oplus_M (\calD g_\gamma|_{\vec{u}}(\vec{\nu}))_{\gamma'} ) \dashcup_M (h_\gamma(\vec{u}) \oplus_M (\calD h_\gamma|_{\vec{u}}(\vec{\nu}))_{\gamma'} ) ) \ominus_M f_\gamma(\vec{u}) ) \\
      & = f_\gamma(\vec{u}) \oplus_M (\calD f_\gamma|_{\vec{u}}(\vec{\nu}))_{\gamma'}.
    \end{align*}
    
    \item[{Case $f = \square$:}]
    We have $f_\gamma(\vec{x}) = \gamma(\square)$ and $(\calD f_\gamma|_{\vec{x}}(\vec{y}))_{\gamma'} = \gamma'(\square)$.
    Thus
    \begin{align*}
      f_{\gamma \oplus_M \gamma'}(\vec{u} \oplus_M \vec{\nu})
      & = \gamma(\square) \oplus_M \gamma'(\square) \\
      & = f_\gamma(\vec{u}) \oplus_M (\calD f_\gamma|_{\vec{u}}(\vec{\nu}))_{\gamma'}.
    \end{align*}
    
    \item[{Case $f = g \cdot_{\square} h$:}]
    We have $f_\gamma(\vec{x}) = g_{\gamma[\square \mapsto h_\gamma(\vec{x})]}(\vec{x})$
    and $(\calD f_\gamma|_{\vec{x}}(\vec{y}))_{\gamma'} = (\calD g_{\gamma[\square \mapsto h_\gamma(\vec{x})]}|_{\vec{x}}(\vec{y}) )_{\gamma'[\square \mapsto (\calD h_\gamma|_{\vec{x}}(\vec{y}))_{\gamma'} ]}  $.
    Thus
    \begin{align*}
      f_{\gamma \oplus_M \gamma'}(\vec{u} \oplus_M \vec{\nu})
      & = g_{(\gamma \oplus_M \gamma')[\square \mapsto h_{\gamma \oplus_M \gamma'}(\vec{u} \oplus_M \vec{\nu})]}( \vec{u} \oplus_M \vec{\nu}) \\
      & \sqsupseteq_M g_{(\gamma \oplus_M \gamma')[ \square \mapsto h_\gamma(\vec{u}) \oplus_M (\calD h_\gamma|_{\vec{u}}(\vec{\nu}))_{\gamma'} ] }( \vec{u} \oplus_M \vec{\nu} ) & \text{(induction hypothesis)} \\
      & = g_{ \gamma[\square \mapsto h_\gamma(\vec{u})] \oplus_M \gamma'[ \square \mapsto (\calD h_\gamma|_{\vec{u}}(\vec{\nu}))_{\gamma'} ] } (\vec{u} \oplus_M \vec{\nu} ) \\
      & \sqsupseteq_M g_{\gamma[\square \mapsto h_\gamma(\vec{u})]}(\vec{u}) \oplus_M  ( \calD g_{ \gamma[\square \mapsto h_\gamma(\vec{u})] }|_{\vec{u}}(\vec{\nu}) )_{ \gamma'[ \square \mapsto (\calD h_\gamma|_{\vec{u}}(\vec{\nu}))_{\gamma'} ] } & \text{(induction hypothesis)} \\
      & = f_\gamma(\vec{u}) \oplus_M (\calD f_\gamma |_{\vec{u}}(\vec{\nu}))_{\gamma'}.
    \end{align*}
    
    \item[{Case $f = (g)^{\infty_\square}$:}]
    We have $f_\gamma(\vec{x}) = \lfp_{\azero_M} \lambda \theta.\; g_{\gamma[\square \mapsto \theta]}(\vec{x})$
    and $(\calD f_\gamma|_{\vec{x}}(\vec{y}))_{\gamma'} = \lfp_{\azero_M} \lambda \theta.\; (\calD g_{\gamma[\square \mapsto f_\gamma(\vec{x})]}|_{\vec{x}}(\vec{y}))_{\gamma'[\square \mapsto \theta]} $.
    Let $\rho \defeq \lfp_{\azero_M} \lambda \rho.\; g_{\gamma[\square \mapsto \rho]}(\vec{u})$.
    Then by $\omega$-continuity, we have $\rho \aord_M \lfp_{\azero_M} \lambda \theta.\; g_{(\gamma \oplus_M \gamma')[\square \mapsto \theta]}(\vec{u} \oplus \vec{\nu})$.
    Thus
    \begin{align*}
      f_{\gamma \oplus_M \gamma'}(\vec{u} \oplus_M \vec{\nu})
      & = \lfp_{\azero_M} \lambda \theta.\; g_{(\gamma \oplus_M \gamma')[\square \mapsto \theta]}(\vec{u} \oplus \vec{\nu}) \\
      & = \lfp_\rho \lambda \theta.\; g_{\gamma[\square \mapsto \rho] \oplus_M \gamma[\square \mapsto \theta \ominus_M \rho]}(\vec{u} \oplus \vec{\nu}) \\
      & \sqsupseteq_M \lfp_\rho \lambda \theta. \; ( g_{\gamma[\square \mapsto \rho]}(\vec{u}) \oplus_M ( \calD g_{\gamma[\square \mapsto \rho]}|_{\vec{u}}(\vec{\nu}) )_{\gamma'[\square \mapsto \theta \ominus_M \rho]} ) \qquad \text{(induction hypothesis)} \\
      & = \lfp_\rho \lambda \theta. (\rho \oplus_M ( \calD g_{\gamma[\square \mapsto \rho]}|_{\vec{u}}(\vec{\nu}) )_{\gamma'[\square \mapsto \theta \ominus_M \rho]}) \\
      & = \bigsqcup_{n\in\bbN} L^n(\rho), \\
      & \text{where}~L \defeq \lambda \theta.\; (\rho \oplus_M H(\theta \ominus_M \rho) ), H \defeq \lambda\theta.\; ( \calD g_{\gamma[\square \mapsto \rho]}|_{\vec{u}}(\vec{\nu}) )_{\gamma'[\square \mapsto \theta]}.
    \end{align*}
    On the other hand, we have
    \begin{align*}
      f_\gamma(\vec{u}) \oplus_M (\calD f_\gamma|_{\vec{u}}(\vec{\nu}))_{\gamma'}
      & = \rho \oplus_M ( \lfp_{\azero_M} \lambda \theta.\; (\calD g_{\gamma[\square \mapsto f_\gamma(\vec{u})]}|_{\vec{u}}(\vec{\nu}))_{\gamma'[\square \mapsto \theta]} ) \\
      & = \rho \oplus_M ( \lfp_{\azero_M} \lambda \theta.\; (\calD g_{\gamma[\square \mapsto \rho]}|_{\vec{u}}(\vec{\nu}))_{\gamma'[\square \mapsto \theta]} ) \\
      & = \rho \oplus_M \bigsqcup_{n \in \bbN} H^n(\azero_M).
    \end{align*}
    We then prove by induction on $n$ that $L^n(\rho) = \rho \oplus_M H^n(\azero_M)$.
    
    When $n=0$, we have $L^0(\rho) = \rho$ and $\rho \oplus_M H^0(\rho) = \rho \oplus_M \azero_M = \rho$.
    
    For the induction step, we have
    \begin{align*}
      L^{n+1}(\rho) & = L(L^n(\rho)) \\
      & = \rho \oplus_M H(L^n(\rho) \ominus_M \rho) \\
      & = \rho \oplus_M H((\rho \oplus_M H^n(\azero_M)) \ominus_M \rho) & \text{(induction hypothesis)} \\
      & = \rho \oplus_M H(H^n(\azero_M)) \\
      & = \rho \oplus_M H^{n+1}(\azero_M).
    \end{align*}
  \end{description}
\end{proof}

\begin{lemma}\label{Lem:Appendix:Aux}
  Let $\vec{f}_{\{\}}(\vec{x}) \sqsupseteq_M \vec{x}$.
  For all $d \ge 0$, there exists a vector $\vec{e}^{(d)}(\vec{x})$ such that
  $\vec{f}^d_{\{\}}(\vec{x}) \oplus_M \vec{e}^{(d)}(\vec{x}) = \vec{f}^{d+1}_{\{\}}(\vec{x})$ and
  \[
  \vec{e}^{(d)}(\vec{x}) \sqsupseteq_M ( \calD \vec{f}_{\{\}}|_{\vec{f}^{d-1}_{\{\}}(\vec{x})}( (\calD \vec{f}_{\{\}} \mid_{\vec{f}^{d-2}_{\{\}}(\vec{x})}( \cdots (\calD \vec{f}_{\{\}}|_{\vec{x}}(\vec{e}^{(0)}(\vec{x}))_{\{\}} ) \cdots )))_{\{\}} )_{\{\}}
  \sqsupseteq_M (\calD \vec{f}_{\{\}}|_{\vec{x}}^d(\vec{e}^{(0)}(\vec{x})))_{\{\}}.
  \]
\end{lemma}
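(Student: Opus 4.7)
The plan is to proceed by induction on $d$, using the Taylor-style linearization bound provided by \cref{Lem:Appendix:TaylorUni} at each step and unfolding the nested differentials along the Kleene sequence $\vec{f}^0_{\{\}}(\vec{x}), \vec{f}^1_{\{\}}(\vec{x}), \ldots$.

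For the base case $d = 0$, the hypothesis $\vec{f}_{\{\}}(\vec{x}) \sqsupseteq_M \vec{x}$ lets me define $\vec{e}^{(0)}(\vec{x}) \defeq \vec{f}_{\{\}}(\vec{x}) \ominus_M \vec{x}$, so that $\vec{x} \oplus_M \vec{e}^{(0)}(\vec{x}) = \vec{f}_{\{\}}(\vec{x})$, which is precisely the first equality at $d=0$. The two chained differential bounds degenerate to $\vec{e}^{(0)}(\vec{x}) \sqsupseteq_M \vec{e}^{(0)}(\vec{x})$, which is reflexive.

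For the inductive step, suppose that $\vec{e}^{(d)}(\vec{x})$ has already been constructed so that $\vec{f}^d_{\{\}}(\vec{x}) \oplus_M \vec{e}^{(d)}(\vec{x}) = \vec{f}^{d+1}_{\{\}}(\vec{x})$ together with the two claimed lower bounds. Applying \cref{Lem:Appendix:TaylorUni} to $\vec{f}$ at the point $\vec{u} \defeq \vec{f}^d_{\{\}}(\vec{x})$ with perturbation $\vec{\nu} \defeq \vec{e}^{(d)}(\vec{x})$ (and with both valuations empty) yields
\[
\vec{f}^{d+1}_{\{\}}(\vec{x}) \oplus_M (\calD \vec{f}_{\{\}}|_{\vec{f}^d_{\{\}}(\vec{x})}(\vec{e}^{(d)}(\vec{x})))_{\{\}} \aord_M \vec{f}_{\{\}}(\vec{f}^d_{\{\}}(\vec{x}) \oplus_M \vec{e}^{(d)}(\vec{x})) = \vec{f}^{d+2}_{\{\}}(\vec{x}).
\]
I can therefore define $\vec{e}^{(d+1)}(\vec{x}) \defeq \vec{f}^{d+2}_{\{\}}(\vec{x}) \ominus_M \vec{f}^{d+1}_{\{\}}(\vec{x})$; the defining equality is immediate, and the displayed inequality gives $\vec{e}^{(d+1)}(\vec{x}) \sqsupseteq_M (\calD \vec{f}_{\{\}}|_{\vec{f}^d_{\{\}}(\vec{x})}(\vec{e}^{(d)}(\vec{x})))_{\{\}}$. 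Substituting the induction hypothesis into this bound produces exactly the nested-differential lower bound in the statement.

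The final inequality $\vec{e}^{(d)}(\vec{x}) \sqsupseteq_M (\calD \vec{f}_{\{\}}|_{\vec{x}})^d(\vec{e}^{(0)}(\vec{x}))$ then follows by monotonicity: along the Kleene sequence, $\vec{x} \aord_M \vec{f}^k_{\{\}}(\vec{x})$ for every $k \ge 0$ (since $\vec{f}_{\{\}}$ is monotone and $\vec{f}_{\{\}}(\vec{x}) \sqsupseteq_M \vec{x}$), and the differential operator is monotone in its base point within an $\omega$PMA, so each outer linearization around $\vec{f}^k_{\{\}}(\vec{x})$ dominates the corresponding one around $\vec{x}$. The main technical subtlety I anticipate is bookkeeping for the monotonicity of $\calD \vec{f}_{\{\}}|_{(-)}(-)$ in the base point; this requires an auxiliary observation (proved by structural induction on $\vec{f}$ following the same case analysis as in \cref{Lem:Appendix:TaylorUni}) that whenever $\vec{u}_1 \aord_M \vec{u}_2$ and $\vec{\nu}_1 \aord_M \vec{\nu}_2$, one has $(\calD \vec{f}_{\{\}}|_{\vec{u}_1}(\vec{\nu}_1))_{\{\}} \aord_M (\calD \vec{f}_{\{\}}|_{\vec{u}_2}(\vec{\nu}_2))_{\{\}}$. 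Modulo that monotonicity fact, the entire lemma is a clean induction.
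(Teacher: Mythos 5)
Your proof follows essentially the same route as the paper's: induction on $d$ with $\vec{e}^{(0)}(\vec{x}) \defeq \vec{f}_{\{\}}(\vec{x}) \ominus_M \vec{x}$, an inductive step that applies \cref{Lem:Appendix:TaylorUni} at the point $\vec{f}^d_{\{\}}(\vec{x})$ with perturbation $\vec{e}^{(d)}(\vec{x})$, and the final inequality obtained from monotonicity of $(\calD \vec{f}_{\{\}}|_{\vec{y}}(\cdot))_{\{\}}$ in $\vec{y}$ along the chain $\vec{x} \aord_M \vec{f}_{\{\}}(\vec{x}) \aord_M \cdots$. The one imprecision is that, because $\ominus_M$ denotes an \emph{arbitrary} witness of the subtraction and the semiring need not be cancellative, defining $\vec{e}^{(d+1)}(\vec{x}) \defeq \vec{f}^{d+2}_{\{\}}(\vec{x}) \ominus_M \vec{f}^{d+1}_{\{\}}(\vec{x})$ does not by itself guarantee $\vec{e}^{(d+1)}(\vec{x}) \sqsupseteq_M (\calD \vec{f}_{\{\}}|_{\vec{f}^d_{\{\}}(\vec{x})}(\vec{e}^{(d)}(\vec{x})))_{\{\}}$; the paper instead only asserts the \emph{existence} of a suitable $\vec{e}^{(d+1)}(\vec{x})$, which one obtains by taking the witness $\vec{u}$ of $\vec{f}^{d+1}_{\{\}}(\vec{x}) \oplus_M W \aord_M \vec{f}^{d+2}_{\{\}}(\vec{x})$ (where $W$ is the correction term) and setting $\vec{e}^{(d+1)}(\vec{x}) \defeq W \oplus_M \vec{u}$.
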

\begin{proof}
  By induction on $d$.
  For $d=0$, we set $\vec{e}^{(0)}(\vec{x})$ to be $\vec{f}_{\{\}}(\vec{x}) \ominus_M \vec{x}$.
  Let $d \ge 0$.
  \begin{align*}
    \vec{f}^{d+2}_{\{\}}(\vec{x}) & = \vec{f}_{\{\}}( \vec{f}^{d+1}_{\{\}} (\vec{x}) ) \\
    & = \vec{f}_{\{\}}( \vec{f}^d_{\{\}}(\vec{x}) \oplus_M \vec{e}^{(d)}(\vec{x}) ) & \text{(induction hypothesis)} \\ 
    & \sqsupseteq_M \vec{f}_{\{\}} ( \vec{f}_{\{\}}^d(\vec{x}) ) \oplus_M ( \calD \vec{f}_{\{\}}|_{\vec{f}_{\{\}}^d(\vec{x})}( \vec{e}^{(d)}(\vec{x}) ))_{\{\}} & \text{(\cref{Lem:Appendix:TaylorUni})} \\
    & \sqsupseteq_M \vec{f}_{\{\}}^{d+1}(\vec{x}) \\
    & \quad \oplus_M ( \calD \vec{f}_{\{\}}|_{\vec{f}^{d}_{\{\}}(\vec{x})}( (\calD \vec{f}_{\{\}} \mid_{\vec{f}^{d-1}_{\{\}}(\vec{x})}( \cdots (\calD \vec{f}_{\{\}}|_{\vec{x}}(\vec{e}^{(0)}(\vec{x}))_{\{\}} ) \cdots )))_{\{\}} )_{\{\}}. & \text{(induction hypothesis)}
  \end{align*}
  Therefore, there exists an $\vec{e}^{(d+1)}(\vec{x}) \sqsupseteq_M ( \calD \vec{f}_{\{\}}|_{\vec{f}^{d}_{\{\}}(\vec{x})}( (\calD \vec{f}_{\{\}} \mid_{\vec{f}^{d-1}_{\{\}}(\vec{x})}( \cdots (\calD \vec{f}_{\{\}}|_{\vec{x}}(\vec{e}^{(0)}(\vec{x}))_{\{\}} ) \cdots )))_{\{\}} )_{\{\}}$.
  Since $(\calD \vec{f}_{\{\}}|_{\vec{y}}(\cdot))_{\{\}}$ is monotone in $\vec{y}$
  and $\vec{x} \aord_M \vec{f}(\vec{x}) \aord_M \vec{f}^2(\vec{x}) \aord_M \cdots$,
  the second inequality also holds.
\end{proof}

\begin{lemma}\label{Lem:Appendix:StarSolve}
  The least solution of $\calD \vec{f}_\gamma|_{\vec{u}}(\vec{Y}) \oplus \vec{\nu} = \vec{Y}$
  under $\gamma'$ is $((\calD \vec{f}|_{\vec{u}}(\cdot))_{\gamma'})^{\circledast}(\vec{\nu})$,
  where $\vec{g}^\circledast(\vec{\nu}) \defeq \bigoplus_{M,n \in \bbN} \vec{g}^n(\vec{\nu})$.
\end{lemma}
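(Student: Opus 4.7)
The plan is to apply Kleene's fixed-point theorem to the map $\Phi(\vec{Y}) \defeq \vec{\nu} \oplus_M g(\vec{Y})$, where $g$ denotes $(\calD \vec{f}|_{\vec{u}}(\cdot))_{\gamma'}$. Since every operator appearing in a linear algebraic regular infinite-tree expression is $\omega$-continuous (by \cref{De:OmegaContinuousPreMarkovAlgebra}), the interpretation $g$ is $\omega$-continuous, hence so is $\Phi$. Kleene's theorem then yields $\lfp_{\vec\azero_M}^{\aord_M} \Phi = \bigsqcup^\uparrow_{n\in\bbN} \Phi^n(\vec{\azero}_M)$. The conclusion
\[
  \textstyle\bigsqcup^\uparrow_{n\in\bbN} \Phi^n(\vec{\azero}_M) \;=\; \bigoplus_{M,\,n\in\bbN} g^n(\vec{\nu}) \;=\; g^\circledast(\vec{\nu})
\]
will follow once I establish, by induction on $n$, the identity $\Phi^n(\vec{\azero}_M) = \bigoplus_{M,\,k=0}^{n-1} g^k(\vec{\nu})$, using that the order $\aord_M$ is precisely the one induced by $\oplus_M$ (so the supremum of the monotone chain of partial sums coincides with the infinite $\oplus_M$-sum).

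The base case $\Phi^0(\vec\azero_M) = \vec\azero_M$ is immediate. The inductive step reduces to showing two \emph{linearity} properties of $g$: (i) $g(\vec\azero_M) = \vec\azero_M$, and (ii) $g(\vec{Y}_1 \oplus_M \vec{Y}_2) = g(\vec{Y}_1) \oplus_M g(\vec{Y}_2)$. I would prove these by structural induction on the syntactic form of $\calD \vec{f}_\gamma|_{\vec{u}}(\vec{Y})$, exactly matching the case analysis of \cref{De:RegularHyperPathDifferential}. The straightforward cases are $\azero_M$, $\mi{seq}[c]$, $\mi{call}_\m{lin}[Y_j;c]$, $\oplus$, $\cdot_\square$, and $\square \in \calK$, which use distributivity of $\otimes_M$ over $\oplus_M$ in the semiring fragment of the $\omega$PMA. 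The cases for $\mi{cond}[\varphi]$ and $\mi{prob}[p]$ are handled precisely by the two distributivity axioms $(a \oplus_M b) \gcho{\varphi}_M (c \oplus_M d) = (a \gcho{\varphi}_M c) \oplus_M (b \gcho{\varphi}_M d)$ and $(a \oplus_M b) \pcho{p}_M (c \oplus_M d) = (a \pcho{p}_M c) \oplus_M (b \pcho{p}_M d)$ required of any $\omega$PMA (\cref{De:OmegaContinuousPreMarkovAlgebra}). The $\mu$-binder case reduces, via $\omega$-continuity and the Kleene characterization of $\lfp$, to the additivity of the unfolding at each finite approximant, which is given by the induction hypothesis.

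The main obstacle is the $\mi{ndet}$ case, since $\dashcup_M$ is \emph{not} required to distribute over $\oplus_M$. However, the differential rule for $\mi{ndet}(g,h)$ has been designed specifically so that the non-distributive $\dashcup_M$-term is immediately canceled by a constant subtraction: one subtracts $\mi{ndet}(g_\gamma(\vec{u}), h_\gamma(\vec{u}))$, yielding exactly the ``increment'' contributed by $\vec{Y}$. I intend to exploit this cancellation together with the inductive additivity of $\calD g$ and $\calD h$, and the monotonicity/$\omega$-continuity of $\dashcup_M$, to show that the interpretation of the full $\mi{ndet}$-differential is $\oplus_M$-additive. The $\ominus_M$-nodes introduced by the same rule are purely definitional on the differences between well-ordered quantities, so they do not pose an independent difficulty. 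Once additivity is established in every syntactic case, the Kleene iteration unfolds to the geometric-like series $g^\circledast(\vec{\nu})$ as sketched above, completing the proof.
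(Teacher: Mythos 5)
Your proof follows the same skeleton as the paper's: apply Kleene's fixed-point theorem to $\Phi(\vec{Y}) \defeq \vec{\nu} \oplus_M (\calD\vec{f}_\gamma|_{\vec{u}}(\vec{Y}))_{\gamma'}$ and then identify $\bigsqcup^{\uparrow}_{n}\Phi^n(\vec{\azero}_M)$ with $\bigoplus_{M,n} h^n(\vec{\nu})$ for $h \defeq (\calD\vec{f}_\gamma|_{\vec{u}}(\cdot))_{\gamma'}$. You are right that this identification stands or falls with the additivity of $h$ (i.e., $h(\vec{\azero}_M)=\vec{\azero}_M$ and $h(\vec{y}_1\oplus_M\vec{y}_2)=h(\vec{y}_1)\oplus_M h(\vec{y}_2)$), and right that every constructor except $\mi{ndet}$ delivers it via distributivity of $\otimes_M$ and the two axioms of \cref{De:OmegaContinuousPreMarkovAlgebra}. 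But the claimed resolution of the $\mi{ndet}$ case is false: subtracting the constant $f_\gamma(\vec{u})$ does not cancel the non-distributivity of $\dashcup_M$. Take the $\omega$PMA $\calR$ with $\dashcup=\min$ and $f=\mi{ndet}(\mi{call}[X](\varepsilon),\varepsilon)$; then $\calD_X f|_{\nu}(Y)$ evaluates to $\min(\nu+Y,1)-\min(\nu,1)$, so at $\nu=0$ it is $h(Y)=\min(Y,1)$, and $h(1\oplus 1)=\min(2,1)=1\neq 2=h(1)\oplus h(1)$. Your structural induction therefore breaks at exactly the case you flagged as the main obstacle, and the "cancellation" argument cannot be repaired into an equality.

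Moreover, the gap is not merely a missing sub-lemma: without additivity the two sides of the statement can genuinely differ. With $h(Y)=\min(Y,1)$ as above and $\nu=0.6$, the least solution of $Y=\min(Y,1)\oplus 0.6$ is $1.6$ (the Kleene iterates are $0,\,0.6,\,1.2,\,1.6,\,1.6,\dots$), whereas $h^{\circledast}(0.6)=\bigoplus_{M,n}h^n(0.6)=0.6+0.6+\cdots=\infty$. So the identity you are trying to prove requires the differential to be an $\oplus_M$-homomorphism as a hypothesis, which holds for the $\azero_M$, $\mi{seq}$, $\mi{call}_{\m{lin}}$, $\oplus$, $\mi{cond}$, $\mi{prob}$, $\dplus_Z$, and $\mu$-binder cases of \cref{De:RegularHyperPathDifferential} but not for $\mi{ndet}$. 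For what it is worth, the paper's own proof is a one-line appeal to Kleene's theorem that silently assumes the same equality $\bigsqcup_n\Phi^n(\vec{\azero}_M)=h^{\circledast}(\vec{\nu})$; you have correctly located the crux that the paper elides, but your proposal does not supply the missing ingredient, and in the $\mi{ndet}$ case only the inequality $\bigsqcup_n\Phi^n(\vec{\azero}_M)\aord_M h^{\circledast}(\vec{\nu})$ (from superadditivity-free monotonicity plus $\vec{x}\aord_M\vec{x}\oplus_M\vec{d}$) can be salvaged in general, not the stated equality.
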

\begin{proof}
  Set $\vec{g}(\vec{Y}) \defeq \calD \vec{f}_\gamma|_{\vec{u}}(\vec{Y}) \oplus \vec{\nu}$.
  Thus $\vec{g}_{\gamma'}(\vec{x}) = (\calD \vec{f}_\gamma|_{\vec{u}}(\vec{x}))_{\gamma'} \oplus_M \vec{\nu}$.
  By Kleene's fixed-point theorem,
  the least solution of $\vec{g}(\vec{Y}) = \vec{Y}$ under $\gamma'$
  is given by $\bigsqcup_{n \in \bbN} \vec{g}^n_{\gamma'}(\vec{\azero}_M) = ((\calD \vec{f}_\gamma|_{\vec{u}}(\cdot))_{\gamma'})^{\circledast}(\vec{\nu})$.
\end{proof}

\begin{theorem*}[\cref{The:NewtonConvergence}]
  Let $\vec{f}$ be a vector of regular infinite-tree expressions in $\m{RegExp}^\infty(\calF^\alpha,\emptyset)$.
  Then the Newton sequence is monotonically increasing, and it converges to the least fixed-point
  as least as fast as the Kleene sequence, i.e., for all $i \in \bbN$, we have
  \[
  \vec{\kappa}^{(i)} \aord_M \vec{\nu}^{(i)} \aord_M \vec{f}_{\{\}}(\vec{\nu}^{(i)}) \aord_M \vec{\nu}^{(i+1)} \aord_M \lfp_{\vec{\azero}_M}^{\aord_M} \vec{f}_{\{\}} = \textstyle\bigsqcup^{\uparrow}_{j \in \bbN} \vec{\kappa}^{(j)},
  \]
  where the Kleene sequence is defined as $\vec{\kappa}^{(j)} \defeq \vec{f}_{\{\}}^j ( \vec{\azero}_M)$ for $j \in \bbN$.
\end{theorem*}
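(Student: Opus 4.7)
The plan is to proceed by induction on $i$, with \cref{Lem:Appendix:TaylorUni} as the key technical tool. Let $\vec{\mu} \defeq \lfp_{\vec{\azero}_M}^{\aord_M} \vec{f}_{\{\}}$. First I would establish the upper bound $\vec{\nu}^{(i)} \aord_M \vec{\mu}$. The base case $\vec{\nu}^{(0)} = \vec{f}_{\{\}}(\vec{\azero}_M) \aord_M \vec{\mu}$ is immediate from monotonicity. For the inductive step, rather than attempting to ``cancel'' $\vec{\nu}^{(i)}$ (which fails in general $\omega$-continuous semirings since $\oplus_M$ need not be cancellative), I would use the Kleene characterization of $\vec{\Delta}^{(i)}$ given by \cref{Lem:Appendix:StarSolve}, writing $\vec{\Delta}^{(i)} = \bigsqcup_k h^k(\vec{\azero}_M)$ where $h(\vec{Y}) \defeq \vec{\delta}^{(i)} \oplus_M (\calD \vec{f}_{\{\}}|_{\vec{\nu}^{(i)}}(\vec{Y}))_{\{\}}$. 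An inner induction on $k$, applying \cref{Lem:Appendix:TaylorUni} at each step, establishes
\[
\vec{\nu}^{(i)} \oplus_M h^k(\vec{\azero}_M) \aord_M \vec{f}_{\{\}}^{k+1}(\vec{\nu}^{(i)}).
\]
Taking the supremum in $k$ and using monotonicity of $\vec{f}_{\{\}}$ together with the outer induction hypothesis $\vec{\nu}^{(i)} \aord_M \vec{\mu}$, I obtain $\vec{\nu}^{(i+1)} \aord_M \bigsqcup_k \vec{f}_{\{\}}^{k+1}(\vec{\mu}) = \vec{\mu}$.

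Next I would establish $\vec{f}_{\{\}}(\vec{\nu}^{(i)}) \aord_M \vec{\nu}^{(i+1)}$. Since $\vec{\Delta}^{(i)}$ satisfies its defining equation, $\vec{\Delta}^{(i)} \sqsupseteq_M \vec{\delta}^{(i)}$, hence $\vec{\nu}^{(i+1)} = \vec{\nu}^{(i)} \oplus_M \vec{\Delta}^{(i)} \sqsupseteq_M \vec{\nu}^{(i)} \oplus_M \vec{\delta}^{(i)} = \vec{f}_{\{\}}(\vec{\nu}^{(i)})$. Monotonicity $\vec{\nu}^{(i)} \aord_M \vec{\nu}^{(i+1)}$ of the Newton sequence is immediate from the update rule.

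The Kleene-domination $\vec{\kappa}^{(i)} \aord_M \vec{\nu}^{(i)}$ then follows by a short induction: the base case $\vec{\kappa}^{(0)} = \vec{\azero}_M \aord_M \vec{\nu}^{(0)}$ is trivial, and the inductive step chains $\vec{\kappa}^{(i+1)} = \vec{f}_{\{\}}(\vec{\kappa}^{(i)}) \aord_M \vec{f}_{\{\}}(\vec{\nu}^{(i)}) \aord_M \vec{\nu}^{(i+1)}$ by monotonicity of $\vec{f}_{\{\}}$ and the previously established inequality. The ``same fixed-point'' coda follows by sandwiching: taking $\omega$-suprema of $\vec{\kappa}^{(i)} \aord_M \vec{\nu}^{(i)} \aord_M \vec{\mu} = \bigsqcup^{\uparrow}_j \vec{\kappa}^{(j)}$ forces $\bigsqcup^{\uparrow}_i \vec{\nu}^{(i)} = \vec{\mu}$.

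I expect the main obstacle to be the upper-bound step $\vec{\nu}^{(i+1)} \aord_M \vec{\mu}$. The naive approach---writing $\vec{\mu} = \vec{\nu}^{(i)} \oplus_M \vec{d}$, applying \cref{Lem:Appendix:TaylorUni} to the identity $\vec{\mu} = \vec{f}_{\{\}}(\vec{\mu})$, and trying to conclude that $\vec{d}$ is a pre-fixed-point of the linearized operator---breaks down because $\oplus_M$ is not cancellative in a general $\omega$-continuous semiring. The Kleene-based detour through $\vec{f}_{\{\}}^{k+1}(\vec{\nu}^{(i)})$ is the essential workaround: it never discards the common $\vec{\nu}^{(i)}$, and instead pushes the bound one Taylor-step at a time through the approximation chain of $\vec{\Delta}^{(i)}$. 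All remaining inequalities reduce to monotonicity once this cornerstone is in place.
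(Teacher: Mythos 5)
Your proposal is correct and follows essentially the same route as the paper: an outer induction on $i$ driven by the Taylor-style under-approximation lemma (\cref{Lem:Appendix:TaylorUni}), with the crucial upper bound $\vec{\nu}^{(i+1)} \aord_M \lfp_{\vec{\azero}_M}^{\aord_M} \vec{f}_{\{\}}$ obtained by pushing that lemma one step at a time through the approximants of the linearized system---your inner induction on $k$ is precisely the content of the paper's \cref{Lem:Appendix:Aux}, merely indexed by the Kleene approximants of the linearized operator rather than by the increments $\vec{e}^{(d)}$ of the Kleene sequence started at $\vec{\nu}^{(i)}$. The only point you leave implicit is $\vec{\nu}^{(i)} \aord_M \vec{f}_{\{\}}(\vec{\nu}^{(i)})$, which is needed both for $\vec{\delta}^{(i)}$ to be well defined and as the base case of your inner induction; it is closed by one further application of \cref{Lem:Appendix:TaylorUni}, namely $\vec{\nu}^{(i+1)} = \vec{f}_{\{\}}(\vec{\nu}^{(i)}) \oplus_M (\calD \vec{f}_{\{\}}|_{\vec{\nu}^{(i)}}(\vec{\Delta}^{(i)}))_{\{\}} \aord_M \vec{f}_{\{\}}(\vec{\nu}^{(i)} \oplus_M \vec{\Delta}^{(i)}) = \vec{f}_{\{\}}(\vec{\nu}^{(i+1)})$, exactly as in the paper's induction step.
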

\begin{proof}
  We proceed by induction on $i$. The base case $i=0$ is straightforward. For the induction step, let $i \ge 0$.
  Then we have
  \begin{align*}
    \vec{\kappa}^{(i+1)} & = \vec{f}_{\{\}}(\vec{\kappa}^{(i)}) \\
    & \aord_M \vec{f}_{\{\}}(\vec{\nu}^{(i)}) & \text{(induction: $\vec{\kappa}^{(i)} \aord_M \vec{\nu}^{(i)}$)} \\
    & = \vec{\nu}^{(i)} \oplus_M \vec{\delta}^{(i)} & \text{($\vec{\delta}^{(i)} \defeq \vec{f}_{\{\}}(\vec{\nu}^{(i)}) \ominus_M \vec{\nu}^{(i)}$)} \\
    & \aord_M \vec{\nu}^{(i)} \oplus_M (\calD \vec{f}_{\{\}}|_{\vec{\nu}^{(i)}}(\cdot))_{\{\}}^{\circledast}(\vec{\delta}^{(i)}) & \text{($\vec{\delta} \aord_M \vec{g}^\circledast(\vec{\delta})$)} \\
    & = \vec{\nu}^{(i+1)} & \text{(\cref{Lem:Appendix:StarSolve})} \\
    & = \vec{\nu}^{(i)} \oplus_M \vec{\delta}^{(i)} \oplus (\calD \vec{f}_{\{\}}|_{\vec{\nu}^{(i)}}(\cdot))_{\{\}}( (\calD \vec{f}_{\{\}}|_{\vec{\nu}^{(i)}}(\cdot))_{\{\}}^\circledast( \vec{\delta}^{(i)} )  ) & \text{($\vec{g}^\circledast(\vec{\delta}) = \vec{\delta} \oplus_M \vec{g}(\vec{g}^\circledast(\vec{\delta}))$)} \\
    & = \vec{f}_{\{\}}(\vec{\nu}^{(i)}) \oplus_M (\calD \vec{f}_{\{\}}|_{\vec{\nu}^{(i)}}(\cdot))_{\{\}}( (\calD \vec{f}_{\{\}}|_{\vec{\nu}^{(i)}}(\cdot))_{\{\}}^\circledast( \vec{\delta}^{(i)} )  ) \\
    & \aord_M \vec{f}_{\{\}}( \vec{\nu}^{(i)} \oplus_M  (\calD \vec{f}_{\{\}}|_{\vec{\nu}^{(i)}}(\cdot))_{\{\}}^\circledast( \vec{\delta}^{(i)} ) ) & \text{(\cref{Lem:Appendix:TaylorUni})} \\
    & = \vec{f}_{\{\}}( \vec{\nu}^{(i+1)} ). & \text{(\cref{Lem:Appendix:StarSolve})}
  \end{align*}
  Next, we prove $\vec{f}_{\{\}}(\vec{\nu}^{(i+1)}) \aord_M \vec{\nu}^{(i+2)}$:
  \begin{align*}
    \vec{f}_{\{\}}(\vec{\nu}^{(i+1)}) & = \vec{\nu}^{(i+1)} \oplus_M \vec{\delta}^{(i+1)} & \text{($\vec{\delta}^{(i+1)} \defeq \vec{f}_{\{\}}(\vec{\nu}^{(i+1)}) \ominus_M \vec{\nu}^{(i+1)}$)} \\
    & \aord_M \vec{\nu}^{(i+1)} \oplus_M (\calD \vec{f}_{\{\}}|_{\vec{\nu}^{(i)}}(\cdot))_{\{\}}^{\circledast}(\vec{\delta}^{(i+1)}) & \text{($\vec{\delta} \aord_M \vec{g}^\circledast(\vec{\delta})$)} \\
    & = \vec{\nu}^{(i+2)}. & \text{(\cref{Lem:Appendix:StarSolve})}
  \end{align*}
  
  By Kleene's fixed-point theorem, we know that $\lfp_{\vec{\azero}_M}^{\aord_M} \vec{f}_{\{\}} = \textstyle\bigsqcup^{\uparrow}_{j \in \bbN} \vec{\kappa}^{(j)}$.
  It remains to show that $\vec{\nu}^{(i)} \aord_M \lfp_{\vec{\azero}_M}^{\aord_M} \vec{f}_{\{\}}$ for all $i$.
  We proceed by induction on $i$. The base case $i=0$ is trivial.
  Let $i \ge 0$. We have
  \begin{align*}
    \vec{\nu}^{(i+1)} & = \vec{\nu}^{(i)} \oplus_M (\calD \vec{f}_{\{\}}|_{\vec{\nu}^{(i)}}(\cdot))_{\{\}}^{\circledast}(\vec{\delta}^{(i)}) & \text{(\cref{Lem:Appendix:StarSolve})} \\
    & = \vec{\nu}^{(i)} \oplus_M \bigoplus_{M,d\in\bbN} (\calD\vec{f}_{\{\}}|_{\vec{\nu}^{(i)}}(\cdot))^d(\vec{\delta}^{(i)}) \\
    & \aord_M \vec{\nu}^{(i)} \oplus_M \bigoplus_{M,d\in\bbN} \vec{e}^{(d)}(\vec{\nu}^{(i)}) & \text{(\cref{Lem:Appendix:Aux})} \\
    & = \bigsqcup_{d\in\bbN} \vec{f}^d_{\{\}}(\vec{\nu}^{(i)}) \\
    & \aord_M \lfp_{\vec{\azero}_M}^{\aord_M} \vec{f}_{\{\}},
  \end{align*}
  where the last step follows from the induction hypothesis $\vec{\nu}^{(i)} \aord_M \lfp_{\vec{\azero}_M}^{\aord_M} \vec{f}_{\{\}}$, thus $\vec{f}^d(\vec{\nu}^{(i)}) \aord_M \lfp_{\vec{\azero}_M}^{\aord_M} \vec{f}_{\{\}}$  for any $d \in \bbN$.
\end{proof}


\section{Soundness of \framework{}}
\label{Se:Appendix:Soundness}

In this section, we prove the soundness of \framework{}.
We use a recently proposed family of algebraic structures,
namely \emph{Markov algebras} (MAs)~\cite{PLDI:WHR18}, to specify concrete semantics of probabilistic programs.
We then introduce \emph{soundness relations} between MAs and $\omega$PMAs and show those
relations guarantee the soundness of program analyses in \framework{}.
We use some standard notions from domain theory:
directed-complete partial order (dcpo) and Scott-continuous function.

\noindent\textbf{Semantic foundation.}
%
%
We use the interpretation of regular infinite-tree expressions over MAs to define concrete semantics
of probabilistic programs.
%
%
  A \emph{Markov algebra} (MA) $\calM = \tuple{M,\aord_M, \otimes_M,\gcho{\varphi}_M, \pcho{p}_M, \dashcup_M, \azero_M, \aone_M}$ over a set $\calL$ of logical conditions is 
  almost the same as an $\omega$PMA, except that it does not contain the $\oplus_M$ operation but
  has an explicit partial order $\aord_M$, as well as it satisfies a different
  set of algebraic laws: $\tuple{M,\aord_M}$ forms a dcpo with $\azero_M$ as its least element;
  $\tuple{M,\otimes_M,\aone_M}$ forms a monoid;
  $\dashcup_M$ is idempotent, commutative, associative and for all $a,b \in M$ and $\varphi \in \calL, p\in[0,1]$
  it holds that $a \gcho{\varphi}_M b, a \pcho{p}_M b \le_M a \dashcup_M b$,
  where $\le_M$ is the semilattice ordering induced by $\dashcup_M$ (i.e., $a \le_M b$ if $a \dashcup_M b = b$);
  and $\otimes_M,\gcho{\varphi}_M, \pcho{p}_M, \dashcup_M$ are Scott-continuous. 
%
  An \emph{MA interpretation} is then a pair $\scrM = \tuple{\calM,\interp{\cdot}^\scrM}$
  where $\interp{\cdot}^\scrM$ maps data actions
  to $\calM$.

We consider probabilistic programs represented as an equation system $\{ X_i = E_{X_i} \}_{i=1}^n$,
where the $i^\textit{th}$ procedure has name $X_i$, $E_{X_i} \in \m{RegExp}^\infty(\calF,\emptyset)$ encodes
the regular infinite-tree expression through the CFHG of $X_i$.
Given a regular infinite-tree expression $E \in \m{RegExp}^\infty(\calF,\calK)$,
an MA interpretation $\scrM = \tuple{\calM,\interp{\cdot}^\scrM}$,
a valuation $\gamma : \calK \to \calM$,
and a procedure-summary vector $\vec{\nu} \in \calM^n$,
the interpretation of $E$ under $\gamma$ and $\vec{\nu}$, denoted by $\scrM_\gamma\interp{E}(\vec{\nu})$,
can be defined in the same way as the $\omega$PMA interpretations presented in \cref{Se:OmegaPMAsAndAlgebraicExpressions}.
We still define the interpretation for $\mu$-binders using least fixed-points,
because $\calM$ is a dcpo.

%


\begin{example}\label{Exa:Appendix:MAForBooleanPrograms}
  Consider probabilistic Boolean programs with a set $\m{Var}$ of program variables.
  Let $S \defeq 2^{\m{Var}}$ denote the state space of such programs.
  We formulate a demonic denotational semantics~\cite{book:MM05} by defining an MA interpretation.
  Let $\overline{S} \defeq \{ \mu : S \to [0,1] \mid \sum_{s\in S} \mu(s) \le 1 \}$ denote the set of \emph{sub-distributions} on $S$.
  We define a partial order on $\overline{S}$ by $\mu_1 \le \mu_2 \defeq \Forall{s \in S} \mu_1(s) \le \mu_s(s)$.
  We write $\bbC S$ to be the set of non-empty, up-closed, convex, and Cauchy-closed subsets of $\overline{S}$:
  a set $\calO$ of $\overline{S}$ is said to be
  \emph{up-closed} if $\mu \in \calO$ and $\mu \le \mu'$ imply $\mu' \in \calO$,
  \emph{convex} if $\mu_1,\mu_2 \in \calO$ implies $p{\cdot}\mu_1{+}(1{-}p){\cdot}\mu_2 \in \calO$ for any $p \in [0,1]$, and
  \emph{Cauchy-closed} if $\calO$ is a closed subset of some $N$-dimensional Euclidean space $\bbR^N$
  (in this example, we have $N = |S|$).
  \citeauthor{book:MM05} proved that $\bbH S \defeq S \to \bbC S$ forms a dcpo with respect to the ordering
  $
  r_1 \aord_C r_2 \defeq \Forall{s \in S} r_1(s) \supseteq r_2(s)
  $.
  
  For any $r \in \bbH S$, we can lift $r$ to a Scott-continuous mapping $\widehat{r} : \bbC S \to \bbC S$ as
  $
  \widehat{r} \defeq \lambda \calO. \; \{ \overline{f}(\mu) \mid f \in \bbD S, r \lesssim f, \mu \in \calO \},
  $
  where $\bbD S \defeq S \to \overline{S}$ is the set of state-to-distribution mappings,
  for any $f \in \bbD S$ we can lift $f$ to a Scott-continuous mapping $\overline{f} : \overline{S} \to \overline{S}$ as
  $\overline{f} \defeq \lambda \mu. \; \lambda s'.\; \sum_{s \in S} f(s)(s') \cdot \mu(s)$,
  and by $r \lesssim f$ we mean $\Forall{s \in S} f(s) \in r(s)$.
  
  We now formulate an MA $\calC = \tuple{ \bbH S, {\aord}_C, {\otimes}_C, \gcho{\varphi}_C, \pcho{p}_C, \dashcup_C, \azero_C, \aone_C}$ on $\bbH S$
  as follows, where we write $\varphi(s)$ for the truth value of $\varphi$ in state $s$ and $\m{ite}$ for the if-then-else operator.
  \begin{align*}
    r_1 \otimes_C r_2 & \defeq \lambda s.\; \widehat{r}_2(r_1(s)), &
    r_1 \pcho{p}_C r_2 & \defeq \lambda s.\; \{ p {\cdot} \mu_1 {+} (1{-}p) {\cdot} \mu_2 \mid \mu_1 \in r_1(s), \mu_2 \in r_2(s) \}, \\
    r_1 \gcho{\varphi}_C r_2 & \defeq \lambda s.\; \m{ite}(\varphi(s), r_1(s), r_2(s)), &
    r_1 \dashcup_C r_2 & \defeq \lambda s.\; \{ q {\cdot} \mu_1 {+} (1{-}q) {\cdot} \mu_2 \mid q {\in} [0,1], \mu_1 {\in} r_1(s), \mu_2 {\in} r_2(s) \}, \\
    \azero_C & \defeq \lambda s.\; \overline{S}, &
    \aone_C & \defeq \lambda s.\; \{ \lambda s'.\; \m{ite}(s = s', 1, 0) \}.
  \end{align*}
\end{example}

\noindent\textbf{Soundness relations.}
We adapt abstract interpretation~\cite{POPL:CC77} to justify \framework{}
by establishing a relation between the concrete and abstract semantics.
We express the relation via a \emph{soundness relation}, which is a binary relation between an MA interpretation
and an $\omega$PMA interpretation that is preserved by the algebraic operations.
Intuitively, a (concrete, abstract) pair in the relation should be read as ``the concrete element is
approximated by the abstract element.''

\begin{definition}\label{De:OmegaSoundnessRelations}
  Let $\scrC = \tuple{\calC,\interp{\cdot}^\scrC}$ be an MA interpretation
  and $\scrM = \tuple{\calM,\interp{\cdot}^\scrM}$ be an $\omega$PMA interpretation.
  We say ${-} \Vdash {-} \subseteq \calC \times \calM$ is a \emph{soundness relation},
  if $\interp{\m{act}}^\scrC \Vdash \interp{\m{act}}^\scrM$ for all data actions $\m{act} \in \calA$,
  $\azero_C \Vdash \azero_M$, $\aone_C \Vdash \aone_M$, and for all $x_1 \Vdash y_1$ and $x_2 \Vdash y_2$ we have
  \begin{itemize}[nosep,leftmargin=*]
    \item $x_1 \otimes_C x_2 \Vdash y_1 \otimes_M y_2$,
    $x_1 \gcho{\varphi}_C x_2 \Vdash y_1 \gcho{\varphi}_M y_2$ for all $\varphi \in \calL$,
    $x_1 \pcho{p}_C x_2 \Vdash y_1 \pcho{p}_M y_2$ for all $p \in [0,1]$,
    $x_1 \dashcup_C x_2 \Vdash y_1 \dashcup_M y_2$, and
    \item for any Scott-continuous $\vec{f} : \vec{C} \to \vec{C}$ and Scott-continuous $\vec{g} : \vec{D} \to \vec{D}$, the property ``$\vec{x} \Vdash \vec{y}$ implies $\vec{f}(\vec{x}) \Vdash \vec{g}(\vec{y})$
    for all $\vec{x},\vec{y}$'' implies $\lfp_{\vec{\azero}_C}^{\aord_C} \vec{f} \Vdash \lfp_{\vec{\azero}_M}^{\aord_M} \vec{g}$.
  \end{itemize}
\end{definition}

\begin{example} 
  Let $\scrC = \tuple{\calC,\interp{\cdot}^\scrC }$ be an MA interpretation where $\calC$ is an MA for
  probabilistic Boolean programs, which is introduced in \cref{Exa:Appendix:MAForBooleanPrograms}.
  Let $\scrB = \tuple{\calB, \interp{\cdot}^\scrB }$ be an $\omega$PMA interpretation where $\calB$ is the $\omega$PMA over matrices
  described in \cref{Exa:BayesianInferenceDomain}.
  We define the following approximation relation to indicate that the program analysis reasons about lower bounds:
  \[
  r \Vdash \mathbf{A} \iff \Forall{s \in S} \Forall{\mu \in r(s)} \Forall{s' \in S} \mu(s') \ge \mathbf{A}(s,s').
  \]
  %
\end{example}

The correctness of an interpretation
is then justified by the following soundness theorem, which followed by induction on the structure of
regular infinite-tree expressions.

\begin{theorem*}[\cref{The:InterSoundness}]
  Let $\scrC = \tuple{\calC, \interp{\cdot}^\scrC}$ be an MA interpretation.
  Let $\scrM = \tuple{\calM, \interp{\cdot}^\scrM}$ be an $\omega$PMA interpretation.
  Let ${\Vdash} \subseteq \calC \times \calM$ be a soundness relation.
  Then for any regular infinite-tree expression $E \in \m{RegExp}^\infty(\calF,\calK)$,
  $\gamma : \calK \to \calC$, $\gamma^\sharp : \calK \to \calM$ such that
  $\gamma(\square) \Vdash \gamma^\sharp(\square)$ for all $\square \in \calK$,
  $\vec{\nu} : \calC^n$, and $\vec{\nu}^\sharp : \calM^n$ such that $\nu_i \Vdash \nu^\sharp_i$ for $i=1,\cdots,n$,
  we have $\scrC_{\gamma}\interp{E}(\vec{\nu}) \Vdash \scrM_{\gamma^\sharp}\interp{E}(\vec{\nu}^\sharp)$.
\end{theorem*}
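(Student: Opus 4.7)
\textbf{Proof proposal for \cref{The:InterSoundness}.}
The plan is to proceed by structural induction on the regular infinite-tree expression $E \in \m{RegExp}^\infty(\calF,\calK)$, using the closure properties baked into \cref{De:OmegaSoundnessRelations}. For each syntactic form the induction hypothesis will give the relation for the immediate subexpressions, and the corresponding closure property of $\Vdash$ will lift it to $E$ itself. The overall structure mirrors the definition of $\scrC_\gamma\interp{\cdot}(\vec{\nu})$ and $\scrM_{\gamma^\sharp}\interp{\cdot}(\vec{\nu}^\sharp)$ in \cref{Se:OmegaPMAsAndAlgebraicExpressions}, so each case simply reduces to checking that the algebraic operation used on the two sides respects $\Vdash$.

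For the base cases: $E = \varepsilon$ is handled by $\aone_C \Vdash \aone_M$; $E = Z \in \calK$ is handled by the assumption $\gamma(Z) \Vdash \gamma^\sharp(Z)$; and $E = \mi{seq}[\m{act}](E')$ combines $\interp{\m{act}}^\scrC \Vdash \interp{\m{act}}^\scrM$ (required of a soundness relation) with the induction hypothesis on $E'$ via the closure of $\Vdash$ under $\otimes$. The cases $\mi{cond}[\varphi](E_1,E_2)$, $\mi{prob}[p](E_1,E_2)$, and $\mi{ndet}(E_1,E_2)$ are handled analogously by applying the induction hypothesis to $E_1$ and $E_2$ and then invoking the closure of $\Vdash$ under $\gcho{\varphi}$, $\pcho{p}$, and $\dashcup$ respectively. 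For $\mi{call}[X_i](E')$, we use $\nu_i \Vdash \nu^\sharp_i$ from the hypothesis together with the induction hypothesis on $E'$ and closure under $\otimes$. For $E_1 \dplus_Z E_2$, I first apply the induction hypothesis to $E_2$ to get $\scrC_\gamma\interp{E_2}(\vec{\nu}) \Vdash \scrM_{\gamma^\sharp}\interp{E_2}(\vec{\nu}^\sharp)$, then extend $\gamma, \gamma^\sharp$ by mapping $Z$ to these related values to obtain valuations satisfying the inductive assumption on $\calK \uplus \{Z\}$, and finally apply the induction hypothesis to $E_1$.

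The main obstacle will be the $\mu$-binder case $E = \mu Z.\, E'$, which is interpreted by a least fixed-point on both sides. Let $\vec{F}(\theta) \defeq \scrC_{\gamma[Z \mapsto \theta]}\interp{E'}(\vec{\nu})$ and $\vec{G}(\theta^\sharp) \defeq \scrM_{\gamma^\sharp[Z \mapsto \theta^\sharp]}\interp{E'}(\vec{\nu}^\sharp)$. I need to invoke the second clause of \cref{De:OmegaSoundnessRelations}, which requires verifying two things: (i) both $\vec{F}$ and $\vec{G}$ are Scott-continuous on their respective dcpos, and (ii) whenever $\theta \Vdash \theta^\sharp$, we have $\vec{F}(\theta) \Vdash \vec{G}(\theta^\sharp)$. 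Condition (ii) is exactly the induction hypothesis on $E'$ applied to the extended valuations $\gamma[Z \mapsto \theta]$ and $\gamma^\sharp[Z \mapsto \theta^\sharp]$. For condition (i), Scott-continuity of $\vec{F}$ follows from Scott-continuity of the MA operations (which is required in the definition of MAs), and Scott-continuity of $\vec{G}$ follows from $\omega$-continuity of the $\omega$PMA operations together with the fact that our interpretation uses only countably many nested operations---so $\omega$-continuity suffices for the $\omega$-chains produced by Kleene iteration. Given (i) and (ii), the clause yields $\lfp_{\azero_C}^{\aord_C} \vec{F} \Vdash \lfp_{\azero_M}^{\aord_M} \vec{G}$, which is exactly the desired soundness for the $\mu$-binder.

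A subtle point to address carefully is that $\calM$ is only required to be an $\omega$-cpo (via the $\oplus_M$-induced order), while $\calC$ is a full dcpo; I will need to check that the closure condition on least fixed-points in \cref{De:OmegaSoundnessRelations} applies in this mixed setting---since the Newton-style Kleene ascending chains built in both semantics are countable, the $\omega$-completeness of $\calM$ is enough to compute the least fixed-point, and the pointwise-related $\omega$-chains $\{\vec{F}^k(\vec{\azero}_C)\}_{k}$ and $\{\vec{G}^k(\vec{\azero}_M)\}_k$ have $\Vdash$-related suprema. With this, the induction goes through uniformly, completing the proof of \cref{The:InterSoundness}.
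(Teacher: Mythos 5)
Your proposal is correct and follows essentially the same route as the paper's proof: structural induction on $E$, discharging each operator case by the corresponding closure property of the soundness relation, handling $\dplus_Z$ by extending the two valuations with the related interpretations of $E_2$, and handling $\mu Z.\,E'$ via the least-fixed-point clause of the definition of soundness relations. Your extra care about Scott- versus $\omega$-continuity on the abstract side is a reasonable point that the paper's own proof passes over silently, but it does not change the argument.
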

\begin{proof}
  By induction on the structure of $E$.
  \begin{description}[labelindent=\parindent]
    \item[{Case $E=\varepsilon$:}]\
    
    We have $\scrC_\gamma\interp{E}(\vec{\nu}) = \aone_C$.
    
    We have $\scrM_{\gamma^\sharp}\interp{E}(\vec{\nu}^\sharp) = \aone_M$.
    
    By \cref{De:OmegaSoundnessRelations}, we conclude that $\aone_C \Vdash \aone_M$.
    
    \item[{Case $E=\mi{seq}[\m{act}](E_1)$:}]\
    
    We have $\scrC_\gamma\interp{E}(\vec{\nu}) = \interp{\m{act}}^\scrC \otimes_C \scrC_\gamma\interp{E_1}(\vec{\nu})$.
    
    We have $\scrM_{\gamma^\sharp}\interp{E}(\vec{\nu}^\sharp) = \interp{\m{act}}^\scrM \otimes_M \scrM_{\gamma^\sharp}\interp{E_1}(\vec{\nu}^\sharp)$.
    
    By induction hypothesis, we have $\scrC_\gamma\interp{E_1}(\vec{\nu}) \Vdash \scrM_{\gamma^\sharp}\interp{E_1}(\vec{\nu}^\sharp)$.
    
    By \cref{De:OmegaSoundnessRelations}, we have $\interp{\m{act}}^\scrC \Vdash \interp{\m{act}}^\scrM$ and conclude by the fact that $\Vdash$ preserves extend operation.
  
    \item[{Case $E=\mi{call}[X_i](E_1)$:}]\
    
    We have $\scrC_\gamma\interp{E}(\vec{\nu}) = \nu_i \otimes_C \scrC_\gamma\interp{E_1}(\vec{\nu})$.
    
    We have $\scrM_{\gamma^\sharp}\interp{E}(\vec{\nu}^\sharp) = \nu^\sharp_i \otimes_M \scrM_{\gamma^\sharp}\interp{E_1}(\vec{\nu}^\sharp)$.
    
    By induction hypothesis, we have $\scrC_\gamma\interp{E_1}(\vec{\nu}) \Vdash \scrM_{\gamma^\sharp}\interp{E_1}(\vec{\nu}^\sharp)$.
    
    By assumption, we have $\nu_i \Vdash \nu_i^\sharp$ and conclude by the fact that $\Vdash$ preserves extend operation.
    
    \item[{Case $E=\mi{cond}[\varphi](E_1,E_2)$:}]\
    
    We have $\scrC_\gamma\interp{E}(\vec{\nu}) = \scrC_\gamma\interp{E_1}(\vec{\nu}) \gcho{\varphi}_C \scrC_\gamma\interp{E_2}(\vec{\nu})$.
    
    We have $\scrM_{\gamma^\sharp}\interp{E}(\vec{\nu}^\sharp) = \scrM_{\gamma^\sharp}\interp{E_1}(\vec{\nu}^\sharp) \gcho{\varphi}_M \scrM_{\gamma^\sharp}\interp{E_2}(\vec{\nu}^\sharp)$.
    
    By induction hypothesis, we have $\scrC_\gamma\interp{E_1}(\vec{\nu}) \Vdash \scrM_{\gamma^\sharp}\interp{E_1}(\vec{\nu}^\sharp)$ and
    $\scrC_\gamma\interp{E_2}(\vec{\nu}) \Vdash \scrM_{\gamma^\sharp}\interp{E_2}(\vec{\nu}^\sharp)$.
    
    We then conclude by the fact that $\Vdash$ preserves conditional-choice operation.
    
    Cases $\mi{prob}[p](E_1,E_2)$ and $\mi{ndet}(E_1,E_2)$ are similar to this case.
    
    \item[{Case $E=\square$:}]\
    
    We have $\scrC_{\gamma}\interp{E}(\vec{\nu}) = \gamma(\square)$.
    
    We have $\scrM_{\gamma^\sharp}\interp{E}(\vec{\nu}) = \gamma^\sharp(\square)$.
    
    We conclude by the assumption $\gamma(\square) \Vdash \gamma^\sharp(\square)$.
    
    \item[{Case $E=E_1 \cdot_{\square} E_2$:}]\
    
    We have $\scrC_{\gamma}\interp{E}(\vec{\nu}) = \scrC_{\gamma[ \square \mapsto \scrC_{\gamma}\interp{E_2}(\vec{\nu}) ]}\interp{E_1}(\vec{\nu})$.
    
    We have $\scrM_{\gamma^\sharp}\interp{E}(\vec{\nu}^\sharp) = \scrN_{\gamma^\sharp[ \square \mapsto \scrM_{\gamma^\sharp}\interp{E_2}(\vec{\nu}^\sharp) ]}\interp{E_1}(\vec{\nu}^\sharp)$.
    
    By induction hypothesis, we have $\scrC_\gamma\interp{E_2}(\vec{\nu}) \Vdash \scrM_{\gamma^\sharp}\interp{E_2}(\vec{\nu}^\sharp)$.
    
    Thus, $\gamma[ \square \mapsto \scrC_{\gamma}\interp{E_2}(\vec{\nu}) ] \Vdash \gamma^\sharp[ \square \mapsto \scrM_{\gamma^\sharp}\interp{E_2}(\vec{\nu}^\sharp) ]$.
    
    By induction hypothesis, we have $\scrC_{\gamma[ \square \mapsto \scrC_{\gamma}\interp{E_2}(\vec{\nu}) ]}\interp{E_1}(\vec{\nu}) \Vdash \scrM_{\gamma^\sharp[ \square \mapsto \scrM_{\gamma^\sharp}\interp{E_2}(\vec{\nu}^\sharp) ]}\interp{E_1}(\vec{\nu}^\sharp)$ and then conclude this case.
    
    \item[{Casse $E=(E_1)^{\infty_\square}$:}]\
    
    We have $\scrC_{\gamma}\interp{E}(\vec{\nu}) = \lfp_{\azero_C}^{\aord_C} \lambda \theta.\; \scrC_{\gamma[\square \mapsto \theta]}\interp{E_1}(\vec{\nu})$.
    Define $f(\theta) \defeq \scrC_{\gamma[\square \mapsto \theta]}\interp{E_1}(\vec{\nu})$.
    
    We have $\scrM_{\gamma^\sharp}\interp{E}(\vec{\nu}^\sharp) = \lfp_{\azero_M}^{\aord_M} \lambda \theta^\sharp.\; \scrM_{\gamma^\sharp[\square \mapsto \theta^\sharp]}\interp{E_1}(\vec{\nu}^\sharp)$.
    Define $g(\theta^\sharp) \defeq \scrM_{\gamma^\sharp[\square \mapsto \theta^\sharp]}\interp{E_1}(\vec{\nu}^\sharp)$.
    
    For any $\theta \Vdash \theta^\sharp$, by induction hypothesis, we have $f(\theta) \Vdash g(\theta^\sharp)$.
    
    Thus, by \cref{De:OmegaSoundnessRelations}, we have $\lfp_{\azero_C}^{\aord_C} f \Vdash \lfp_{\azero_M}^{\aord_M} g$ and
    conclude this case.
  \end{description}
\end{proof}

Let $\scrC = \tuple{\calC, \interp{\cdot}^\scrC}$ be an MA interpretation
and $\scrM = \tuple{\calM, \interp{\cdot}^\scrM}$ be an $\omega$PMA interpretation.
By \cref{The:InterSoundness} and the last property of \cref{De:OmegaSoundnessRelations}, we can show that if $P \defeq \{X_i = E_{X_i}\}_{i=1}^n$ is an interprocedural equation system where $E_{X_i} \in \m{RegExp}^\infty(\calF,\emptyset)$ for each $i=1,\cdots,n$, it holds that
\begin{equation}\label{Eq:InterSoundApprox}
\lfp_{\vec{\azero}_C}^{\aord_C} \lambda \vec{\theta}.\; \tuple{\scrC_{\{\}}\interp{E_{X_i}}(\vec{\theta}) }_{i=1,\cdots,n}
\Vdash
\lfp_{\vec{\azero}_M}^{\aord_M} \lambda \vec{\theta}^\sharp.\; \tuple{\scrM_{\{\}}\interp{E_{X_i}}(\vec{\theta}^\sharp) }_{i=1,\cdots,n},
\end{equation}
where the left-hand side of \cref{Eq:InterSoundApprox} stands for the concrete semantics of the procedures
and the right-hand side of \cref{Eq:InterSoundApprox} is their corresponding abstractions, i.e., $\scrC\interp{P} \Vdash \scrM\interp{P}$.

Our \framework{} framework first translates each $E_{X_i}$ to an equivalent algebraic expression $E_{X_i}'$ by \cref{Lem:EquivalentToAlgebraicExpressions},
and then applies Newton's method (\cref{De:PMANewtonSequence,Cor:NewtonConvergence}) to approximate
$\lfp_{\vec{\azero}_M}^{\aord_M} \lambda \vec{\theta}^\sharp.\; \tuple{\calM_{\{\}}\interp{E_{X_i}'}(\vec{\theta}^\sharp)}_{i=1,\cdots,n}$,
which is equivalent to the right-hand side of \cref{Eq:InterSoundApprox}.

\section{Case Study: Bayesian-Inference Analysis via Algebraic Decision Diagrams}
\label{Se:BayesianInferenceADD}

\begin{wrapfigure}{r}{0.42\textwidth}
\centering
\vspace{-1em}
\includegraphics[width=0.4\textwidth]{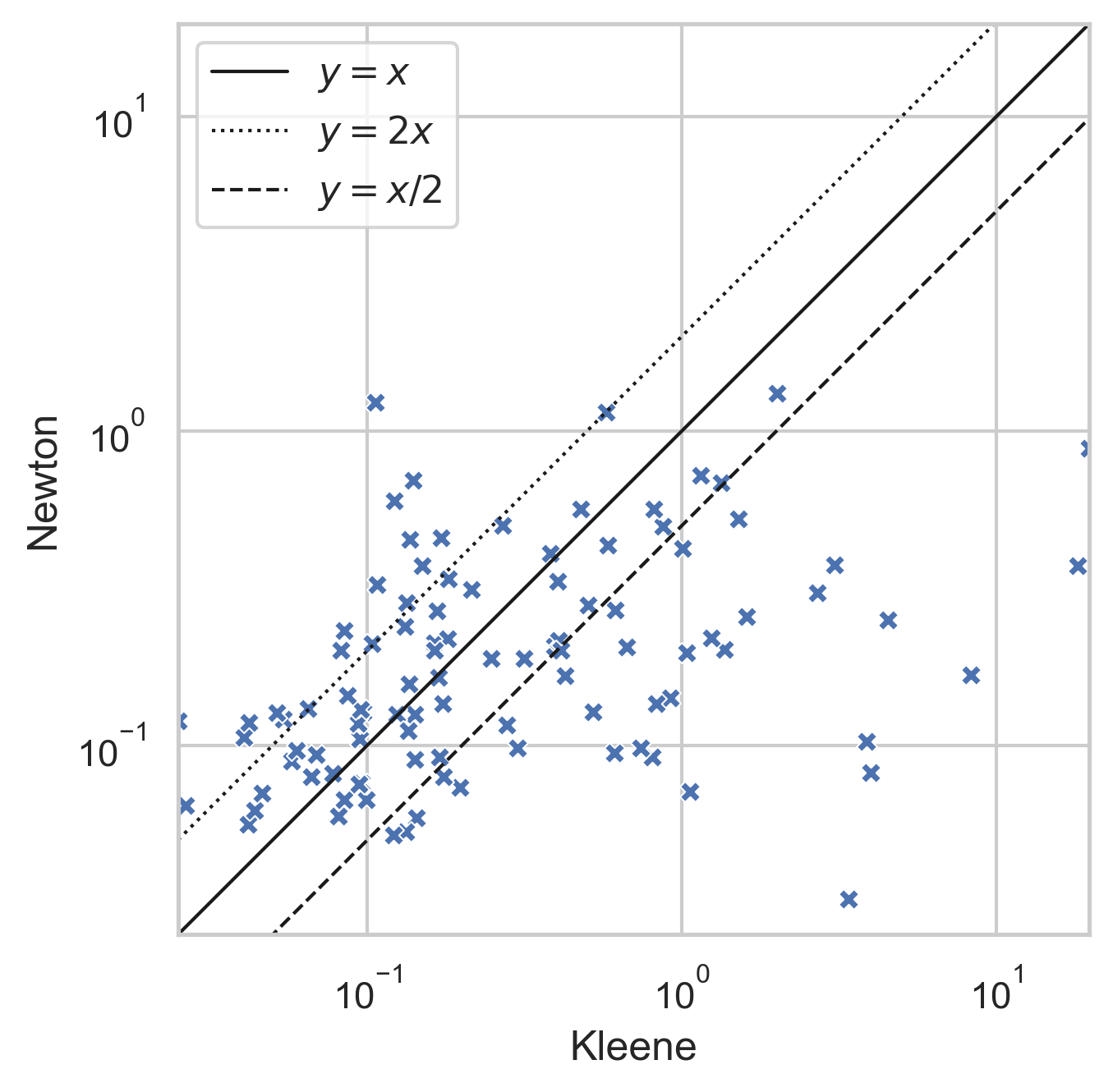}
\vspace{-0.5em}
\caption{Log-log scatter plots for running times (seconds) of Kleene iteration and \framework{}.}
\label{Fi:RuntimePlotBayesianInferenceADD}
\vspace{0.5em}
\end{wrapfigure}

In \cref{Se:BayesianInferenceAnalysis}, we discussed an instantiation
of \framework{} for Bayesian-inference analysis via matrices that encode distribution transformers.
The matrix-based encoding leads to a very high computational complexity because the instantiation needs
$2^n$-by-$2^n$ matrices to analyze programs with $n$ Boolean program variables.
Recall that in this article, we consider the style of Bayesian-inference analysis from prior work by \citet{FSE:CRN13}.
Their implementation uses \emph{Algebraic Decision Diagrams} (ADDs)~\cite{FMSD:BFG97} to represent
distributions compactly.
In this section, we develop an instantiation that uses ADDs to represent distribution transformers and
carry out Bayesian-inference analysis of probabilistic programs \emph{without} nondeterminism.

Let $\m{Var}$ be a set of Boolean program variables.
As discussed in \cref{Exa:BayesianInferenceDomain}, distribution transformers are mappings
from states ($2^\m{Var}$) to state-distributions ($2^{\m{Var}} \to [0,1]$) and can be encoded as matrices
of type $2^{\m{Var}} \times 2^{\m{Var}} \to [0,1]$.
In this section, we introduce a set $\m{Var}'$ that consists of primed copies of the variables in $\m{Var}$
and change the type of matrices to $2^{\m{Var}} \times 2^{\m{Var}'} \to [0,1]$.
The primed copies in $\m{Var}'$ represent the values of the variables in the \emph{post}-state of transformers
and the unprimed original ones represent the \emph{pre}-state.
We then use ADDs as a compact representation of mappings in $2^{\m{Var} \cup \m{Var}'} \to [0,1]$.
ADDs have been shown capable of performing many matrix operations including pointwise addition/subtraction,
matrix multiplication, and matrix inversion (via Gaussian elimination)~\cite{FMSD:BFG97}.
Therefore, to simplify the presentation in this section, we still use matrices to describe our development
and the all the matrix operations are realizable using ADDs.

Recall that without nondeterminism, the linear-recursion-solving strategy for Bayesian-inference analysis
essentially solves a system of linear matrix equations, i.e., each equation has the form
$Z_j = \mathbf{C}_j + \sum_{i,k} (\mathbf{A}_{i,j,k} \cdot Z_i \cdot \mathbf{B}_{i,j,k})$,
where $\mathbf{A}_{i,j,k}, \mathbf{B}_{i,j,k}, \mathbf{C}_j$ are known matrices.
The matrices of type $2^{\m{Var}} \times 2^{\m{Var}'} \to \bbR_{\ge 0} \cup \{\infty\}$
admit an $\omega$-continuous semiring with pointwise addition and matrix multiplication
as the combine and extend operations, respectively.
In addition, the Kleene-star of a matrix $\mathbf{A}$ satisfying that $\norm{A} < 1$ can be computed effectively via matrix inversion, where $\norm{\cdot}$ denotes the Frobenius norm
and $\mathbf{I}$ is the identity matrix:
\begin{equation*}
\mathbf{A}^\circledast \defeq \textstyle\sum_{i \in \bbN} \mathbf{A}^i = (\mathbf{I} - \mathbf{A})^{-1} .
\end{equation*}
%
%
We then adopt a technique proposed by \citet{POPL:RTP16} for solving linear equation systems over $\omega$-continuous semirings.
Their technique transforms a linear equation system to a \emph{regular} one, i.e., each equation has
the form $W_j = \mathbf{E}_j + \sum_{i} (\mathbf{D}_{i,j} \cdot W_i)$, via a \emph{tensor-product} operation $\tensor$ and a \emph{detensor-transpose} operation $\detensor$.
The obtained regular system is then solved by \citet{JACM:Tarjan81:Alg}'s path-expression algorithm.
In our development, the tensor-product and the detensor-transpose operations are defined as follows,
where $\mathbf{R}$ and $\mathbf{S}$ are $N$-by-$N$ matrices and $\mathbf{T}$ is an $N^2$-by-$N^2$ matrix:
\[
\mathbf{R} \tensor \mathbf{S} \defeq
\begin{pNiceMatrix}[small,xdots/shorten=5pt]
  r_{1,1} \mathbf{S} & \Cdots & r_{1,N} \mathbf{S} \\
  \Vdots & \Ddots & \Vdots \\
  r_{N,1} \mathbf{S} & \Cdots & r_{N,N} \mathbf{S}
\end{pNiceMatrix} , \qquad
\detensor(\mathbf{T}) \defeq
\left( \textstyle\sum_{k=1}^N \mathbf{T}_{(i-1)N+j, (k-1)N+k}    \right)_{i=1,\cdots,N ; j = 1,\cdots,N}.
\]
The detensor-transpose operation satisfies $\detensor(\mathbf{A} \tensor \mathbf{B}) = \mathbf{A} \cdot \mathbf{B}^\m{T}$ for matrices $\mathbf{A}$ and $\mathbf{B}$.
For a linear matrix equation $Z_j = \mathbf{C}_j + \sum_{i,k} (\mathbf{A}_{i,j,k} \cdot Z_i \cdot \mathbf{B}_{i,j,k})$,
we transform it to the regular equation
\[
W_j = ( \mathbf{I} \tensor \mathbf{C}_j^\m{T} + \textstyle\sum_{i,k} ( ( \mathbf{A}_{i,j,k} \tensor \mathbf{B}_{i,j,k}^\m{T} ) \cdot W_i ) ,
\]
and then the solution to the original linear system can be recovered from the solution to the obtained regular
system via $Z_j = \detensor(W_j)$.

\begin{example}
  Consider the following linear matrix equation where $Z$ is a $2$-by-$2$ matrix:
  \[
  Z = \begin{pNiceMatrix}[small] 0.2 & 0.3 \\ 0.1 & 0.4 \end{pNiceMatrix} + 
  \begin{pNiceMatrix}[small] 0.2 & 0.3 \\ 0.4 & 0.1 \end{pNiceMatrix} \cdot Z \cdot
  \begin{pNiceMatrix}[small] 0.1 & 0.9 \\ 0.8 & 0.2 \end{pNiceMatrix} .
  \]
  Using the tensor-product operation, we transform it to a regular equation:
  \[
  W = \begin{pNiceMatrix}[small] 0.2 & 0.1 & 0 & 0 \\ 0.3 & 0.4 & 0 & 0 \\ 0 & 0 & 0.2 & 0.1 \\ 0 & 0 & 0.3 & 0.4 \end{pNiceMatrix} + 
  \begin{pNiceMatrix}[small] 0.02 & 0.16 & 0.03 & 0.24 \\ 0.18 & 0.04 & 0.27 & 0.06 \\ 0.04 & 0.32 & 0.01 & 0.08 \\ 0.36 & 0.08 & 0.09 & 0.02 \end{pNiceMatrix} \cdot W .
  \]
  The solution to the regular equation on $W$ can be obtained by
  \[
  W = \begin{pNiceMatrix}[small] 0.02 & 0.16 & 0.03 & 0.24 \\ 0.18 & 0.04 & 0.27 & 0.06 \\ 0.04 & 0.32 & 0.01 & 0.08 \\ 0.36 & 0.08 & 0.09 & 0.02 \end{pNiceMatrix}^\circledast \cdot \begin{pNiceMatrix}[small] 0.2 & 0.1 & 0 & 0 \\ 0.3 & 0.4 & 0 & 0 \\ 0 & 0 & 0.2 & 0.1 \\ 0 & 0 & 0.3 & 0.4 \end{pNiceMatrix}
  = \left( \mathbf{I} - \begin{pNiceMatrix}[small] 0.02 & 0.16 & 0.03 & 0.24 \\ 0.18 & 0.04 & 0.27 & 0.06 \\ 0.04 & 0.32 & 0.01 & 0.08 \\ 0.36 & 0.08 & 0.09 & 0.02 \end{pNiceMatrix} \right)^{-1} \cdot \begin{pNiceMatrix}[small] 0.2 & 0.1 & 0 & 0 \\ 0.3 & 0.4 & 0 & 0 \\ 0 & 0 & 0.2 & 0.1 \\ 0 & 0 & 0.3 & 0.4 \end{pNiceMatrix}
  \approx \begin{pNiceMatrix}[small] 0.32 & 0.23 & 0.12 & 0.14 \\ 0.43 & 0.52 & 0.13 & 0.11 \\ 0.17 & 0.19 & 0.28 & 0.18 \\ 0.17 & 0.14 & 0.39 & 0.49 \end{pNiceMatrix} .
  \]
  Finally, we use the detensor-transpose operation to obtain the solution to the linear equation on $Z$:
  \[
  Z = \detensor(W) = \begin{pNiceMatrix}[small] W_{1,1}+W_{1,4} & W_{2,1} + W_{2,4} \\ W_{3,1} + W_{3,4} & W_{4,1}+W_{4,4} \end{pNiceMatrix} \approx \begin{pNiceMatrix}[small] 0.46 & 0.54 \\ 0.35 & 0.65 \end{pNiceMatrix} .
  \]
\end{example}

Our implementation of the abstract domain for ADD-based Bayesian-inference analysis consists of about
500 lines of code (excluding the CUDD package, which implements basic ADD operations).
We evaluated the performance of NPA-PMA against Kleene iteration on the same benchmark suite of
100 randomly generated programs as the suite mentioned in \cref{Se:BayesianInferenceAnalysis}.
The average number of Newton rounds performed by \framework{} is 9.58, whereas the average number
of Kleene rounds is 3,071.08.
\cref{Fi:RuntimePlotBayesianInferenceADD} presents a scatter plot that compares the running time of
Kleene iteration against \framework{}.
In terms of running time,
\framework{} achieves an overall 1.54x speedup compared with Kleene iteration.
Note that although the linear-recursion-solving strategy involves computationally heavy tensor-product
and detensor-transpose operations, the significant reduction in the number of iterations (3,071.08 $\to$ 9.58)
leads to an actual speedup.
This experimental result is much better than that of prior work by \citet{POPL:RTP16},
from which we adopt the technique of solving linear equation systems via tensor products.
\citeauthor{POPL:RTP16} used tensor products in the NPA framework for analyzing non-probabilistic programs
and developed an extension of NPA named NPA-TP.
However, in their experimental evaluation, NPA-TP turns out to be slower than chaotic iteration on a BDD-based
predicate-abstraction domain.
In our work,
we think it is the quantitative nature of probabilistic programs that unleashes the potential of Newton's method.

\section{Case Study: Higher-Moment Analysis of Accumulated Rewards}
\label{Se:HigherMomentAnalysis}


In \cref{Se:ThisWorkAConfluenceCentricAnalysisFramework}, we demonstrated the interprocedural part
of \framework{} using the termination-probability analysis of finite-state probabilistic programs with nondeterminism.
In this section, we augment the programs with a global reward accumulator, which can be incremented
via the data action $\kw{reward}(c)$, where $c \in \bbN$ is a nonnegative number.
Reasoning about accumulated rewards of probabilistic programs has many applications
(e.g.,
expected runtime analysis~\cite{PLDI:NCH18,LICS:EKM05,ESOP:KKM16},
expected accumulated-cost analysis~\cite{PLDI:WFG19,ICALP:EWY08},
and tail-bound analysis~\cite{TACAS:KUH19,PLDI:WHR21A}).
We consider the \emph{higher-moment} analysis of accumulated rewards,
where the $k^{\textit{th}}$ moment of a random variable $X$ is $\bbE[X^k]$.
Note that this analysis is an extension of termination-probability analysis
because the zeroth moment of the accumulated reward corresponds to the termination probability.

For analyzing $k^{\textit{th}}$ moments,
we define an $\omega$PMA $\calR_k = \tuple{\overline{\bbR}_{\ge 0}^{k+1}, \oplus_R, \otimes_R, \gcho{\varphi}_R, \pcho{p}_R, \dashcup_R, \azero_R, \aone_R}$ as an extension of \emph{expectation semirings}~\cite{EMNLP:LE09}
and \emph{moment semirings}~\cite{PLDI:WHR21A}, where $\overline{\bbR}_{\ge_0} \defeq \bbR_{\ge 0} \cup \{ \infty \}$,
and the operations and constants are defined as follows.
\begin{align*}
  \vec{u} \oplus_R \vec{v} & \defeq  \tuple{u_i + v_i}_{i=0,\cdots,k}, &
  \vec{u} \pcho{p}_R \vec{v} & \defeq \tuple{ p \cdot u_i + (1{-}p) \cdot v_i}_{i=0,\cdots,k}, &
  \azero_R & \defeq \tuple{0,0,\cdots, 0},
  \\
  \vec{u} \otimes_R \vec{v} & \defeq \tuple{ \textstyle\sum_{j=0}^i \binom{i}{j} \cdot u_j \cdot v_{i-j} }_{i=0,\cdots,k}, &
  \vec{u} \dashcup_R \vec{v} & \defeq \tuple{ \max(u_i,v_i) }_{i=0,\cdots,k}, &
  \aone_R & \defeq \tuple{1,0,\cdots,0}.
\end{align*}
We do not define $\gcho{\varphi}_R$ because the considered finite-state probabilistic programs do not contain conditional branching.
The $\omega$-continuous semiring $\tuple{\overline{\bbR}_{\ge 0}^{k+1}, \oplus_R, \otimes_R, \azero_R, \aone_R}$
admits the partial order $\vec{u} \aord_R \vec{v} \defeq \vec{u} \le \vec{v}$ (i.e., pointwise comparison)
and the subtraction operation $\vec{u} \ominus_R \vec{v} \defeq \tuple{u_i - v_i}_{i=0,\cdots,k}$.
We can then formulate an $\omega$PMA interpretation $\scrR = \tuple{\calR,\interp{\cdot}^\scrR}$,
where $\interp{\kw{reward}(c)}^\scrR \defeq \tuple{c^i}_{i=0,\cdots,k}$ for $c \in \bbN$.
To prove soundness, we again use the concrete semantics presented in \cref{Exa:Appendix:MAForBooleanPrograms}
with the state space $S \defeq \bbN$.
We define the following approximation relation to indicate that the program analysis reasons about upper bounds on the moments
of the accumulated reward:
\[
r \Vdash \vec{u} \iff \Forall{n \in \bbN} \Forall{\mu \in r(n)} \textstyle\bigwedge_{i=0}^k \sum_{n' \in \bbN} \mu(n') \cdot (n' - n)^i \le u_i.
\]

%
The equation system obtained from a linearly recursive program is a max-linear numeric equation system;
thus, a linear-recursion-solving strategy $\m{solve}$ can be obtained via \emph{linear programming} (LP).
%
Consider a numeric equation system $\vec{Z} = \vec{f}(\vec{Z})$ where each $f_i(\vec{Z})$ is either
a linear expression over $\vec{Z}$ or the maximum of two variables $\max(Z_j,Z_k)$ for some $j,k$.
The least solution of $\vec{Z} = \vec{f}(\vec{Z})$ can then be obtained via the following LP:
{\small\begin{alignat*}{6}
  \textbf{minimize} \quad & \textstyle\sum_i Z_i, &
  \quad \textbf{s.t.} \quad & Z_i = b_i + \textstyle\sum_{j} a_{i,j} \cdot Z_j & & \quad \text{for each $i$ such that $f_i(\vec{Z}) = c_i + \textstyle \sum_j a_{i,j} \cdot Z_j$}, \\[-3pt]
   & & & Z_i \ge Z_j, Z_i \ge Z_k & & \quad \text{for each $i$ such that $f_i(\vec{Z}) = \max(Z_j,Z_k)$}, \\[-3pt]
  & & & Z_i \ge 0   & & \quad \text{for each $i$}.
\end{alignat*}}

Our implementation of the abstract domain for the higher-moment analysis consists of about 300 lines of code.
We evaluated the performance of \framework{} for second-moment analysis against Kleene iteration on
100 randomly generated finite-state probabilistic programs,
each of which consists of 500 procedures such that each procedure takes one of the following forms:
(i) $\mi{prob}[p](  \mi{call}[X_i]( \varepsilon ) ,  \mi{call}[X_j]( \varepsilon ) )$
for some probability $p$ and procedures $X_i,X_j$;
(ii) $\mi{prob}[p_1]( \mi{prob}[p_2]( \mi{call}[X_i]( \varepsilon ), \mi{call}[X_j]( \varepsilon ) ), \mi{seq}[\kw{reward}(1)](\varepsilon) )$
for some probabilities $p_1,p_2$ and procedures $X_i,X_j$; or
(iii) $\mi{call}[X_i]( \mi{call}[X_j] (\varepsilon))$ for some procedures $X_i,X_j$.
\begin{wrapfigure}{r}{0.42\textwidth}
\centering
\vspace{-1em}
\includegraphics[width=0.4\textwidth]{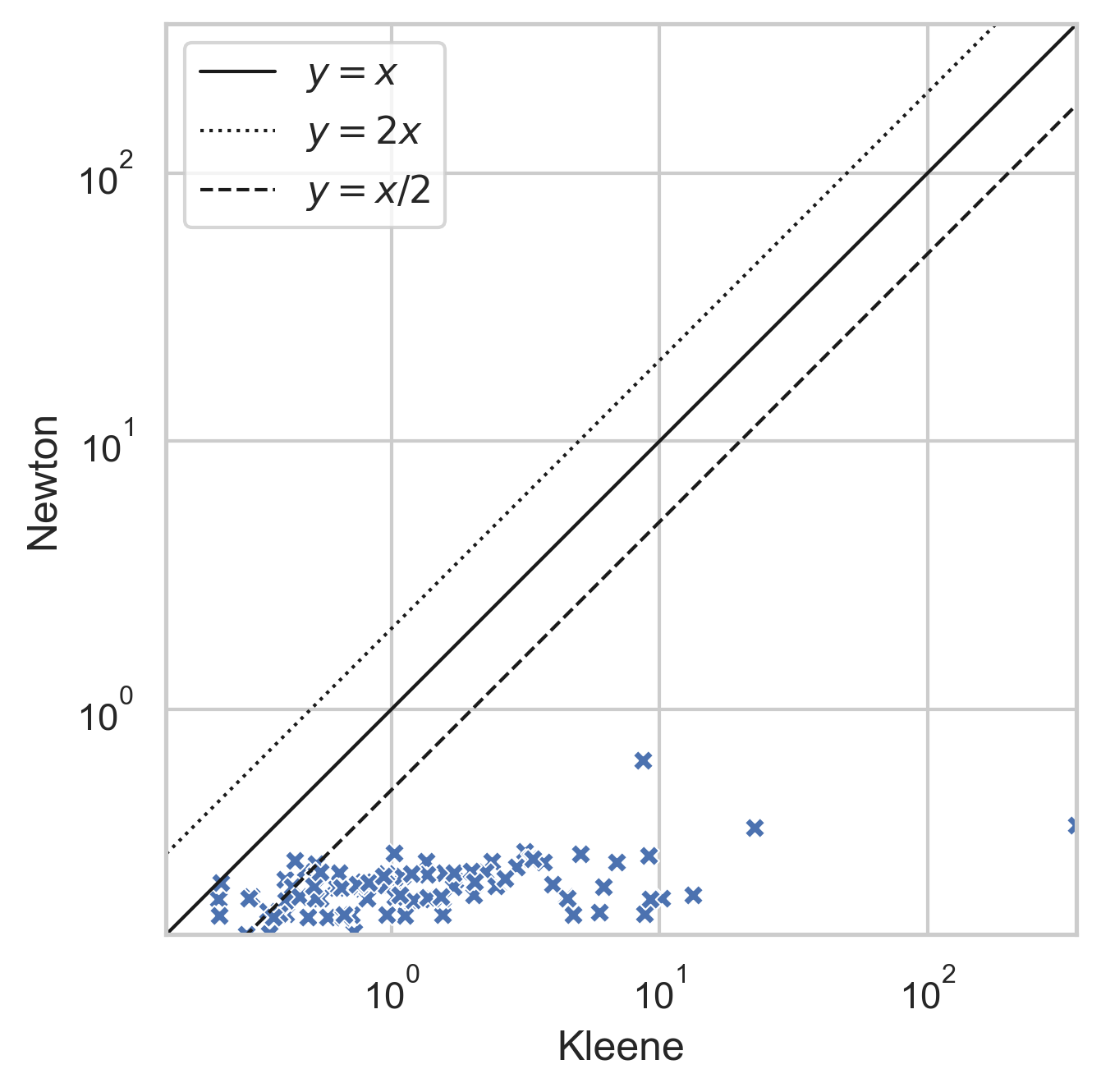}
\vspace{-0.5em}
\caption{Log-log scatter plots for running times (seconds) of Kleene iteration and \framework{}.}
\label{Fi:RuntimePlotMomentAnalysis}
\vspace{-2em}
\end{wrapfigure}%
On this benchmark suite of 100 programs,
the average number of Newton rounds performed by \framework{} is 8.99,
whereas the average number of Kleene rounds is 9,579.07.
\cref{Fi:RuntimePlotMomentAnalysis} presents a scatter plot that compares the running time (in seconds) of Kleene iteration
against \framework{} for moment-of-accumulated-reward analysis.
We observe that the overall performance of \framework{} is better than Kleene iteration
with an average speedup of 5.77x.

\section{Case study: Expectation-Invariant Analysis}
\label{Se:Appendix:ExpectationInvariantAnalysis}

\cref{Ta:FullExpectationInvariantAnalysis} presents the results of the evaluation
for the expectation-invariant analysis described in \cref{Se:ExpectationInvariantAnalysis}.

\begin{table}
\centering
\footnotesize
\caption{Expectation-invariant analysis. (Time is in seconds.)}
\label{Ta:FullExpectationInvariantAnalysis}
\begin{tabular}{c c c l}
\hline
\# & Program & Time & Invariants (selected) \\ \hline
1 & 2d-walk & 0.085399 & $ \bbE[\mi{dist1}'] \le \mi{dist1} + 0.5, \bbE[\mi{x1}'] \le \mi{x1} + 0.25$ \\
2 & aggregate-rv & 0.000688 & $\bbE[i'] \le i + 1, \bbE[r'] \le r + 0.5, \bbE[x'] \le x + 0.5$ \\
3 & biased-coin & 0.001104 & $\bbE[\mi{x1}'] \le \mi{x1} + 0.5 , \bbE[\mi{x2}'] \le \mi{x2} + 0.5$ \\
4 & binom-update & 0.000674 & $\bbE[n'] \le n + 1, \bbE[x'] \le x + 0.25$ \\
5 & coupon5 & 0.003276 & $ \bbE[\mi{count}'] \le \mi{count} + 1, \bbE[i'] \le i + 0.2$ \\
6 & dist & 0.000499 & $\bbE[x'] \le x, \bbE[y'] \le y, \bbE[z'] \le 0.5 x + 0.5 y$ \\
7 & eg & 0.002978 & $\bbE[x'] \le x + 3, \bbE[y'] \le y + 3, \bbE[z'] \le 0.25 z + 0.75$ \\
8 & eg-tail & 0.008772 & $\bbE[x'] \le x + 3, \bbE[y'] \le y + 3, \bbE[z'] \le 0.25 z + 0.75$ \\
9 & hare-turtle & 0.000657 & $\bbE[h'] \le h + 2.5, \bbE[r'] \le r + 2.5, \bbE[t'] \le t + 1$ \\
10 & hawk-dove & 0.003477 & $ \bbE[\mi{p1bal1}'] \le \mi{p1bal1} + 2, \bbE[\mi{p1bal2}'] \le \mi{p1bal2} + 1$ \\
11 & mot-ex & 0.001039 & $\bbE[r'] \le r + 0.75, \bbE[x'] \le x + 0.75, \bbE[y'] \le y + 1.5$ \\
12 & recursive & 0.001473 & $\bbE[x'] \le x + 9$ \\
13 & uniform-dist & 0.000422 & $\bbE[g'] \le 2g + 0.5, \bbE[n'] \le 2n$ \\
\hline
14 & coupon-five & 0.001739 & $\bbE[\mi{coupons}'] \le 5, \bbE[t'] \le t + 25$ \\
15 & coupon-five-fsm & 0.005259 & $\bbE[t'] \le t + 11.4167$ \\
16 & dice & 0.001575 & $\bbE[r'] \le 3.5, \bbE[t'] \le t + 1.33333$ \\
17 & exponential & 0.040215 & $\bbE[n'] \le 1, \bbE[r'] \le 0.2$ \\ 
18 & geometric & 0.001840 & $\bbE[i'] \le n, \bbE[t'] \le -2560 i + 2560 n + t$ \\
19 & non-linear-recursion & 0.001642 & $\bbE[t'] \le t + 0.333333 x + 2.33333, \bbE[x'] \le 0.666667 x + 0.166667$ \\
20 & random-walk & 0.001842 & $\bbE[t'] \le 2n+t+2 \mi{x\_neg}-2 \mi{x\_pos}, \bbE[\mi{x\_neg}'] \le 0.5n + 1.5 \mi{x\_neg} - 0.5 \mi{x\_pos}$ \\
21 & random-walk-uneven & 0.001664 & $\bbE[t'] \le 2n+t+2\mi{x\_neg}-2\mi{x\_pos}+2, \bbE[\mi{x\_neg}'] \le n+2\mi{x\_neg} - \mi{x\_pos}+1$ \\
22 & unbiased & 0.000624 & $\bbE[r'] \le 1.5, \bbE[t'] \le t + 11.1111$ \\
\hline
\end{tabular}  
\vspace{2em}
\end{table}

\section{Case Study: Expectation-Recurrence Analysis}
\label{Se:Appendix:ExpectationRecurrenceAnalysis}

\cref{Ta:FullExpectationRecurrenceAnalysis} presents the evaluation results for the expectation-recurrence analysis described in \cref{Se:ExpectationRecurrenceAnalysis}.

\begin{table}
\centering
\footnotesize
\caption{Expectation-recurrence analysis. (Time is in seconds and ``T/O'' means ``time out after 3 minutes.'')}
\label{Ta:FullExpectationRecurrenceAnalysis}
\begin{tabular}{c c c l}
\hline
\multirow{2}{*}{\#} & \multirow{2}{*}{Program} & Time & \multirow{2}{*}{Invariants (selected) derived by our work} \\ 
& & (our work / \textsc{Polar}) & \\ \hline
1 & 50coinflips & 108.920 / \textbf{0.448} & $\bbE[\mi{r0}'] = 1 - (\frac{1}{2})^{n}, \bbE[\mi{total}'] = 50(1 -  (\frac{1}{2})^{n-1} )$ \\
2 & bimodal-x & \textbf{0.069} / 0.191 & $\bbE[\mi{xlow}'] = -\frac{1}{2} n, \bbE[\mi{xup}'] = \frac{1}{4}n, \bbE[x'] = -\frac{1}{4} n$ \\
3 & dbn-component-health & \textbf{0.109} / 0.181 & $\bbE[\mi{health}'] = (\frac{9}{100})^n, \bbE[\mi{obs}'] = \frac{2683}{1000} (\frac{9}{100})^n + \frac{701}{1000}$ \\
4 & dbn-umbrella & T/O / \textbf{0.456} & - \\
5 & gambling & \textbf{0.054} / 0.079 & $\bbE[\mi{bet}'] = \frac{1}{2}n + 1, \bbE[\mi{money}'] = 0$ \\
6 & geometric & \textbf{0.151} / 0.153 & $\bbE[\mi{stop}'] = \frac{1}{2}, \bbE[x'] = 2^n$ \\
7 & hawk-dove & \textbf{0.063} / 0.148 & $\bbE[\mi{p1bal}'] = n, \bbE[\mi{p2bal}'] = n$ \\
8 & hermann3 & T/O / \textbf{0.294} & - \\
9 & las-vegas-search & \textbf{0.059} / 0.288 & $\bbE[\mi{found}'] = 1 -  (\frac{20}{21})^n, \bbE[\mi{attempts}'] = \frac{20}{21} n $ \\
10 & random-walk-1d & \textbf{0.039} / 0.077 & $\bbE[x'] = 1$ \\[3pt]
\multirow{2}{*}{11} & \multirow{2}{*}{randomized-response} & \multirow{2}{*}{\textbf{0.093} / 0.168} & $\bbE[\mi{n1}'] = \frac{5}{4}n - p n, \bbE[\mi{p1}'] = \frac{1}{4}n + \frac{1}{2} p n,$ \\
& & & $\enskip \bbE[\mi{ntrue}'] = \frac{3}{4}n, \bbE[\mi{nfalse}'] = \frac{1}{4} n$ \\[3pt]
\multirow{2}{*}{12} & \multirow{2}{*}{rock-paper-scissors} & \multirow{2}{*}{\textbf{0.110} / 0.215} & $\bbE[\mi{p1bal}'] = n(\mi{p2}(-\mi{q2}-2\mi{q3}+1) + \mi{p3}(\mi{q2}-\mi{q3})+\mi{q3}),$ \\
& & & $\enskip \bbE[\mi{p2bal}'] = n( -(\mi{p2} - 1) \mi{q2} + \mi{p2} \mi{q3} - \mi{p3}(2\mi{q2}+\mi{q3}-1)) $ \\
\hline
13 & lane-keeping & \textbf{0.087} / 0.193 & $ \bbE[z'] = 0, \bbE[x'] = (\frac{4}{5})^n x, \bbE[y'] = y+\frac{1}{2}x - \frac{1}{2} (\frac{4}{5})^n x$ \\[3pt]
\multirow{2}{*}{14} & \multirow{2}{*}{running-example} & \multirow{2}{*}{0.273 / \textbf{0.182}} & $\bbE[x']=\frac{3}{2}n + 1, \bbE[y'] =  -9n - \frac{37}{5} (\frac{2}{3})^n + \frac{67}{5} (\frac{3}{2})^n  - 5, $ \\
& & & $\enskip \bbE[z'] = \frac{9}{4} n^2 + \frac{19}{4} n - \frac{201}{10}  (\frac{3}{2})^n - \frac{37}{5} (\frac{2}{3})^n + \frac{57}{2}$ \\
\hline
\end{tabular}
\end{table}
\fi

\end{document}